\documentclass{article}
\usepackage[margin=1in]{geometry}
\usepackage{mathtools, amssymb, mathrsfs, amsthm}
\usepackage[colorlinks = true, linkcolor = blue, citecolor = blue, urlcolor = blue]{hyperref}
\usepackage{indentfirst, bbm, accents, centernot, multirow, resizegather, enumitem, cleveref}
\usepackage{graphicx}
\graphicspath{{images/}}
\usepackage{caption}
\usepackage[dvipsnames]{xcolor}
\usepackage{titlesec}
\usepackage[utf8]{inputenc}
\usepackage[english]{babel}
\usepackage{natbib}
\usepackage{setspace}
\usepackage{algorithm2e}

\newtheorem{theorem}{Theorem}
\newtheorem{proposition}{Proposition} 
\newtheorem{corollary}{Corollary}
\newtheorem{lemma}{Lemma} 

\theoremstyle{definition}
\newtheorem{example}{Example}
\newtheorem{assumption}{Assumption}

\allowdisplaybreaks

\newcommand{\Var}{\textup{Var}}
\newcommand{\E}{\mathbb{E}}
\newcommand{\R}{\mathbb{R}}

\newcommand{\ubar}[1]{\text{\b{$#1$}}} 

\newenvironment{manualassumption}[1]{%
  \manualassumptioninner
}{\endmanualassumptioninner}

\newenvironment{manualtheorem}[1]{%
  \manualtheoreminner
}{\endmanualtheoreminner}

\title{How Much Weak Overlap Can Doubly Robust T-Statistics Handle?}

\author{Jacob Dorn\footnote{
I am grateful for suggestions from Xiaohong Chen, Rebecca Dorn, Kevin Guo, Edward Kennedy, Samir Khan, Michal Kolesár, Lihua Lei, Xinwei Ma, Ulrich Müller, Yuya Sasaki, Yulong Wang, and Larry Wasserman, and participants at seminars at the University of Pennsylvania. Artificial intelligence was used to suggest changes in the editing process. This material is based upon work supported by the National Science Foundation Graduate Research Fellowship Program under Grant No. DGE-2039656 and by grant T32 HS026116 from the Agency for Healthcare Research and Quality. Any opinions, findings, and conclusions or recommendations expressed in this material are those of the author and do not necessarily reflect the views of the National Science Foundation or the Agency for Healthcare Research and Quality. 
}}

\begin{document}
\bibliographystyle{apalike}

\maketitle

\doublespacing

\begin{abstract}
    In the presence of sufficiently weak overlap, it is known that no regular root-n-consistent estimators exist and standard estimators may fail to be asymptotically normal. This paper shows that a thresholded version of the standard doubly robust estimator is asymptotically normal with well-calibrated Wald confidence intervals even when constructed using nonparametric estimates of the propensity score and conditional mean outcome. The analysis implies a cost of weak overlap in terms of black-box nuisance rates, borne when the semiparametric bound is infinite, and the contribution of outcome smoothness to the outcome regression rate, which is incurred even when the semiparametric bound is finite. As a byproduct of this analysis, I show that under weak overlap, the optimal global regression rate is the same as the optimal pointwise regression rate, without the usual polylogarithmic penalty.  The high-level conditions yield new rules of thumb for thresholding in practice.  In simulations, thresholded AIPW can exhibit moderate overrejection in small samples, but I am unable to reject a null hypothesis of exact coverage in large samples. In an empirical application, the clipped AIPW estimator that targets the standard average treatment effect yields similar precision to a heuristic 10\% fixed-trimming approach that changes the target sample.  
\end{abstract}

\section{Introduction}

In observational data, it is common to find at least some treated observations with small propensity scores, in which case overlap is weak. Under weak overlap, outcome regression is more difficult and inverse propensity estimators can divide by small numbers.

This paper studies the extent to which standard asymptotic results for doubly robust estimators continue to apply when strict overlap does not hold, but there is a tail bound on inverse propensity scores. The paper aims to answer two questions:
\\[-1.8ex]
\begin{enumerate}[label=(\arabic*),topsep=0pt,itemsep=-1ex]
    \item \emph{Black-box rates}. Under what conditions on nuisance errors do Wald confidence intervals constructed using doubly robust estimators have well-calibrated coverage? \label{item:ConfIndervals}
    \item \emph{Feasibility}. When are the black-box rates in \ref{item:ConfIndervals} feasible under weak overlap? \\[-1.8ex]
\end{enumerate}
Overlap weakness is measured with a tail-bound parameter $\gamma_0$. It is known that there is a phase transition when $\gamma_0$ crosses two, which corresponds to a uniform propensity density. 

Values of the tail parameter $\gamma_0$ above two correspond to stronger overlap assurances, a case which I call \emph{somewhat weak overlap}. Under somewhat weak overlap, it is known that the semiparametric bound for estimation of $\psi_0$ is finite and can be achieved by the doubly robust Augmented Inverse Propensity Weighted (AIPW) estimator with known conditional mean outcome and propensity functions. I show that semiparametric efficiency holds even when using nonparametric estimates of the two AIPW nuisance functions, provided the nuisance estimates are constructed by cross-fitting and achieve error rates $r_{\mu,n}$ and $r_{e,n}$ that satisfy the product condition that $n^{1/2} r_{\mu,n} r_{e,n}$ tends to zero in probability. The one modification I make relative to the usual analysis is that I require the nuisance rates to apply uniformly in regions with small propensity scores, which is immediate under $L_\infty$ bounds.

Values of $\gamma_0$ below two allow the density of propensity scores to be unbounded at zero, a case which I call \emph{very} weak overlap. Under very weak overlap, it is known that the semiparametric efficiency bound is infinite and Inverse Propensity Weighting (IPW) estimators fail to be asymptotically normal, even if the propensity function is known. \cite{XinweiMaRobustIPW} show that thresholding strategies that clip (Winsorize) or trim (discard) observations with small propensity scores at a sequence of $b_n$ tending to zero can restore asymptotic normality, but at a slower-than-$\sqrt{n}$ rate and with first-order bias. Previous work has proposed constructing confidence intervals for this case using debiasing strategies and self-normalized subsampling. I show that if thresholding is applied to AIPW, then there is no first-order bias and simple Wald confidence have exact asymptotic coverage. The analysis provides sufficient black-box conditions for AIPW with estimated nuisance functions to achieve these asymptotic results: first, that the threshold $b_n$ tends to zero more slowly than the propensity error $r_{e,n}$; and second, that the product $n^{1/2} r_{\mu,n} b_n^{\gamma_0 / 2}$ tends to zero in probability.

That analysis characterizes the feasibility of AIPW Wald confidence intervals under black-box conditions on the propensity estimates and outcome regression. An added difficulty is that in regions with weak overlap, there cannot be many treated observations to use in outcome regression. I quantify the added difficulty for the case of regression within a Hölder smoothness class of order $\beta_{\mu} > 0$. I show that weak overlap scales the effective outcome smoothness by $1 - 1 / \gamma_0$, exhibiting a cost of weak overlap even in the somewhat weak overlap regime in which the other asymptotic results carry through nearly unchanged. I show that in this setting, the optimal global rate is equal to the optimal pointwise rate, without the usual polylogarithmic factor separating the optimal pointwise and global rates  \citep{Stone1982}. The argument leverages a novel construction partitioning the covariate space into regions with increasingly strong overlap assurances in such a way as the contribution of each region to the worst-case error is half as large as the previous contribution. In this way, the contribution of infinitely many regions to the global error is controlled uniformly.

Taken together, these results provide a precise answer to the question posed by this work's title: doubly-robust t-statistics can handle weak overlap of tail bound $\gamma_0$, provided the outcome and propensity nuisance functions are in Hölder smoothness classes of order $\beta_{\mu}$ and $\beta_{e}$ and
\begin{align*}
    \frac{\beta_{\mu} (1 - 1 / \gamma_0) }{2 \beta_{\mu} (1 - 1 / \gamma_0)  + d} + \frac{\beta_{e} \min\{ \gamma_0 / 2, 1 \}}{2 \beta_{e} + d} > \frac{1}{2}.
\end{align*}
In this case, the threshold $b_n = n^{-\beta_e / (2 \beta_e + d)} \log(n)^{(3 \beta_e + d) / (2 \beta_e + d)}$ suffices, regardless of the weak overlap parameter $\gamma_0$. When the outcome and propensity smoothness orders are the same $\beta > 0$, then thresholded AIPW can handle weak overlap of order $\gamma_0$, so long as:
\begin{align*}
    \gamma_0 > \max\left\{ \frac{2 \beta^2 + 2 \beta d + d^2}{\beta (2 \beta + d)}, \frac{4 \beta^2}{4 \beta^2 - d^2} \right\}.
\end{align*}
Under Lipschitz continuity of both nuisance functions in one dimension, doubly robust t-statistics can handle weak overlap of order $\gamma_0 > \frac{5}{3}$. In higher dimensions, there is always some sufficient smoothness order that yields valid t-statistics for any fixed tail bound.

The conditions here yield new rules of thumb for thresholding in applied work. In my favored regime, the econometrician is willing to posit a minimal consistency rate for one of the two nuisance function estimates. Given such a minimal rate, a simple plug-in procedure predicts the threshold with the laxest restriction on the other nuisance function needed to achieve well-calibrated Wald confidence intervals. In the absence of any such information, a third rule of thumb derives a threshold that imposes the laxest equal minimal consistency rate on both nuisance estimates. None of these rules of thumb depend directly on knowledge of the tail bound parameter $\gamma_0$.

In simulations, I find that clipped AIPW achieves the promised properties asymptotically. I consider a setting of very weak overlap with nonparametric outcome regression and propensity estimates. Unthresholded IPW and AIPW estimators perform poorly, with large errors and nonnormal asymptotic distributions. In this setting, clipped IPW displays its known first-order bias, and clipped AIPW displays the second-order bias justified by the theoretical analysis. With access to 1,000 or 10,000 observations, I find that p-values based on clipped AIPW t-statistics exhibit moderate overrejection. In large samples with 100,000 observations, a Kolmogorov-Smirnov test based on 5,000 simulations is unable to reject a null hypothesis that clipped AIPW p-values on the true causal effect are exactly uniformly distributed.

I apply the clipped AIPW estimator to data on right heart catheterization. I consider the setting of  \cite{ConnorsEtAl1996}, which has become a canonical setting with weak overlap, including providing the empirical application for \cite{CrumpEtAlOptimizePrecision}'s paper proposing a 10\% fixed-trimming rule of thumb. I compare the clipped AIPW estimator that targets the full-population effect to estimators that apply AIPW to a sample trimmed based on a fixed rule. I find that by including observations with small estimated propensities, the clipped AIPW strategy increases the estimated harm of the procedure by 0.17 standard errors relative to the 10 percent fixed-trimming rule, while increasing the estimated standard error by only 5.1\%. These results show that targeting the full-population treatment effect does not need to introduce a major efficiency loss, and show that thresholded AIPW can easily be added as a robustness test when practitioners apply a fixed-trimming rule.

Weak overlap is a common phenomenon in practice and in theory. The dominant response to weak overlap in inverse propensity score practice is trimming: dropping samples with small propensity estimates in order to estimate average effects within a more precise population \citep{CurrieWalker, BaileyGoodmanBacon, GalianiEtAl}, typically following the 10\% rule of thumb from \cite{CrumpEtAlOptimizePrecision}. Other proposals targeting new samples include reweighting towards higher-precision populations \citep{YangAndDing, LiEtAlOverlapWeights} or clipping strategies that Winsorize weights above \citep{LeeEtAlWeightTrimming, IonidesClipping}.\footnote{Awkwardly, the epidemiological literature sometimes refers to the Winsorization strategy as ``trimming." My results hold for both dropping or Winsorizing extreme propensities, so the confused reader can view this as a work deriving simple asymptotics for trimmed AIPW regardless of their preferred meaning of ``trim."}  \cite{DAmourEtAlOverlap} argue that weak overlap is likely to be prevalent in modern settings with high-dimensional covariates. \cite{imbens_unconfoundedness_review} argues that changing the target estimand may be necessary in the absence of sufficient precision.

There is theoretical work characterizing estimation of nonstandard estimands under weak overlap. \cite{KhanAndTamer} show that very weak overlap yields an irregularly identified parameter, an infinite semiparametric efficiency bound, slower-than-$\sqrt{n}$ estimation rates for the traditional average causal effects, and no clear notion of best estimator.  An important theoretical literature has proposed novel point and confidence interval estimators with desirable properties under weak overlap \citep{RotheLimitedOverlap, ArmstrongKolesarBandwidthSnooping, ArmstrongKolesarFiniteSampleOptimal, SasakiUra2022, ma2023doubly, ChaudhuriHill}. Many of these procedures have favorable properties relative to the simpler thresholded estimator I consider here, but to my knowledge there has been little take-up by practitioners. \cite{XinweiMaRobustIPW} and \cite{KhanUganderDoublyRobustTrimming} show that sufficiently trimmed AIPW and IPW can remain asymptotically normal, but at the cost of introducing first-order bias for the standard causal effects that often calls for a nonstandard debiasing strategy that can enable laxer regression conditions than the standard smoothness conditions I explore. \cite{CrumpEtAlOptimizePrecision}, \cite{YangAndDing}, \cite{LiEtAlOverlapWeights}, and \cite{ContaminationBiasInLinearRegressions} propose changing estimands in response to weak overlap, which introduces a discontinuous estimand based on whether or not the econometrician detects meaningful overlap weakness.

Other theoretical literature so far has either proposed a nonstandard causal estimator, required nonstandard techniques to construct confidence intervals, or done both. \cite{XinweiMaRobustIPW} and \cite{HeilerKazak} propose using self-normalized subsampling methods that enable valid statistical inference for standard estimands without clipping or trimming, but empirical practice has favored simple t-tests. \citeauthor{XinweiMaRobustIPW} also propose a debiasing procedure. \citeauthor{HeilerKazak} also find that estimated untrimmed AIPW is first-order equivalent to the oracle AIPW estimator with known nuisance functions, and the associated asymptotic distribution is alpha-stable, if the product of nuisance estimation rates is of a lower order than the oracle standard deviation; I find this result does not extend to the thresholded AIPW estimator that I show is asymptotically normal. \cite{MaSasakiWang} and \cite{LihuaLeiTesting} propose statistical tests under a null of sufficient or strict overlap, respectively, presumably in the hopes of avoiding these complications. My analysis of nonparametric regression rates is also relevant to the literature on regression with degenerate designs, although to my knowledge the possibility of a global rate with no polylogarithmic penalty is new \citep{Stone1982, HallEtAlLocalLinear, GaiffasPointwise, mou2023kernelbased}.

The plan of the paper is as follows. \Cref{sec:MainTheoreticalResults} presents the setting and main theoretical results. \Cref{sec:SupplementalResults} interprets these results as minimal black-box consistency rates and as minimal smoothness rates. \Cref{sec:ParametricResultsAndThresholds} considers implications for parametric estimators and derives some rules of thumb for empirical use. \Cref{sec:NumberTime} presents numerical results for simulations and the empirical application to right-heart catheterization. \Cref{sec:Conclusion} concludes.

\textbf{Notation}. I follow \cite{HeilerKazak} and use ``strict overlap" to refer to the case in which  the propensity score is bounded away from zero almost surely; I use ``weak overlap" to refer to the case in which the infimum of the support of the propensity score is zero, which is sometimes called ``limited overlap" \citep{KhanAndTamer, ChaudhuriHill}. I focus my attention on distributions with weak overlap that may possess subexponential tails. I use  ``very weak overlap" to refer to case in which the associated heavy tails can fail to generate inverse propensity moments, a class which is sometimes called ``heavy tailed" \citep{ChaudhuriHill}. I use ``somewhat weak overlap" to refer to the case in which I allow only subexponential tails that are sufficiently light, a class which is sometimes said to satisfy ``strict overlap" or  ``overlap" \citep{HeilerKazak, brunssmith2024augmentedbalancingweightslinear}. I write $\tilde{\psi}_{(Oracle)}^{AIPW}(b_n) = \frac{1}{n} \sum \phi(Z \mid b_n, \eta)$ for the oracle AIPW estimate with pseudo-outcome $\phi(Z \mid b_n, \{ \bar{e}(X), \bar{\mu}(X) \}) = \bar{\mu}(X) + \frac{D (Y - \bar{\mu}(X))}{\max\{\bar{e}(X), b\}}$ and clipping threshold $b_n$ and $\sigma_n =  n^{-1/2} \sqrt{ \frac{1}{n} \sum \phi(Z \mid b_n, \eta)^2 - \tilde{\psi}_{(Oracle)}^{AIPW}(b_n)^2}$ and $\hat{\sigma}_n =  n^{-1/2} \sqrt{ \frac{1}{n} \sum \phi(Z \mid b_n, \hat{\eta} )^2 - \left( \frac{1}{n} \sum \phi(Z \mid b_n, \hat{\eta}) \right)^2}$  for the associated oracle and estimated sample standard deviation, respectively. I refer to regions of the covariate space in which the propensity can be arbitrarily close to zero as singularities.  I use the notation $E_{P}[ \cdot ]$ and $E[ \cdot ]$ to refer to the expectation under the maintained distribution $P$, and I use the notation $\psi(P)$ to refer to the statistical average potential outcome $E_{P}[E_{P}[Y \mid X, D=1]]$ where the right-hand side is well-defined under $P$. I abuse notation and write $\psi = \psi(P)$ and use $\sup_{P \in A} B$ to refer to the supremum of $B$ over distributions $P$ in $A$ under any maintained restrictions on the distribution and nuisance functions. I write that a set of nuisance functions are cross-fit if the data is partitioned into $K$ folds and the nuisance functions in fold $k$ are independent of the data in fold $k$. I write $A_n \leq_{P} B_n$ to refer to the case that for all $\epsilon > 0$, $P( A_n > B_n + \epsilon) \to 0$. I write $P\left( E_n \right)$ for the probability of event $E_n$ occurring under the distribution $P$, with the number of draws $n$ sometimes left implicit. I use the notation $c_n \ll d_n$ for nonnegative sequences $c_n, d_n$ to indicate that $d_n > 0$ for all $n$ large enough and $c_n / d_n \to 0$. I use the notation $c_n \precsim d_n$ and $d_n \succsim c_n$ to indicate that there is some $\delta > 0$ such that $d_n \geq \delta c_n$ for all $n$ large enough. I write $c_n = o_P(d_n)$ for sequence of $d_n > 0$ to indicate that for all $\delta > 0$, $P( | c_n | / d_n > \delta ) \to 0$; if there is only one distribution in a statement, $c_n = o(d_n)$ should be understood to mean $c_n = o_{P}(d_n)$.  I use $\log$ to refer to the natural logarithm and $a \vee b$ to indicate $\max\{ a, b \}$. I define Hölder smoothness using a multivariate version of the notation of \cite{TsybakovBook2009}: a function $f$ is in the Hölder smoothness class $\Sigma(\beta, L)$ if the $\lfloor \beta \rfloor$-order multivariate derivatives $D^{\alpha} f = \frac{\partial \| \alpha \|}{\partial x_{1}^{\alpha_1} \partial x_{2}^{\alpha_2} \hdots} f$ satisfy $\| D^{\alpha} f(x) - D^{\alpha} f(x') \| \leq L \| x - x' \|^{\beta - \lfloor \beta \rfloor}$, where I write $D^{\alpha} f(x)$ for $D^{\alpha} f$ evaluated at $x$. For simplicity, I use local polynomial regression to refer to specifically kernel regression with uniform bandwidth: $\hat{\mu}^{(NW)}(x \mid h) = \frac{\sum D 1\{ \| X - x \| \leq h \} Y}{\sum D 1\{ \| X - x \| \leq h \}}$  when feasible and $\hat{\mu}^{(NW)}(x \mid h) = 0$ when no nearby treated observations are available.  For an estimator $\hat{\eta}$ of $\eta$, I use $\| \hat{\eta} - \eta \|_{\infty}$ to refer to the sup norm $\sup_{x \in Support(P)} | \hat{\eta}(x) - \eta(x) |$.

\section{Setting, Consistency, and Asymptotic Normality}\label{sec:MainTheoreticalResults}

This section presents asymptotic results under black-box nuisance conditions.

\subsection{Setting}

I derive uniform convergence rates under lower bounds on overlap weakness. I follow \cite{XinweiMaRobustIPW}, who provide important building blocks in my analysis, and parameterize overlap weakness through a tail parameter $\gamma_0$. Unlike their analysis, the results will be uniform over a model family $\mathscr{P}$ satisfying certain restrictions, including some basic regularity conditions. 
\begin{assumption}\label{def:AllowedDistributions}
Let $\mathscr{P}$ be a nonempty family of distributions, and write $e(X) = P(D = 1 \mid X)$ and $\mu(X) = E_{P}[Y \mid X, D=1]$. Then every $P \in \mathscr{P}$ satisfies the following conditions for some $q > 3, M, \sigma_{\min}, C > 0$ and $\gamma_0 > 1$:
\begin{enumerate}[label=(\alph*), itemsep=-0.5ex, topsep=-0.5ex]
    \item \emph{Conditional moments}. $\E[|Y - \mu(X) |^q \mid X, D=1] \leq M^q < \infty$ almost surely. \label{def:ConditionalMoments} 
    \item \emph{Unconditional moments}. $Var( \mu(X) ) \leq M$. \label{subsef:BoundedVarMu}
    \item \emph{Residuals}. $\Var(Y \mid X, D) \geq \sigma_{\min}^2$. \label{subdef:Residuals}
    \item \emph{Propensity tail}.  $P(e(X) \leq \pi) \leq C \pi^{\gamma_0 - 1}$ for all $\pi \in [0, 1]$.\label{item:PropensityTail}
\end{enumerate}
\end{assumption}

Definition \ref{def:AllowedDistributions} generalizes \cite{XinweiMaRobustIPW}'s slowly varying tails assumption. Assumptions \ref{def:AllowedDistributions}\ref{def:ConditionalMoments} through \ref{def:AllowedDistributions}\ref{subdef:Residuals} are regularity conditions that rule out cases like perfectly predictable outcomes. \Cref{def:AllowedDistributions}\ref{item:PropensityTail} provides the substantial restriction on $\mathscr{P}$: overlap may be weak in the sense that $\gamma_0$ is finite, but there is some minimal $\gamma_0$ and $C$ that provides a lower bound on the propensity's tail behavior. Under strict overlap,  \Cref{def:AllowedDistributions}\ref{item:PropensityTail} holds for any finite $\gamma_0 > 1$, and most results here hold after replacing $\gamma_0$ with infinity. Under weak overlap, \Cref{def:AllowedDistributions}\ref{item:PropensityTail} may only hold for some values of $\gamma_0$, in which case the inverse propensity distribution may be heavy-tailed.  I refer to the case $\gamma_0 > 2$ as ``somewhat weak overlap" and refer to the case of $\gamma_0 < 2$ as ``very weak overlap." As $\gamma_0$ shrinks below $2$, overlap is permitted to be increasingly weak. $\gamma_0 \leq 1$ corresponds to no bound on the propensity distribution.

\begin{figure}
    \centering
    \includegraphics[width=0.8\textwidth]{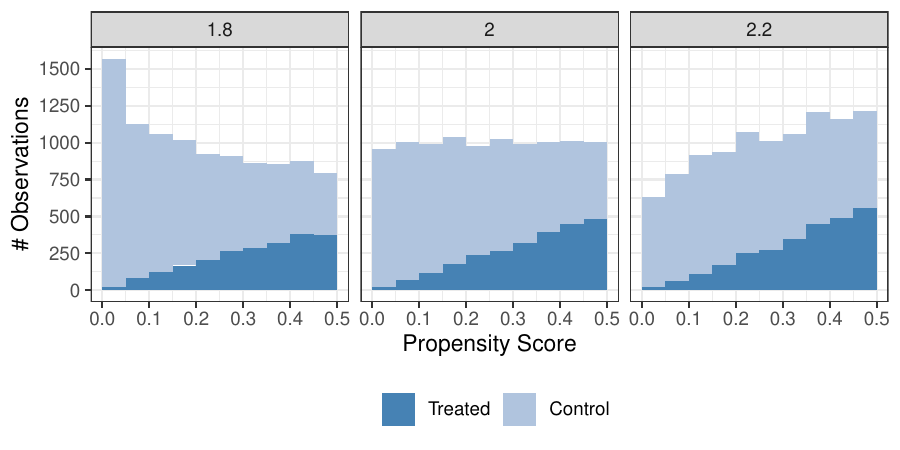}
    \caption{Simulations of 10,000 observations of $e(X)$ with $P(e(X) \leq \pi) = \pi^{\gamma_0 - 1}$ for increasing values of $\gamma_0$.}
    \label{fig:ExampleValuesOfGamma}
\end{figure}

\Cref{fig:ExampleValuesOfGamma} illustrates behavior for simulated data with various values of $\gamma_0$. When the propensity score $e(X)$ has a well-defined density, $\gamma_0 = 2$ corresponds to a roughly uniform distribution of propensity scores \citep{XinweiMaRobustIPW}. When $\gamma_0$ is above two, the density of propensity scores tends to zero at zero; when $\gamma_0$ is below two, the density of propensity scores can tend to infinity at zero. Heuristically, there are never too many treated observations with very small propensity scores, but $\gamma_0$ governs the degree to which there can be many untreated observations with very small propensity scores.

A phase transition occurs when $\gamma_0$ crosses two.
\begin{proposition}\label{prop:SemiparametricBoundFinite}
    (i) Suppose \Cref{def:AllowedDistributions} holds for some $\gamma_0 > 2$. Then the semiparametric bound is finite for all $P \in \mathscr{P}$. (ii) Suppose \Cref{def:AllowedDistributions} holds for some $\gamma_0 \in (1, 2)$, and there is a $P \in \mathscr{P}$ and $C' > 0$ such that $P( e(X) \leq \pi ) \geq C' \pi^{\gamma_0 - 1}$ for all $\pi \in (0, 1]$. Then the semiparametric bound is infinite for $P$.
\end{proposition}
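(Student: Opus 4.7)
The plan is to reduce both claims to a single computation of $E_P[1/e(X)]$ via the classical semiparametric efficiency bound for $\psi(P) = E_P[\mu(X)]$. The efficient influence function is $\phi(Z) - \psi$ with $\phi(Z) = \mu(X) + D(Y - \mu(X))/e(X)$, and a short conditioning calculation (using $E_P[D(Y-\mu(X)) \mid X] = 0$ and $E_P[D(Y-\mu(X))^2 \mid X] = e(X)\sigma^2(X)$ with $\sigma^2(X) = \Var_P(Y \mid X, D=1)$) gives the familiar representation
\begin{equation*}
    V(P) \;=\; \Var_P(\phi(Z)) \;=\; \Var_P(\mu(X)) + E_P\!\left[\frac{\sigma^2(X)}{e(X)}\right].
\end{equation*}
\Cref{def:AllowedDistributions}\ref{subsef:BoundedVarMu} handles the first term, while \Cref{def:AllowedDistributions}\ref{def:ConditionalMoments} and \Cref{def:AllowedDistributions}\ref{subdef:Residuals} sandwich $\sigma^2(X) \in [\sigma_{\min}^2, M^2]$ almost surely, so the proposition reduces to two-sided tail estimates on $E_P[1/e(X)]$.

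For part (i), I would apply the layer-cake identity $E_P[1/e(X)] = \int_0^\infty P(e(X) < 1/t)\,dt$, bound the integrand by $1$ on $[0,1]$ and by \Cref{def:AllowedDistributions}\ref{item:PropensityTail} with $\pi = 1/t$ on $[1,\infty)$, yielding
\begin{equation*}
    E_P[1/e(X)] \;\leq\; 1 + \int_1^\infty C\, t^{-(\gamma_0 - 1)}\,dt \;=\; 1 + \frac{C}{\gamma_0 - 2},
\end{equation*}
which is finite (uniformly over $\mathscr{P}$) whenever $\gamma_0 > 2$. For part (ii), the matching lower tail bound $P(e(X) \leq \pi) \geq C' \pi^{\gamma_0-1}$ on $(0,1]$ applied to the same identity gives
\begin{equation*}
    E_P[1/e(X)] \;\geq\; \int_1^\infty C'\, t^{-(\gamma_0 - 1)}\,dt \;=\; \infty
\end{equation*}
for every $\gamma_0 \in (1,2)$; combined with $\sigma^2(X) \geq \sigma_{\min}^2 > 0$, this forces $V(P) = \infty$ at the designated distribution.

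Both halves are essentially one-line integral computations once the representation $V(P) = \Var_P(\mu(X)) + E_P[\sigma^2(X)/e(X)]$ is in hand, so the only real obstacle is interpreting ``semiparametric bound'' coherently when it is infinite. I would handle this by citing the standard derivation of the efficient influence function (for example, Hahn 1998) for the nonparametric model with strict overlap, and then observing that the variance formula extends by monotone convergence as strict overlap is relaxed; equivalently, in case (ii) one can exhibit a regular one-dimensional parametric submodel along which the Cram\'er--Rao bound diverges, which implies both $V(P) = \infty$ and the nonexistence of regular root-$n$ consistent estimators of $\psi$ at $P$.
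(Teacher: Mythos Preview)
Your proposal is correct and follows essentially the same approach as the paper: both invoke the Hahn (1998) efficiency bound $\Var_P(\mu(X)) + E_P[\sigma^2(X)/e(X)]$, sandwich $\sigma^2(X)$ using Assumptions~\ref{def:AllowedDistributions}\ref{def:ConditionalMoments} and~\ref{def:AllowedDistributions}\ref{subdef:Residuals}, and then compute $E_P[1/e(X)]$ via the layer-cake identity with the upper (respectively lower) tail bound to obtain $1 + C/(\gamma_0-2)$ (respectively divergence). The paper additionally cites Khan and Tamer for the infinite-bound direction, but the argument is otherwise line-for-line the same as yours.
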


I will require certain rates on the nuisance functions $e(X)$ and $\mu(X)$. I write the worst-case rates as $r_{e,n}$ and $r_{\mu,n}$. 
\begin{assumption}[Cross-fitting and $L_\infty$ rates]\label{assum:NuisanceRates}
    The nuisances $\hat{\mu}$ and $\hat{e}$ are estimated with cross-fitting with a fixed number of folds $K$. If $n_k$ is the number of observations per fold, then $\inf_k n_k / \sup_k n_k \to 1$. Further, for all $k \in 1, \hdots, K$ and all $P \in \mathscr{P}$, the cross-fit nuisances satisfy the uniform consistency rates $\E_{P}[\| \hat{\mu}^{(-k)}_n - \mu \|_{\infty}] \leq r_{\mu,n}$ and $\E_{P}[\| \hat{e}_n^{(-k)} - e \|_{\infty}] \leq r_{e,n}$ where $r_{\mu,n}, r_{e,n}$ are uniformly bounded above. 
\end{assumption}
Cross-fitting is a common strategy for simplifying the analysis of Neyman-orthogonal estimators like AIPW \citep{doubleML}.  In practice, nuisances satisfying \Cref{assum:NuisanceRates} may only be achieved with arbitrarily high probability. I impose uniformity to ensure these rates hold in regions of $x$ with weak overlap, but can be bypassed with $L_2$ error conditions in other regions. Such uniformity assumptions are standard in studying semiparametric estimators under irregular identification \citep{semenova2024aggregatedintersectionboundsaggregated}.

\subsection{Estimator and Consistency}

My formal analysis considers the clipped AIPW estimator with cross-fit nuisance function estimates. Results for the other standard thresholding procedure, trimming, generally follow by the same arguments. I begin by providing sufficient conditions for consistency.

For simplicity, I focus the theoretical analysis on estimating the average potential outcome $\psi = E[D Y / e(X)]$. The average treatment effect follows as a corollary. The clipped AIPW estimator of $\psi$ is:
\begin{align}
    \hat{\psi}_{clip}^{AIPW}(b_n) & = \frac{1}{n} \sum_{k=1}^K \sum_{i \in \mathcal{F}^k} \phi\left( Z_i \mid b_n, \hat{\eta}^{(-k)} \right), \text{ where } \phi(Z \mid b, \hat{\eta}) = \hat{\mu}(X) + \frac{D (Y - \hat{\mu}(X))}{\max\{\hat{e}(X), b\}}.\label{eq:DefinePhi}
\end{align}
In that equation, $\mathcal{F}^k$ is the set of observations $i$ randomly partitioned in fold $k$, $\hat{\eta}^{(-k)}$ is the nuisance function estimates constructed only on observations in folds other than $k$. The unthresholded AIPW estimator is the special case of $b_n = 0$. I analyze the clipped AIPW estimator because results for the trimmed AIPW estimator follow somewhat more easily.

A standard result for the unthresholded AIPW estimator is double robustness: when $e(X)$ is bounded away from zero, unthresholded AIPW is consistent for $\psi$ if either $r_{e,n}$ or $r_{\mu,n}$ tends to zero. The existence of weak overlap introduces a subtlety to double robustness.
\begin{proposition}[Consistency] \label{prop:Consistency}
    Suppose $b_n$ satisfies $n^{-1/2} \ll b_n \ll 1$, the conditions of \Cref{assum:NuisanceRates} hold, and either (i) $r_{e,n} b_n^{\min\{\gamma_0-2, 0\}} \to 0$ or (ii) $r_{\mu,n} \frac{r_{e,n} + b_n}{b_n} \to 0$. Then for all $\epsilon > 0$, $$\sup_{P \in \mathscr{P}} P\left( \left| \hat{\psi}_{clip}^{AIPW}(b_n) - \psi(P) \right| > \epsilon \right) \to 0.$$
\end{proposition}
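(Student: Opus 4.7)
The plan is to decompose
\[
\hat{\psi}_{clip}^{AIPW}(b_n) - \psi = \bigl(\tilde{\psi}_{(Oracle)}^{AIPW}(b_n) - \psi\bigr) + \bigl(\hat{\psi}_{clip}^{AIPW}(b_n) - \tilde{\psi}_{(Oracle)}^{AIPW}(b_n)\bigr)
\]
and show each summand is $o_P(1)$ uniformly over $P \in \mathscr{P}$. For the oracle piece, I would first observe the key identity $E_P[\phi(Z \mid b_n, \eta)] = \psi$ for every $b_n \geq 0$: conditional on $X$, the residual $D(Y - \mu(X))/\max\{e(X), b_n\}$ is mean zero, so clipping introduces no bias when the nuisances are correct. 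The oracle centered mean is then an iid average whose variance is bounded by $E_P[\phi^2]/n \leq C/(n b_n)$, combining \Cref{def:AllowedDistributions}\ref{def:ConditionalMoments} with the pointwise inequality $e(X)/\max\{e(X), b_n\}^2 \leq 1/b_n$. Since $b_n \gg n^{-1/2}$ forces $n b_n \to \infty$, Chebyshev delivers uniform convergence with a constant depending only on the parameters of \Cref{def:AllowedDistributions}.

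For the nuisance correction piece, I would condition on $\hat{\eta}^{(-k)}$, which is independent of fold $k$ by the cross-fitting in \Cref{assum:NuisanceRates}. The within-fold contribution becomes an iid mean whose conditional variance is at most a constant times $(1 + \|\hat\mu - \mu\|_\infty^2)/b_n$, hence $o_P(n)$ via Markov on $r_{\mu,n}$ combined with $n b_n \to \infty$. Everything then reduces to the conditional bias
\[
B_n(\hat\eta) = E\!\left[(\hat\mu(X) - \mu(X))\!\left(1 - \frac{e(X)}{\max\{\hat e(X), b_n\}}\right) \bigm| \hat\eta\right],
\]
which displays the familiar doubly robust product form.

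Under condition (ii) I would pull $\|\hat\mu - \mu\|_\infty$ out of the expectation and use the crude bound $|1 - e/\max\{\hat e, b_n\}| \leq (b_n + |\hat e - e|)/b_n$, giving $E[|B_n|] \leq C\, r_{\mu,n}(r_{e,n} + b_n)/b_n \to 0$. Under condition (i) the same factorization works, but now $\|\hat\mu - \mu\|_\infty$ is only $O_P(1)$, so the other factor $E[|1 - e/\max\{\hat e, b_n\}| \mid \hat\eta]$ must itself vanish. To get this, split at $\{e(X) \leq b_n\}$: on that event, $|1 - e/\max\{\hat e, b_n\}| \leq 1$ combines with \Cref{def:AllowedDistributions}\ref{item:PropensityTail} to contribute $O(b_n^{\gamma_0 - 1})$; on the complement, restricting to the high-probability event $|\hat e - e| \leq e/2$ yields $|1 - e/\max\{\hat e, b_n\}|$ bounded by a constant times $|\hat e - e|/\max\{e, b_n\}$, whose expectation is at most $\|\hat e - e\|_\infty \cdot E[1/\max\{e, b_n\}]$. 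Integration by parts against the tail bound in \Cref{def:AllowedDistributions}\ref{item:PropensityTail} gives $E[1/\max\{e, b_n\}] = O(1 \vee b_n^{\gamma_0 - 2})$, reproducing exactly the product $r_{e,n} b_n^{\min\{\gamma_0 - 2, 0\}}$ appearing in the hypothesis.

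The hard part will be controlling $B_n$ in the condition (i) regime, where $\hat\mu$ need not be consistent: one cannot rely on $\|\hat\mu - \mu\|_\infty = o_P(1)$ to squelch the weighted error and must instead exploit \Cref{def:AllowedDistributions}\ref{item:PropensityTail} to integrate the inverse-propensity singularity, which is precisely what generates the $b_n^{\gamma_0 - 2}$ scaling in the hypothesis. Uniformity across $\mathscr{P}$ is then automatic, because every constant used depends only on the uniform parameters of \Cref{def:AllowedDistributions} and the uniform rates of \Cref{assum:NuisanceRates}.
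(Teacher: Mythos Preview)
Your overall plan --- split off the unbiased oracle term, control variances via $nb_n\to\infty$, and reduce everything to the conditional bias $B_n(\hat\eta)=E[(\hat\mu-\mu)(1-e/\max\{\hat e,b_n\})\mid\hat\eta]$ --- matches the paper's, and your condition~(ii) argument is correct and essentially identical to the paper's. (A minor slip: the conditional variance of the nuisance-correction term is of order $(1+\|\hat\mu-\mu\|_\infty^2)/(nb_n^2)$, not $/(nb_n)$, since with $\hat e$ in the denominator you only have $e/\max\{\hat e,b_n\}^2\le 1/b_n^2$; this is harmless because $nb_n^2\to\infty$ still follows from $b_n\gg n^{-1/2}$.)

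There is, however, a real gap in your condition~(i) analysis. You split at $\{e(X)\le b_n\}$ and on the complement appeal to ``the high-probability event $|\hat e-e|\le e/2$'', but this event is not high-probability: on $\{e(X)>b_n\}$ it would require $\|\hat e-e\|_\infty\le b_n/2$, whereas condition~(i) for $\gamma_0<2$ only gives $r_{e,n}=o(b_n^{2-\gamma_0})$, which permits $r_{e,n}\gg b_n$. On the residual strip $\{b_n<e(X)\le 2\|\hat e-e\|_\infty\}$ your pointwise bound $|1-e/\max\{\hat e,b_n\}|\le C\,|\hat e-e|/e$ can fail, and the crude substitute $|\hat e-e|/b_n$ yields a contribution of order $r_{e,n}/b_n$, which need not vanish. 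The paper avoids this by splitting at $e=b_n+r_{e,n}$ rather than $e=b_n$ (its Lemma~\ref{lemma:BiasBound}): the strip is absorbed into the small-propensity bucket where the integrand is bounded and $P(e\le b_n+r_{e,n})\le C(b_n+r_{e,n})^{\gamma_0-1}\to 0$, while on $\{e>b_n+r_{e,n}\}$ one automatically has $\hat e>b_n$ (clipping inactive) and $|1-e/\hat e|\le r_{e,n}/(e-r_{e,n})$, whose expectation integrated against the tail bound produces the $r_{e,n}b_n^{\gamma_0-2}$ scaling appearing in the hypothesis.
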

Under very weak overlap, condition (i) is stronger than the classic strict overlap condition that $r_{e,n}$ consistency implies estimator consistency. It requires that $r_{e,n}$ go to zero faster than $b_n^{2-\gamma_0}$, so that as overlap is allowed to be weaker, the propensity consistency rate may need to be as fast as $b_n$ itself. Under even somewhat weak overlap, condition (ii) is stronger than the classic $r_{\mu,n} \to 0$ condition. With an inconsistent propensity score, a meaningful fraction of the data may be clipped even asymptotically, in which case the outcome regression error rate must offset the positive probability of incorrectly assigning an inverse propensity weight of $b_n^{-1}$.

\subsection{Asymptotic Normality and Confidence Intervals}

This subsection presents the main theoretical claims of the paper. It shows that under suitable rate restrictions, the clipped AIPW estimator is first-order equivalent to an oracle clipped AIPW estimator, both estimators are consistent and asymptotically normal, and simple Wald confidence intervals are well-calibrated.

A common strategy for deriving confidence intervals for unthresholded AIPW under strict overlap is Neyman orthogonality. In those classic settings, the difference between the feasible AIPW estimator with estimated nuisance functions and the hypothetical oracle AIPW estimator with known nuisance functions is
\begin{align*}
    \frac{1}{n} \sum \phi(Z \mid 0, \hat{\eta}) - \phi(Z \mid 0, \eta) & = \frac{1}{n} \sum \left( \hat{\mu} - \mu \right) \left( \frac{D}{\hat{e}} - 1 \right) + (Y - \mu) \left( \frac{D}{\hat{e}} - \frac{D}{e} \right) .
\end{align*}
Intuitively, the regression errors $\hat{\mu} - \mu$ are debiased by the inverse propensity estimates $\frac{D}{\hat{e}}$ of the number one. As a result, in classical settings, slowly consistent nuisance estimates can yield quickly consistent causal estimates. When all nuisances are consistent at $o(n^{-1/4})$ rates and $e(X)$ is bounded away from zero, inverse propensity errors are of the same order as propensity errors, classical AIPW estimates are first-order equivalent to oracle estimates with known nuisances, and simple Wald confidence intervals cover the true causal effect by appeal to the asymptotically normal oracle AIPW estimator.

Under very weak overlap, thresholded AIPW does not obtain the standard debiasing benefit. The analogous decomposition for clipped AIPW is
\begin{align*}
    \frac{1}{n} \sum \phi(Z \mid b_n, \hat{\eta}) - \phi(Z \mid b_n, \eta) & = \frac{1}{n} \sum \left( \hat{\mu} - \mu \right) \left( \frac{D}{\max\{\hat{e}, b_n\}} - 1 \right) + (Y - \mu) \left( \frac{D}{\max\{\hat{e}, b_n\}} - \frac{D}{\max\{e, b_n\}} \right).
\end{align*}
Above the clipping threshold $b_n$, thresholded AIPW's nuisance estimation error enjoys a product-of-errors character that is similar to classical settings, albeit with multiplication by weights as large as $b_n^{-1}$. Below the clipping threshold, the regression errors are not debiased by any inverse propensity estimate.

Thresholded AIPW enjoys a subtly different form of debiasing: as the threshold tends to zero, increasingly little mass remains. If the threshold $b_n$ tends to zero quickly enough, the bias in the thresholded region is debiased by the threshold itself. If the threshold $b_n$ tends to zero slowly enough, the product of errors condition can remain feasible. My formal contribution is to show that there can be a Goldilocks range where $b_n$ tends to zero neither too slowly nor too quickly, so that thresholded AIPW is first-order equivalent to oracle AIPW and is asymptotically normal by appeal to the asymptotically normal thresholded oracle AIPW estimator.

My results for asymptotic normality and statistical inference will proceed under the following rate requirements. 
\begin{assumption}[Minimal rates]\label{assumption:ConsistencyRatesSufficient}
    Assumption \ref{assum:NuisanceRates} holds, with the following rates on the regression error $r_{\mu,n}$ and the propensity error $r_{e,n}$:
    \begin{enumerate}[label=(\alph*), itemsep=-0.5ex, topsep=-0.5ex]
        \item \emph{Consistency}. $r_{\mu,n}, r_{e,n} \to 0$. \label{assum:Consistency}
        \item \emph{Product of errors}. $r_{\mu,n} r_{e,n} \left( 1 +  b_n^{(\gamma_0 - 2) / 2} \log(1 / b_n)^{1\{ \gamma_0 = 2 \} / 2} \right) \ll n^{-1/2} $. \label{assum:prodOfErrorsWorstCase}
        \item \emph{Regression error near singularities}. $r_{\mu,n} b_n^{\gamma_0 / 2} \ll n^{-1/2}$. \label{assum:SmalRegErrorNearSingWorstCaseContinuity}
        \item \emph{Asymptotically known thresholding}. $r_{e,n} \ll b_n$. \label{assum:AsymKnownThresholding}
    \end{enumerate}
\end{assumption}
Under very weak overlap,  conditions \ref{assum:prodOfErrorsWorstCase} and \ref{assum:SmalRegErrorNearSingWorstCaseContinuity} are stronger than the standard product-of-errors condition $r_{\mu,n} r_{e,n} \ll n^{-1/2}$. For example, when $\gamma_0 \geq 1.5$ and $r_{\mu,n} = n^{-1/4}$, then $r_{e,n} \ll n^{-1/3}$ will suffice, provided $r_{e,n} \ll b_n \ll n^{-1/3}$.  However, these conditions never require parametric $n^{-1/2}$ consistency rates: shared regression rates of $n^{-1/3}$ will always suffice for these conditions, provided the clipping threshold $b_n$ goes to zero at a rate sufficiently close to $n^{-1/3}$.  I discuss these conditions further in \Cref{sec:InterpretationOfRates}.

Certain technical possibilities call for one of two alternative further assumptions: a distributional smoothness assumption or a stronger rate assumption. 
\begin{assumption}[Nongeneracy or faster rates]\label{assum:NondegenerateOrFaster}
One of the following two conditions hold:
\begin{enumerate}[label=(\roman*)]
    \item \emph{Nondegenerate overlap}. There exists some $\rho > 0$ such that for all $P \in \mathscr{P}$ and $\pi \in [0, 1]$,  $P( e(X) \leq \pi / 2 ) \leq (1-\rho) P( e(X) \leq \pi)$. \label{def:ContinuousDistributions}
    \item \emph{Faster rates}. $r_{\mu,n} b_n^{(\gamma_0 - 1) 2 / \gamma_0} \ll n^{-1/2}$.\label{assumption:ConsistencyRatesNotContinuous}
\end{enumerate}
\end{assumption}
\Cref{assum:NondegenerateOrFaster}\ref{def:ContinuousDistributions} is a uniform version of the requirement that $P( e(X) \leq x ) = c(x) x^{\gamma_0 - 1}$ for $c(x)$ tending to a constant at zero. The definition formalizes the notion that a distribution may place some propensity mass near zero, but it may not place mass adversarially within the region of the origin. When $\gamma_0 < 2$, \Cref{assum:NondegenerateOrFaster}\ref{assumption:ConsistencyRatesNotContinuous} is stronger than \Cref{assumption:ConsistencyRatesSufficient}\ref{assum:SmalRegErrorNearSingWorstCaseContinuity}. As $\gamma_0$ tends to one, the condition approaches the parametric requirement $r_{\mu,n} = O(n^{-1/2})$.

I now provide the main theoretical result. 
\begin{theorem}[(Slow) Asymptotic Normality] \label{thm:SecondOrderNuisances}
    Suppose $b_n$ satisfies $n^{-1/2} \ll b_n \ll 1$, and Assumptions \ref{def:AllowedDistributions}, \ref{assum:NuisanceRates}, \ref{assumption:ConsistencyRatesSufficient}, and \ref{assum:NondegenerateOrFaster} hold.  Then the clipped AIPW estimator is oracle-equivalent: $$\lim_{n \to \infty} \sup_{P \in \mathscr{P}} \sigma_n^{-2} E_{P}\left[ \left( \hat{\psi}_{clip}^{AIPW}(b_n) - \tilde{\psi}_{(Oracle)}^{AIPW}(b_n) \right)^2 \right] = 0.$$
    Further, clipped AIPW is asymptotically normal: $$\lim_{n \to \infty} \sup_{P \in \mathscr{P}} \sup_{t \in \R} \left| P \left( \frac{\hat{\psi}_{clip}^{AIPW}(b_n) - \psi(P)}{\hat{\sigma}_n} \leq t \right) - \Phi(t) \right| = 0.$$
\end{theorem}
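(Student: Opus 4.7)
The plan is to reduce both claims to a uniform CLT for the oracle clipped AIPW estimator by showing the nuisance-estimation error is of strictly smaller order than $\sigma_n$. I would begin from the decomposition written just above the theorem, $\hat\psi_{clip}^{AIPW}(b_n) - \tilde\psi_{(Oracle)}^{AIPW}(b_n) = A_{1,n} + A_{2,n}$, where $A_{1,n}$ is the regression-error term and $A_{2,n}$ is the propensity-error term, and condition on the auxiliary folds so that cross-fitting treats $\hat\eta$ as fixed. The two conclusions then follow from (i) uniform bounds $E[A_{j,n}^2] = o(\sigma_n^2)$, (ii) a matching lower bound on $\sigma_n^2$, and (iii) a uniform Lindeberg-type CLT for the oracle sum together with $\hat\sigma_n/\sigma_n \to_P 1$.

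The main bookkeeping step is matching the mean-and-variance of each $A_{j,n}$ to one entry of Assumption \ref{assumption:ConsistencyRatesSufficient}. For $A_{2,n}$, the identity $E[D(Y-\mu) \mid X] = 0$ kills the conditional mean, and the Lipschitz bound $|1/\max\{\hat e, b_n\} - 1/\max\{e, b_n\}| \leq |\hat e - e| / [b_n \max\{e, b_n\}]$ combined with \ref{def:AllowedDistributions}\ref{def:ConditionalMoments} yields $\mathrm{Var}(A_{2,n}) \precsim n^{-1}(r_{e,n}/b_n)^2 E[e(X)/\max\{e, b_n\}^2]$, which maps directly to \ref{assumption:ConsistencyRatesSufficient}\ref{assum:AsymKnownThresholding}. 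For $A_{1,n}$ I would split at $\{e(X) > 2 b_n\}$: on the strong-overlap side, $r_{e,n} \ll b_n$ forces $\max\{\hat e, b_n\} = \hat e$ with high probability, and a product-of-errors argument gives a mean of order $r_{\mu,n} r_{e,n} E[1/\max\{e, b_n\}]$, matching \ref{assumption:ConsistencyRatesSufficient}\ref{assum:prodOfErrorsWorstCase}; on the weak-overlap side the regression error is paid in full but Assumption \ref{def:AllowedDistributions}\ref{item:PropensityTail} caps the mass at $C b_n^{\gamma_0 - 1}$, giving $O(r_{\mu,n} b_n^{\gamma_0 - 1})$, exactly what \ref{assumption:ConsistencyRatesSufficient}\ref{assum:SmalRegErrorNearSingWorstCaseContinuity} tolerates. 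Integration by parts on the propensity tail identifies $E[1/\max\{e, b_n\}^k]$ as $O(1 + b_n^{\gamma_0 - k} + \log(1/b_n)\cdot 1\{\gamma_0 = k\})$, which is the workhorse computation throughout.

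The matching lower bound $\sigma_n^2 \succsim n^{-1}(1 + b_n^{\gamma_0 - 2} + \log(1/b_n)\cdot 1\{\gamma_0 = 2\})$ comes from $n\sigma_n^2 \geq \sigma_{\min}^2 E[e(X)/\max\{e, b_n\}^2]$ via \ref{def:AllowedDistributions}\ref{subdef:Residuals}, decomposed dyadically. Under Assumption \ref{assum:NondegenerateOrFaster}\ref{def:ContinuousDistributions} each ring $e \in [2^{-k-1}, 2^{-k}]$ carries at least a $\rho$-fraction of the mass below $2^{-k}$, which yields a tight lower bound matching the tail upper bound uniformly in $P$. When the nondegeneracy fails, Assumption \ref{assum:NondegenerateOrFaster}\ref{assumption:ConsistencyRatesNotContinuous} is instead calibrated so that $A_{1,n}$ can be controlled against the trivial lower bound $\sigma_n^2 \succsim n^{-1}$ alone, covering distributions in which propensity mass is allowed to concentrate adversarially just below the clipping threshold.

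With oracle equivalence in hand, asymptotic normality reduces to a uniform Lindeberg-Feller CLT for the i.i.d.\ oracle summands. Their mean is exactly $\psi(P)$ since the clipping bias $E[D(Y-\mu)/\max\{e,b_n\}]$ vanishes by $E[D(Y-\mu)\mid X]=0$, and their variance equals $n\sigma_n^2$; the $q > 3$ conditional moment from \ref{def:AllowedDistributions}\ref{def:ConditionalMoments} combined with the propensity-tail moment computation makes the Lindeberg ratio $(n b_n^{\gamma_0})^{-(q-2)/2}$ in the very-weak regime (which vanishes because $b_n \gg n^{-1/2}$ and $\gamma_0 \leq 2$), and $O(n^{-(q-2)/2})$ in the somewhat-weak regime where both $\sigma_n^2$ and $E[|\phi|^q]$ are bounded. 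A parallel decomposition of $\hat\sigma_n^2 - \sigma_n^2$ gives $\hat\sigma_n/\sigma_n \to_P 1$ uniformly, so the studentized statistic inherits the standard-normal limit. The main obstacle I anticipate is precisely this uniformity under very weak overlap: individual summand variances diverge like $b_n^{\gamma_0 - 2}$, and ruling out adversarial concentration of propensity mass just below $b_n$ is exactly why the dichotomy in Assumption \ref{assum:NondegenerateOrFaster} is imposed.
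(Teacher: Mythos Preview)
Your decomposition and the mapping of each piece to an entry of Assumption~\ref{assumption:ConsistencyRatesSufficient} track the paper's argument closely (Lemma~\ref{lemma:BiasBound}, Lemma~\ref{lemma:OrthogonalizeSquaredPropensities}, Lemma~\ref{lemma:OracleSecondOrder}, with Berry--Esseen in place of Lindeberg for oracle normality). The product-of-errors and asymptotically-known-thresholding pieces are fine.

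There is, however, a genuine gap in your variance lower bound. The claim $\sigma_n^2 \succsim n^{-1}(1 + b_n^{\gamma_0-2})$ \emph{uniformly over $\mathscr{P}$} is false: Assumption~\ref{def:AllowedDistributions}\ref{item:PropensityTail} is only an \emph{upper} bound on the propensity tail, so $\mathscr{P}$ may contain distributions with strict overlap, for which $n\sigma_n^2 = \Theta(1)$. Your dyadic sum under Assumption~\ref{assum:NondegenerateOrFaster}\ref{def:ContinuousDistributions} gives $E[e/\max\{e,b_n\}^2] \succsim \sum_k 2^k \rho\, P(e \leq 2^{-k})$, but every summand can vanish. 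This matters because your weak-overlap bias $O(r_{\mu,n} b_n^{\gamma_0-1})$ divided by the trivial $\sigma_n \succsim n^{-1/2}$ is \emph{not} $o(1)$ under \ref{assumption:ConsistencyRatesSufficient}\ref{assum:SmalRegErrorNearSingWorstCaseContinuity} when $\gamma_0 < 2$, since then $\gamma_0 - 1 < \gamma_0/2$ and $b_n^{\gamma_0-1} \gg b_n^{\gamma_0/2}$. The same problem recurs under \ref{assum:NondegenerateOrFaster}\ref{assumption:ConsistencyRatesNotContinuous}: for $\gamma_0 < 2$ one has $2(\gamma_0-1)/\gamma_0 > \gamma_0 - 1$, so that condition does not control $r_{\mu,n} b_n^{\gamma_0-1}$ against $n^{-1/2}$ either.

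The paper's fix (Corollary~\ref{cor:WorstCaseRates}, verifying the adaptive Assumption~\ref{assum:RateRequirementsForInference}) is to bound the \emph{ratio} $r_{\mu,n}\, P(e \leq b_n)\big/\sqrt{E[D/\max\{e,b_n\}^2]}$ directly, without separating numerator and denominator into their own worst cases over $P$. Under \ref{assum:NondegenerateOrFaster}\ref{def:ContinuousDistributions}, the single-ring bound $E[D/\max\{e,b_n\}^2] \geq \tfrac{\rho}{2b_n}P(e\leq b_n)$ makes the ratio $\leq r_{\mu,n}\sqrt{2b_n P(e\leq b_n)/\rho}$, and only \emph{then} is the tail upper bound applied to the surviving factor of $P(e\leq b_n)$, yielding $O(r_{\mu,n} b_n^{\gamma_0/2})$. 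Under \ref{assum:NondegenerateOrFaster}\ref{assumption:ConsistencyRatesNotContinuous}, Lemma~\ref{lemma:effective_sample_size} supplies the distribution-free bound $E[D/\max\{e,b_n\}^2] \succsim 1 + b_n^{-2} P(e\leq b_n)^{\gamma_0/(\gamma_0-1)}$ and the proof splits on which term dominates. The point you are missing is that the worst-case bias and the worst-case $\sigma_n$ are realized at \emph{different} $P \in \mathscr{P}$; bounding each separately and dividing destroys the cancellation in $P(e \leq b_n)$ that makes \ref{assumption:ConsistencyRatesSufficient}\ref{assum:SmalRegErrorNearSingWorstCaseContinuity} sufficient.
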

\Cref{thm:SecondOrderNuisances} is the core theoretical claim of this paper. The first result shows that thresholded AIPW is first-order equivalent to an oracle estimator with known nuisances: the effect of nuisance estimation error on the treatment effect estimate tends to zero faster than the standard deviation of the oracle estimator. The second result leverages this first-order equivalence to characterize the asymptotic distribution of the clipped AIPW estimates and t-statistics: the estimator is asymptotically normal, and estimated t-statistics are asymptotically standard normal.

Both results are standard for AIPW under strict overlap, but substantial care is required to handle unbounded inverse propensities under weak overlap. The argument for normality builds on \cite{XinweiMaRobustIPW}'s proof that aggressively-trimmed oracle IPW with known propensities achieves asymptotic normality with first-order bias. I extend their argument to a uniform family of distributions using the Berry-Esseen Theorem and note that oracle AIPW must have zero finite-sample bias.

The main task of \Cref{thm:SecondOrderNuisances} is to show that replacing the true nuisances with estimated nuisances has a second-order effect on clipped AIPW estimates under appropriate conditions. This is nontrivial, because under weak overlap, there is an asymptotically unbounded number of observations with arbitrarily large inverse propensities with even known nuisance functions. Nevertheless, by taking appropriate care and leveraging that clipping introduces bias by reducing inverse propensities, I am able to show that the effect of nuisance estimation is second-order even under the very weak overlap case in which unthresholded AIPW fails to be asymptotically normal and no regular root-n estimators exist.

Next, I show that \Cref{thm:SecondOrderNuisances} yields the natural result for inference: simple t-tests based on Wald confidence intervals are well-calibrated. 
\begin{corollary}[T-tests are well-calibrated]\label{cor:TTests}
    Suppose the conditions of \Cref{thm:SecondOrderNuisances} hold. Consider the Wald confidence interval $\hat{\mathcal{C}}_n(\alpha) = \left[ \hat{\psi}_{clip}^{AIPW}(b_n) + z_{\alpha/2} \hat{\sigma}_n, \hat{\psi}_{clip}^{AIPW}(b_n) + z_{1-\alpha/2} \hat{\sigma}_n \right]$. Then for all $\alpha \in (0, 1/2)$, $$\limsup_{n \to \infty} \sup_{P \in \mathscr{P}} \left| P( \psi(P) \in \hat{\mathcal{C}}_n(\alpha)) - (1-\alpha) \right| = 0.$$ 
\end{corollary}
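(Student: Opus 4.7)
The plan is to rewrite the coverage event as a two-sided bound on the studentized statistic and then invoke the uniform Kolmogorov convergence already established by \Cref{thm:SecondOrderNuisances}. Let $T_n := (\hat{\psi}_{clip}^{AIPW}(b_n) - \psi(P))/\hat{\sigma}_n$ and let $F_n^P$ denote its CDF under $P$. Using the symmetry $z_{\alpha/2} = -z_{1-\alpha/2}$, the event $\{\psi(P) \in \hat{\mathcal{C}}_n(\alpha)\}$ rearranges to $\{z_{\alpha/2} \leq T_n \leq z_{1-\alpha/2}\}$, so that
\begin{equation*}
P\bigl(\psi(P) \in \hat{\mathcal{C}}_n(\alpha)\bigr) \;=\; F_n^P(z_{1-\alpha/2}) - F_n^P(z_{\alpha/2}^{-}),
\end{equation*}
where $F_n^P(\cdot^{-})$ denotes the left limit.

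Next, I would compare this expression to $1-\alpha = \Phi(z_{1-\alpha/2}) - \Phi(z_{\alpha/2})$. Since $\Phi$ is continuous, $\Phi(z_{\alpha/2}^{-}) = \Phi(z_{\alpha/2})$, so the triangle inequality yields
\begin{equation*}
\bigl|P(\psi(P) \in \hat{\mathcal{C}}_n(\alpha)) - (1-\alpha)\bigr| \;\leq\; \bigl|F_n^P(z_{1-\alpha/2}) - \Phi(z_{1-\alpha/2})\bigr| + \bigl|F_n^P(z_{\alpha/2}^{-}) - \Phi(z_{\alpha/2})\bigr|.
\end{equation*}
The left-limit term is bounded by $\sup_{t\in\R}|F_n^P(t) - \Phi(t)|$, because for any monotone $F_n^P$ and continuous $\Phi$, $|F_n^P(t^-) - \Phi(t)| \leq \sup_s |F_n^P(s) - \Phi(s)|$ (consider a sequence $s_k \uparrow t$ and pass to the limit using continuity of $\Phi$).

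Therefore $\sup_{P \in \mathscr{P}} |P(\psi(P) \in \hat{\mathcal{C}}_n(\alpha)) - (1-\alpha)| \leq 2 \sup_{P \in \mathscr{P}} \sup_{t \in \R} |F_n^P(t) - \Phi(t)|$, and the right-hand side tends to zero by the second display of \Cref{thm:SecondOrderNuisances}. Taking $\limsup_n$ delivers the result.

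I do not anticipate any substantive obstacle here: all of the statistical content (oracle equivalence, handling of the clipping bias, the Berry–Esseen-type uniform CLT) has been loaded into \Cref{thm:SecondOrderNuisances}, and the corollary is a bookkeeping exercise in converting a uniform Kolmogorov bound on a one-sided CDF into a uniform bound on a two-sided coverage probability. The only mild subtlety is the strict-versus-weak inequality at the lower endpoint, which is dispatched by continuity of $\Phi$ as above.
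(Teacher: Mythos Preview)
Your proposal is correct and takes essentially the same approach as the paper: both rewrite the coverage event as a two-sided bound on the studentized statistic, split via the triangle inequality into two one-sided CDF discrepancies, and invoke the uniform Kolmogorov convergence from \Cref{thm:SecondOrderNuisances}. Your treatment of the left-limit at the lower endpoint is slightly more careful than the paper's, but the argument is otherwise identical.
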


Under somewhat weak overlap, unthresholded AIPW is semiparametrically efficient. I now show thresholding is also unnecessary. 
\begin{corollary}[Thresholding is second-order under somewhat weak overlap]\label{cor:SomewhatWeakTTests}
    Suppose Assumption \ref{assum:NuisanceRates} holds for some $\gamma_0 > 2$, $r_{e,n}$ and $r_{\mu,n} \to 0$, and $r_{e,n} r_{\mu,n} \ll n^{-1/2}$. Then the feasible AIPW estimator with $b_n = 0$ is semiparametrically efficient, and the associated Wald confidence interval $\hat{\mathcal{C}}_n(\alpha)$ satisfies $$\limsup_{n \to \infty} \sup_{P \in \mathscr{P}} \left| P( \psi(P) \in \hat{\mathcal{C}}_n(\alpha)) - (1-\alpha) \right| = 0.$$ 
\end{corollary}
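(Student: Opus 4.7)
The strategy is to bridge the unclipped feasible and oracle estimators through an auxiliary vanishing clipping threshold $b_n$: apply Theorem \ref{thm:SecondOrderNuisances} at that threshold, then show that removing the clipping is $o_P(n^{-1/2})$ on both sides. Because $\gamma_0 > 2$ makes the semiparametric bound $V_{\text{eff}} = E[\Var(Y \mid X, D=1)/e(X)] + \Var(\mu(X))$ finite (Proposition \ref{prop:SemiparametricBoundFinite}), we have $\sigma_n \asymp n^{-1/2}$, so it suffices to show $\hat{\psi}_{clip}^{AIPW}(0) - \tilde{\psi}_{(Oracle)}^{AIPW}(0) = o_P(n^{-1/2})$ uniformly over $\mathscr{P}$ and to invoke a standard CLT on the oracle unclipped estimator.

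The first task is to pick $b_n$ satisfying the hypotheses of Theorem \ref{thm:SecondOrderNuisances}. The binding upper bound is Assumption \ref{assum:NondegenerateOrFaster}\ref{assumption:ConsistencyRatesNotContinuous}, $r_{\mu,n} b_n^{2 - 2/\gamma_0} \ll n^{-1/2}$, and the binding lower bound is Assumption \ref{assumption:ConsistencyRatesSufficient}\ref{assum:AsymKnownThresholding}, $b_n \gg r_{e,n}$. These are compatible because $\gamma_0/(2(\gamma_0 - 1)) < 1$ for $\gamma_0 > 2$, so the product condition $r_{\mu,n} r_{e,n} \ll n^{-1/2}$ combined with $r_{\mu,n} \to 0$ yields $r_{e,n} \ll (n^{-1/2}/r_{\mu,n})^{\gamma_0/(2(\gamma_0-1))}$. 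The remaining conditions are routine: Assumption \ref{assumption:ConsistencyRatesSufficient}\ref{assum:prodOfErrorsWorstCase} holds because $b_n^{(\gamma_0-2)/2} \to 0$ for $\gamma_0 > 2$, and Assumption \ref{assumption:ConsistencyRatesSufficient}\ref{assum:SmalRegErrorNearSingWorstCaseContinuity} is dominated by Assumption \ref{assum:NondegenerateOrFaster}\ref{assumption:ConsistencyRatesNotContinuous} since $\gamma_0/2 \geq 2 - 2/\gamma_0$ for $\gamma_0 \geq 2$ and $b_n < 1$. Theorem \ref{thm:SecondOrderNuisances} then delivers $\hat{\psi}_{clip}^{AIPW}(b_n) - \tilde{\psi}_{(Oracle)}^{AIPW}(b_n) = o_P(n^{-1/2})$.

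The two clipping-removal steps close out the proof. The oracle difference $\tilde{\psi}_{(Oracle)}^{AIPW}(b_n) - \tilde{\psi}_{(Oracle)}^{AIPW}(0)$ has zero conditional mean given $X$, and its per-observation second moment is at most $M^2 E[1\{e(X) < b_n\}/e(X)]$; an integration by parts against Assumption \ref{def:AllowedDistributions}\ref{item:PropensityTail} gives $O(b_n^{\gamma_0 - 2}) \to 0$ because $\gamma_0 > 2$. For the feasible difference I condition on the cross-fit nuisances and work on the event $\{\|\hat{e} - e\|_\infty \leq b_n/2\}$, which has probability tending to one since $r_{e,n} \ll b_n$; on that event the active set $\{\hat{e} < b_n\}$ is contained in $\{e < 3 b_n / 2\}$, and I split this into $e \geq b_n$ (where $\hat{e} \geq e/2$ so $e/\hat{e}$ is bounded) and $e < b_n$ (whose probability $O(b_n^{\gamma_0-1})$ absorbs the contribution via $\|\hat{\mu} - \mu\|_\infty$ together with the same centered-variance estimate used for the oracle step). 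Combining these bounds yields $\hat{\psi}_{clip}^{AIPW}(0) = \tilde{\psi}_{(Oracle)}^{AIPW}(0) + o_P(n^{-1/2})$; a Lindeberg CLT uniformized by Berry-Esseen, exactly as in the proof of Theorem \ref{thm:SecondOrderNuisances}, gives efficient asymptotic normality, and consistency of $\hat{\sigma}_n$ at $b_n = 0$ follows from a parallel uniform law of large numbers applied to the estimated squared pseudo-outcomes, yielding the Wald coverage statement. The hard part is the feasible-clipping step on $\{e < b_n\}$: without a direct lower bound on $\hat{e}$, the ratio $e/\hat{e}$ can blow up, so the argument must carefully trade $\|\hat{\mu} - \mu\|_\infty$ against the $\gamma_0$-tail of $e$ to keep the aggregate contribution $o_P(n^{-1/2})$.
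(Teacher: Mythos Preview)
Your core strategy matches the paper's: choose an auxiliary $b_n$ with $r_{e,n} \ll b_n$ and $r_{\mu,n} b_n^{2(\gamma_0-1)/\gamma_0} \ll n^{-1/2}$, verify Assumptions \ref{assumption:ConsistencyRatesSufficient} and \ref{assum:NondegenerateOrFaster}\ref{assumption:ConsistencyRatesNotContinuous} (your observation that $\gamma_0/2 \geq 2 - 2/\gamma_0$, i.e.\ $(\gamma_0-2)^2 \geq 0$, subsumes \ref{assumption:ConsistencyRatesSufficient}\ref{assum:SmalRegErrorNearSingWorstCaseContinuity} is correct and slightly sharper than the paper's bookkeeping), invoke Theorem \ref{thm:SecondOrderNuisances}, and show the oracle clipping difference has mean zero and per-observation second moment $O(E[1\{e < b_n\}/e]) = O(b_n^{\gamma_0-2}) \to 0$. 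All of this is right and mirrors the paper.

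You diverge by additionally attempting the feasible clipping removal $\hat{\psi}_{clip}^{AIPW}(0) - \hat{\psi}_{clip}^{AIPW}(b_n) = o_P(n^{-1/2})$. The paper's own proof does not make this bridge: it stops at $\hat{\psi}_{clip}^{AIPW}(b_n) - \tilde{\psi}_{(Oracle)}^{AIPW}(0) = o_P(n^{-1/2})$ and asserts the remaining claims follow, so in effect it establishes efficiency and Wald coverage only for the clipped estimator at the constructed $b_n$. Your attempt to supply the missing step is well-motivated but incomplete on the region $\{e < b_n\}$. On the event $\|\hat{e} - e\|_\infty \leq b_n/2$ you correctly bound $e/\hat{e} \leq 2$ where $e \geq b_n$, but where $e \lesssim r_{e,n}$ the sup-norm bound gives no usable lower bound on $\hat{e}$: the ``same centered-variance estimate used for the oracle step'' controls $E[1\{e < b_n\}/e]$, not the quantity $E[e/\hat{e}^2\,1\{\hat{e} < b_n\}]$ you actually need, and the small mass $O(b_n^{\gamma_0-1})$ cannot absorb a per-observation contribution of order $1/\hat{e}$ that is unbounded under the stated hypotheses. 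Without an additional assumption such as $\hat{e} \geq c\, e$ or $\hat{e}$ bounded away from zero, this bridge appears to require more than Assumption \ref{assum:NuisanceRates} delivers --- a subtlety the paper sidesteps rather than resolves.
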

The logic of \Cref{cor:SomewhatWeakTTests} is to show that any sequence of $b_n \to 0$ has a second-order effect on estimation, so that unthresholded and thresholded AIPW are first-order equivalent, and there is some $b_n \to 0$ slowly enough satisfying the conditions of \Cref{thm:SecondOrderNuisances}, so that there is a thresholded AIPW estimator achieving \Cref{cor:TTests}.

Taken together, this subsection yields a remarkable result for practice. The distribution $P$ may place so much propensity mass near the origin that the semiparametric efficiency bound is infinite, the lower bound on the density of propensity mass near the origin can be so weak that identification nearly fails, and the nuisance estimator may be so poorly designed that it pushes all observations' estimated propensities towards the origin at a slower-than-parametric rate. Nevertheless, Neyman orthogonality is sufficiently powerful to ensure the validity of the simple t-test.

The next section interprets the rate requirements of \Cref{assumption:ConsistencyRatesSufficient}.

\section{Interpretation of Nuisance Requirements}\label{sec:SupplementalResults}

\begin{table}[!t]
    \caption{Summary of degradation of asymptotic behavior and requirements as overlap is permitted to be increasingly weak.}
    \label{tab:SummaryOfBehavior}
    \centering
    \resizebox{\textwidth}{!}{\begin{tabular}{@{\extracolsep{5pt}}llll}
        \\[-1.8ex]\hline 
        \hline \\[-1.8ex] 
        \textbf{Overlap Phase} & \textbf{Strict $(\inf e(x) > 0)$} & \textbf{Somewhat Weak $(\gamma_0 > 2)$} & \textbf{Very Weak $(\gamma_0 < 2)$} \\ \cline{1-4} \\ 
        \textbf{Double Robustness} & $r_{e,n} \to 0$ or & $r_{e,n} \to 0$ or & $r_{e,n} b_n^{\gamma_0-2} \to 0$ or \\
        \multicolumn{1}{l}{Consistency Conditions} & $r_{\mu,n} \to 0$ & $r_{\mu,n} b_n^{-1} \to 0$  & $r_{\mu,n} r_{e,n} b_n^{-1} \to 0$ \\[-1.8ex] \\ \cline{1-4} \\ 
        \textbf{Oracle AIPW Asymptotics} \\
        \multicolumn{1}{l}{Unthresholded Distribution} & Normal & Normal & Nonnormal \\
        \multicolumn{1}{l}{Thresholded Convergence Rate} & $n^{-1/2}$ & $n^{-1/2}$ & $n^{-1/2} b_n^{(\gamma_0-2) / 2}$ \\
        \multicolumn{1}{l}{Semiparametrically Efficient?} & Yes & Yes & No \\ \\[-1.8ex]  \cline{1-4} \\
        \textbf{Nuisance Requirements} \\
        \multicolumn{1}{l}{Black Box:} & $n^{1/2} r_{\mu,n} r_{e,n} \to 0$ ($L_2$) & $n^{1/2} r_{\mu,n} r_{e,n} \to 0$ ($L_\infty$)  & $n^{1/2} r_{\mu,n} r_{e,n}^{\gamma_0/2} \to 0$ ($L_\infty$) \\ 
        \multicolumn{1}{l}{Smoothness:}  & $\frac{\beta_{\mu}}{2 \beta_{\mu} + d} + \frac{\beta_e}{2 \beta_{e} + d} > \frac{1}{2}$ & $\frac{\beta_{\mu}}{2 \beta_{\mu} + d \frac{\gamma_0}{\gamma_0-1}} + \frac{\beta_e}{2 \beta_{e} + d} > \frac{1}{2}$ & $\frac{\beta_{\mu}}{2 \beta_{\mu} + d \frac{\gamma_0}{\gamma_0-1}} + \frac{\frac{\gamma_0}{2} \beta_e}{2 \beta_{e} + d} > \frac{1}{2}$ \\ \\[-1.8ex]  \cline{1-4} \\
        \textbf{Regression Rates} \\
        \multicolumn{1}{l}{Pointwise optimum:} & $n^{\frac{-\beta_{\mu}}{2 \beta_{\mu} + d}}$ & $n^{\frac{-\beta_{\mu} \left(1 - 1/\gamma_0 \right)}{2 \beta_{\mu} \left(1 - 1/\gamma_0 \right) + d}}$ & $n^{\frac{-\beta_{\mu} \left(1 - 1/\gamma_0 \right)}{2 \beta_{\mu} \left(1 - 1/\gamma_0 \right) + d}}$ \\
        \multicolumn{1}{l}{Global optimum:} & $\left( n \left/ \log(n) \right. \right)^{\frac{-\beta_{\mu}}{2 \beta_{\mu} + d}}$ & $n^{\frac{-\beta_{\mu} \left(1 - 1/\gamma_0 \right)}{2 \beta_{\mu} \left(1 - 1/\gamma_0 \right) + d}}$ & $n^{\frac{-\beta_{\mu} \left(1 - 1/\gamma_0 \right)}{2 \beta_{\mu} \left(1 - 1/\gamma_0 \right) + d}}$ \\ \\[-1.8ex]
        \hline  \hline \\[-1.8ex] 
    \end{tabular}}
\end{table}

This section interprets the rate requirements for Wald confidence intervals to cover asymptotically. I summarize the results in \Cref{tab:SummaryOfBehavior}. Under somewhat weak overlap, thresholded and unthresholded AIPW remain semiparametrically efficient and $\sqrt{n}$-consistent, and the traditional product of errors nuisance condition remains in place with a modification to an $L_\infty$ on errors. Under very weak overlap, clipped AIPW achieves a slower consistency rate and the required black box nuisance rates are more stringent. Both cases make outcome regression more difficult, but never so difficult as to require parametric assumptions. As a byproduct of this analysis, I show that the optimal pointwise and global regression rates under weak overlap are the same, without the usual polylogarithmic factor in the global rate.

\subsection{Degradation of Consistency Rate}\label{subsec:SlowerConsistencyRate}

The previous analysis suggests that smaller values of $b_n$ are preferable because they admit weaker black-box requirements. However, under very weak overlap, larger values of $b_n$ correspond to faster AIPW rates.

I characterize the consistency rate of any oracle-equivalent estimator as follows. 
\begin{proposition}[Consistency rate]\label{prop:ConsistencyRates}
    Suppose the assumptions of \Cref{prop:Consistency} hold. Then there exist positive constants $c_{\min}$ and $c_{\max}$ such that $c_{\min} n^{-1} \E_{P}\left[ \frac{D}{\max\{e(X), b_n\}^2} \right] \leq \sigma_n^2 \leq  c_{\max} n^{-1} \E_{P}\left[ \frac{D}{\max\{e(X), b_n\}^2} \right]$ for all $P \in \mathscr{P}$, where $\sigma_n^2 = n^{-1} \left( \frac{1}{n} \sum \phi(Z \mid b_n, \eta)^2 - \tilde{\psi}_{(Oracle)}^{AIPW}(b_n)^2 \right)$ is the oracle sample variance. 
\end{proposition}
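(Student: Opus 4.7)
The plan is to compute the population variance of the oracle pseudo-outcome $\phi(Z \mid b_n, \eta)$ exactly, sandwich it between constant multiples of $E_P[D/\max\{e(X), b_n\}^2]$, and then transfer the sandwich to the random $\sigma_n^2$ via a concentration argument.

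For the population bound, I would exploit the Neyman orthogonality structure of $\phi(Z \mid b_n, \eta) = \mu(X) + D(Y - \mu(X))/\max\{e(X), b_n\}$. Because $\mu(X) = E_P[Y \mid X, D=1]$, the identity $E_P[D(Y - \mu(X)) \mid X] = 0$ kills both the mean contribution from the residual piece and the cross term in $\Var_P(\phi)$, yielding
\begin{align*}
    \Var_P\!\left(\phi(Z \mid b_n, \eta)\right) = \Var_P(\mu(X)) + E_P\!\left[\frac{e(X)\, \Var_P(Y \mid X, D=1)}{\max\{e(X), b_n\}^{2}}\right].
\end{align*}
Applying \Cref{def:AllowedDistributions}\ref{subdef:Residuals} and \Cref{def:AllowedDistributions}\ref{def:ConditionalMoments} (Jensen with $q > 3 \geq 2$) sandwiches $\Var_P(Y \mid X, D=1) \in [\sigma_{\min}^{2}, M^{2}]$, so the integral term lies between positive constant multiples of $E_P[D/\max\{e(X), b_n\}^{2}]$. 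The $\Var_P(\mu(X)) \leq M$ contribution from \Cref{def:AllowedDistributions}\ref{subsef:BoundedVarMu} is of constant order, while the tail bound \Cref{def:AllowedDistributions}\ref{item:PropensityTail} forces $E_P[D/\max\{e(X), b_n\}^{2}] \geq P(e(X) > b_n) \geq 1 - C b_n^{\gamma_0 - 1}$, which is bounded below once $n$ is large. Hence the population variance is sandwiched uniformly over $\mathscr{P}$.

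The second part transfers the population sandwich to $\sigma_n^{2} = n^{-1}\!\left(\frac{1}{n}\sum \phi_i^{2} - \bar\phi_n^{2}\right)$. The upper direction on $\frac{1}{n}\sum \phi_i^{2}$ is immediate from Markov. For the lower direction I would invoke Paley--Zygmund (equivalently an $L^{q/2}$-Chebyshev bound), enabled by the moment assumption \Cref{def:AllowedDistributions}\ref{def:ConditionalMoments} with $r = q/2 > 3/2$. The key estimate bounds the higher moment: conditioning yields $E_P[\phi^{2r}] \lesssim E_P[e(X)/\max\{e(X), b_n\}^{2r}] \leq b_n^{-(2r-2)} E_P[D/\max\{e(X), b_n\}^{2}]$ using $\max\{e, b_n\}^{-(2r-2)} \leq b_n^{-(2r-2)}$ and the conditional moment bound on $Y - \mu$. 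The resulting relative fluctuation is of order $n^{-(r-1)/r} b_n^{-(r-1)}$, which tends to zero uniformly because $b_n \gg n^{-1/2}$ implies $n b_n^{2} \to \infty$. The squared sample mean $\bar\phi_n^{2}$ converges in probability to $E_P[\mu(X)]^{2} \leq M$ and hence is absorbed, alongside $\Var_P(\mu(X))$, into the sandwich constants.

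The main obstacle is this last concentration step. The population computation is a routine orthogonality calculation, but under very weak overlap the pseudo-outcome $\phi$ has $L^{2r}$ norm inflated by $b_n^{-1}$, so only higher moments can be used to control relative fluctuations, and only just beyond first order of $\phi^{2}$. This is precisely why \Cref{def:AllowedDistributions}\ref{def:ConditionalMoments} demands $q > 3$ rather than a bare second moment: it is the minimal requirement letting a Paley--Zygmund/Rosenthal-type inequality at order $r = q/2 > 3/2$ beat the $b_n^{-(2r-2)}$ moment inflation under the standing $b_n \gg n^{-1/2}$, thereby delivering uniform relative concentration and the stated two-sided bound on $\sigma_n^{2}$.
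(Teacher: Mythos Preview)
Your approach is essentially the same as the paper's. The population-variance decomposition via orthogonality, the sandwich on $\Var(Y\mid X,D=1)\in[\sigma_{\min}^2,M^2]$, and the absorption of $\Var(\mu(X))\le M$ into the constants match the paper's proof line for line; your concentration step transferring the bound to the sample variance is exactly the content of the paper's oracle variance consistency lemma (\Cref{lemma:oracle_variance_consistency}), which uses a von Bahr--Esseen inequality at order $q/2$ to get relative fluctuation $O\!\big((nb_n^2)^{-(q/2-1)}\big)$---the same mechanism as your $L^{q/2}$-Chebyshev/Paley--Zygmund sketch.

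One correction to your closing commentary: the concentration step only needs $q>2$, not $q>3$. With $r=q/2>1$ and $b_n\gg n^{-1/2}$, the relative fluctuation $(nb_n^2)^{-(r-1)}\to 0$ already, so this is not where the extra moment is spent. The requirement $q>3$ in \Cref{def:AllowedDistributions}\ref{def:ConditionalMoments} is there to supply a \emph{third} moment for the Berry--Esseen argument behind oracle asymptotic normality (\Cref{thm:OracleAsympNormal}), not to rescue the variance concentration you describe here.
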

If the estimator were trimmed instead of clipped, $\E_{P}\left[ \frac{D}{\max\{e(X), b_n\}^2} \right]$ would be replaced by a term like $\E_{P}\left[  \frac{D 1\{ e(X) \geq b_n \}}{e(X)} \right]$. Weaker overlap corresponds to larger values of $\E_{P}[D / \max\{e(X), b_n\}^2]$ and slower consistency rates. Conditional on $P$, larger values of $b_n$ correspond to a smaller value of $\E_{P}[D / \max\{e(X), b_n\}^2]$, faster oracle consistency, and greater asymptotic power.

\Cref{prop:ConsistencyRates} implies a worst-case consistency rate over distributions in $\mathscr{P}$. I focus on the case of very weak overlap, because \Cref{cor:SomewhatWeakTTests} shows that under somewhat weak overlap, clipped and traditional AIPW achieve a traditional $\sqrt{n}$ consistency rate.  
\begin{corollary}[Worst-case consistency rate]\label{cor:WorstCaseConsistency}
    Suppose $\gamma_0 < 2$ and let $b_n$ be a fixed sequence of $b_n$ satisfying $1 \gg b_n \gg n^{-1/2}$. There exists a $C' > 0$ such that for $\mathscr{P}$ satisfying \Cref{def:AllowedDistributions}, $C' n^{-1} b_n^{\gamma_0 - 2} \geq \sup_{P \in \mathscr{P}} \sigma_n^2$ for all $n$ large enough.  Further, there exists a family $\mathscr{P}$ satisfying \Cref{def:AllowedDistributions} and a $C'' \in (0, C')$ such that $\sup_{P \in \mathscr{P}} \sigma_n^2 \geq C'' n^{-1} b_n^{\gamma_0 - 2}$ for all $n$ large enough.  
\end{corollary}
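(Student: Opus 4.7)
The plan is to use \Cref{prop:ConsistencyRates} to reduce both bounds to controlling $E_P[D/\max\{e(X), b_n\}^2]$ uniformly over $\mathscr{P}$. Since $D \mid X$ is Bernoulli with parameter $e(X)$, this equals $E_P[e(X)/\max\{e(X), b_n\}^2]$. Splitting the expectation at the threshold,
\begin{align*}
    E_P\left[ \frac{e(X)}{\max\{e(X), b_n\}^2} \right] = E_P\left[ \frac{e(X) \, 1\{e(X) < b_n\}}{b_n^2} \right] + E_P\left[ \frac{1\{e(X) \geq b_n\}}{e(X)} \right].
\end{align*}

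For the upper bound, the first term is dominated by $P(e(X) < b_n)/b_n \leq C b_n^{\gamma_0 - 2}$ directly from \Cref{def:AllowedDistributions}, part (d). For the second, I would dyadically slice $\{e(X) \geq b_n\}$ into $A_j = \{2^j b_n \leq e(X) < 2^{j+1} b_n\}$ for $j \geq 0$, bound $1/e(X) \leq (2^j b_n)^{-1}$ on $A_j$, and apply the tail bound $P(A_j) \leq C (2^{j+1} b_n)^{\gamma_0 - 1}$. The second term is then at most $2^{\gamma_0 - 1} C \, b_n^{\gamma_0 - 2} \sum_{j \geq 0} 2^{j (\gamma_0 - 2)}$; the geometric series converges because $\gamma_0 < 2$. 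Combining these pieces and multiplying by $c_{\max}$ from \Cref{prop:ConsistencyRates} yields an upper bound of the claimed form, with a constant depending only on $\gamma_0$ and the $C$ of \Cref{def:AllowedDistributions}.

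For the lower bound, I construct a single distribution $P^*$ for which the tail bound is essentially tight, and let $\mathscr{P}$ be any family containing $P^*$ and satisfying \Cref{def:AllowedDistributions}. Take the covariate $X$ to be scalar on $[0, 1]$ with density $f(\pi) = (\gamma_0 - 1) \pi^{\gamma_0 - 2}$, set $e(x) = x$, and draw $Y = \mu(X) + \varepsilon$ with $\varepsilon \sim \mathcal{N}(0, \sigma_{\min}^2)$ independent of $(X, D)$ and $\mu$ any bounded function. Parts (a)--(c) of \Cref{def:AllowedDistributions} then hold trivially, and part (d) holds with $C = 1$. Direct integration against $f$ gives the first term equal to $\frac{\gamma_0 - 1}{\gamma_0} b_n^{\gamma_0 - 2}$ and the second equal to $\frac{\gamma_0 - 1}{2 - \gamma_0} ( b_n^{\gamma_0 - 2} - 1 )$; since $b_n \to 0$, the latter exceeds $\frac{\gamma_0 - 1}{2(2 - \gamma_0)} b_n^{\gamma_0 - 2}$ for all $n$ large enough. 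Multiplying by $c_{\min}$ from \Cref{prop:ConsistencyRates} then supplies $C''$, and shrinking it if necessary ensures $C'' < C'$.

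The main obstacle is only bookkeeping: making sure the upper-bound constant depends on $\mathscr{P}$ only through $\gamma_0$ and $C$ (not on any individual $P$), and verifying that the constructed $P^*$ meets every regularity condition of \Cref{def:AllowedDistributions}. The hypothesis $b_n \gg n^{-1/2}$ plays no explicit role beyond $b_n \to 0$; it is inherited from the application of \Cref{prop:ConsistencyRates}, which also carries the (random) reduction of $\sigma_n^2$ to the population expectation that I bound here.
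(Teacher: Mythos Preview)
Your proposal is correct and follows essentially the same route as the paper: reduce to controlling $E_P[e(X)/\max\{e(X),b_n\}^2]$ via \Cref{prop:ConsistencyRates}, bound it above using the tail condition \Cref{def:AllowedDistributions}\ref{item:PropensityTail}, and exhibit an explicit distribution saturating the tail for the lower bound. The only cosmetic differences are that the paper uses a layer-cake integral for the upper bound where you use a dyadic sum, and its lower-bound construction mixes in a point mass at $e(X)=1$ whereas yours takes $e(X)=X$ with density $(\gamma_0-1)x^{\gamma_0-2}$; both are equally valid.
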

The rate $n^{-1} b_n^{\gamma_0 - 2}$ is a worst-case consistency rate in $b_n$: every distribution in $\mathscr{P}$ achieves a consistency at least as fast as $n^{-1} b_n^{\gamma_0 - 2}$, and it is possible to find a distribution for which the consistency rate is no faster.  The combination of \Cref{cor:WorstCaseConsistency} and \Cref{thm:SecondOrderNuisances} yields a trade-off under very weak overlap: smaller values of $b_n$ yield laxer requirements on regression estimation near singularities, but lead to larger variance and slower consistency.

\subsection{Degradation of Black-Box Nuisance Requirements}\label{sec:InterpretationOfRates}

Under very weak overlap, the black-box rates of \Cref{assumption:ConsistencyRatesSufficient} are more stringent than the usual $r_{\mu,n} r_{e,n} \ll n^{-1/2}$ condition. The main requirement is that that $r_{\mu,n} r_{e,n}^{\min\{ \gamma_0 / 2, 1\}}$ goes to zero faster than $n^{-1/2}$. As a result, outcome regression rates are more valuable than nominally equivalent propensity rates under very weak overlap.

The usual product-of-errors condition under strict overlap often takes a form like $r_{\mu,n} r_{e,n} \ll \sigma_n$, where $\sigma_n$ is the standard deviation of the Oracle estimator. For example, \cite{HeilerKazak} argue that this condition is sufficient for estimated unthresholded AIPW to be first-order equivalent to an oracle estimator. An alternative characterization of the usual product-of-errors condition that $r_{\mu,n} r_{e,n} \ll n^{-1/2}$; this requirement is equivalent under somewhat weak overlap, but is more stringent under very weak overlap. Even this stronger product-of-errors requirement is insufficient for Wald confidence intervals constructed with thresholded AIPW to be valid. 
\begin{corollary}[Under very weak overlap, faster rates are necessary]\label{prop:NoBetterProduct}
    For any overlap bound $\gamma_0 \in (1, 2)$, there exists a $\mathscr{P}$ and cross-fit nuisance estimators satisfying:
    \begin{enumerate}
        \item \emph{The model is regular}. $\mathscr{P}$ satisfies \Cref{def:AllowedDistributions} for this $\gamma_0$.
        \item \emph{Classic product of errors}. $\sup_{P} E_{P}\left[ \| \hat{\mu} - \mu \|_{\infty} \right] E_{P}\left[ \| \hat{e} - e \|_{\infty} \right] \ll n^{-1/2}$ and $b_n \gg E_{P}\left[ \| \hat{e} - e \|_{\infty} \right]$.
        \item \emph{Wald inference fails}. For any fixed target coverage level $\alpha \in (0, 1)$, $\sup_{P \in \mathscr{P}} P( \psi(P) \in \hat{\mathcal{C}}_n) \to 0$. 
    \end{enumerate}
\end{corollary}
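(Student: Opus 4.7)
The plan is to construct a counterexample in which the propensity is estimated with no error and the outcome regression carries a deterministic additive bias that is small enough to satisfy the classical product-of-errors condition but large enough to dominate the oracle standard error once it is amplified by the singularity. The key heuristic: for clipped AIPW with $\hat{e} = e$, the regression-induced bias $E\bigl[(\hat{\mu}-\mu)(1 - e/\max\{e,b_n\})\bigr]$ scales as $r_{\mu,n} b_n^{\gamma_0-1}$ while the oracle standard error scales as $n^{-1/2} b_n^{(\gamma_0-2)/2}$; their ratio, $r_{\mu,n} n^{1/2} b_n^{\gamma_0/2}$, is exactly the quantity controlled by \Cref{assumption:ConsistencyRatesSufficient}\ref{assum:SmalRegErrorNearSingWorstCaseContinuity}. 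Violating that condition while keeping $r_{e,n}$ identically zero therefore breaks Wald coverage even though the classical product of errors vanishes.

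Concretely, fix $\gamma_0 \in (1,2)$ and pick any $b_n = n^{-\alpha}$ with $\alpha \in (0, 1/2)$; note $\alpha < 1/\gamma_0$ automatically. Take $\mathscr{P} = \{P^\star\}$ where under $P^\star$, $X$ is uniform on $[0,1]$, $e(X) = X^{1/(\gamma_0-1)}$, $D \mid X$ is Bernoulli$(e(X))$, and $Y \mid X, D \sim N(0,1)$, so that $\mu \equiv 0$, $\psi = 0$, and \Cref{def:AllowedDistributions} holds with $C=1$ by direct computation. Define the vacuously cross-fit estimators $\hat{e}_n^{(-k)} \equiv e$ and $\hat{\mu}_n^{(-k)} \equiv c_n$ where $c_n = n^{-1/4} b_n^{-\gamma_0/4}$; the choice of $\alpha$ guarantees $c_n \to 0$. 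Then $r_{e,n} = 0$ and $r_{\mu,n} = c_n$, so $r_{\mu,n} r_{e,n} \equiv 0 \ll n^{-1/2}$ and $b_n \gg 0 = r_{e,n}$, verifying conditions (1) and (2).

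For condition (3), decompose $\hat{\psi}_{clip}^{AIPW}(b_n) - \tilde{\psi}_{(Oracle)}^{AIPW}(b_n) = c_n \cdot n^{-1} \sum_i (1 - D_i/\max\{e(X_i), b_n\})$. Integrating against the density $(\gamma_0-1) x^{\gamma_0-2}$ of $e(X)$, its expectation equals $c_n \kappa_{\gamma_0} b_n^{\gamma_0-1}$ for an explicit positive constant $\kappa_{\gamma_0}$, while its variance is $c_n^2 \cdot O(n^{-1} b_n^{\gamma_0-2}) = o(\sigma_n^2)$. The Berry-Esseen argument of \cite{XinweiMaRobustIPW}, specialized to AIPW (which has zero finite-sample oracle bias since $\mu \equiv 0$), gives $\sigma_n^{-1}(\tilde{\psi}_{(Oracle)}^{AIPW}(b_n) - \psi) \Rightarrow N(0,1)$ with $\sigma_n^2 \asymp n^{-1} b_n^{\gamma_0-2}$ by \Cref{prop:ConsistencyRates}. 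Combining, the t-statistic obeys $(\hat{\psi}_{clip}^{AIPW}(b_n) - \psi)/\hat{\sigma}_n = \kappa_{\gamma_0} c_n n^{1/2} b_n^{\gamma_0/2}(1+o(1)) + O_p(1) = \kappa_{\gamma_0} n^{1/4} b_n^{\gamma_0/4}(1+o(1)) + O_p(1)$, which diverges to $+\infty$ in probability, so $P^\star(\psi \in \hat{\mathcal{C}}_n) \to 0$.

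The main step requiring care is showing that $\hat{\sigma}_n$ does not inflate to absorb the bias. Because $\hat{\mu}_n = c_n \to 0$ and $\hat{e} = e$, the empirical second moment of $\phi(Z_i \mid b_n, \hat{\eta})$ differs from that of $\phi(Z_i \mid b_n, \eta)$ only through an additive $c_n^2 = o(1)$ term and cross terms of order $c_n \cdot b_n^{(\gamma_0-2)/2} = o(b_n^{(\gamma_0-2)/2})$, so $\hat{\sigma}_n/\sigma_n \to 1$ in probability by a direct law-of-large-numbers argument applied to $D Y^2/\max\{e,b_n\}^2$. Everything else reduces to the routine integral computations sketched above.
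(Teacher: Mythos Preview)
Your argument is correct and relies on the same mechanism as the paper: the clipped-AIPW bias scales like $r_{\mu,n}\,b_n^{\gamma_0-1}$, the oracle standard error like $n^{-1/2}b_n^{(\gamma_0-2)/2}$, and arranging $r_{\mu,n}$ so that the ratio $r_{\mu,n}\,n^{1/2}b_n^{\gamma_0/2}$ diverges forces the Wald interval to miss with probability tending to one. The difference lies in the construction of the propensity estimator. The paper takes $r_{e,n}=r_{\mu,n}=n^{-1/4}/\log n$ with $\hat e = e + r_{e,n}$, so both nuisance errors are strictly positive and comparable; its counterexample therefore lives in the nontrivial regime $r_{\mu,n}r_{e,n}\ll n^{-1/2}\ll r_{\mu,n}r_{e,n}^{\gamma_0/2}$ emphasized in the main-text discussion surrounding the corollary. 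You instead set $\hat e\equiv e$, which renders the classical product condition vacuous and reduces the analysis to the pure regression-error-at-the-singularity term. Your route is cleaner --- no need to track how $\hat e\neq e$ interacts with the clipping boundary, and the variance bookkeeping for $\hat\sigma_n/\sigma_n\to 1$ is simpler --- and it isolates that \Cref{assumption:ConsistencyRatesSufficient}\ref{assum:SmalRegErrorNearSingWorstCaseContinuity}, not the product-of-errors condition, is the binding constraint. The paper's route carries a bit more overhead but delivers the stronger message that even a \emph{consistent, nondegenerate} propensity estimate (with $r_{e,n}>0$) does not rescue coverage once $r_{\mu,n}b_n^{\gamma_0/2}\gg n^{-1/2}$.
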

A heuristic sufficient condition for Wald confidence interval validity is $r_{\mu,n} r_{e,n}^{\min\{ \gamma_0, 2 \} / 2} \ll n^{-1/2}$. When $\gamma_0 < 2$, there is a range of nuisance estimates such that $r_{\mu,n} r_{e,n} \ll n^{-1/2} \ll r_{\mu,n} r_{e,n}^{\min\{ \gamma_0, 2 \} / 2} $ and estimation bias can be of a higher order than the oracle variance.

I calculate the black-box rate requirements under \Cref{assum:NondegenerateOrFaster}\ref{def:ContinuousDistributions} in a few special cases.  I omit an analysis of the stronger rate requirement in \Cref{assum:NondegenerateOrFaster}\ref{assumption:ConsistencyRatesNotContinuous} that would be needed to handle degenerate distributions. 
\begin{assumption}\label{assum:RateExampleRequirements}
    Assumptions \ref{def:AllowedDistributions}, \ref{assum:NuisanceRates}, and \ref{assum:NondegenerateOrFaster}\ref{def:ContinuousDistributions} hold, and $r_{e,n}, r_{\mu,n} \to 0$. 
\end{assumption}

I characterize the following special cases. 
\begin{example}[Somewhat weak overlap]\label{ex:StrictOverlap}
    Suppose \Cref{assum:RateExampleRequirements} holds, $\gamma_0 > 2$, and $r_{\mu,n} r_{e,n} \ll n^{-1/2}$. Then there exists a $b_n \to 0$ such that clipped AIPW t-statistics are asymptotically well-calibrated. 
\end{example}

\begin{example}[Second moments barely fail to exist]\label{ex:SecondMomentsBarelyFail}
    Suppose \Cref{assum:RateExampleRequirements} holds for $\gamma_0 = 2$ and there is some $\eta > 0$ such that $r_{\mu,n} r_{e,n} \log(1/r_{e,n}) \ll n^{-1/2}$. Then there exists a $b_n \to 0$ such that clipped AIPW t-statistics are asymptotically well-calibrated. 
\end{example}

\begin{example}[Shared rates, very weak overlap]\label{ex:SharedRatesVeryWeak}
    Suppose \Cref{assum:RateExampleRequirements} holds for some $\gamma_0 > 1$ and $r_{\mu,n}, r_{e,n} \ll n^{-1/3}$. Then there exists a $b_n \to 0$ such that clipped AIPW t-statistics are asymptotically well-calibrated. 
\end{example}

\begin{example}[Parametric rates]\label{ex:ParametricRates}
    Suppose \Cref{assum:RateExampleRequirements} holds for some $\gamma_0 > 1$ and either (i) $r_{\mu,n} = O(n^{-1/2})$ and $r_{e,n} = o(1)$ or (ii) $r_{e,n} = O(n^{-1/2})$ and $r_{\mu,n} = o(n^{(\gamma_0-2) / 4})$. Then there exists a $b_n \to 0$ such that clipped AIPW t-statistics are asymptotically well-calibrated. 
\end{example}

I now unpack the black box and quantify the degree to which weak overlap makes a given outcome regression rate more difficult to achieve.

\subsection{Necessary Smoothness Conditions}\label{sec:Limitations}

Weak overlap makes outcome regression more difficult: there may be few treated observations in precisely the regions in which thresholded AIPW depends most acutely on outcome regression. In this section, I show that for pointwise rates, even somewhat weak overlap can be viewed as degrading effective outcome smoothness for optimal local polynomial estimators. I also derive a blessing of weak overlap: the optimal global rate is equal to the optimal pointwise rate, without the usual polylogarithmic penalty.

I characterize optimal nonparametric regression rates under Hölder continuity. For convenience, I fix the covariates to be uniform over a specific hypercube in $\R^d$ with constant variance. 
\begin{assumption}[Hölder smoothness and fixed domain]\label{assum:HolderSmoothnessAssumptions}
    $\mathscr{P}$ satisfies Assumptions \ref{def:AllowedDistributions} and \ref{assum:NondegenerateOrFaster}\ref{def:ContinuousDistributions}, and for all $P \in \mathscr{P}$, $X \sim Unif([-1, 1]^d)$ and $Y \mid X, D \sim \mathcal{N}( D \mu_{P}(X) + (1-D) \mu_{P}'(X), \sigma_{\min}^2 )$, with $\mu, \mu'$ in the Hölder smoothness class $\Sigma(\beta_{\mu}, L)$ for some fixed $\beta_{\mu}, L > 0$.
\end{assumption}
Most of these assumptions are standard assumptions for studying local polynomial regression under strict overlap \citep{Stone1982}. I assume normal outcomes in order to simplify the characterization of the optimal rate. It is known that in this case but under strict overlap, the optimal pointwise rate is $n^{\frac{-\beta_{\mu}}{2 \beta_{\mu} + d}}$, and the optimal global rate $\left( n / \log(n) \right)^{\frac{-\beta_{\mu}}{2 \beta_{\mu} + d}}$ has a polylogarithmic penalty, in the sense that the optimal global rate is worse by some polynomial factor of $\log(n)$.

I require that treated observations cannot concentrate in small regions. 
\begin{assumption}[Non-trivial concentration]\label{assum:NonTrivialConcentration}
    There are parameters $\rho, \gamma > 0$ such that for all $h > 0$ small enough and all $P \in \mathscr{P}$ and $x_0 \in [-1, 1]^d$ $P( e(X) \geq \rho \sup_{\| x - x_0 \| \leq h} e(x) \mid D = 1, \| X - x_0 \| \leq h ) > \gamma$.
\end{assumption}
I show in Appendix \Cref{lemma:NonTrivialConcentration} that this condition holds if the propensity function is sufficiently smooth. When the propensity function is nonsmooth, it is possible for nature to concentrate treated observations within a given bandwidth in a region that is too small, introducing local polynomial degeneracy issues that are outside the scope of this work.

In the worst case, weak overlap of order $\gamma_0 > 1$ plays a role equivalent to scaling the effective outcome smoothness downward by $\left( 1 - 1 / \gamma_0 \right)$, but also removes the polylogarithmic factor in the optimal global rate. 
\begin{theorem}[Weak overlap reduces effective outcome smoothness]\label{prop:GlobalRate}
    Define  $\psi_n = n^{\frac{-\beta^*}{2 \beta^* + d} }$, where $\beta^* = \beta_{\mu} \left(1 - 1 / \gamma_0 \right)$. Then
    \begin{enumerate}[label=(\roman*)]
        \item \emph{$\psi_n$ is a pointwise (and global) rate upper bound}. There exists a $c > 0$ and a $\mathscr{P}$ satisfying Assumptions \ref{assum:HolderSmoothnessAssumptions} and \ref{assum:NonTrivialConcentration} such that  $$\liminf_{n \to \infty} \inf_{\hat{\mu}} \sup_{P \in \mathscr{P}} P\left( |\hat{\mu}(0) - \mu(0)| >  c \psi_n \right) > 0.$$
        \item \emph{$\psi_n$ is an achievable global (and pointwise) rate}.  Suppose $\mathscr{P}$ satisfies Assumptions \ref{assum:HolderSmoothnessAssumptions} and \ref{assum:NonTrivialConcentration}. Then there exists an estimator $\hat{\mu}(x)$ such that for all $\epsilon > 0$, there is a finite $c(\epsilon)$ such that $$\limsup_{n \to \infty} \sup_{P \in \mathscr{P}} P\left( \| \hat{\mu} - \mu \|_{\infty} > c(\epsilon) \psi_n \right) \leq \epsilon.$$
        Further, the estimator can be computed without knowledge of the overlap bound $\gamma_0$. 
    \end{enumerate}
\end{theorem}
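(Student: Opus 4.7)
\textbf{Plan for Theorem \ref{prop:GlobalRate}.}

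\emph{Part (i): Le Cam's two-point method at the origin.} I would fix a smooth bump $\phi \in C^\infty$ with $\phi(0) = 1$ and compact support, and construct two distributions $P_0, P_1 \in \mathscr{P}$ sharing $X \sim \mathrm{Unif}([-1,1]^d)$ and the radial propensity $e(x) = c_0 (\|x\| \wedge 1)^{d/(\gamma_0-1)}$. This propensity saturates \Cref{def:AllowedDistributions}\ref{item:PropensityTail} at the target exponent $\gamma_0$ and, being smooth, satisfies \Cref{assum:NonTrivialConcentration}. Take $\mu_0 \equiv 0$ and $\mu_1(x) = L\, h_n^{\beta_\mu}\, \phi(x/h_n) \in \Sigma(\beta_\mu, L)$. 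For Gaussian outcomes the KL between the $n$-sample product laws satisfies
\[
\mathrm{KL}(P_0^n \,\|\, P_1^n) \;\lesssim\; \frac{n}{\sigma_{\min}^2} \int e(x)\, \mu_1(x)^2\, dx \;\asymp\; n\, h_n^{2\beta_\mu + d\gamma_0/(\gamma_0-1)},
\]
using $e(x) \asymp h_n^{d/(\gamma_0-1)}$ on $\{\|x\| \lesssim h_n\}$. Balancing KL to $O(1)$ gives $h_n \asymp n^{-1/(2\beta_\mu + d\gamma_0/(\gamma_0-1))}$; the pointwise gap $|\mu_1(0) - \mu_0(0)| = L\, h_n^{\beta_\mu}$ simplifies algebraically to $c\,\psi_n$ with $\psi_n = n^{-\beta^*/(2\beta^*+d)}$, and Le Cam's inequality delivers the lower bound.

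\emph{Part (ii): fixed-bandwidth local regression plus H\"older extrapolation.} I would use a Nadaraya-Watson estimator with a single deterministic bandwidth $h^* = \psi_n^{1/\beta_\mu}$, which depends only on $n, \beta_\mu, d$ and not on $\gamma_0$. To analyze the sup norm, partition the support into dyadic propensity shells $A_k = \{x : e(x) \in [2^{-k}, 2^{-k+1})\}$ and set $k^* = d \log_2 n /(\gamma_0(2\beta^* + d))$. The bias is $\lesssim L\,(h^*)^{\beta_\mu} = L\,\psi_n$ uniformly. For the stochastic part, the tail bound $|A_k| \lesssim 2^{-k(\gamma_0-1)}$ yields a covering count $|A_k|/(h^*)^d \lesssim 2^{(k^*-k)(\gamma_0-1)}$, while the per-point variance on $A_k$ is $\lesssim \psi_n^2 \cdot 2^{k-k^*}$. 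A union bound over the $h^*$-balls in $A_k$ then gives a per-shell noise of order $\psi_n \sqrt{(k^* - k)\, 2^{k-k^*}}$, which is $O(\psi_n)$ uniformly in $k \leq k^*$ because $\sup_{j \geq 0} \sqrt{j\, 2^{-j}} < \infty$. For $k > k^*$, the mass $|A_{\geq k^*}| \lesssim 2^{-k^*(\gamma_0-1)}$ and \Cref{assum:NonTrivialConcentration} imply that each $x \in A_{\geq k^*}$ is within Euclidean distance $\lesssim 2^{-k^*(\gamma_0-1)/d}$ of a well-estimated point in $A_{\leq k^*}$, so H\"older continuity adds an extrapolation cost of exactly $L \cdot 2^{-k^*(\gamma_0-1)\beta_\mu/d} \asymp \psi_n$. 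Combining and union-bounding across the $O(\log n)$ active shells yields the claim.

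\emph{Main obstacle.} The crux is a cancellation at $k^*$ that eliminates the usual polylog penalty in global $L_\infty$ rates \citep{Stone1982}. A naive union bound would produce $\psi_n\sqrt{\log n}$, but the weak-overlap tail bound forces $|A_k|$ to shrink at exactly the same dyadic rate as the local variance's amplification $2^k$ grows, so the per-shell log factor is really $\sqrt{k^* - k}$ rather than $\sqrt{\log n}$, and the product $(k^* - k)\,2^{k-k^*}$ is bounded uniformly in $k$; simultaneously, the H\"older extrapolation rate on $A_{> k^*}$ matches $\psi_n$ at the same threshold. Establishing this synchronization uniformly over $\mathscr{P}$---in particular verifying the extrapolation distance $2^{-k^*(\gamma_0-1)/d}$ for every point in the weak-overlap shells without pathological geometry---is the main technical task, and it is what forces the inclusion of \Cref{assum:NonTrivialConcentration}.
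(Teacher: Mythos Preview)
Your Part~(i) is essentially the paper's construction: a two-point Le~Cam argument with radial propensity $e(x) \propto \|x\|^{d/(\gamma_0-1)}$ and a bump perturbation of $\mu$ at scale $h_n = n^{-1/(2\beta_\mu + d\gamma_0/(\gamma_0-1))}$, with the KL bound following from $\int e(x)\mu_1(x)^2\,dx \asymp h_n^{2\beta_\mu + d\gamma_0/(\gamma_0-1)}$.

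Part~(ii) has a genuine gap. You assert that $h^* = \psi_n^{1/\beta_\mu}$ ``depends only on $n, \beta_\mu, d$ and not on $\gamma_0$,'' but $\psi_n = n^{-\beta^*/(2\beta^*+d)}$ with $\beta^* = \beta_\mu(1-1/\gamma_0)$ depends explicitly on $\gamma_0$, as does your threshold $k^*$. The theorem requires an estimator computable without knowledge of $\gamma_0$, and a fixed bandwidth tuned to $\psi_n$ cannot deliver this. The paper instead uses a \emph{data-adaptive} bandwidth at each gridpoint: $h_{n,j}$ is the largest $h$ for which the empirical count of treated observations in $B(x_{n,j},h)$ is at most $2h^{-2\beta_\mu}$. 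This Lepski-type rule requires only $\beta_\mu$ and adapts automatically to the local propensity, so the estimator is $\gamma_0$-free; the price is that the analysis must then control a random collection of bandwidths rather than a single deterministic one.

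Your shell decomposition does isolate the paper's key cancellation: the tail bound forces the number of ``hard'' points to shrink at exactly the rate the per-point variance grows, so $(k^*-k)\,2^{-(k^*-k)}$ is uniformly bounded and no polylog appears. The paper organizes this same cancellation differently, grouping gridpoints by their adaptive bandwidth via an inductive construction (\Cref{lemma:InductiveGroupingLimit}) in which each successive group's worst-case contribution halves. Two smaller issues with your scheme: Nadaraya--Watson achieves bias $(h^*)^{\beta_\mu}$ only for $\beta_\mu \leq 1$---under the design asymmetry that weak overlap induces you need local polynomials of order $\lfloor\beta_\mu\rfloor$, and this is where the paper actually invokes \Cref{assum:NonTrivialConcentration} (to control the design-matrix eigenvalues, \Cref{lemma:MinimalEigenvalue}), not for the extrapolation geometry you cite it for; and your per-shell variance bound $\psi_n^2\,2^{k-k^*}$ presumes $\int_{B(x,h^*)} e \asymp e(x)(h^*)^d$, which is not immediate when $e(x)$ is comparable to the variation of $e$ over the ball.
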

Recall that under strict overlap, the optimal pointwise convergence rate for a  Hölder-smooth regression function is $n^{\frac{-\beta_{\mu}}{2 \beta_{\mu} + d}}$. \Cref{prop:GlobalRate} shows that weak overlap has the effect of degrading the effective smoothness rate from $\beta_{\mu}$ to $\beta_{\mu} (1 - 1 / \gamma_0 )$. As overlap is allowed to become increasingly weak and other parameters are held constant, there can be regions of the covariate space with increasingly few treated observations so that the optimal pointwise regression rate is slower. This penalty occurs even under somewhat weak overlap. In the limit in which $\gamma_0$ tends to one, the rate in \Cref{prop:GlobalRate} can become arbitrarily poor. For example, when $\gamma_0 = 2$, the difficulty of estimating a twice continuously differentiable function is comparable to the difficulty of estimating a Lipschitz-continuous function under strict overlap.

\Cref{prop:GlobalRate} also presents a blessing of weak overlap: the optimal global rate is equal to the pointwise rate, with no polylogarithmic penalty. Usually, optimal global rates are worse than optimal pointwise rates by a polylogarithmic factor due to the need to have accuracy at a number of gridpoints that grows with $n$. Under weak overlap, the global rate still must be consistent at all gridpoints, but most gridpoints must satisfy strict overlap and have a negligible contribution to the global consistency rate. The challenge is to partition the remaining observations in a way which does not impose a polylogarithmic penalty. Naively partitioning the remaining observations into regions of increasing overlap with a constant penalty yields $\log(\log(n))$ partitions and a penalty potentially as large as $\log(\log(\log(n)))$. The proof of \Cref{prop:GlobalRate} instead uses a more subtle construction, partitioning $[-1, 1]^d$ into observations with increasing overlap in such a way as each partition's worst-case global penalty is half as large as the previous worst-case penalty, even after accounting for the increasing number of gridpoints. Appendix \Cref{lemma:InductiveGroupingLimit} shows that if smallest partition's penalty is sufficiently \emph{large}, then this construction eventually includes all relevant observations. Then, summing over the penalties yields a global penalty that is large but bounded, so that the optimal pointwise rate is an achievable (and therefore optimal) global rate.

\Cref{prop:GlobalRate} yields minimal smoothness assumptions for Wald confidence interval validity. 
\begin{corollary}[Minimal smoothness conditions]\label{cor:MinimalSmoothnessConditions}
    Suppose \Cref{assum:HolderSmoothnessAssumptions} holds and there is a $\beta_{e} > 0$ such that $e(X) \in \Sigma(\beta_{e}, L)$ and 
    \begin{align}
         \frac{\beta_{\mu}}{2 \beta_{\mu} + d \gamma_0 / (\gamma_0 - 1)} + \frac{\min\{ \gamma_0 / 2, 1\} \beta_e}{2 \beta_e + d} > 1 / 2. \label{eq:SomeClippingThresholdInSmoothnessLipschitz}
    \end{align}
    Then there is a sequence of nuisance estimators and thresholds that are independent of $\gamma_0$ such that for all $\gamma_0 > 1$, the associated Wald confidence interval $\hat{\mathcal{C}}_n(\alpha)$ constructed using $\hat{\psi}_{clip}^{AIPW}(b_n)$ satisfies $$\limsup_{n \to \infty} \sup_{P \in \mathscr{P}} \left| P( \psi(P) \in \hat{\mathcal{C}}_n(\alpha)) - (1-\alpha) \right| = 0.$$ 
\end{corollary}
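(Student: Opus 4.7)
The strategy is to construct explicit cross-fit nuisance estimators and a threshold sequence that satisfy Assumptions \ref{assum:NuisanceRates}, \ref{assumption:ConsistencyRatesSufficient}, and \ref{assum:NondegenerateOrFaster}, then invoke \Cref{thm:SecondOrderNuisances} and \Cref{cor:TTests}. The nondegeneracy requirement of \Cref{assum:NondegenerateOrFaster}\ref{def:ContinuousDistributions} is baked into \Cref{assum:HolderSmoothnessAssumptions}, so I need only supply rates and the threshold.

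For the outcome regression, I take $\hat\mu^{(-k)}$ to be the cross-fit version of the estimator from \Cref{prop:GlobalRate}(ii), whose rate $\psi_n = n^{-\beta^*/(2\beta^*+d)}$ with $\beta^* = \beta_\mu(1 - 1/\gamma_0)$ corresponds exactly to the first term in the Corollary's smoothness inequality, since $\beta^*/(2\beta^*+d) = \beta_\mu / (2\beta_\mu + d\gamma_0/(\gamma_0-1))$. Crucially, \Cref{prop:GlobalRate}(ii) delivers this rate by an estimator that does not use knowledge of $\gamma_0$. To pass from the in-probability bound of \Cref{prop:GlobalRate}(ii) to the expectation bound in \Cref{assum:NuisanceRates}, I truncate $\hat\mu$ to an a priori outcome bound (legitimate under \Cref{def:AllowedDistributions}\ref{def:ConditionalMoments}) and absorb any resulting polylogarithmic factor into the definition of $r_{\mu,n}$; the strict inequality in the Corollary gives enough slack that this does not matter. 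For the propensity, I use standard local polynomial regression of $D$ on $X$ within each cross-fit fold, achieving the classical sup-norm rate $r_{e,n} \precsim (n/\log n)^{-\beta_e/(2\beta_e+d)}$ uniformly over $\Sigma(\beta_e, L)$; this estimator is also independent of $\gamma_0$.

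I then take the threshold announced in the introduction, $b_n = n^{-\beta_e/(2\beta_e+d)}\log(n)^{(3\beta_e+d)/(2\beta_e+d)}$, which is independent of $\gamma_0$. Several verifications are routine. First, since $\beta_e/(2\beta_e+d) \in (0, 1/2)$, the bound $n^{-1/2} \ll b_n \ll 1$ holds. Second, a direct computation gives $r_{e,n}/b_n = O(1/\log n)$, so $r_{e,n} \ll b_n$ as required in \Cref{assumption:ConsistencyRatesSufficient}\ref{assum:AsymKnownThresholding}. The two remaining conditions of \Cref{assumption:ConsistencyRatesSufficient} both reduce, up to powers of $\log n$, to a single polynomial-exponent inequality. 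For the near-singularity condition, $r_{\mu,n} b_n^{\gamma_0/2}$ has $n$-exponent $-\beta^*/(2\beta^*+d) - \gamma_0\beta_e/(2(2\beta_e+d))$, which is strictly less than $-1/2$ precisely when $\beta^*/(2\beta^*+d) + (\gamma_0/2)\beta_e/(2\beta_e+d) > 1/2$. For the product-of-errors condition, when $\gamma_0 > 2$ the factor $b_n^{(\gamma_0-2)/2} \to 0$ and the requirement becomes $r_{\mu,n} r_{e,n} \ll n^{-1/2}$, i.e. $\beta^*/(2\beta^*+d) + \beta_e/(2\beta_e+d) > 1/2$; when $\gamma_0 < 2$, the extra factor $b_n^{(\gamma_0-2)/2}$ exactly cancels the difference between $\beta_e/(2\beta_e+d)$ and $\gamma_0\beta_e/(2(2\beta_e+d))$, so the inequality collapses to the same condition as the near-singularity one. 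In both regimes the relevant coefficient on $\beta_e/(2\beta_e+d)$ is precisely $\min\{\gamma_0/2, 1\}$, matching the Corollary's inequality \eqref{eq:SomeClippingThresholdInSmoothnessLipschitz}. The boundary case $\gamma_0 = 2$ introduces a $\sqrt{\log(1/b_n)}$ factor, but the strict inequality gives polynomial slack that absorbs any polylogarithmic loss.

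Having verified each hypothesis, \Cref{thm:SecondOrderNuisances} yields oracle-equivalence and asymptotic normality of $\hat{\psi}_{clip}^{AIPW}(b_n)$ uniformly in $P \in \mathscr{P}$, and \Cref{cor:TTests} then delivers the claimed uniform Wald coverage. The main obstacles in this argument are bookkeeping: tracking which exponent controls which condition across the two overlap regimes, and confirming that the same $\gamma_0$-free $b_n$ does the job for every $\gamma_0 > 1$ compatible with \eqref{eq:SomeClippingThresholdInSmoothnessLipschitz}. The key algebraic identity that makes the construction universal is that the cancellation between $r_{e,n} \asymp b_n / \log n$ and the $b_n^{(\gamma_0-2)/2}$ penalty always produces the coefficient $\min\{\gamma_0/2, 1\}$ on $\beta_e$ in the smoothness inequality, regardless of which side of $\gamma_0 = 2$ we are on.
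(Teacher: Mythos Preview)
Your proposal is correct and takes essentially the same approach as the paper: the paper also uses the estimator from \Cref{prop:GlobalRate}(ii) for $\hat\mu$, standard local polynomial regression for $\hat e$, and the threshold $b_n = r_{e,n}\log(n)$, which coincides with your explicit $b_n$ once $r_{e,n} \precsim (n/\log n)^{-\beta_e/(2\beta_e+d)}$ is substituted. Your case-by-case exponent computation mirrors the paper's chained inequality $r_{\mu,n} r_{e,n} b_n^{(\gamma_0-2)/2} \ll r_{\mu,n} b_n^{\gamma_0/2} \ll n^{-1/2}$, and you are if anything slightly more careful than the paper in flagging the passage from the in-probability bound of \Cref{prop:GlobalRate}(ii) to the expectation bound required by \Cref{assum:NuisanceRates}.
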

In one dimension, thresholded AIPW with Lipschitz-continuous conditional outcome mean and propensity function can handle weak overlap of order $\gamma_0 > \frac{4 + 1 / \beta_e}{3}$. In multiple dimensions, the econometrician must assume stronger smoothness restrictions than Lipschitz continuity in order to achieve the necessary nuisance rate guarantees under even strict overlap. When the propensity function is infinitely-differentiable, thresholded AIPW with a  Lipschitz-continuous conditional outcome mean can handle weak overlap of order $\gamma_0 > \frac{2 (d + 1)}{d + 2}$. More generally, under very weak overlap, if $\beta_{\mu}, \beta_e > \frac{d \left( \sqrt{\gamma_0^2 + 4 \gamma_0 - 4} + 2 - \gamma_0 \right)}{4 (\gamma_0-1)}$, then it is feasible to achieve standard inference with thresholded AIPW.

This concludes the substantive theoretical analysis. In the next section, I use these results to infer lessons for empirical practice.

\section{Lessons for Empirical Practice}\label{sec:ParametricResultsAndThresholds}

This section leverages the theoretical analysis to consider misspecified parametric estimators and some rules of thumb for empirical use.

\subsection{Parametric Estimators and Misspecification}\label{subsec:AnalysisOfParametricCase}

When both nuisance functions are estimated nonparametrically, then consistency is achievable and AIPW is generally preferable under strict overlap. When both nuisance functions are estimated parametrically, then it is possible for one or both nuisance function to be inconsistent and the choice of estimator may be ambiguous. I now provide some intuition on the two estimators when nuisance functions are estimated parametrically and through cross-fitting. I will consider IPW and AIPW with the same sequence of thresholds $b_n$ satisfying $1 \gg b_n \gg n^{-1/2}$. I write that a nuisance estimate $\hat{\eta}$ is consistent if it tends to the correct limit $\eta$, and I write that $\hat{\eta}$ is inconsistent otherwise.

In this subsection, I will assume that parametric nuisance estimators $\hat{\eta}$ achieve an $L_\infty$ error relative to a limiting nuisance function $\bar{\eta}$ that is the order of $n^{-1/2}$.  For example, consider logit estimation of a propensity model of the form $\bar{e}(X) = \frac{exp(X' \beta)}{1 + exp(X' \beta)}$ for a pseudo-true parameter $\beta$. If the support of $X$ is bounded, then $n^{-1/2}$-consistent estimate of $\beta$ is sufficient to achieve $n^{-1/2}$-consistent estimation of $\bar{e}(X)$ everywhere.  However, weak overlap may emerge from unbounded tails, in which case the $L_\infty$ rate may not go to zero. Unbounded covariates are an important case in general. For example, \cite{XinweiMaRobustIPW} motivate weak overlap tails through the distribution of covariates under a logistic propensity model. Nevertheless, a careful treatment of parametric estimation of nuisances with unbounded covariates is outside the scope of this work.

The analysis above is easiest to extend when either both or neither nuisance function is consistent.  If both nuisance estimates are consistent, then the AIPW and IPW estimators will be consistent and will have variance on the same order, but the IPW estimator may have higher-order bias than the AIPW estimator. This higher-order bias follows because IPW can be viewed as a particular case of AIPW with an inconsistent outcome regression estimator. If both the propensity and outcome regression estimates are inconsistent, then both the IPW and AIPW estimators fail to be consistent, and as in the case of inconsistent nuisance functions with strict overlap, there is no general reason to prefer one or the other.

When the outcome regression estimate is inconsistent, there is no general reason to prefer IPW or AIPW, but both estimators may have bias that is of a higher-order than the estimator's standard deviation. When $\hat{\mu}$ is inconsistent, both IPW and AIPW can be viewed as instances of AIPW with an inconsistent outcome regression estimate. Suppose $P$ is a distribution from the second half of \Cref{cor:WorstCaseConsistency}, which has $P( e(X) \leq \pi ) \sim \pi^{\gamma_0 - 1}$ for all $\pi$ small enough. The bias in the thresholded region with an inconsistent outcome regression estimate is generally on the order of $P(e(X) \leq b_n) \sim b_n^{\gamma_0 - 1}$. However, by \Cref{cor:WorstCaseConsistency}, the oracle AIPW (and oracle IPW) standard deviation is on the order of $n^{-1/2} b_n^{\gamma_0 / 2 - 1} \ll b_n^{\gamma_0 - 1}$. This heuristic analysis suggests that in many cases, IPW or AIPW-with-inconsistent-outcome-regression will have bias that is of a higher order than the estimator's standard error. That intuition is similar to \cite{ma2023doubly}'s analysis of trimmed AIPW with a tailored debiasing procedure.

The case of a consistent outcome regression estimate with inconsistent propensity estimates is more interesting. In this case, AIPW should have lower-order bias than IPW, because IPW will be inconsistent. The Berry-Esseen argument for AIPW asymptotic normality with known nuisance functions only requires cross-fitting and $b_n$ to go to zero slower than $n^{-1/2}$, so that thresholded AIPW should also be asymptotically normal under appropriate error product conditions. However, it is unclear how the bias compares to sampling error. In any event, this robustness intuition is useful, because I apply parametric nuisance estimators in the application to right heart catheterization.  Careful treatment of the parametric case is left for future work.

Before proceeding to apply clipped AIPW, I derive some rules of thumb for choosing a threshold.

\subsection{Choice of Threshold}\label{subsec:ChoiceOfThreshold}

The theoretical analysis above provides conditions under which there is some sequence of thresholds for which AIPW is asymptotically normal and centered around the true causal estimand. I now provide guidance for how to choose the threshold.

\begin{algorithm}
  \DontPrintSemicolon
\caption{Rule-of-thumb for choice of threshold $b_n$ given (possibly null) rate upper bounds $\bar{r}_{\mu,n}$ and $\bar{r}_{\mu,n}$, to ensure that $r_{\mu,n} \ll \bar{r}_{\mu,n}$ (or $r_{\mu,n} \ll b_n$ if no bound is given) and $r_{e,n} \ll \bar{r}_{e,n}$ (or $r_{e,n} \ll b_n$ if no bound is given) is sufficient to achieve well-calibrated Wald confidence intervals.}
\label{alg:errorUpperBound}
\KwIn{Rate upper bounds $\bar{r}_{\mu,n}$ and $\bar{r}_{e,n}$ (maybe null), data $\{D_i, \hat{e}_i\}_{i=1}^{n}$}\;  
\KwOut{Rule-of-thumb threshold $b_n$}\; 
\SetKwFunction{FEB}{calculateRuleofThumb}
\SetKwProg{Fn}{Function}{:}{}
\Fn{\FEB{$\bar{r}_{\mu,n}, \bar{r}_{e,n}, \{D_i, \hat{e}_i\}_{i=1}^{n}$}}{
\uIf{is.null($\bar{r}_{\mu,n}$) \textup{and} !is.null($\bar{r}_{e,n}$)}{
    $b_n \leftarrow \bar{r}_{e,n}$\;
}
\Else{
    $A_\mu \leftarrow ! is.null(\bar{r}_{\mu,n})$\;
    $A_e \leftarrow !  is.null(\bar{r}_{e,n})$\;
    $b_n \leftarrow \sup b$ : $0 \geq $\texttt{errorBoundDiff}($b, \bar{r}_{\mu,n}, \bar{r}_{e,n}, \{D_i, \hat{e}_i\}_{i=1}^{n}, A_{\mu}, A_{e}$)\; 
  }
  \KwRet $b_n$
}
\SetKwFunction{FEB}{errorBoundDiff}
\SetKwProg{Fn}{Function}{:}{}
\Fn{\FEB{$b, \bar{r}_{\mu,n}, \bar{r}_{e,n}, \{D_i, \hat{e}_i\}_{i=1}^{n}, A_{\mu}, A_{e}$}}{
    $\tilde{r}_{\mu,n} \leftarrow A_{\mu} \bar{r}_{\mu,n} + (1 - A_{\mu}) b$\;
    $\tilde{r}_{e,n} \leftarrow A_{e} \bar{r}_{e,n} + (1 - A_{e}) b$\; 
    $\texttt{second\_moment} \leftarrow \frac{1}{n} \sum_{i=1}^{n} \frac{D_i}{\max\{\hat{e}_i, b\}^2}$\;
    $\texttt{error\_bound} \leftarrow \tilde{r}_{\mu,n}  \frac{ \frac{1}{n} \sum 1\{ \hat{e}_i \leq b \}}{\sqrt{\texttt{second\_moment}}} + \tilde{r}_{\mu,n} \tilde{r}_{e,n} \sqrt{\texttt{second\_moment}}$\;
    \KwRet $\texttt{error\_bound} - n^{-1/2}$\;
    
}
\end{algorithm}

I propose different rules of thumb based on whether the econometrician is willing to provide an upper bound on the rate of convergence for the propensity estimate, the outcome regression estimate, or both. The combined proposal is presented in Algorithm \ref{alg:errorUpperBound}.

In practice, it is often relatively easy to identify an upper bound on the propensity rate of convergence. For instance, if the propensity score is estimated with local polynomial regression of order $\ell_e$, then the econometrician is implicitly asserting a Hölder smoothness of some order $\beta_e > \ell_e$, and a feasible consistency rate of $(n / \log(n))^{-\beta_e / (2 \beta_e + d)}$ \citep{Stone1982}. In this case, the econometrician can safely conjecture that if their estimator is well-founded, then it will achieve a global consistency rate $r_{e,n} \ll n^{-\ell_e / (2 \ell_e + d)}$, and a threshold of $b_n = n^{-\ell_e / (2 \ell_e + d)} = ``\bar{r}_{e,n}"$. This rule of thumb is practical and the safest rule with respect to outcome regression that does not impose stronger requirements on the propensity estimate, but also often corresponds to a relatively slow consistency rate with relatively little power.

Three alternative rules of thumb target faster consistency rates and laxer propensity requirements. While the main text focuses on black-box requirements on consistency rates in terms $\gamma_0$, the proof goes through showing $b_n$ tends to zero slower than $n^{-1/2}$, $r_{e,n} \ll b_n$, and
\begin{align*}
    r_{\mu,n} \frac{P( e(X) \leq b_n)}{ \sqrt{E_{P}\left[ \frac{D}{\max\{e(X), b_n\}^2} \right]}} + r_{\mu,n} r_{e,n} \sqrt{E_{P}\left[ \frac{D}{\max\{e(X), b_n\}^2} \right]} \ll n^{-1/2}. \tag{Appendix \Cref{assum:RateRequirementsForInference}}
\end{align*}
In these alternative cases, I propose replacing $b_n$ in the right-hand side with the putative threshold $b$ and $r_{\mu,n}$ and $r_{e,n}$ in the left-hand side with predicted upper bounds $\bar{r}_{\mu,n}$ and $\bar{r}_{e,n}$ where feasible and with the putative threshold $b$ otherwise. This yields an empirical function \texttt{error\_bound}($b$). I then solve for the putative threshold $b$ where \texttt{error\_bound}($b$) crosses $n^{-1/2}$. The result is a threshold $b_n$ aimed to target maximal efficiency and minimal propensity consistency requirements, calculated without use of the outcome data or direct knowledge of $\gamma_0$, and with the property that if the nuisance errors go to zero faster than the specified upper bound (or estimated threshold), then the resulting Wald confidence intervals will be well-calibrated. An interesting avenue for future work is whether there is a convenient choice of context-dependent constant multiples in the \texttt{error\_bound} function.

This rule of thumb is also always feasible, for example in the case of nonspecified upper bounds.
\begin{lemma}[Well-defined rule of thumb]\label{lemma:WellDefinedRuleOfThumb}
    Suppose $\hat{e} \in (0, 1]$ and $\sum D / \hat{e} > 0$. Let $f_n(b)$ be the \texttt{error\_bound} function with no upper bound nuisance rates given. Then there is exactly one $b_n$ such that $\limsup_{b \to b_n^-} f_n(b) \leq 0 \leq \liminf_{b \to b_n^+} f_n(b_n)$. 
\end{lemma}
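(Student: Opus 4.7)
The plan is to establish that, when no rate upper bounds are provided, $f_n$ is strictly increasing and right-continuous on $(0, 1]$, strictly negative near $b = 0$, and strictly positive at $b = 1$; the unique $b_n$ will then be $\inf\{b \in (0,1] : f_n(b) \geq 0\}$.

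With $A_\mu = A_e = 0$, both $\tilde{r}_{\mu,n}$ and $\tilde{r}_{e,n}$ reduce to $b$, so
$$f_n(b) = b \cdot \frac{n^{-1}\sum_i 1\{\hat{e}_i \leq b\}}{\sqrt{M_n(b)}} + b^2 \sqrt{M_n(b)} - n^{-1/2}, \qquad M_n(b) := n^{-1}\sum_i \frac{D_i}{\max\{\hat{e}_i, b\}^2}.$$
The central device for monotonicity is to recognize the identity $b^2 M_n(b) = n^{-1}\sum_i D_i \min\{1, b/\hat{e}_i\}^2$, which is manifestly non-decreasing in $b$. Since $\sum_i D_i/\hat{e}_i > 0$ together with $\hat{e}_i \in (0, 1]$ forces $\sum_i D_i \geq 1$, this quantity is strictly positive on all of $(0, 1]$, so the second term $b \sqrt{b^2 M_n(b)}$ is strictly increasing in $b$. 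The first term is non-decreasing because $M_n$ is non-increasing in $b$ while $b \cdot n^{-1}\sum_i 1\{\hat{e}_i \leq b\}$ is non-decreasing; right-continuity of $f_n$ is inherited from the empirical CDF and the $\max$ operation.

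For the endpoints, choosing $b < \min_i \hat{e}_i$ makes the empirical CDF vanish and freezes $M_n(b)$ at the positive constant $n^{-1}\sum_i D_i/\hat{e}_i^2$, so $f_n(b) \to -n^{-1/2} < 0$ as $b \to 0^+$; at $b = 1$, writing $\bar{D} = n^{-1}\sum_i D_i$, the constraint $\hat{e}_i \leq 1$ gives $M_n(1) = \bar{D}$ and an indicator sum of $1$, yielding $f_n(1) = \bar{D}^{-1/2} + \bar{D}^{1/2} - n^{-1/2}$, strictly positive since $\bar{D} \geq n^{-1}$. Taking $b_n = \inf\{b \in (0, 1] : f_n(b) \geq 0\}$, right-continuity gives $f_n(b_n) \geq 0$ and monotonicity gives $\limsup_{b \to b_n^-} f_n(b) \leq 0$; strict monotonicity then pins down uniqueness, since any $b' > b_n$ satisfies $\lim_{b \to b'^-} f_n(b) \geq f_n((b_n + b')/2) > 0$ while any $b' < b_n$ has $f_n(b') < 0$. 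The main (mild) obstacle is arguing strict monotonicity despite the piecewise structure of $f_n$ caused by jumps of the empirical CDF at each $\hat{e}_{(k)}$; the $\min\{1, b/\hat{e}_i\}^2$ rewriting is precisely what lets strict monotonicity be read off uniformly without case analysis.
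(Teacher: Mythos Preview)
Your proposal is correct and follows essentially the same approach as the paper: both arguments verify the endpoint signs ($f_n \to -n^{-1/2}$ near $0$ and $f_n(1)>0$), and both obtain strict monotonicity by rewriting the second term via $b^2 M_n(b) = n^{-1}\sum_i D_i \min\{1,b/\hat{e}_i\}^2$ (the paper writes this equivalently as $\sqrt{n^{-1}\sum D\min\{\hat{e}^{-2}b^4,b^2\}}$) while noting the first term is non-decreasing. Your use of right-continuity plus $b_n=\inf\{b:f_n(b)\ge 0\}$ is a slightly tidier packaging of the existence step than the paper's $b_n^-=b_n^+$ argument, but the substance is the same.
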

A rule of thumb with nonspecified rates seems particularly attractive, since it is an entirely data-driven way to choose the AIPW threshold. However, in practice, a given outcome regression rate is more difficult to achieve than a given propensity rate under weak overlap, so that rules of thumb with specified nuisance rates is more appropriate for nonparametric outcome regression estimates.

\section{Applications}\label{sec:NumberTime}

In this section, I present simulated results for the clipped AIPW estimator as well as empirical results from an application to right heart catheterization. I find that clipped AIPW performs well asymptotically, producing near-perfect calibration of p-values with 100,000 observations, but exhibits some undercoverage in small samples. When studying the right heart catheterization data, I find that the rule of thumb approach increases the estimated harm of the procedure by 0.17 standard errors relative to the usual 10\% trimming rule, while increasing the estimated standard error by 5.1\%.

\subsection{Simulation Evidence}\label{sec:Sims}

I now study the performance of the clipped AIPW estimator in simulations.

My simulation design is based on the design in \cite{XinweiMaRobustIPW}. As in their work, I simulate data with $P(e(X) \leq \pi) = \pi^{\gamma - 1}$ and $D Y = \kappa D (1-e(X)) + D (\varepsilon - 4) / \sqrt{8}$, where $\varepsilon \mid X, D \sim \xi_4^2$ is scaled to achieve zero mean and unit variance. However, I increase $\gamma$ from $1.5$ to $1.8$ to ensure feasible outcome regression rates, set $\kappa = 2$ rather than $\kappa = 1$ to avoid coincidental offsetting bias of IPW lower and upper tails in small samples, and reduce $D Y$ by $\kappa E[ D (1-e(X)) ]$ so that the true average potential outcome is zero. I achieve this propensity distribution by taking $X \sim Unif([0, 1])$ i.i.d. and setting $e(X) = X^{1/(\gamma_0-1)}$.  I present results for 5,000 simulations of increasingly large samples.

I estimate both the propensity and outcome regressions with five-fold cross-fitting. I use shrinkage cubic splines and REML estimation, as implemented by the \texttt{mgcv} package in \texttt{R}.  In this setting, \Cref{prop:GlobalRate} establishes that kernel regression can achieve a pointwise rate of $n^{-1 / (3 + 1 / (\gamma_0-1))}$. I conjecture that $r_{\mu,n} \ll n^{-1/5}$, which is feasible if $\gamma_0 > 1.5$, and choose the clipping threshold $b_n$ based on Algorithm \ref{alg:errorUpperBound}.

\begin{figure}[!ht]
    \centering
    \includegraphics[width=\textwidth]{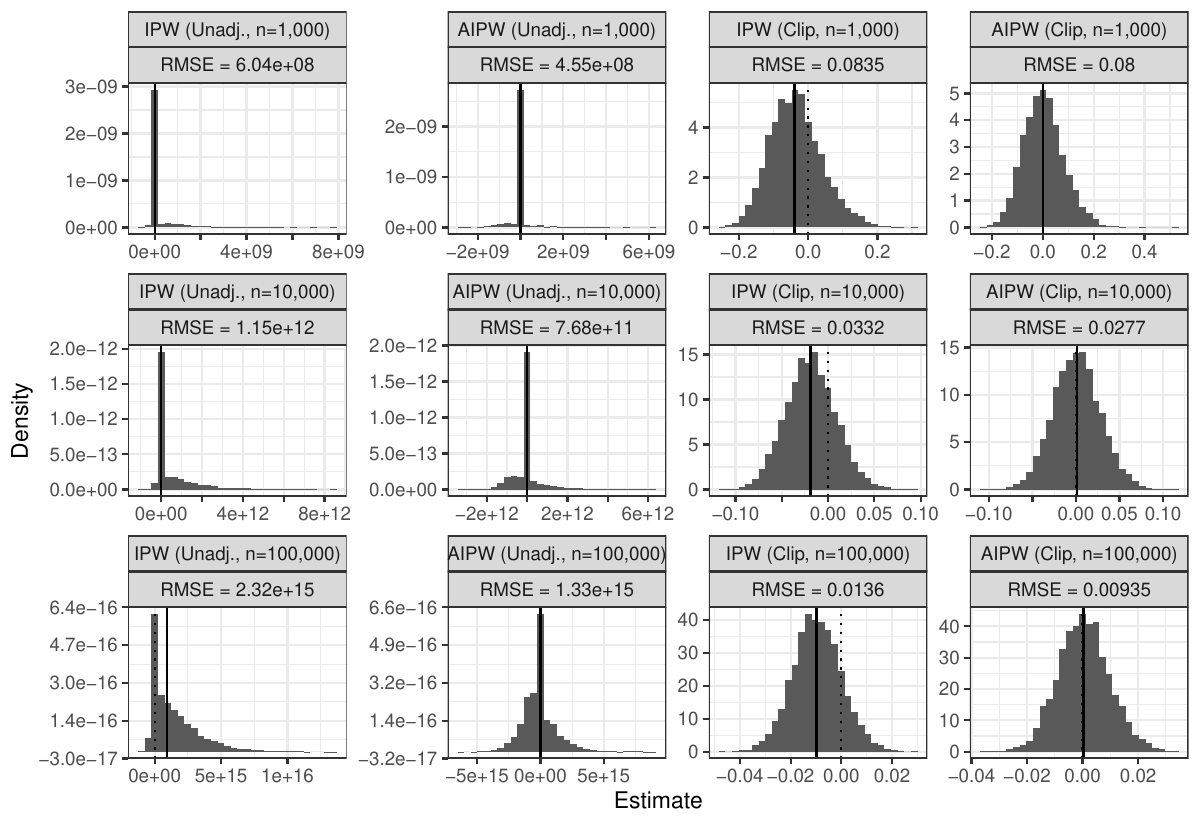}
    \caption{Histograms of point estimates in simulations for the various methods considered in the simulations. Vertical dotted and solid lines indicate true causal effect and median estimate, respectively. Clipped estimators achieve much better performance than unthresholded estimators, and clipped AIPW's debiasing property is also apparent.}
    \label{fig:estimates}
\end{figure}

I begin by summarizing point estimates in \Cref{fig:estimates}. The unthresholded estimators are approximately median-unbiased, but possess sufficiently heavy inverse propensity tails that the mean performance degrades with increasing sample size. The clipped estimators perform much better, but the clipped IPW estimator exhibits its known first-order bias. The clipped AIPW estimator exhibits less bias than the clipped IPW estimator, and has slightly better performance in terms of mean squared error.

\begin{figure}[!t]
    \centering
    \includegraphics[width=\textwidth]{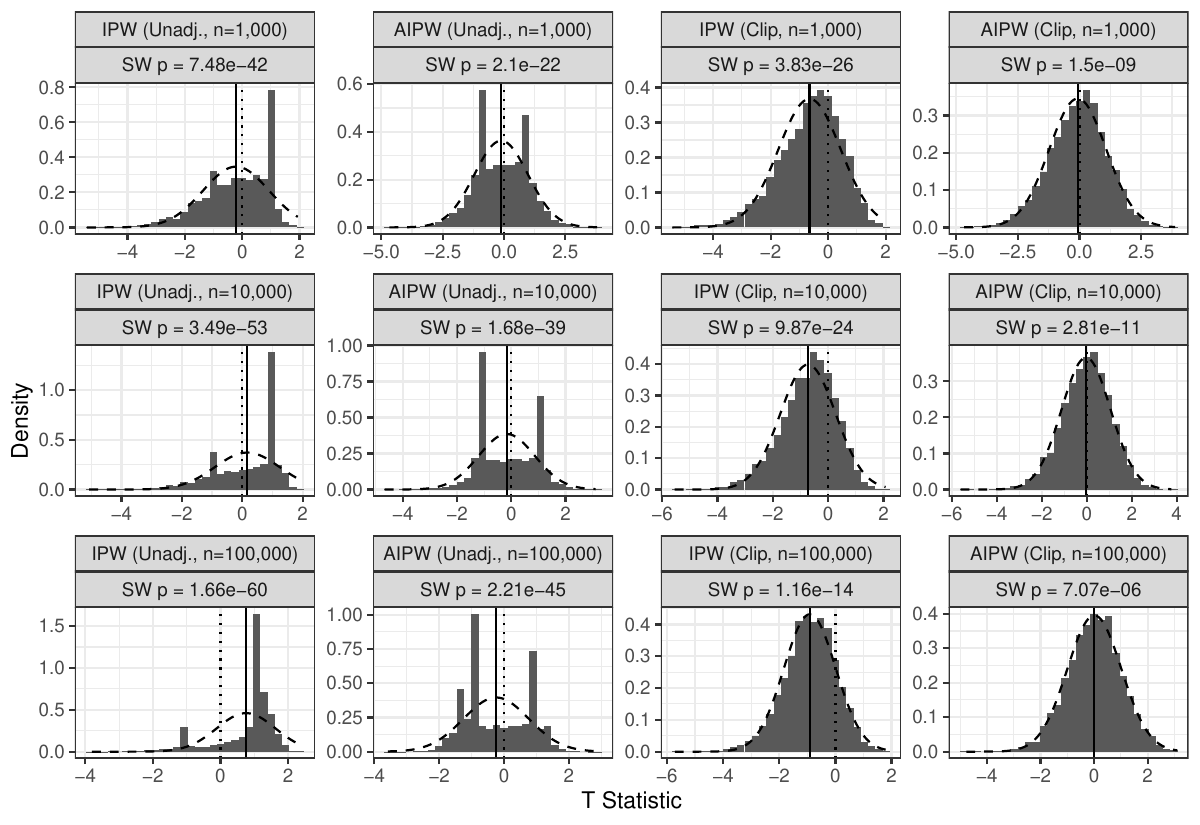}
    \caption{Histograms of simulated t-statistics on the true null hypothesis for various sample sizes. Vertical solid and dotted lines indicate mean t-statistic and target mean t-statistic of zero, respectively. Dashed line corresponds to the calibrated Gaussian density targeted in the Shaprio-Wilk test for normality.}
    \label{fig:tstats}
\end{figure}

I find in \Cref{fig:tstats} that the clipped AIPW estimator's t-statistics are reasonably well-calibrated. The plot presents t-statistics on the true average potential outcome. The t-statistics of unthresholded IPW and AIPW estimators are visibly non-Gaussian, and often exhibit a multimodal distribution. This poor performance is unsurprising: unthresholded estimators are known to fail to be asymptotically normal in this setting. Both thresholded estimators are known to be asymptotically normal in this setting when the propensity score is known, and both the asymptotic normality and the clipped IPW estimator's first-order bias are visible to the naked eye, although the clipped IPW estimator also exhibits visible skew in small samples. I test for t-statistic normality using a Shapiro-Wilk test. The test rejects normality for both clipped estimates. Still, the clipped AIPW estimator's violations are less severe by this criterion.

\begin{figure}[!t]
    \centering
    \includegraphics[width=\textwidth]{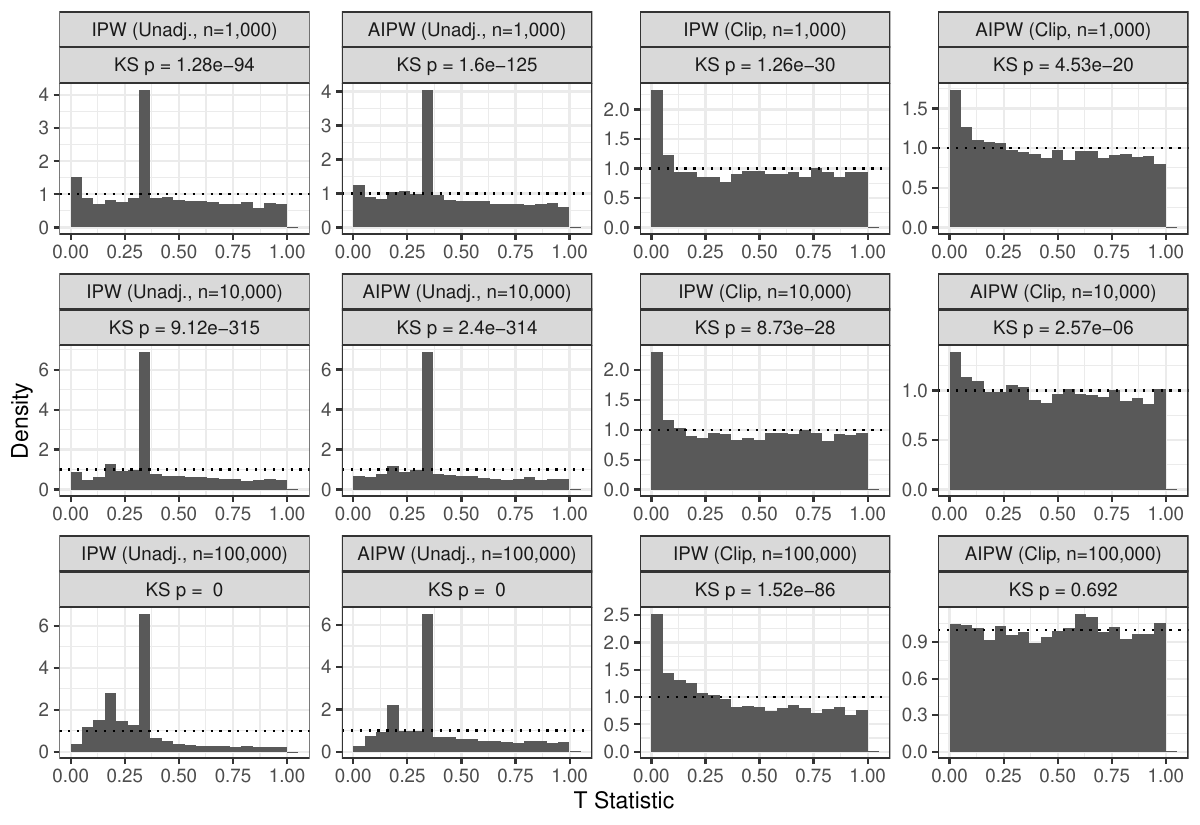}
    \caption{Histograms of simulation p-values on null hypothesis of true average potential outcome for various sample sizes. Dotted lines correspond to the target Uniform(0, 1) density. P-values in labels correspond to Kolmogorov-Smirnov tests for the Uniform(0, 1) distribution.}
    \label{fig:pvalues}
\end{figure}

I find in \Cref{fig:pvalues} that the clipped AIPW estimator's p-values are well-calibrated in large samples. I use Wald confidence intervals to calculate two-sided p-values on the null of the true average potential outcome. If Wald confidence intervals are well-calibrated, then the simulated p-values on the true average potential outcome will be exactly uniformly distributed. The unthresholded IPW and AIPW estimators exhibit known poor performance. The clipped IPW estimator exhibits overrejection even with large samples, as even oracle clipped IPW would provide well-calibrated inference for a biased estimand. The clipped AIPW estimator also overrejects in small samples, but the bias is less severe: with 1,000 observations, clipped IPW rejects the true null in 12.0\% of simulations, while clipped AIPW rejects in 8.8\% of simulations.  As the sample size increases, the asymptotic calibration of \Cref{cor:TTests} becomes apparent. With 100,000 observations, clipped IPW rejects the true null hypothesis in 12.8\% of simulations, while clipped AIPW rejects in 5.3\% of simulations. The Kolmogorov-Smirnov p-value on exact calibration of the two-sided test statistics for clipped AIPW with 100,000 observations is 0.692. This is a remarkable result: despite the known extreme difficulty of statistical inference in this setting, 5,000 simulated draws are insufficient to detect a meaningful failure of Wald confidence intervals based on the clipped AIPW estimator.

\begin{figure}[!t]
    \centering
    \includegraphics[width=\textwidth]{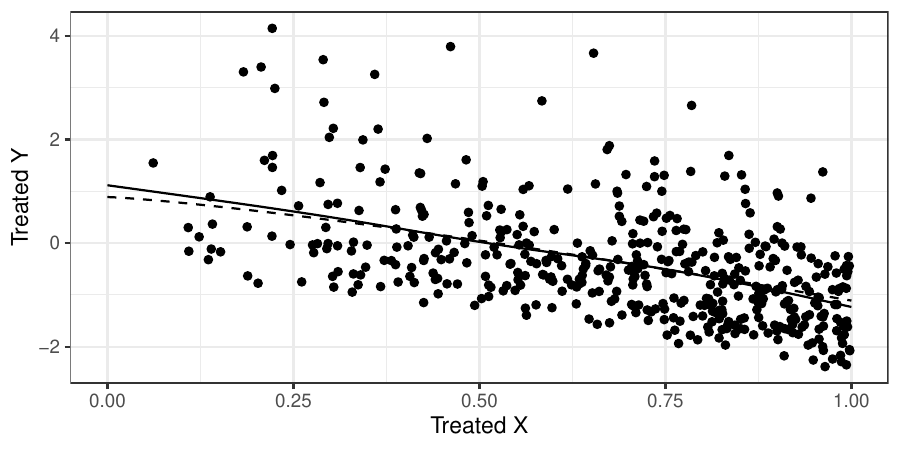}
    \caption{Simulated treated observations for one simulation of 1,000 observations. It is rare to see treated observations with small $X$, which corresponds to small values of $e(X) = X^{1/(\gamma_0 - 1)}$. As a result, such observations can have high leverage when predicting $E[Y \mid X = 0, D = 1]$, and can yield to important errors between the true (dashed) and predicted (solid) regression lines.}
    \label{fig:exampleSim}
\end{figure}

In moderate samples, clipped AIPW can undercover due to the difficulty of outcome regression in this setting. \Cref{fig:exampleSim} presents an example with 1,000 observations. It is rare to have treated observations with small values of $e(X)$. As a result, when such observations are treated, a small number of observations can receive substantial leverage in outcome regression, and the predictions of $E[Y \mid X = 0, D = 1]$ can be driven by a small number of observations. In \Cref{sec:SimulationTrimmed} (Figures \ref{fig:estimates_trueMu} through \ref{fig:pvalues_trueMu}), I conduct the same experiments, but with the estimated outcome regression function replaced by the true outcome regression function. The root-mean-squared error and failures of normality are comparable, suggesting these non-inferential patterns are driven by propensity estimation and clipping. However, the two-sided p-values exhibit better performance in small samples, and if anything slightly underreject with 100,000 observations.

In \Cref{sec:SimulationTrimmed} (Figures \ref{fig:estimates_trimmed} through \ref{fig:pvalues_trimmed}), I show that these conclusions would largely carry through if clipping were replaced by trimming. The notable differences are that trimmed AIPW exhibits slightly better estimation performance in small samples, while if anything trimmed IPW is slightly worse; trimmed t-statistics exhibit less severe violations of normality; and p-values based on trimmed propensities exhibit more severe undercoverage for both IPW and AIPW.

\subsection{Application to Right Heart Catheterization}\label{subsec:EmpiricalApplication}

I apply the clipped AIPW estimator to study the effect of right-heart cathterization (RHC) on survival. This dataset was first analyzed by \cite{ConnorsEtAl1996}, and is a common benchmark in the weak overlap literature \citep{CrumpEtAlOptimizePrecision, ArmstrongKolesarBandwidthSnooping}.

I analyze a version of the dataset from \cite{ArmstrongKolesarBandwidthSnooping}. The dataset is comprised of 5{,}735 adult patients, and the treatment $D$ corresponds to receiving RHC within 24 hours of admission. The target causal effect is the average treatment effect of RHC on 30-day survival. The data includes 52 covariates $X$ (72 covariates if counting factor levels separately). I estimate the nuisance functions $e(X)$ and $\mu(X)$ using five-fold cross-fitting. I estimate nuisance functions with logistic regression to align with \cite{CrumpEtAlOptimizePrecision}'s empirical application. I estimate standard errors by bootstrapping the procedure. I keep fold assignment fixed in bootstraps to minimize the risk of over-fitting.

\citeauthor{CrumpEtAlOptimizePrecision} propose a weak overlap rule of thumb that estimates the treatment effect for the subpopulation with propensity scores between 10\% and 90\%. This rule-of-thumb trimming rule is chosen to approximately minimize asymptotic variance. This strategy ensures asymptotic normality, but changes the target estimand even asymptotically. By comparison, the clipped and trimmed AIPW estimators I analyze have thresholds $b_n$ that tend to zero asymptotically. As a result, the estimators proposed here are able to target full population average treatment effect, potentially at the cost of increased variance. I compare these procedures to the 10\% rule and other fixed trimming rules using the same nuisance estimates.

\begin{figure}
    \centering
    \includegraphics[width=0.8\textwidth]{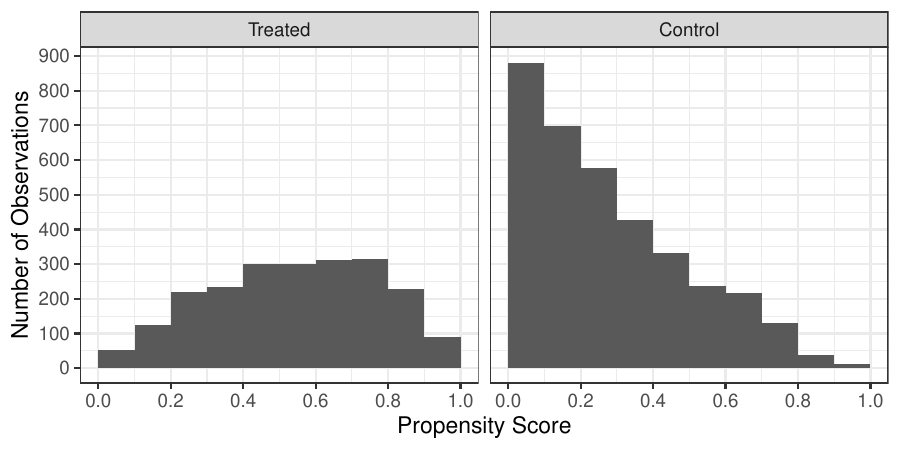}
    \caption{Histogram of estimated propensity scores for treated (left) and control (right) observations in right heart catheterization data. The plot is designed to parallel Figure 1 in \cite{CrumpEtAlOptimizePrecision}. Slight differences reflect the use of cross-fitting.}
    \label{fig:estimatedPropensityDensity}
\end{figure}

I present the distribution of estimated propensity scores for treated and control units in \Cref{fig:estimatedPropensityDensity}. The figure is an analog of  \cite{CrumpEtAlOptimizePrecision}'s Figure 1. There is a meaningful density of units with estimated propensities near zero, suggesting weak overlap. This pattern is similar to the findings of \citeauthor{CrumpEtAlOptimizePrecision}, although there are slight differences, presumably due to my use of cross-fitting.

\begin{figure}[!ht]
    \centering
    \includegraphics[width=0.8\textwidth]{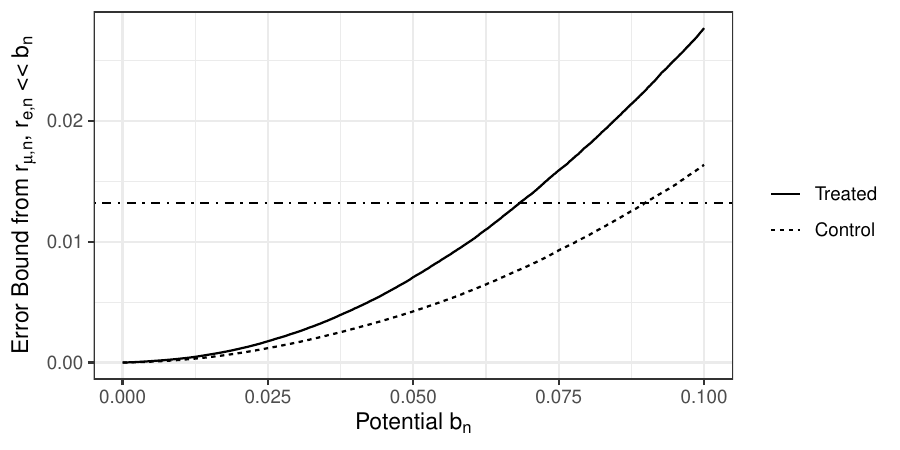}
    \caption{The value of \texttt{error\_bound}(b) from Algorithm \ref{alg:errorUpperBound} for $e(x)$ and $1-e(x)$ thresholding with no specified nuisance upper bounds, which corresponds to an error bound implied by $r_{\mu,n}, r_{e,n} \ll b_n$. The procedure chooses $b_n$ to set this function equal to $n^{-1/2}$, which corresponds to the horizontal line. $n^{-1/2}$ is indicated by horizontal dashed line. The more favorable distribution of estimated treatment propensities allows for a more aggressive clipping threshold.}
    \label{fig:BnCalibration}
\end{figure}

I compare AIPW estimators for various trimmed subsamples to the clipped AIPW estimator. I choose the clipping threshold $b_n$ through the no-specified-upper-bound version of Algorithm \ref{alg:errorUpperBound} because I estimate both nuisance functions parametrically. I plot the functions used in choosing $b_n$ in \Cref{fig:BnCalibration}. The estimated lower clipping threshold is 0.068 and affects 10.5\% of observations. The \citeauthor{CrumpEtAlOptimizePrecision} 10\% rule of thumb would exclude 16.3\% of observations below. The estimated upper clipping threshold is 0.09 below one: there are few observations with large estimated propensities, so the rule of thumb concludes there is no need to trim observations with large estimated propensities. This upper threshold affects 1.4\% of observations, comparable to the 1.8\% of observations excluded above by the 10\% rule of thumb.

\begin{figure}[!ht]
    \centering
    \includegraphics[width=0.8\textwidth]{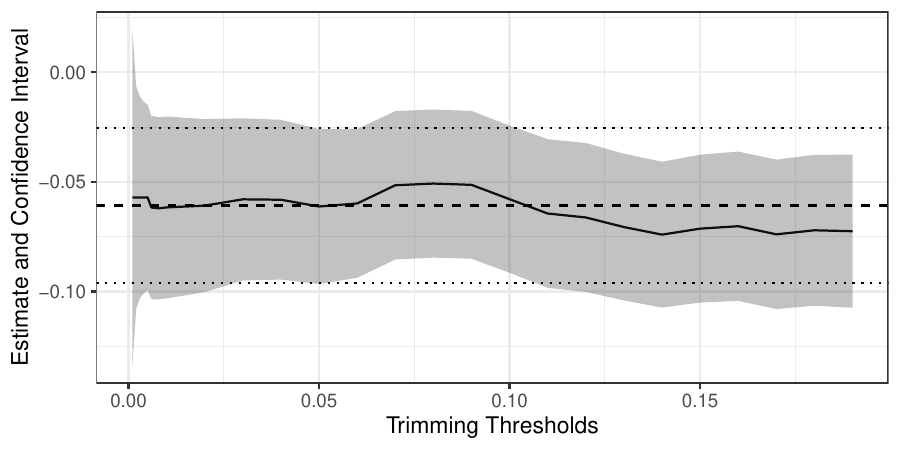}
    \caption{Estimated effects (solid line) and 95\% confidence interval (shaded region) for AIPW applied to various trimmed subsamples. Estimate and confidence for clipped AIPW is represented by the dashed and dotted horizontal lines, respectively. Clipped AIPW produces similar estimates and standard errors as the clipping procedure while targeting a more interpretable estimand. A threshold of zero is omitted from the graph because the resulting confidence interval of [-568.8, 581.3] would make the graph difficult to read.}
    \label{fig:CompareConfidenceIntervals}
\end{figure}

I present estimated effects and confidence intervals for various potential fixed trimming rules in \Cref{fig:CompareConfidenceIntervals}. The 10\% trimming rule yields an estimated reduction in survival rates of 5.79 percentage points among the trimmed sample, with an estimated 95\% Wald confidence interval of [-9.14, -2.43]. Other trimming rules would yield larger confidence intervals, as expected because the 10\% rule is chosen to roughly minimize asymptotic variance over target populations.

I compare the fixed-trimmed-sample AIPW estimates to a clipped AIPW estimator that targets the full population treatment effect. The estimated harm increases to -6.07 percentage points, a change of 0.168 standard errors under the 10\% rule of thumb estimator. The clipped AIPW confidence interval of [-9.6, -2.55], has a 5.14\% larger width than the 10\% trimmed sample interval. The clipped AIPW point estimates are similar to the point estimates under a 1\% or 5\% trimming rule, but the associated confidence interval is narrower under the full-population estimator. Part of the added width is driven by inverse propensities among clipped observations: if I used a trimmed, rather than clipped, AIPW estimator, the estimated effect would move by 0.256 standard errors, and the standard error would only increase by 0.54\%. However, the simulation results of \Cref{sec:Sims} suggest that trimmed AIPW may slightly undercover.

Taken together, these results illustrate that under weak overlap, targeting the causal effect within the full population need not come at a large precision cost. In this application, clipped AIPW with a rule-of-thumb clipping rate yields similar estimates to estimators that target a fixed trimmed sample, while targeting a population that is often more relevant and adding only a small precision cost.

\section{Conclusion}\label{sec:Conclusion}

This work shows that standard Wald confidence intervals for clipped AIPW can achieve target coverage for standard causal effects under plausible conditions. I provide sufficient conditions on nuisance regression rates for clipped (or trimmed) AIPW to be uniformly valid over distributions with even very weak overlap. I use these theoretical results to derive new rules of thumb for choosing a threshold. I find that Wald confidence intervals perform well in simulations, especially in large samples, and can achieve comparable precision to a fixed 10\% trimming rule in practice.

These results can be extended in many interesting directions. This work exploits Neyman orthogonality to achieve standard statistical inference in the presence of a small region of irregular identification.  \cite{SasakiUra2022} and \cite{ma2023doubly} propose estimators for ratio estimands beyond IPW; the arguments here are likely to extend to their more general framework. Issues of weak overlap hold for inverse propensity and other importance sampling estimators in settings like difference-in-difference estimation \citep{CallawaySantAnna} or statistical inference for parameters that are identified at infinity \citep{AndrewsSchafgans, KhanNekipelov}; the results and rules of thumb here can likely be adapted to those settings. \cite{semenova2024aggregatedintersectionboundsaggregated} applies thresholding strategies to intersection bounds, where at a high level a margin condition plays the role of the minimal overlap bound here. Perhaps similar ideas could apply to other forms of irregular identification. The regression analysis in \Cref{sec:Limitations} may also extend to estimating the effects of continuous treatments.

The results here suggest that thresholded AIPW is a viable alternative to fixed-trimming rules. I provide rules of thumb that enable practitioners to easily report results that target the population average effect.  When, as in my empirical application, the fixed-trimming and sequence-of-thresholds approaches yield similar causal conclusions, then there is strong evidence that causal conclusions are driven by causal effects, and not how the researcher treats observations with extreme propensity scores.

\bibliography{bibliography}

\clearpage

\appendix

\section{Key Technical Assumptions and Claims}\label{proofs:KeyAdditionalClaims}

I make use of the following assumptions. 
\begin{manualassumption}{\ref*{assumption:ConsistencyRatesSufficient}'}\label{assum:RateRequirementsForInference}
    Assumption \ref{assum:NuisanceRates} holds, with the following rates on the regression error $r_{\mu,n}$ and the propensity error $r_{e,n}$ for any sequence of $P(n) \in \mathscr{P}$:
    \begin{enumerate}[label=(\alph*), itemsep=-0.5ex, topsep=-0.5ex]
        \item \emph{Consistency}. $r_{\mu,n}, r_{e,n} \to 0$. 
        \item \emph{Product of errors}. $r_{\mu,n} r_{e,n} \sqrt{\E_{P(n)}\left[ \frac{D}{\max\{e(X), b_n\}^2} \right]} \ll n^{-1/2}$. \label{assum:prodOfErrorsOriginal} 
        \item \emph{Regression error near singularities}.  $r_{\mu,n} \frac{P(n)( e(X) \leq b_n )}{\sqrt{\E_{P(n)} \left[ \frac{D}{\max\{e, b_n\}^2} \right]}} \ll n^{-1/2}$. \label{assum:SmalRegErrorNearSingOriginal} 
        \item \emph{Asymptotically known thresholding}. $r_{e,n} \ll b_n$. 
    \end{enumerate}
\end{manualassumption}
These conditions adapt to the distributions in the sequence $P(n)$.

The conditions of \Cref{assum:RateRequirementsForInference} are weaker than the more interpretable conditions in the main text. 
\begin{corollary}[Sufficiency of \Cref{assum:RateRequirementsForInference}] \label{cor:WorstCaseRates}
    Suppose Assumptions \ref{assumption:ConsistencyRatesSufficient}, \ref{assum:NondegenerateOrFaster}, and \Cref{assum:NondegenerateOrFaster}\ref{def:ContinuousDistributions} hold and let $\rho > 0$ be given. Then  \Cref{assum:RateRequirementsForInference} holds. 
\end{corollary}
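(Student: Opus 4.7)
The plan is to verify each of the four conditions of \Cref{assum:RateRequirementsForInference} from its counterpart in \Cref{assumption:ConsistencyRatesSufficient}. Conditions (a) and (d) are syntactically identical and transfer immediately. The real work is to produce two-sided bounds on the distribution-dependent quantity
\begin{align*}
S_n(P) := \E_P\left[\frac{D}{\max\{e(X), b_n\}^2}\right] = \E_P\left[\frac{e(X)}{\max\{e(X), b_n\}^2}\right],
\end{align*}
which links the adaptive form in \Cref{assum:RateRequirementsForInference} to the worst-case form in \Cref{assumption:ConsistencyRatesSufficient}.

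First, I would derive an upper bound on $S_n(P)$ using only the tail condition \Cref{def:AllowedDistributions}\ref{item:PropensityTail}. Splitting the expectation into $\{e(X) \leq b_n\}$ and $\{e(X) > b_n\}$, the first piece is bounded by $P(e(X) \leq b_n)/b_n \leq C b_n^{\gamma_0 - 2}$. For the second piece, $\E_P[(1/e(X)) \, 1\{e(X) > b_n\}]$, an integration by parts against the tail bound $P(e(X) \leq x) \leq C x^{\gamma_0 - 1}$ yields the three regimes: bounded when $\gamma_0 > 2$, of order $\log(1/b_n)$ when $\gamma_0 = 2$, and of order $b_n^{\gamma_0 - 2}$ when $\gamma_0 < 2$. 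Taking square roots gives $\sqrt{S_n(P)} \lesssim 1 + b_n^{(\gamma_0-2)/2} \log(1/b_n)^{1\{\gamma_0=2\}/2}$, so that \Cref{assumption:ConsistencyRatesSufficient}\ref{assum:prodOfErrorsWorstCase} implies \Cref{assum:RateRequirementsForInference}\ref{assum:prodOfErrorsOriginal}.

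Second, I would derive a matching lower bound on $S_n(P)$ using the nondegeneracy condition \Cref{assum:NondegenerateOrFaster}\ref{def:ContinuousDistributions}. Restricting the expectation to the dyadic shell $\{b_n/2 < e(X) \leq b_n\}$, the integrand is at least $(b_n/2)/b_n^2 = 1/(2 b_n)$, so $S_n(P) \geq P(b_n/2 < e(X) \leq b_n)/(2 b_n) \geq \rho \, P(e(X) \leq b_n)/(2 b_n)$, where the second inequality uses $P(e(X) \leq b_n/2) \leq (1-\rho) P(e(X) \leq b_n)$. Combining with the tail bound gives
\begin{align*}
\frac{P(e(X) \leq b_n)}{\sqrt{S_n(P)}} \leq \sqrt{\frac{2 b_n P(e(X) \leq b_n)}{\rho}} \leq \sqrt{\frac{2C}{\rho}} \, b_n^{\gamma_0/2},
\end{align*}
so that \Cref{assumption:ConsistencyRatesSufficient}\ref{assum:SmalRegErrorNearSingWorstCaseContinuity} implies \Cref{assum:RateRequirementsForInference}\ref{assum:SmalRegErrorNearSingOriginal}; the edge case $P(e(X) \leq b_n) = 0$ makes \Cref{assum:RateRequirementsForInference}\ref{assum:SmalRegErrorNearSingOriginal} vacuously hold.

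The main obstacle is the lower bound on $S_n(P)$. Without \Cref{assum:NondegenerateOrFaster}\ref{def:ContinuousDistributions}, adversarial concentration of propensity mass at scales strictly below $b_n$ could make $\E_P[e(X) 1\{e(X) \leq b_n\}]/b_n^2$ arbitrarily small relative to $P(e(X) \leq b_n)/b_n$, breaking the $b_n^{\gamma_0/2}$ bound on the ratio in condition (c). The nondegeneracy condition forces a definite fraction $\rho$ of the sub-$b_n$ mass into the shell $(b_n/2, b_n]$, where it contributes at the right rate. Under the alternative \Cref{assum:NondegenerateOrFaster}\ref{assumption:ConsistencyRatesNotContinuous}, this would be replaced by directly imposing the faster regression rate $r_{\mu,n} b_n^{(\gamma_0-1)\cdot 2/\gamma_0} \ll n^{-1/2}$, but that is outside the scope of the present corollary.
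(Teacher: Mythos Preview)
Your proposal is correct and follows essentially the same approach as the paper. For condition (b) the paper invokes \Cref{lemma:InvSqBoundAbove}, whose content is exactly the tail-integration upper bound you sketch; for condition (c) under \Cref{assum:NondegenerateOrFaster}\ref{def:ContinuousDistributions} the paper uses the identical dyadic-shell lower bound $S_n(P) \geq \rho\, P(e(X) \leq b_n)/(2 b_n)$ and then the tail bound to reach $\sqrt{2C/\rho}\, r_{\mu,n} b_n^{\gamma_0/2}$, handling the $P(e(X) \leq b_n)=0$ edge case via a subsequence argument just as you indicate. The paper's proof additionally verifies condition (c) under the alternative hypothesis \Cref{assum:NondegenerateOrFaster}\ref{assumption:ConsistencyRatesNotContinuous} (using \Cref{lemma:ControlOfPropensities} and a case split on whether $b_n^{-2} P(e(X)\leq b_n)^{\gamma_0/(\gamma_0-1)}$ vanishes), but since the corollary as stated explicitly assumes \Cref{assum:NondegenerateOrFaster}\ref{def:ContinuousDistributions}, your restriction to that case is appropriate.
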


I will show that the feasible clipped estimator $\hat{\psi}_{clip}^{AIPW}(b_n)$ is first-order equivalent to the oracle clipped estimator $\tilde{\psi}_{(Oracle)}^{AIPW}(b_n)$. The oracle clipped AIPW estimator is asymptotically normal by the trimmed IPW arguments in \cite{XinweiMaRobustIPW}. By construction, the oracle clipped AIPW estimator is finite-sample unbiased. The following asymptotic normality follows as a result.  
\begin{proposition}[Oracle asymptotic normality]\label{thm:OracleAsympNormal}
    Suppose $b_n$ satisfies $n^{-1/2} \ll b_n \ll 1$. Then the oracle clipped AIPW estimator has uniform convergence to a normal distribution in the sense that $$\limsup_{n \to \infty} \sup_{P \in \mathscr{P}} \sup_{t \in \R} \left| P \left( \frac{\tilde{\psi}_{(Oracle)}^{AIPW}(b_n) - \psi(P)}{\sigma_n} \leq t \right) - \Phi(t) \right| = 0.$$
\end{proposition}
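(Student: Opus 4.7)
The plan is to treat $\tilde{\psi}_{(Oracle)}^{AIPW}(b_n) = n^{-1}\sum_i \phi(Z_i \mid b_n, \eta)$ as an average of i.i.d.\ random variables and apply a uniform Berry--Esseen bound. The key observation that separates this from Ma and Wang's analysis of oracle clipped IPW is that the AIPW pseudo-outcome is \emph{exactly} finite-sample unbiased: conditioning on $X$ gives $E_P[D(Y-\mu(X)) \mid X] = e(X)\cdot 0 = 0$, so $E_P[\phi(Z\mid b_n, \eta)] = E_P[\mu(X)] = \psi(P)$ and no first-order bias term must be peeled off.

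I would then control the Berry--Esseen rate $C\rho_n^3/(s_n^3 \sqrt{n})$ uniformly over $\mathscr{P}$, where $s_n^2 = \mathrm{Var}_P(\phi)$ and $\rho_n^3 = E_P[|\phi - \psi|^3]$. Writing $\phi = \mu(X) + W$ with $W = D(Y-\mu)/\max\{e,b_n\}$, the identity $E[W\mid X] = 0$ eliminates the cross term, yielding $s_n^2 = \mathrm{Var}(\mu(X)) + E[e(X)\sigma^2(X)/\max\{e,b_n\}^2]$ with $\sigma^2(X) \geq \sigma_{\min}^2$. Bounding the second summand below by $\sigma_{\min}^2 P(e \geq b_n)$ and invoking $P(e \leq b_n) \leq Cb_n^{\gamma_0-1} \to 0$ produces a uniform lower bound $s_n \geq \sigma_{\min}/\sqrt{2}$ for $n$ large. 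For the third central moment, apply $|\phi - \psi|^3 \leq 4(|\mu - \psi|^3 + |W|^3)$, the conditional bound $E[|Y-\mu|^3\mid X, D=1] \leq M^3$ (Jensen with $q > 3$), and the central inequality $1/\max\{e,b_n\} \leq 1/b_n$ to get
\begin{align*}
E[|W|^3] \leq M^3\cdot \frac{E[e/\max\{e,b_n\}^2]}{b_n} \leq \frac{C_1 s_n^2}{b_n}.
\end{align*}
Combined with the variance lower bound, this gives $\rho_n^3/s_n^3 \lesssim 1/b_n$ uniformly, provided $E_P[|\mu-\psi|^3]$ is uniformly bounded.

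The classical Berry--Esseen bound then yields
\begin{align*}
\sup_{P \in \mathscr{P}} \sup_t \left|P\!\left(\frac{\tilde\psi_{(Oracle)}^{AIPW}(b_n) - \psi(P)}{s_n/\sqrt{n}}\leq t\right) - \Phi(t)\right| \leq \sup_{P \in \mathscr{P}} \frac{C\rho_n^3}{s_n^3 \sqrt{n}} \lesssim \frac{1}{b_n\sqrt{n}} \to 0
\end{align*}
under $b_n \gg n^{-1/2}$. To pass from the population normalization $s_n/\sqrt{n}$ to $\sigma_n$ in the proposition's statement, I would show $n\sigma_n^2/s_n^2 \to 1$ in $P$-probability uniformly via Chebyshev on the sample second moment combined with the variance bounds of \Cref{prop:ConsistencyRates}, and conclude by Slutsky.

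The main obstacle is the uniform control of $E_P[|\mu(X)-\psi|^3]$: Assumption~\ref{def:AllowedDistributions} supplies only $\mathrm{Var}(\mu) \leq M$, which does not by itself bound the standardized third moment uniformly (point-mass-type constructions put small probability on large deviations and blow it up). Either the class $\mathscr{P}$ is implicitly restricted via an additional uniform moment bound on $\mu$, or one must leverage $s_n^2 \geq \mathrm{Var}(\mu)$ so that distributions inflating $E[|\mu-\psi|^3]$ necessarily inflate $s_n$ enough to keep $E[|\mu-\psi|^3]/s_n^3$ in check through a more delicate decomposition. The transfer from pointwise to uniform Berry--Esseen is then automatic, because the remaining constants in the bound depend only on the uniform quantities of Assumption~\ref{def:AllowedDistributions}.
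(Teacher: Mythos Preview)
Your approach mirrors the paper's exactly: Berry--Esseen applied to the i.i.d.\ summands $\phi(Z_i\mid b_n,\eta)$, using the exact unbiasedness $E_P[\phi]=\psi(P)$, the variance lower bound $\mathrm{Var}_P(\phi)\geq \sigma_{\min}^2 E_P[e/(e\vee b_n)^2]$, and a third-moment upper bound of the same form divided by $b_n$, yielding a Berry--Esseen remainder of order $(nb_n^2)^{-1/2}\to 0$. The paper packages precisely these ingredients in \Cref{lemma:PhiLowerBound}, \Cref{lemma:useful_inequalities}, and \Cref{lemma:BerryEsseenConditions}.

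The obstacle you flag---uniform control of $E_P[|\mu(X)-\psi|^3]$ given only $\mathrm{Var}(\mu)\leq M$---is genuine, and the paper does not resolve it: \Cref{lemma:useful_inequalities} and \Cref{lemma:BerryEsseenConditions} simply write $E[|\mu(X)|^q]\leq M^q$ (equivalently $O(M^q)$) without derivation, which does not follow from Assumption~\ref{def:AllowedDistributions} as stated. So on this point you are more careful than the paper; both proofs go through cleanly under an apparently implicit additional moment bound such as $\sup_P E_P[|\mu(X)|^q]<\infty$. Your proposed Slutsky step transferring from the population normalization $s_n/\sqrt{n}$ to the sample $\sigma_n$ is exactly the content of the paper's \Cref{lemma:oracle_variance_consistency}, though the paper's proof of the proposition itself silently conflates the two.
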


\Cref{thm:OracleAsympNormal} will be an extension of the following claim. In addition to this modified theorem, \Cref{thm:SecondOrderNuisances} replaces the oracle standard deviation $\sigma_n$ with the estimated standard deviation $\hat{\sigma}_n$ when constructing t-statistics. 
\begin{manualtheorem}{\ref*{thm:SecondOrderNuisances}'}[(Slow) Asymptotic Normality] \label{thm:SecondOrderNuisancesInterior}
    Suppose the conditions of \Cref{thm:SecondOrderNuisances} hold, and $P(n)$ is a sequence of distributions $P \in \mathscr{P}$. Then $\sigma_n^{-1} \left(\hat{\psi}_{clip}^{AIPW}(b_n) - \psi_n  \right) \overset{P(n)}{\rightsquigarrow} N(0, 1)$, where $\sigma_n$ is the oracle standard deviation defined in \Cref{thm:OracleAsympNormal}. 
\end{manualtheorem}

Next, I describe the key new results for nonparametric regression. This result shows that in nonparametric regression, if the propensity function is sufficiently smooth, then nature cannot severely concentrate treated observations within a given bandwidth of any point. The non-concentration ensures that the eigenvalues of the local polynomial regression matrix are nondegenerate. 
\begin{proposition}[Non-trivial concentration]\label{lemma:NonTrivialConcentration}
    Suppose \Cref{assum:HolderSmoothnessAssumptions} holds and there is an $L > 0, \beta_e > \frac{d}{\gamma_0-1}$ such that $P(D = 1 \mid X) \in \Sigma(\beta_e, L)$ for all $P \in \mathscr{P}$. Then \Cref{assum:NonTrivialConcentration} holds. 
\end{proposition}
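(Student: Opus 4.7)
First, I would reformulate the conditional probability as a volume ratio. Since $X \sim \mathrm{Unif}([-1,1]^d)$ under \Cref{assum:HolderSmoothnessAssumptions}, conditioning on $X \in B := B(x_0,h) \cap [-1,1]^d$ makes $X$ uniform on $B$. Writing $M_h := \sup_{x \in B} e(x)$, the target conditional probability equals $\int_{B \cap \{e \geq \rho M_h\}} e(x)\, dx \,/\, \int_{B} e(x)\, dx$. Bounding the numerator below by $\rho M_h \cdot \mathrm{Vol}(B \cap \{e \geq \rho M_h\})$ and the denominator above by $M_h \cdot \mathrm{Vol}(B)$ reduces the goal to showing $\mathrm{Vol}(B \cap \{e \geq \rho M_h\}) \geq c \cdot \mathrm{Vol}(B)$ for constants $c, \rho > 0$ uniform in $x_0$, $P$, and small $h$.

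Second, I would lower-bound $M_h$ by combining the tail condition with the uniform-covariate assumption. Since $e(x) \leq M_h$ for all $x \in B$, one has $P(e(X) \leq M_h) \geq P(X \in B) \gtrsim h^d$ for small $h$ (losing at most a $2^d$ factor near corners of $[-1,1]^d$). The tail bound \Cref{def:AllowedDistributions}\ref{item:PropensityTail} gives $P(e(X) \leq M_h) \leq C M_h^{\gamma_0 - 1}$, so $M_h \gtrsim h^{d/(\gamma_0-1)}$ uniformly.

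Third, I would use Hölder smoothness to show that near a maximizer $x^* \in \overline{B}$ of $e$, the set $\{e \geq \rho M_h\}$ fills a sub-region of $B$ of volume $\gtrsim h^d$. When $\beta_e \leq 1$, this is immediate: the Hölder bound gives $\mathrm{Osc}(e,B) \leq L (2h)^{\beta_e}$, which is $o(M_h)$ whenever $\beta_e > d/(\gamma_0-1)$ by the previous step, so for $h$ small all of $B$ lies in $\{e \geq M_h/2\}$, giving $\rho = \gamma = 1/2$. For $\beta_e > 1$, I would invoke a Landau--Kolmogorov/Glaeser-type interpolation inequality on the compact cube to obtain a pointwise gradient bound $\|\nabla e(x)\| \leq C\, e(x)^{1 - 1/\beta_e} L^{1/\beta_e}$. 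A Grönwall/ODE argument along rays from $x^*$ then gives $e(x) \geq \bigl(M_h^{1/\beta_e} - c L^{1/\beta_e} \|x - x^*\|\bigr)^{\beta_e}$, which stays above $\rho M_h$ for $\|x - x^*\| \lesssim M_h^{1/\beta_e} \gtrsim h^{d/(\beta_e(\gamma_0-1))} \gg h$, where the final inequality uses $\beta_e > d/(\gamma_0-1)$. A sub-ball of radius $\asymp h$ about $x^*$ intersected with $B$ thus has volume $\gtrsim h^d$ and lies in $\{e \geq \rho M_h\}$.

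The main obstacle is the $\beta_e > 1$ case: the paper's Hölder class constrains only the top-order derivative, so I would need to establish the Landau--Kolmogorov inequality on $[-1,1]^d$ with constants depending only on $L$ and $\|e\|_\infty \leq 1$. This requires interpolating between the assumed top-derivative Hölder bound and the uniform bound on $e$ to control all intermediate derivatives, and then combining with $e \geq 0$ to recover a Glaeser-type bound on $\|\nabla e\|$. Once this is in place, the ODE step and volume bookkeeping are routine, and the resulting $\rho, \gamma$ depend only on the constants $L, \beta_e, C, \gamma_0, d$ from \Cref{def:AllowedDistributions}, \Cref{assum:HolderSmoothnessAssumptions}, and the hypothesis $\beta_e > d/(\gamma_0-1)$.
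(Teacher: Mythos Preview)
Your first two steps are correct and mirror the paper's supporting Lemma on minimal expected observations: the reformulation as a volume ratio and the lower bound $M_h \gtrsim h^{d/(\gamma_0-1)}$ both go through. The $\beta_e \leq 1$ case is also fine.

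The genuine gap is in the $\beta_e > 1$ case. The inequality you want, $\|\nabla e(x)\| \leq C\, e(x)^{1-1/\beta_e} L^{1/\beta_e}$, is \emph{false} once $\beta_e > 2$. Take $d = 1$ and $e(x) = x^2$, which lies in $\Sigma(\beta_e, L)$ for any $\beta_e$ (its higher derivatives vanish identically). Then $|e'(x)| = 2|x|$ but $e(x)^{1-1/\beta_e} = |x|^{2(1-1/\beta_e)}$, and for $\beta_e > 2$ the exponent $2(1-1/\beta_e) > 1$, so the claimed bound fails near zero for any fixed $C$. The point is that nonnegativity of $e$ forces only the \emph{first} derivative to vanish at zeros; it does not force higher-order vanishing, so Glaeser's inequality gives exponent $1/2$ and no better, regardless of how smooth $e$ is. Landau--Kolmogorov interpolation only yields global sup-norm bounds on intermediate derivatives, not the pointwise-in-$e(x)$ improvement you need. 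With the correct Glaeser exponent $1/2$, your ODE step produces a sub-ball of radius $\sim M_h^{1/2} \gtrsim h^{d/(2(\gamma_0-1))}$, which dominates $h$ only when $d/(\gamma_0-1) < 2$. The proposition, however, must cover $\beta_e > d/(\gamma_0-1)$ with $d/(\gamma_0-1)$ arbitrarily large.

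The paper avoids this obstruction by a completely different route: a contradiction argument along a sequence $(\rho, \gamma_n, h_n, x_{0,n}, P(n))$ with $\gamma_n \to 0$, combined with compactness of the Taylor coefficients of $e$ around $x_{0,n}$. After showing $h_n \to 0$, one rescales and passes to a limiting polynomial (using a preliminary lemma guaranteeing a nonzero Taylor coefficient of order at most $\lfloor d/(\gamma_0-1) \rfloor$), and then exploits the elementary fact that a nonzero polynomial on a ball cannot have its superlevel set $\{|p| \geq \rho \max |p|\}$ of arbitrarily small relative measure. This polynomial-rigidity argument is what replaces your Glaeser step and is not recoverable from pointwise gradient bounds.
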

Note that $\frac{d}{\gamma_0-1}$ is also a key parameter in \cite{mou2023kernelbased}. A broader connection is outside the scope of this work.

The next result presents the main construction involved in avoiding a polylogarithmic penalty under weak overlap. The claim iteratively constructs a sequence of minimal bandwidths and gridpoint counts that ensure that no more than $m_n^{(k)}$ gridpoints $x_{n,j}$ can have $n E[ D 1\{ \| X - x_{n,j} \| \leq h \} ] = h^{-2 \beta_{\mu}}$ solved by $h \leq h_n^{(k)}$, and that each $k$ can contribute at most half as much as the previous iteration to the polylogarithmic penalty. Here, I show that if the first set is expansive enough to incur what I later show is a polylogarithmic penalty bounded by $\log(\ubar{\delta})$, then if $\ubar{\delta}$ is at least some large, but finite, number, then this iterative process eventually covers all gridpoints. 
\begin{proposition}[Inductive grouping]\label{lemma:InductiveGroupingLimit}
    Let $\beta_{\mu} > 0$, $\gamma_0 > 1$, $d \geq 1$ be given. For any $\delta > 1$, inductively construct a sequence of $h$ and $m$ as follows. Take $h_n^{(1)} = n^{\frac{-1}{2 \beta_{\mu} + d \frac{\gamma_0}{\gamma_0-1}}}$. Then for all $k = 1, 2, \hdots$, inductively define $m_n^{(k)} = exp\left( 2^{-k} \delta  \left( \frac{h_n^{(k)}}{h_n^{(1)}} \right)^{-2 \beta_{\mu}} \right)$  and $h_n^{(k+1)} = \left( n \left( \frac{m_n^{(k)}}{\delta} \right)^{\gamma_0-1} \right)^{\frac{-1}{2 \beta_{\mu} + d \frac{\gamma_0}{\gamma_0-1}}}$. Then there is a $\ubar{\delta} > 0$ and a $\pi \in (0, 1/4^{2 \beta_{\mu}}]$  such that if $\delta \geq \ubar{\delta}$, then $\sum_{k=1}^\infty m_n^{(k)} = \infty$, and for all $k > 1$, $h_n^{(k)} \leq h_n^{(1)} \pi^{k-1}$. 
\end{proposition}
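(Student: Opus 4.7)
The plan is to rescale the recursion so that its $n$-dependence drops out, reducing the problem to a deterministic two-variable iteration in $k$, and then prove the geometric decay of $h_n^{(k)}/h_n^{(1)}$ by induction on $k$. Writing $\alpha := 2\beta_{\mu} + d\gamma_0/(\gamma_0-1)$, $\tau := (\gamma_0-1)/\alpha$, $r_k := h_n^{(k)}/h_n^{(1)}$, and $s_k := r_k^{-2\beta_{\mu}}$, the construction becomes $r_1 = s_1 = 1$ together with
\[
r_{k+1} = (m_n^{(k)}/\delta)^{-\tau} = \delta^{\tau}\exp\bigl(-\tau\, 2^{-k}\delta\, s_k\bigr), \qquad m_n^{(k)} = \exp\bigl(2^{-k}\delta\, s_k\bigr).
\]
Since $\delta>0$ and $s_k>0$, every $m_n^{(k)} \geq 1$, so the divergence $\sum_{k\geq 1} m_n^{(k)} = \infty$ is automatic. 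All the work lies in the geometric bound $r_k \leq \pi^{k-1}$.

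For the induction, pick $\pi \in (0, 4^{-2\beta_{\mu}}]$ small enough that $\rho := 2\pi^{2\beta_{\mu}} \leq 1/2$ (shrinking $\pi$ further than $4^{-2\beta_{\mu}}$ if $\beta_{\mu}$ is too small for the stated upper bound to already enforce this). The base case $r_1 = 1 = \pi^0$ is trivial. Suppose inductively that $r_k \leq \pi^{k-1}$; then $s_k \geq \pi^{-(k-1)\cdot 2\beta_{\mu}} = (2/\rho)^{k-1}$, and the recursion yields
\[
\log r_{k+1} \;\leq\; \tau\log\delta \;-\; \tfrac{\tau\delta}{2}\,\rho^{-(k-1)}.
\]
The inductive step $r_{k+1} \leq \pi^k$ therefore reduces to the single inequality
\[
\tfrac{\tau\delta}{2}\,\rho^{-(k-1)} \;\geq\; \tau\log\delta + k\log(1/\pi). \qquad (\ast)
\]

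The main obstacle is producing one threshold $\ubar{\delta}$ for which $(\ast)$ holds simultaneously for every $k \geq 1$; a naive per-$k$ induction risks needing $\delta$ that grows with $k$. I would pick $\ubar{\delta}$ large enough that the $k=1$ instance $\tau\delta/2 \geq \tau\log\delta + \log(1/\pi)$ holds. For $k \geq 2$, multiplying the $k=1$ inequality by $\rho^{-(k-1)}$ gives $(\tau\delta/2)\rho^{-(k-1)} \geq \rho^{-(k-1)}\bigl(\tau\log\delta + \log(1/\pi)\bigr)$, and this majorises the right-hand side of $(\ast)$ as soon as $\rho^{-(k-1)} \geq k$. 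Because $\rho \leq 1/2$, one has $\rho^{-(k-1)} \geq 2^{k-1} \geq k$ for all $k \geq 1$, so the inequality propagates without further conditions on $\delta$. Hence a single $\ubar{\delta}$ depending only on $\tau$, $\pi$, $\beta_{\mu}$, and $\gamma_0$ suffices, and combined with the trivial sum-divergence this proves both assertions of the proposition.
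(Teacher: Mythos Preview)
Your argument is correct and takes a genuinely different route from the paper's. The paper introduces auxiliary quantities $m_n^{(0)}$ and $h_n^{(0)}$, defines $\ubar{\delta}$ implicitly through a transcendental equation, and then proves by induction that the \emph{consecutive ratios} $h_n^{(k+1)}/h_n^{(k)}$ and $m_n^{(k+1)}/m_n^{(k)}$ are bounded by fixed constants; the divergence of $\sum m_n^{(k)}$ is then obtained from the lower bound $m_n^{(k)} \geq \delta$. You instead rescale to the $n$-free variables $r_k$ and $s_k$, observe immediately that each $m_n^{(k)} = \exp(2^{-k}\delta s_k) > 1$ so the series diverges trivially, and prove the absolute bound $r_k \leq \pi^{k-1}$ directly by induction, reducing the inductive step to the single elementary inequality $(\ast)$ and showing that the $k=1$ case, once secured for large $\delta$, bootstraps to all $k$ via $2^{k-1} \geq k$.

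Your approach is more transparent: the threshold $\ubar{\delta}$ arises from an explicit inequality rather than an implicit fixed-point equation, the divergence claim is dispatched in one line, and the induction carries only one hypothesis instead of two coupled ratio bounds. The paper's route may give a slightly sharper expression for $\pi$ in terms of $\ubar{\delta}$, but the proposition only asserts existence, so nothing is lost. One small presentational point: you might make explicit that taking $\pi = \min\{4^{-2\beta_\mu}, 4^{-1/(2\beta_\mu)}\}$ simultaneously satisfies the stated interval constraint and your condition $\rho \leq 1/2$, rather than leaving it as a parenthetical.
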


\section{Other Simulation Evidence}\label{sec:SimulationTrimmed}

In this section, I presented simulated evidence for trimmed estimators.

\begin{figure}[!ht]
    \centering
    \includegraphics[width=.75\textwidth]{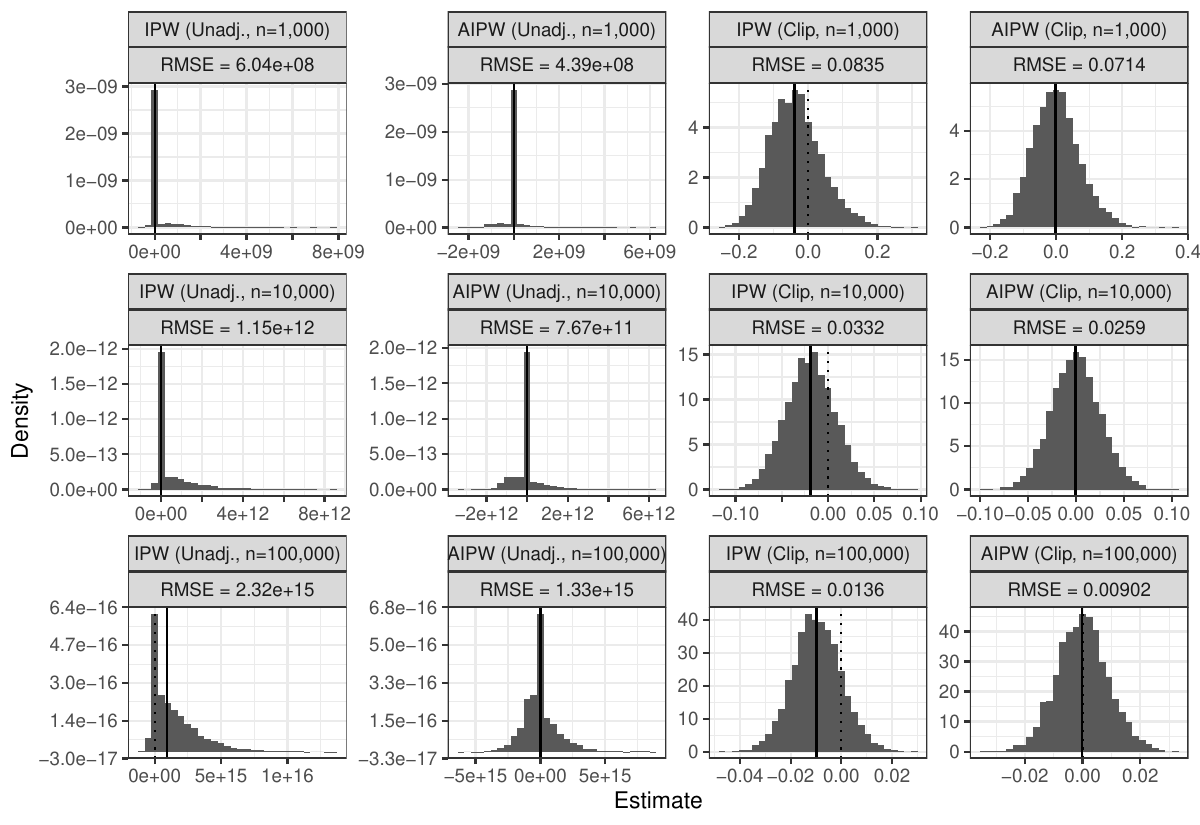}
    \caption{Histograms of point estimates in simulations for the various methods considered in the simulations, but using the oracle $\mu$ regression function instead of the estimated $\hat{\mu}$ regression function.}
    \label{fig:estimates_trueMu}
\end{figure}

\begin{figure}[!ht]
    \centering
    \includegraphics[width=.75\textwidth]{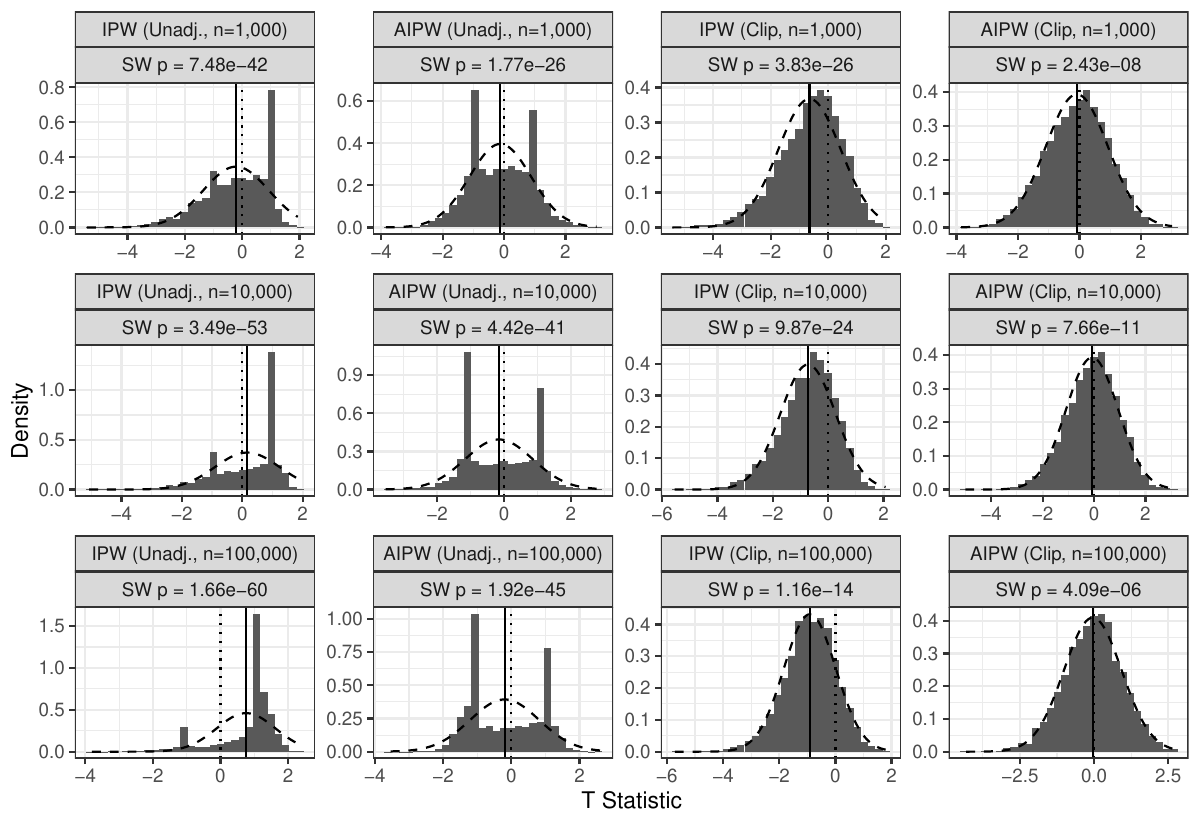}
    \caption{Histograms of simulation t-statistics for various sample sizes, but using the oracle $\mu$ regression function instead of the estimated $\hat{\mu}$ regression function.}
    \label{fig:tstats_trueMu}
\end{figure}

\begin{figure}[!ht]
    \centering
    \includegraphics[width=.75\textwidth]{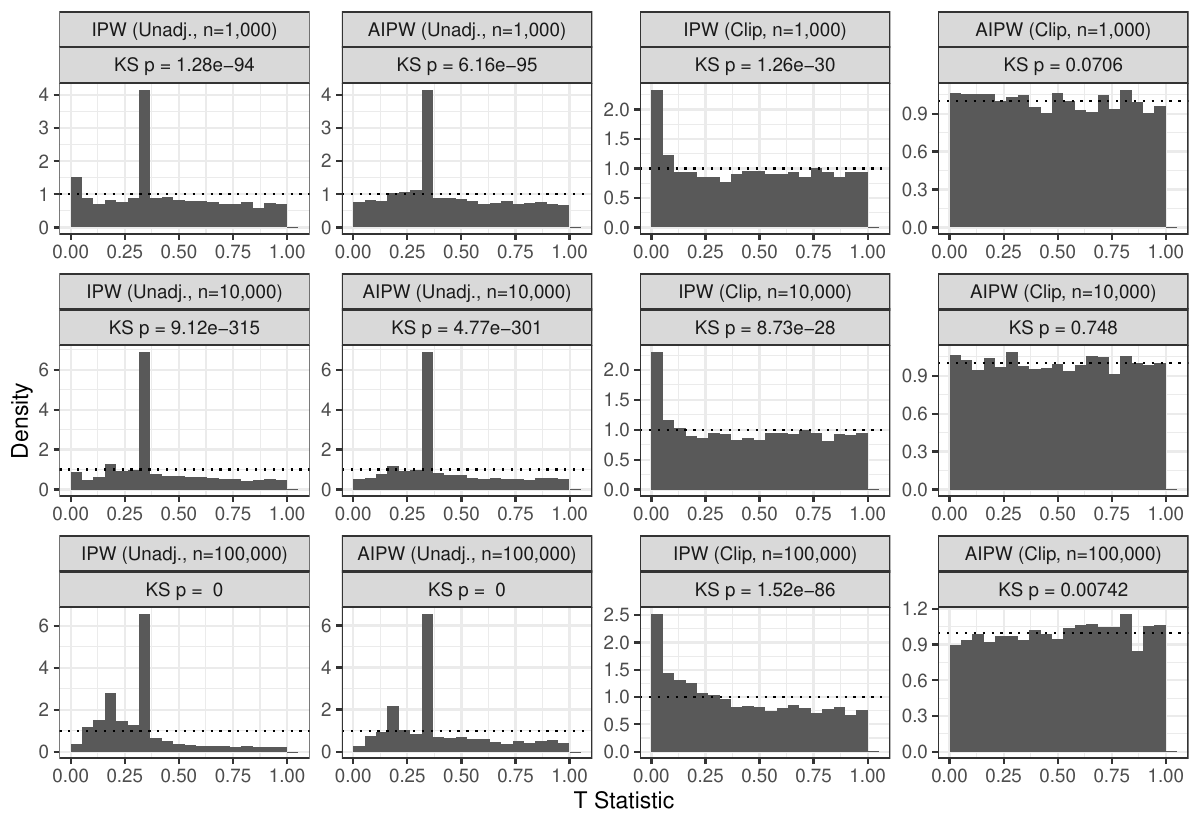}
    \caption{Histograms of simulation p-values on null hypothesis of true average potential outcome for various sample sizes, but using the oracle $\mu$ regression function instead of the estimated $\hat{\mu}$ regression function.}
    \label{fig:pvalues_trueMu}
\end{figure}

\begin{figure}[!ht]
    \centering
    \includegraphics[width=.75\textwidth]{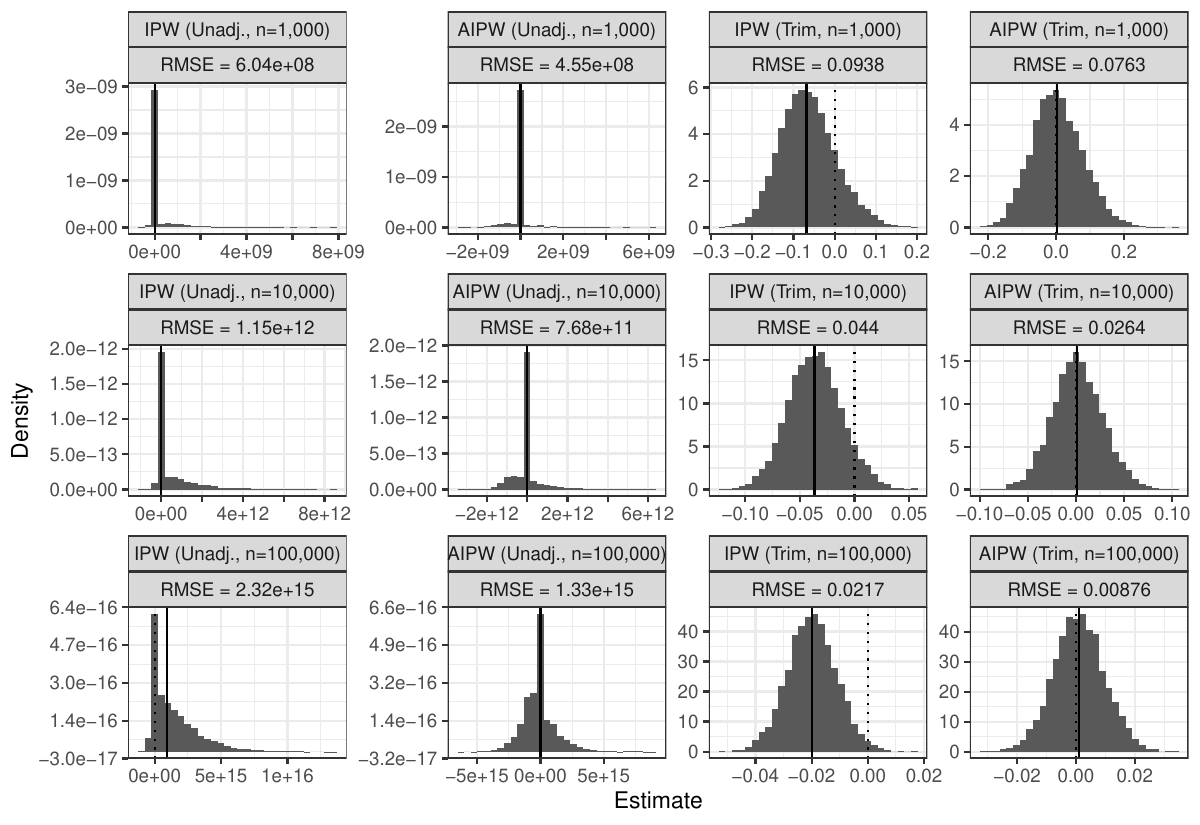}
    \caption{Histograms of point estimates in simulations for the various methods considered in the simulations, but with trimming instead of clipping.}
    \label{fig:estimates_trimmed}
\end{figure}

\begin{figure}[!ht]
    \centering
    \includegraphics[width=.75\textwidth]{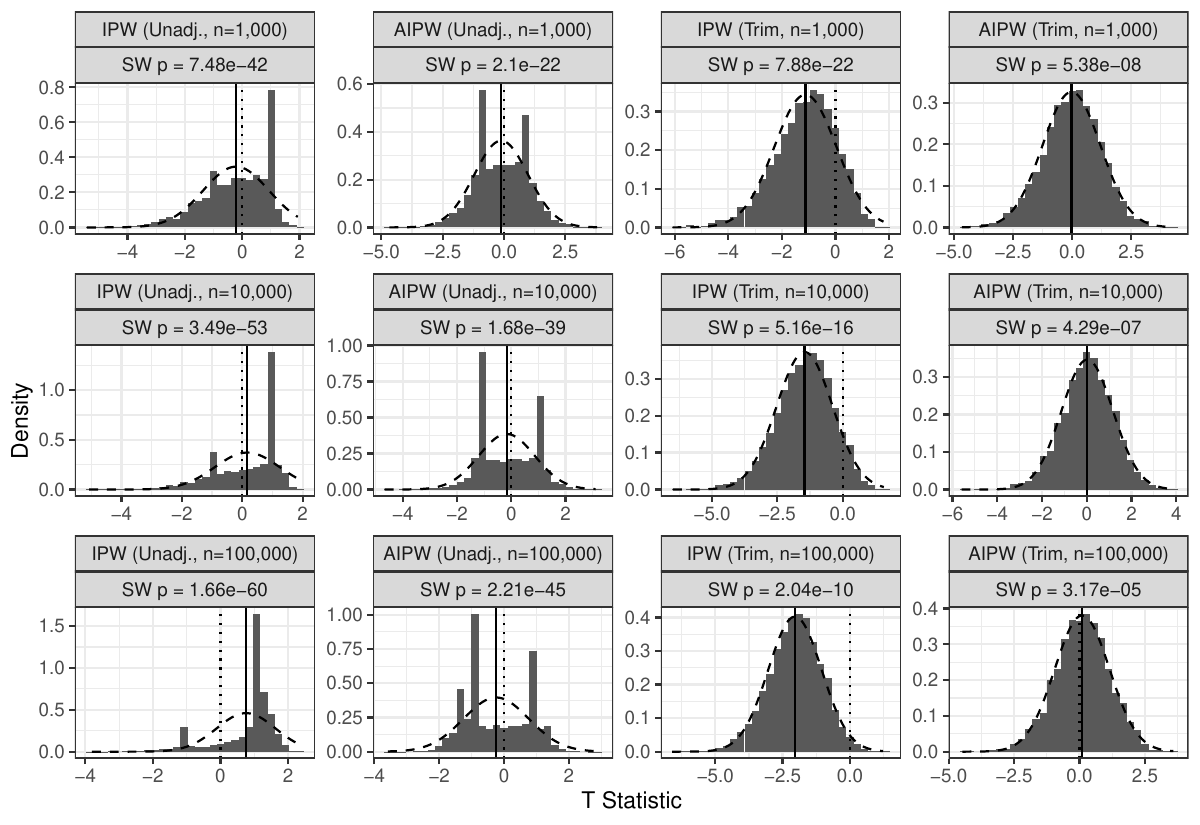}
    \caption{Histograms of simulation t-statistics for various sample sizes, but with trimming instead of clipping.}
    \label{fig:tstats_trimmed}
\end{figure}

\begin{figure}[!ht]
    \centering
    \includegraphics[width=.75\textwidth]{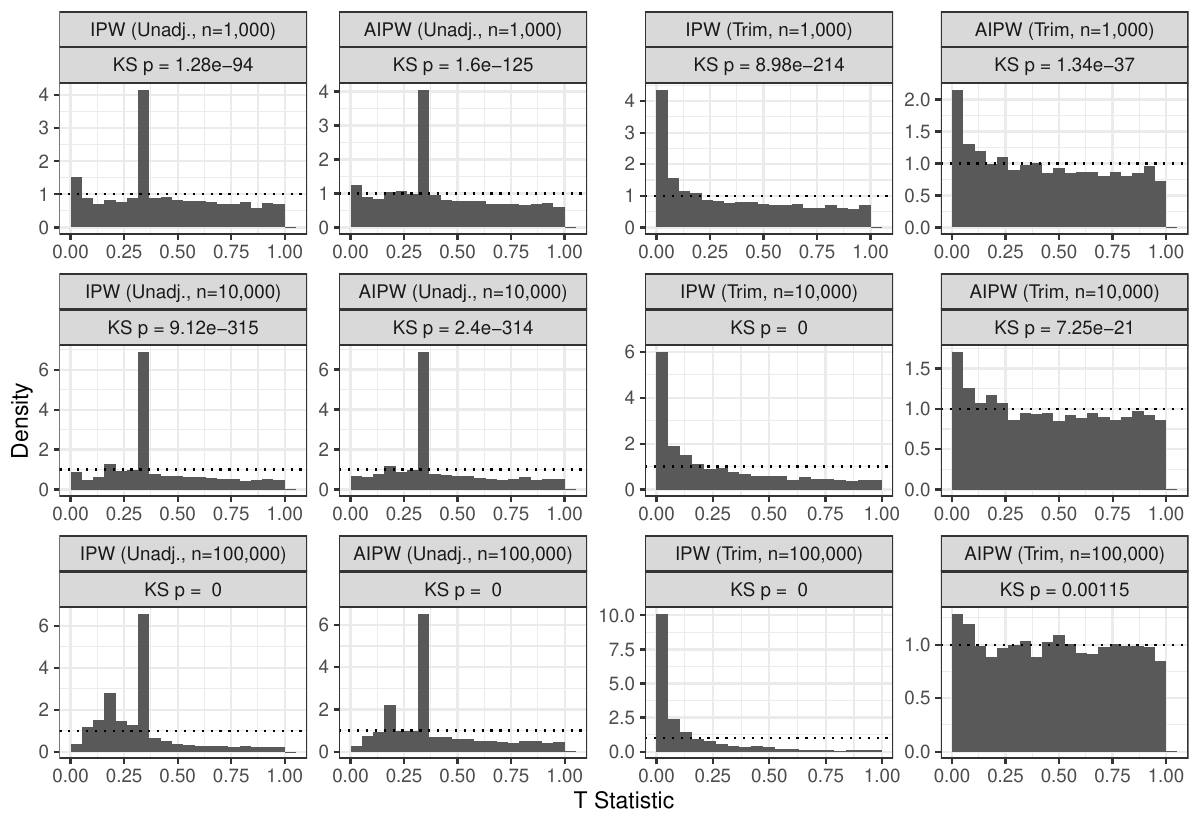}
    \caption{Histograms of simulation p-values on null hypothesis of true average potential outcome for various sample sizes, but with trimming instead of clipping.}
    \label{fig:pvalues_trimmed}
\end{figure}

\clearpage

\section{Proofs}

The proofs, including proofs of the claims in \Cref{proofs:KeyAdditionalClaims}, are split into sections showing asymptotic properties of oracle clipped AIPW (\Cref{proofs:OracleNormality}), consistency of estimated clipped AIPW (\Cref{proofs:Consistency}), and black-box consistency rates (\Cref{proofs:ConsistencyRate}). Note that \Cref{proofs:ConsistencyRate} is out of order from the perspective of the main text and corresponds to claims in \Cref{subsec:SlowerConsistencyRate}, but \Cref{prop:ConsistencyRates} is used to show claims that appear earlier in the main text.  \Cref{proofs:AsymptoticNormality} presents proofs for the main claims: the black-box asymptotic properties of estimated clipped AIPW. Finally, \Cref{proofs:RateThings}, \Cref{proofs:Regression}, and \Cref{proofs:RulesOfThumb} prove claims about black-box nuisance rates, outcome regression rates, and rules of thumb, respectively.

\textbf{Additional Notation}. In these proofs, I use $P(n)$ to refer to an arbitrary sequence of distributions for the purposes of computing suprema; for such sequences, I use $\psi_n = \psi(P(n))$ to denote the sequence of average potential outcomes.  I use $P_n\left[ c_n \right]$ to refer to the average of $c_n$ over $n$ draws from $P$ (sometimes abusing notation and including nuisance functions in $c_n$), and I use $P[ c_n ]$ to refer to the expectation of $c_n$ over $P$. This can occasionally lead to unfortunate notation like $P(n)_n\left( E_n \right)$ for a sequence of event probabilities under a sequence of distributions. I write $\lim_{x \to z^+} f(x)$ and $\lim_{x \to z^-}$ for the right- and left-hand limits of $f$ at $z$. I write $c_n = o_{P(n)}(1)$ if for all $\delta > 0$, $P(n)( | c_n | / d_n > \delta ) \to 0$, and if no $P(n)$ is defined, I use $c_n = o_{P(n)}(d_n)$ to mean that for any sequence of $P(n) \subset \mathscr{P}$, $c_n = o_{P(n)}(d_n)$. I write $c_n = O_{P(n)}(1)$ if for all $\epsilon > 0$, there exists a $\delta > 0$ such that $P(n)( | c_n | / d_n > \delta ) < \epsilon$. If there is a sequence of distributions to be considered, then I use $o( d_n )$ and $O( d_n )$ to implicitly refer to $o_{P(n)}(d_n)$ and $O_{P(n)}(d_n)$.  I write that $c_n \overset{P(n)}{\rightsquigarrow} N(0, 1)$ if $\sup_{t \in \R} \left| P(n)( c_n \leq t) - \Phi(t) \right| \to 0$, where $\Phi$ is the standard normal cumulative distribution function; I write that $c_n \to_{P(n)} c$ if $c_n - c = o_{P(n)}(1)$; and I write that $c_n \xrightarrow{\mathscr{P}} c$ if for all sequences of $P(n) \in \mathscr{P}$, $c_n \to_{P(n)} c$. I write $c_n = \Theta( d_n )$ if there exists a $k_1, k_2 > 0$ such that $P(n)\left[ c_n \in [k_1 d_n, k_2 d_n]  \right] \to 1$, and I write $c_n = \Omega( d_n ) $ if there exists a $k_1 > 0$ such that $P(n)[ c_n \geq k_1 d_n ] \to 0$.

\subsection{Oracle Normality}\label{proofs:OracleNormality}

\begin{proof}[Proof of \Cref{prop:SemiparametricBoundFinite}]
    For (i), it is known that the efficiency bound is $n^{-1} \left( Var( \mu(X) ) + E\left[ \frac{Var(Y \mid X, D=1)}{e(X)} \right]  \right)$ \citep{hahn1998Rate}. By Assumptions \ref{def:AllowedDistributions}\ref{def:ConditionalMoments} and \ref{def:AllowedDistributions}\ref{subsef:BoundedVarMu}, $Var(\mu(X))$ and $Var(Y \mid X, D = 1)$ are uniformly bounded above. It therefore only remains to show the standard condition \citep{Newey1994Avar, hahn1998Rate, ChenEtAlSemiparametricEfficiency} 
    that $\sup_{P \in \mathscr{P}} E_{P}[ 1 / e(X) ]$ is bounded above:
    \begin{align*}
        E_{P}\left[ 1 / e(X) \right] & = \int_0^{\infty} P\left( \frac{1}{e(X)} > t \right) d t \\
         & = \int_0^{\infty} P\left( e(X) < 1 / t \right) d t \\
         & = 1 + \int_{1}^{\infty} P\left( e(X) < 1 / t \right) d t \\
        & \leq 1 + \int_1^{\infty} C (t^{-1})^{\gamma_0 - 1} d t \tag{\Cref{lemma:PiMin}} \\
        & = 1 + C \int_1^{\infty} t^{1-\gamma_0} d t \\
        & = 1 + \frac{C}{\gamma_0 - 2},
    \end{align*}
    which is finite.

    (ii) holds by extension of \cite{KhanAndTamer} Theorem 4.1. In particular, note that by \Cref{def:AllowedDistributions}\ref{subdef:Residuals}, $Var(Y \mid X, D=1) \geq \sigma_{\min}^2 > 0$, so it only remains to show that $E_{P}\left[ 1 / e(X) \right]$ is infinite:
    \begin{align*}
        E_{P}\left[ 1 / e(X) \right]  & = 1 + \int_{1}^{\infty} P\left( e(X) < 1 / t \right) d t \\
        & \geq 1 + C' \int_1^{\infty} t^{1-\gamma_0} d t = \infty. 
    \end{align*}
\end{proof}

\begin{lemma}\label{lemma:PiMin}
    Define $\pi_{\min} = \frac{C}{\gamma_0}$. Then $\inf_{P \in \mathscr{P}} P(D = 1) \geq \pi_{\min} > 0$.
\end{lemma}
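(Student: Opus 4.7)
The plan is to reduce the claim to a bound on the unconditional mean of the propensity $e(X)$, since by the tower property
\begin{align*}
  P(D = 1) = E_{P}[\, P(D = 1 \mid X)\,] = E_{P}[e(X)],
\end{align*}
and then to use only Assumption~\ref{def:AllowedDistributions}\ref{item:PropensityTail} (which is the only distributional restriction shared across $\mathscr{P}$) to obtain a uniform positive lower bound on $E_{P}[e(X)]$. Since the tail bound only gives a one-sided control of the propensity's distribution, the natural tool is a layer-cake (Fubini) representation of the mean in terms of the survival function.

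First I would write, using that $e(X) \in [0, 1]$,
\begin{align*}
  E_{P}[e(X)] = \int_0^1 P(e(X) > t)\, dt = \int_0^1 \bigl(1 - P(e(X) \leq t)\bigr)\, dt,
\end{align*}
and then substitute the tail bound $P(e(X) \leq t) \leq C t^{\gamma_0 - 1}$ to obtain the pointwise lower bound $P(e(X) > t) \geq 1 - C t^{\gamma_0 - 1}$. The essential point is that this lower bound is only informative (i.e.\ nonnegative) on $[0, \pi^\ast]$ with $\pi^\ast := \min\{1, C^{-1/(\gamma_0-1)}\}$, so I would restrict the integration to $[0, \pi^\ast]$ and evaluate the resulting integral of $1 - C t^{\gamma_0-1}$ in closed form. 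This produces a positive constant that depends only on $C$ and $\gamma_0$, both of which are fixed across $\mathscr{P}$, which is exactly the required uniform conclusion.

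The main step that takes a moment of care is handling the cutoff: evaluating the tail bound at $\pi = 1$ forces $C \geq 1$, so in fact $\pi^\ast = C^{-1/(\gamma_0-1)} \in (0, 1]$ and the computation always lands in a single case. After that the result is a routine polynomial integration; no probabilistic subtleties remain, and the fact that Assumption~\ref{def:AllowedDistributions}\ref{item:PropensityTail} places $C$ and $\gamma_0$ uniformly over $\mathscr{P}$ immediately gives the infimum statement.
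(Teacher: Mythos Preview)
Your approach is essentially the paper's: write $P(D=1)=E_P[e(X)]$ via the layer-cake formula and plug in the tail bound from Assumption~\ref{def:AllowedDistributions}\ref{item:PropensityTail}. The one substantive difference is that you carefully restrict the integration to $[0,\pi^\ast]$ with $\pi^\ast=C^{-1/(\gamma_0-1)}$ (using $C\geq 1$, which follows from the tail bound at $\pi=1$), and this yields the lower bound $\tfrac{\gamma_0-1}{\gamma_0}\,C^{-1/(\gamma_0-1)}$ rather than the $C/\gamma_0$ stated in the lemma. Your constant is the one that actually follows from the computation; the paper's displayed chain contains an apparent slip in the step equating $\int_0^\infty P(e(X)>t)\,dt$ with $\int_0^1 P(e(X)\le t)\,dt$, and your restriction to $[0,\pi^\ast]$ is exactly what is needed to keep the integrand nonnegative. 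Since only a uniform positive lower bound is used downstream, either constant serves the paper's purposes, but your version is the rigorous one.
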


\begin{proof}[Proof of \Cref{lemma:PiMin}]
For any $P \in \mathscr{P}$,
\begin{align*}
    P\left[ D = 1 \right] = E_{P}[e(X)] = \int_0^\infty P( e(X) > t ) d t = \int_0^{1} P( e(X) \leq t ) d t \geq \int_0^{1} C t^{\gamma_0-1} d t = \frac{C}{\gamma_0}.
\end{align*}
\end{proof}

\begin{lemma} \label{lemma:useful_inequalities}
Assume $b_n \to 0$.  Then for all large $n$, the following inequalities hold throughout $\mathscr{P}$:
\begin{enumerate}[label=(\roman*), itemsep=-0.5ex, topsep=-0.5ex]
    \item $P(e(X) > \pi_{\min}/2) \geq \pi_{\min}/2$ \label{item:propensity_lower_bound}
    \item $\E[e(X)/\{ e(X) \vee b_n \}^2] \geq \pi_{\min}/2$ \label{item:propensity_ratio_lower_bound}
    \item $\E[|\phi_n - \E_{P(n)}[\phi_n]|^q] \leq (4 M)^q \E[ e(X) / \{ e(X) \vee b_n \}^2] / b_n^{q - 2}$ \label{item:centered_qth_moment_bound}
    \item $\E[| \phi_n|^q] \leq (8M)^q \E[ e(X) / \{ e(X) \vee b_n \}^2]/b_n^{q - 2}$ \label{item:qth_moment_bound}
\end{enumerate}
\end{lemma}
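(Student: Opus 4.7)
My plan proceeds through the four claims in order, using each one in the proof of the next where useful. Throughout, recall $\phi_n = \phi(Z \mid b_n, \eta) = \mu(X) + D(Y-\mu(X))/\{e(X)\vee b_n\}$, so $\E[\phi_n] = \E[\mu(X)]$ by iterated expectation, since $Y - \mu(X)$ has conditional mean zero on $\{D=1\}$.

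For \ref{item:propensity_lower_bound}, I would start from $\E[e(X)] = P(D=1) \geq \pi_{\min}$ (\Cref{lemma:PiMin}) and split on whether $e(X) \leq \pi_{\min}/2$, using $e(X) \leq 1$ on the complement:
\[ \pi_{\min} \leq \E[e(X)] \leq \tfrac{\pi_{\min}}{2} + P(e(X) > \pi_{\min}/2), \]
which rearranges to the claim. For \ref{item:propensity_ratio_lower_bound}, once $n$ is large enough that $b_n \leq \pi_{\min}/2$, the event $\{e(X) > \pi_{\min}/2\}$ forces $e(X)\vee b_n = e(X) \leq 1$, so the integrand $e(X)/\{e(X)\vee b_n\}^2 = 1/e(X)$ is at least $1$ there; restricting the expectation to that event and invoking \ref{item:propensity_lower_bound} gives the bound.

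For \ref{item:centered_qth_moment_bound} and \ref{item:qth_moment_bound}, I would decompose $\phi_n$ into a regression piece ($\mu(X)$, or its centered version $\mu(X) - \E[\mu(X)]$) and a residual piece $D(Y-\mu(X))/\{e(X)\vee b_n\}$, and apply $|a+b|^q \leq 2^{q-1}(|a|^q + |b|^q)$. The residual piece is handled by the tower property and \Cref{def:AllowedDistributions}\ref{def:ConditionalMoments}:
\[ \E\!\left[\frac{D|Y-\mu(X)|^q}{\{e(X)\vee b_n\}^q}\right] = \E\!\left[\frac{e(X)\,\E[|Y-\mu(X)|^q \mid X, D=1]}{\{e(X)\vee b_n\}^q}\right] \leq M^q\,\E\!\left[\frac{e(X)}{\{e(X)\vee b_n\}^2}\right] b_n^{-(q-2)}, \]
using $\{e(X)\vee b_n\}^{-(q-2)} \leq b_n^{-(q-2)}$. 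The regression piece is controlled by a constant depending only on $M$ and $\pi_{\min}$: for $q = 2$ this is \Cref{def:AllowedDistributions}\ref{subsef:BoundedVarMu}, and for $q > 2$ one pulls Jensen's inequality through $\mu(X) = \E[Y \mid X, D=1]$ and combines it with \Cref{def:AllowedDistributions}\ref{def:ConditionalMoments}. Part \ref{item:qth_moment_bound} then follows from \ref{item:centered_qth_moment_bound} by absorbing $|\E[\phi_n]|^q = |\E[\mu(X)]|^q$ via one more triangle inequality, losing at most a factor of $2^q$ and converting $(4M)^q$ into $(8M)^q$.

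The main obstacle, and the reason \ref{item:propensity_ratio_lower_bound} is proved before \ref{item:centered_qth_moment_bound}, is that the regression piece contributes a fixed constant while the target bound is permitted to grow as $b_n \to 0$. To absorb the regression piece into the target I would invoke \ref{item:propensity_ratio_lower_bound}: that bound gives $\E[e(X)/\{e(X)\vee b_n\}^2]\,b_n^{-(q-2)} \geq (\pi_{\min}/2)\, b_n^{-(q-2)}$, which is bounded below uniformly in $n$ and diverges for $q > 2$. Hence for $n$ large the fixed regression-moment bound is dominated by a constant multiple of the residual bound, and the slack between the $2^{q-1}$ coming from the triangle inequality and the $4^q M^q$ in the stated inequality absorbs the accumulated constants (depending on $M$ and $\pi_{\min}$) without losing the clean form of the right-hand side.
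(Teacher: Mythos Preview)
Your proposal is correct and follows essentially the same route as the paper: prove \ref{item:propensity_lower_bound} by splitting $\E[e(X)]$, prove \ref{item:propensity_ratio_lower_bound} by restricting to a high-propensity event and invoking \ref{item:propensity_lower_bound}, decompose $\phi_n$ into regression and residual pieces for \ref{item:centered_qth_moment_bound}, and derive \ref{item:qth_moment_bound} from \ref{item:centered_qth_moment_bound} by one more triangle inequality. Your observation that \ref{item:propensity_ratio_lower_bound} is what allows the fixed regression-piece constant to be absorbed into the growing target $\E[e(X)/\{e(X)\vee b_n\}^2]/b_n^{q-2}$ is exactly the mechanism the paper uses as well.
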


\begin{proof}[Proof of \Cref{lemma:useful_inequalities}]
I take these proofs one at a time.
\begin{enumerate}[label=(\roman*), itemsep=-0.5ex]
    \item Start from the following inequalities:
    \begin{align*}
    \pi_{\min} &\leq \E[ e(X)]\\
    &= \E[ e(X) \mathbf{1} \{ e(X) \leq \pi_{\min}/2 \}] + \E[ e(X) \mathbf{1} \{ e(X) > \pi_{\min}/2 \}]\\
    &\leq (\pi_{\min}/2) [1 - P(e(X) > \pi_{\min}/2)] + P(e(X) > \pi_{\min}/2)\\
    &< \pi_{\min}/2 + P(e(X) > \pi_{\min}/2)
    \end{align*}
    Subtracting $\pi_{\min}/2$ from the far left- and right-hand sides of this inequality gives the desired conclusion.

    \item If $b_n \leq \pi_{\min}/2$ (which happens for all large $n$), then:
\begin{align*}
\E[e(X)/\{ e(X) \vee b_n \}^2] &\geq \E[ 1/e(X) \mathbf{1} \{ e(X) \geq b_n \}] \geq P(e(X) \geq b_n) \geq P(e(X) \geq \pi_{\min}/2) \geq \pi_{\min}/2.
\end{align*}

\item By Jensen's inequality:
\begin{align*}
\E[ | \phi_n - \E_{P(n)}[\phi_n]|^q] &\leq 2^{q - 1} (\E[|\mu(X) - \E_{P(n)}[\mu(x)]|^q] + \E[|Y - \mu(X)|^q D / \{ e(X) \vee b_n \}^q])\\
&\leq 2^{q - 1}( 2^q \E[| \mu(X)|^q] + \E[\E[| Y - \mu(X)|^q \mid X, D = 1] e(X) / \{ e(X) \vee b_n \}^q])\\
&\leq 2^{q - 1} (2^q M^q + 2^q \E[ \E[|Y|^q \mid X, D = 1] e(X) / \{ e(X) \vee b_n \}^q])\\
&\leq 2^{q - 1}(2^q M^q + 2^q M^q \E[ e(X) / \{ e(X) \vee b_n \}^2 \times 1/\{ e(X) \vee b_n \}^{q - 2}])\\
&\leq 2^{q - 1} (2^q M^q + 2^q M^q \E[ e(X) / \{ e(X) \vee b_n \}^2] / b_n^{q - 2}). 
\end{align*}
Since $\E[ e(X) / \{ e(X) \vee b_n \}^2] / b_n^{q - 2} \geq \pi_{\min}/2 b_n^{q - 2} \to \infty$ by \Cref{item:propensity_ratio_lower_bound}, I may further bound the above quantity by $2^{2q} M^q \E[ e(X) / \{ e(X) \vee b_n^2 \}]/b_n^{q - 2}$ once $n$ is large enough.

\item By Jensen's inequality:
\begin{align*}
\E[|\phi_n|^q] &= \E[| \phi_n - \E_{P(n)}[\phi_n] + \E_{P(n)}[\phi_n]|^q]\\
&\leq 2^{q - 1} (\E[| \phi_n - \E_{P(n)}[\phi_n]|^q] + |\E_{P(n)}[\phi_n]|^q])\\
&\leq 2^{q - 1} (4M)^q \E[ e(X)/\{ e(X) \vee b_n \}^2]/b_n^{q - 2} + 2^{q - 1} |\E_{P(n)}[\mu(x)]|^q \tag{\Cref{item:centered_qth_moment_bound}}\\
&\leq 2^{q - 1}( 4M)^q \E[ e(X)/\{ e(X) \vee b_n \}^2/b_n^{q - 2}] + 2^{q - 1}  \E[\E[|Y|^q \mid X, D = 1]] \tag{Jensen}\\
&\leq 2^{q - 1} (4M)^q \E[ e(X)/\{ e(X) \vee b_n \}^2]/b_n^{q -2} + 2^{q - 1} M^q. 
\end{align*}
As before, since $\E[ e(X) / \{ e(X) \vee b_n \}^2] /b_n^{q - 2} \to \infty$, the first term in the upper bound is eventually larger than the second and I may bound the whole expression by $(8M)^q \E[ e(X) / \{ e(X) \vee b_n \}^2]/b_n^{q - 2}$ once $n$ is large enough.
\end{enumerate}
\end{proof}

\begin{lemma} \label{lemma:effective_sample_size}
Let $c(\gamma) = \frac{\gamma - 1}{\gamma} C^{-1/(\gamma - 1)} > 0$.  Then for any $P \in \mathscr{P}$, I have:
\begin{align}
\E_{P}[ e(X) \mathbf{1} \{ e(X) \leq b_n \}] \geq c(\gamma) P( e(X) \leq b_n)^{\gamma/(\gamma - 1)}.
\end{align}
This lower bound is attained when $P(e(X) \leq t) = t^{\gamma - 1}$.
\end{lemma}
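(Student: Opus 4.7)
The plan is to reduce the claim to a one-dimensional integral involving only the CDF $F(t) = P(e(X) \leq t)$ and then invoke the tail bound from Assumption \ref{def:AllowedDistributions}\ref{item:PropensityTail}. Applying the layer-cake identity (equivalently, integration by parts on the Stieltjes measure induced by $e(X) \mathbf{1}\{e(X) \leq b_n\}$) gives
\[
\E_P\left[e(X) \mathbf{1}\{e(X) \leq b_n\}\right] = b_n F(b_n) - \int_0^{b_n} F(t)\, dt,
\]
so with $p_n := F(b_n)$, it suffices to upper-bound $\int_0^{b_n} F(t)\, dt$ sharply in terms of $p_n$ and $C$.

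The key conceptual step is to use two bounds on $F$ simultaneously: the tail constraint $F(t) \leq C t^{\gamma - 1}$ and the trivial monotone bound $F(t) \leq p_n$. The pointwise minimum of these two envelopes has a crossover at $t^\star_n = (p_n / C)^{1/(\gamma - 1)}$, which lies in $[0, b_n]$ because applying the tail bound at $b_n$ itself gives $p_n \leq C b_n^{\gamma-1}$. Splitting $\int_0^{b_n} F(t)\, dt$ at $t^\star_n$, bounding $F(t)$ by $C t^{\gamma-1}$ on $[0, t^\star_n]$ and by $p_n$ on $[t^\star_n, b_n]$, and integrating each piece is then a routine computation. After substituting the result back into the layer-cake identity, the $p_n b_n$ contributions cancel and exactly $c(\gamma)\, p_n^{\gamma/(\gamma - 1)}$ remains.

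For the attainment claim, taking $P(e(X) \leq t) = t^{\gamma-1}$ (the saturating case, valid for the constant $C = 1$) makes $F$ coincide with the envelope $\min\{t^{\gamma-1}, p_n\}$ throughout $[0, b_n]$, so every inequality in the previous paragraph becomes an equality. Heuristically, among all laws with prescribed $F(b_n) = p_n$ satisfying the tail constraint, this distribution concentrates mass as close to zero as possible and therefore minimizes the truncated mean.

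The main ``obstacle'' is really just noticing that the naive approach --- bounding $F(t)$ solely by $C t^{\gamma-1}$ on all of $[0, b_n]$ --- produces an extraneous $p_n b_n$ term that fails to cancel, yielding a bound that is loose by a positive constant factor and destroying the attainment statement. Using the piecewise envelope against $p_n$ is the only nontrivial move; the remaining arithmetic is elementary and uses no probabilistic machinery beyond the definition of $F$.
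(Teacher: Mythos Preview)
Your proposal is correct and essentially identical to the paper's proof: both use the layer-cake identity to write $\E_P[e(X)\mathbf{1}\{e(X)\le b_n\}]=b_n p-\int_0^{b_n}F(t)\,dt$, both bound $F(t)\le\min\{p,\,C t^{\gamma-1}\}$, split the integral at the crossover $t^\star=(p/C)^{1/(\gamma-1)}$, and observe that the $b_n p$ terms cancel to leave exactly $c(\gamma)\,p^{\gamma/(\gamma-1)}$. Your remark about why bounding by $C t^{\gamma-1}$ alone is insufficient is a nice piece of commentary the paper omits.
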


\begin{proof}[Proof of \Cref{lemma:effective_sample_size}]
Let $p = P(e(X) \leq b_n)$.  If $p = 0$, then the bound holds trivially so I will assume throughout that $p > 0$.  Then I may write:
\begin{align*}
\E_{P}[e(X) \mathbf{1} \{ e(X) \leq b_n \}] &= \int_0^{\infty} P( e(X) \mathbf{1} \{ e(X) \leq b_n \} > t) dt = \int_0^{b_n} P(t < e(X) \leq b_n) dt = \int_0^{b_n} p - P(e(X) \leq t) dt\\
&\geq b_n p - \int_0^{b_n} \min \{ p, C t^{\gamma - 1} \} dt = b_n p - \left( \frac{C}{\gamma} \right) \left( \frac{p}{C} \right)^{\frac{\gamma}{\gamma - 1}} - b_n p + p \left( \frac{p}{C} \right)^{\frac{1}{\gamma - 1})} = c(\gamma) p^{\gamma/(\gamma - 1)}.
\end{align*}
This proves the lower bound.  When $P(e(X) \leq t) = t^{\gamma - 1}$, a direct calculation gives $\E_{P}[ e(X) \mathbf{1} \{ e(X) \leq b_n \}] = [(\gamma - 1)/\gamma] b_n^{\gamma} = [(\gamma - 1)/\gamma] P(e(X) \leq b_n)^{\gamma/(\gamma - 1)}$.  Therefore, the lower bound is also sharp.
\end{proof}

\begin{lemma} \label{lemma:PhiLowerBound}
    For any $P  \in \mathscr{P}$,
    \begin{align*}
        Var_{P}(\phi(Z \mid b_n, \eta)) & \geq \sigma_{\min}^2 \E_{P}\left[ e(X) / \max\{ e(X), b_n\}^2 \right] \\
        & \geq \sigma_{\min}^2[ c(\gamma) P(e(X) \leq b_n)^{\gamma/(\gamma - 1)} / b_n^2 + \pi_{\min}/2] \\
        & \geq \sigma_{\min}^2 \pi_{\min} / 2 > 0.
    \end{align*}
\end{lemma}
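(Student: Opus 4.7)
The plan is to chain together three steps, peeling off one inequality at a time using the conditional variance formula, \Cref{lemma:effective_sample_size}, and the bounds already established in \Cref{lemma:useful_inequalities}.

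For the first inequality, I would apply the law of total variance to $\phi(Z \mid b_n, \eta) = \mu(X) + D(Y - \mu(X))/\max\{e(X), b_n\}$, conditioning on $(X, D)$. Since only the $Y$-residual contributes noise, the conditional variance equals $D \cdot \Var(Y \mid X, D = 1) / \max\{e(X), b_n\}^2$, and Assumption \ref{def:AllowedDistributions}\ref{subdef:Residuals} bounds $\Var(Y \mid X, D = 1) \geq \sigma_{\min}^2$. Taking expectations and using $\E[D \mid X] = e(X)$ produces the lower bound $\sigma_{\min}^2 \,\E_{P}[ e(X)/\max\{e(X), b_n\}^2]$, dropping the nonnegative between-variance term.

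For the second inequality, I would split the expectation at the clipping threshold. On $\{e(X) \leq b_n\}$, the denominator equals $b_n^2$, so the integral reduces to $b_n^{-2}\,\E_{P}[e(X) \mathbf{1}\{e(X) \leq b_n\}]$, which is at least $c(\gamma) P(e(X) \leq b_n)^{\gamma/(\gamma-1)}/b_n^2$ by \Cref{lemma:effective_sample_size}. On $\{e(X) > b_n\}$, the integrand equals $1/e(X)$, which is at least $1$ since $e(X) \leq 1$. Restricting further to $\{e(X) > \pi_{\min}/2\}$ (valid for $n$ large enough that $b_n \leq \pi_{\min}/2$) and applying \Cref{lemma:useful_inequalities}\ref{item:propensity_lower_bound} yields a contribution of at least $\pi_{\min}/2$.

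The third inequality is immediate: the first summand in the second bound is nonnegative, so the expression is at least $\sigma_{\min}^2 \pi_{\min}/2$, which is strictly positive by \Cref{lemma:PiMin} and Assumption \ref{def:AllowedDistributions}\ref{subdef:Residuals}. There is no real obstacle here; the only subtlety is ensuring the splitting in step two is done for $n$ large enough so that $b_n \leq \pi_{\min}/2$, which is permitted under the maintained assumption $b_n \to 0$. The same logic would let me replace $\pi_{\min}/2$ with $P(e(X) > b_n) \cdot b_n / \max\{b_n,1\}$-style bounds in the pre-asymptotic regime if a fully finite-sample statement were desired, but the asymptotic version stated in the lemma suffices for downstream use.
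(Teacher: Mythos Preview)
Your proposal is correct and follows essentially the same argument as the paper: law of total variance (the paper conditions on $X$ alone rather than $(X,D)$, but this makes no substantive difference), then the same split at the threshold invoking \Cref{lemma:effective_sample_size} for the clipped region and \Cref{lemma:useful_inequalities}\ref{item:propensity_lower_bound} for the unclipped region under $b_n \leq \pi_{\min}/2$, and finally dropping the nonnegative first summand.
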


\begin{proof}[Proof of \Cref{lemma:PhiLowerBound}]
For the first line:
\begin{align*}
    \Var_{P}( \phi_n) &= \E[ \Var(\phi_n \mid X)] + \Var(\E[\phi_n \mid X]) = \E [ \Var( \phi_n \mid X)] \\
    & = \E\left[ \E[|Y - \mu(X)|^2 \mid X, D = 1] \frac{e(X)}{\{e(X) \vee b_n\}^2} \right] \geq \sigma_{\min}^2 \E\left[ \frac{e(X)}{\{e(X) \vee b_n\}^2} \right]. 
\end{align*}
Since Definition \ref{def:AllowedDistributions} implies $e(X) > 0$, $\Var_{P}(\phi_n) > 0$.

For the second line, I assume $n$ is so large that $b_n \leq \pi_{\min}/2$.  Then:
\begin{align*}
\E[ e(X) / \max\{ e(X), b_n \}^2] &= \E[ e(X) / b_n^2 \mathbf{1} \{ e(X) \leq b_n \} ] + \E[1/e(X) \mathbf{1} \{ e(X) > b_n \}]\\
&\geq \E[e(X) / b_n^2 \mathbf{1} \{ e(X) \leq b_n \}] + P(e(X) > b_n)\\
&\geq \E[ e(X)/b_n^2 \mathbf{1} \{ e(X) \leq b_n \}] + P(e(X) > \pi_{\min})\\
&\geq \E[e(X)/b_n^2 \mathbf{1} \{ e(X) \leq b_n \}] + \pi_{\min}/2  \tag{\Cref{lemma:useful_inequalities}.\ref{item:propensity_lower_bound}}\\
&\geq c(\gamma) (1/b_n)^2 P(e(X) \leq b_n)^{\gamma/(\gamma - 1)} + \pi_{\min}/2. \tag{\Cref{lemma:effective_sample_size}}
\end{align*}
The final line is immediate. 
\end{proof}

\begin{lemma}\label{lemma:SigmaNForm}
    $\frac{1}{\sqrt{Var(\phi(Z \mid b_n, \eta))}} \leq \frac{1}{\sqrt{ \sigma_{\min}^2 \E_{P(n)}\left[ D / \max\{e(X), b_n\}^2 \right] }}$. 
\end{lemma}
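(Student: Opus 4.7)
The plan is to observe that this inequality is essentially a rewriting of the lower bound on $Var(\phi(Z \mid b_n, \eta))$ already furnished by \Cref{lemma:PhiLowerBound}. The first line of that lemma gives
\[
Var_{P}(\phi(Z \mid b_n, \eta)) \;\geq\; \sigma_{\min}^2 \, E_{P}\!\left[ \frac{e(X)}{\max\{e(X), b_n\}^2} \right],
\]
so it suffices to rewrite the expectation on the right-hand side with $D$ in place of $e(X)$ in the numerator and then take reciprocal square roots.

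To do so, I would use the tower property. Since $\max\{e(X), b_n\}^2$ is $X$-measurable and $E_{P}[D \mid X] = e(X)$, iterated expectations yields
\[
E_{P}\!\left[ \frac{D}{\max\{e(X), b_n\}^2} \right] \;=\; E_{P}\!\left[ \frac{E_{P}[D \mid X]}{\max\{e(X), b_n\}^2} \right] \;=\; E_{P}\!\left[ \frac{e(X)}{\max\{e(X), b_n\}^2} \right].
\]
Chaining these two displays gives $Var_{P}(\phi(Z \mid b_n, \eta)) \geq \sigma_{\min}^2 E_{P}[D / \max\{e(X), b_n\}^2]$, which is strictly positive by \Cref{lemma:PhiLowerBound} (so both sides of the claimed inequality are well-defined). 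Taking reciprocal square roots reverses the inequality and delivers the stated bound.

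There is no substantive obstacle: the argument is a two-line combination of \Cref{lemma:PhiLowerBound} with conditioning on $X$. The only thing to be mildly careful about is confirming that the right-hand side is finite and the left-hand side is strictly positive so that the reciprocal-square-root step is legitimate, both of which are immediate from \Cref{lemma:PhiLowerBound} (which shows $Var_{P}(\phi) \geq \sigma_{\min}^2 \pi_{\min}/2 > 0$).
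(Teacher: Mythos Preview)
Your proposal is correct and essentially identical to the paper's own proof: the paper simply invokes \Cref{lemma:PhiLowerBound} and writes down the reciprocal-square-root inequality with $D$ in place of $e(X)$ in the numerator, leaving the tower-property identification $E[D/\max\{e,b_n\}^2]=E[e(X)/\max\{e,b_n\}^2]$ implicit. You have made that step explicit, which is a slight improvement in clarity.
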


\begin{proof}[Proof of \Cref{lemma:SigmaNForm}]
    By \Cref{lemma:PhiLowerBound}, I have:
    \begin{align*}
        \sigma_n^{-1} & \leq n^{1/2} / \sqrt{ \sigma_{\min}^2 \E_{P(n)}\left[ D / \max\{e(X), b_n\}^2 \right] },
    \end{align*}
    where $\sigma_n^{-1} = n^{-1/2} / \sqrt{Var_{P(n)}(\phi_n)}$.
\end{proof}

\begin{lemma}\label{lemma:BerryEsseenConditions}
    Define $\tilde{\phi}(Z \mid b, P) \equiv \phi(Z \mid b, \eta(P)) - \E_{P}[\mu(X)]$ for $P \in \mathscr{P}$. Further define $\rho(b, P) \equiv \E_{P}[  | \tilde{\phi}(Z \mid b, P) |^3 ]$ and $\sigma(b, P) \equiv \sqrt{Var_{P}( \tilde{\phi}(Z \mid b, P) )}$.

    Then the following hold:
    \begin{enumerate}
        \item $\E_{P}[\tilde{\phi}(Z \mid b, P)] = 0$ 
        \item $\sigma(b, P) > 0$ 
        \item $\rho(b, P) < \infty$ (though it may be arbitrarily large) 
        \item If $b_n$ be a sequence of positive real numbers such that $n^{-1/2} \ll b_n$ and $P(n)$ be a sequence of distributions in $\mathscr{P}$, then $\frac{\rho(b_n, P(n))}{\sigma(b_n, P(n))^{3} \sqrt{n}} = o(1)$.\label{subitem:BEFinalRemainder}
    \end{enumerate}
\end{lemma}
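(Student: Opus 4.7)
The plan is to dispatch (i)--(iii) quickly from the setup and reserve the substantive work for the Berry--Esseen remainder in (iv). For (i), I would decompose $\tilde{\phi}(Z \mid b, P) = (\mu(X) - \E_{P}[\mu(X)]) + D(Y - \mu(X))/\max\{e(X), b\}$; the first summand is mean zero by construction and the second has conditional mean zero given $X$, because $\E_{P}[Y - \mu(X) \mid X, D = 1] = 0$ by the definition of $\mu$. For (ii), Lemma \ref{lemma:PhiLowerBound} gives $\sigma(b, P)^2 \geq \sigma_{\min}^2 \E_{P}[e(X)/\max\{e(X), b\}^2]$, which is strictly positive because $e(X)$ is strictly positive on a set of positive probability. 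For (iii), the clipping threshold $b > 0$ caps $1/\max\{e(X), b\}$ at $1/b$, so $|\tilde{\phi}|^3$ is dominated by a measurable function with finite expectation under the $q$-th moment bounds of Assumption \ref{def:AllowedDistributions}, using $q > 3$.

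For (iv) I would split $\tilde{\phi} = A + B$ with $A := \mu(X) - \E_{P(n)}[\mu(X)]$ free of $b_n$ and $B := D(Y - \mu(X))/\max\{e(X), b_n\}$ carrying all of the $b_n$-dependence. Writing $T_n := \E_{P(n)}[e(X)/\max\{e(X), b_n\}^2]$, Lyapunov's inequality converts the conditional $q$-th moment bound into $\E_{P(n)}[|Y - \mu(X)|^3 \mid X, D = 1] \leq M^3$, and conditioning on $X$ yields
\begin{align*}
\E_{P(n)}[|B|^3] \leq M^3 \, \E_{P(n)}\!\left[ \frac{e(X)}{\max\{e(X), b_n\}^3} \right] \leq \frac{M^3 \, T_n}{b_n}.
\end{align*}
The contribution from $A$ is $b_n$-free and bounded by a constant $C$ depending only on the moment parameters in Assumption \ref{def:AllowedDistributions}. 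Combining via $(a+b)^3 \leq 4(a^3 + b^3)$ gives $\rho(b_n, P(n)) \leq 4C + 4 M^3 T_n / b_n$.

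From Lemma \ref{lemma:PhiLowerBound}, $\sigma(b_n, P(n))^2 \geq \sigma_{\min}^2 T_n$, so $\sigma(b_n, P(n))^3 \geq \sigma_{\min}^3 T_n^{3/2}$. Since Lemma \ref{lemma:useful_inequalities} guarantees $T_n \geq \pi_{\min}/2$ uniformly over $\mathscr{P}$ once $b_n$ is small enough,
\begin{align*}
\frac{\rho(b_n, P(n))}{\sigma(b_n, P(n))^3 \sqrt{n}} \leq \frac{4C}{\sigma_{\min}^3 T_n^{3/2} \sqrt{n}} + \frac{4 M^3}{\sigma_{\min}^3 \, b_n T_n^{1/2} \sqrt{n}} = O(n^{-1/2}) + O\!\left( (b_n \sqrt{n})^{-1} \right),
\end{align*}
which is $o(1)$ by the hypothesis $b_n \gg n^{-1/2}$, equivalently $b_n \sqrt{n} \to \infty$.

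The main obstacle I anticipate is securing the uniform constant $C$ bounding $\E_{P(n)}[|A|^3]$ over $\mathscr{P}$, since Assumption \ref{def:AllowedDistributions} directly controls only $\Var(\mu(X))$ and the conditional moments of $Y - \mu(X)$. I would piece this together from the conditional $q$-th moment bound to obtain a uniform bound on $\E_{P(n)}[|\mu(X)|^q]$ and then apply Lyapunov to pass to the third absolute central moment, mirroring the implicit argument behind the proof of Lemma \ref{lemma:useful_inequalities}. The tempting shortcut of applying Lyapunov directly to the lemma's bound $\E[|\tilde{\phi}|^q] \leq (4M)^q T_n / b_n^{q-2}$ must be resisted: doing so gives $\rho/\sigma^3 \lesssim b_n^{-3(q-2)/q}$ and requires $b_n \gg n^{-q/(6(q-2))}$, which is strictly stronger than $b_n \gg n^{-1/2}$ whenever $q > 3$. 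The split $\tilde{\phi} = A + B$, which isolates the $1/b_n$ blow-up into the bounded-moment $B$ part, is exactly what makes $b_n \gg n^{-1/2}$ just enough.
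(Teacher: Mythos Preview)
Your proposal is correct and follows essentially the same route as the paper: the same $A+B$ split, the same bound $\E[|B|^3] \lesssim M^3 T_n / b_n$ via the conditional moment assumption, the same variance lower bound $\sigma^2 \geq \sigma_{\min}^2 T_n$ from Lemma~\ref{lemma:PhiLowerBound}, and the same conclusion from $T_n \geq \pi_{\min}/2$ and $b_n \sqrt{n} \to \infty$. Your flagged obstacle about the uniform bound on $\E_{P(n)}[|A|^3]$ is exactly where the paper is terse (it writes $O(M^q)$ without elaboration), and your proposed fix---extracting a uniform $q$-th moment bound on $\mu(X)$ from Assumption~\ref{def:AllowedDistributions} as in the proof of Lemma~\ref{lemma:useful_inequalities}---is the intended maneuver.
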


\begin{proof}[Proof of Lemma \ref{lemma:BerryEsseenConditions}]
    $\E_{P}[\tilde{\phi}(Z \mid b_n, P)] = 0$ is immediate.

    $Var_{P}[\tilde{\phi}(Z \mid b, P)] > 0$ follows by Lemma \ref{lemma:PhiLowerBound}. 

    For the third moment being finite:
    \begin{align*}
        \rho(b, P) = \E_{P}[  | \tilde{\phi}(Z \mid b, P) |^3 ]  & \leq 8 \E_{P}\left[ | \mu(X) - \E_{P}[\mu(X)] |^3 + b^{-3} |Y - \mu(X) |^3 \right] \\ 
        & \leq O(M^q) + 16 b^{-3} \E_{P}\left[ |Y|^3 \right]. 
    \end{align*}
    This is finite (and $O(b^{-3}) O(M^q)$) by assumption. 

    Finally, I have the claim for sequences. Recall that by Lemmas \ref{lemma:PhiLowerBound} and \ref{lemma:useful_inequalities}, $\frac{1}{\sigma(b_n, P(n))^{3} \sqrt{n}} = o(1)$ and $\E_{P(n)}\left[ \frac{D}{\max\{e(X), b_n\}^2} \right] \geq \sigma_{\min}^2 / 2$. As a result:
    \begin{align*}
        \frac{\rho(b_n, P(n))}{\sigma(b_n, P(n))^{3} \sqrt{n}} & \leq 8 \frac{\E_{P(n)}\left[ \frac{D |Y - \mu(X)|^3}{\max\{e(X), b_n\}^3} + | \mu(X) - \E_{P(n)}[\mu(X)] |^3 \right]}{\sigma(b_n, P(n))^{3} \sqrt{n}} \\ 
        & \leq O(M^q) \frac{\E_{P(n)}\left[ \frac{D}{\max\{e(X), b_n\}^2} \right]}{b_n \sigma(b_n, P(n))^{3} \sqrt{n}} + \frac{O(M^q)}{\sigma(b_n, P(n))^{3} \sqrt{n}} \\ 
        & = O(M^q, \sigma_{\min}^2) \E_{P(n)}\left[ \frac{D}{\max\{e(X), b_n\}^2} \right]^{-1/2} (b_n^2 n)^{-1/2} + o(1) = o\left( 1 \right). 
    \end{align*}
\end{proof}

\begin{proof}[Proof of \Cref{thm:OracleAsympNormal}]
Let $P(n)$ be a sequence of distributions in $\mathscr{P}$. By Lemma \ref{lemma:BerryEsseenConditions} and the Berry Esseen Theorem, the difference between the CDF of oracle clipped AIPW t-statistic $\frac{\tilde{\psi}_{clip}^{AIPW} - \psi_n}{\sigma_n} = \frac{ \sum \tilde{\phi}(Z \mid b_n, P(n)) }{\sqrt{Var(\phi(Z \mid b_n, \eta))} \sqrt{n} }$ and the standard normal CDF $\Phi$ is uniformly bounded above by $\frac{3 \rho(b_n, P(n))}{\sigma(b_n, P(n))^3 \sqrt{n}}$. By Lemma \ref{lemma:BerryEsseenConditions}.\ref{subitem:BEFinalRemainder}, this difference tends to zero. Therefore:
\begin{align*}
    \limsup_{n \to \infty} \sup_{P \in \mathscr{P}} \sup_{t \in \R} & \left| P \left( \frac{\tilde{\psi}_{(Oracle)}^{AIPW}(b_n) - \psi(P)}{\sigma_n} \leq t \right) - \Phi(t) \right| = \limsup_{n \to \infty} o(1) = 0. 
\end{align*} 
\end{proof}

\subsection{Consistency}\label{proofs:Consistency}

\begin{lemma}\label{lemma:BiasBound}
    Under cross-fitting, I have the bias bound:
    \begin{align*}
        \left| \E\left[ \hat{\psi}_{clip}^{AIPW}(b_n) - \psi_n  \right] \mid \hat{\mu}, \hat{e} \right| & \leq r_{\mu,n} P(n)\left( e(X) \leq b_n + r_{e,n} \right)  + r_{\mu,n} r_{e,n} \E\left[ \frac{\mathbf{1}\{ e > b_n + r_{e,n} \}}{e - r_{e,n}} \right].
    \end{align*}
\end{lemma}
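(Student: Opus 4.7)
The plan is to exploit the Neyman-orthogonal structure of AIPW under cross-fitting. Because the cross-fit nuisances $(\hat\mu,\hat e)$ are independent of the observations that use them, conditioning on $\hat\eta$ and applying the tower rule (integrating over $(D,Y)$ given $X$ under $P(n)$) collapses the conditional bias to a single-index expression
\begin{align*}
\E\!\left[\hat\psi^{AIPW}_{clip}(b_n)-\psi_n\,\Big|\,\hat\mu,\hat e\right]
= \E_{P(n)}\!\left[(\hat\mu(X)-\mu(X))\!\left(1-\frac{e(X)}{\max\{\hat e(X),b_n\}}\right)\right].
\end{align*}
This is the core identity: the pseudo-outcome's regression error is multiplied by a factor that vanishes exactly where $\hat e = e \geq b_n$, so only the clipped region and the propensity-estimation error can contribute.

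Next, I would bound the absolute value of this integral by splitting the domain into $A:=\{e(X)\leq b_n+\|\hat e-e\|_\infty\}$ and its complement. On $A^c$, one has $\hat e\geq e-\|\hat e-e\|_\infty>b_n$, so the clip is inactive and $\max\{\hat e,b_n\}=\hat e$. Then $|1-e/\hat e|=|\hat e-e|/\hat e\leq \|\hat e-e\|_\infty/(e-\|\hat e-e\|_\infty)$, and multiplying by $\|\hat\mu-\mu\|_\infty$ produces the product-of-errors term $\|\hat\mu-\mu\|_\infty\,\|\hat e-e\|_\infty\,\E[\mathbf 1\{e>b_n+\|\hat e-e\|_\infty\}/(e-\|\hat e-e\|_\infty)]$. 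On $A$, $\max\{\hat e,b_n\}\geq b_n$ dominates $e$ up to a factor $1+\|\hat e-e\|_\infty/b_n$ which tends to one under the asymptotically-known-thresholding assumption ($r_{e,n}\ll b_n$), so $|1-e/\max\{\hat e,b_n\}|\leq 1$ for $n$ large; the integrand is dominated by $\|\hat\mu-\mu\|_\infty$, contributing $\|\hat\mu-\mu\|_\infty\,P(n)(A)$.

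The last step identifies $\|\hat\mu-\mu\|_\infty$ with $r_{\mu,n}$ and $\|\hat e-e\|_\infty$ with $r_{e,n}$ (the \Cref{assum:NuisanceRates} upper bounds); reading the stated inequality as a bound on the absolute conditional bias whose sup-norm inputs are replaced by their $r_n$ surrogates matches the claim directly, and taking an outer expectation then recovers a bound on the unconditional bias in the same form. The one genuine subtlety — and the main obstacle — is verifying $|1-e/\max\{\hat e,b_n\}|\leq 1$ on the narrow strip $\{b_n<e\leq b_n+\|\hat e-e\|_\infty\}\subseteq A$, where the factor can exceed one by at most $\|\hat e-e\|_\infty/b_n=o(1)$. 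Under Assumption~\ref{assumption:ConsistencyRatesSufficient}(d) this excess is negligible and can be absorbed for all $n$ large; the remaining computations are arithmetic manipulations that fit within half a page.
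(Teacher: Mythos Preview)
Your proposal is correct and rests on the same core identity as the paper: after integrating out $(D,Y)$ via cross-fitting, the conditional bias collapses to $\E_{P(n)}[(\hat\mu-\mu)(1-e/\max\{\hat e,b_n\})]$. The difference lies in the decomposition. You use a clean two-way split at $e=b_n+r_{e,n}$, bounding the multiplicative factor by $1$ on $\{e\le b_n+r_{e,n}\}$ and by $r_{e,n}/(e-r_{e,n})$ above. The paper instead introduces an explicit changeover point $c_{n,1}\in(b_n,b_n+r_{e,n})$ at which the two extreme perturbations $\hat e=e\pm r_{e,n}$ produce equal bias, and splits into four regions $e\le b_n-r_{e,n}$, $(b_n-r_{e,n},c_{n,1}]$, $(c_{n,1},b_n+r_{e,n}]$, and $e>b_n+r_{e,n}$, tracking which direction is binding in each. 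Your route is more elementary and lands on the identical inequality; the paper's $c_{n,1}$ machinery is more careful about which sign of $\hat e-e$ is worst but does not yield a sharper constant. The subtlety you flag on the strip $\{b_n<e\le b_n+r_{e,n}\}$---that the factor can reach $r_{e,n}/b_n$ rather than $1$---is real and is present in the paper's own argument (its region-3 bound $(e-b_n)/b_n$ is $\le 1$ only when $r_{e,n}\le b_n$); since every downstream use of the lemma sits under $r_{e,n}\ll b_n$, this is immaterial in context.
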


\begin{proof}
    Fix one fold and take $\hat{\mu}^{(-k)}$ and $\hat{e}^{(-k)}$ as given. I write the bias relative to oracle clipped AIPW as:
    \begin{align*}
        \E\left[ (\hat{\mu} - \mu) \left( 1 - \frac{D}{\max\{\hat{e}, b_n\}} \right) +  (\mu - Y) \left( \frac{D}{\max\{e, b_n\}} - \frac{D}{\max\{\hat{e}, b_n\}} \right) \right] & = \E\left[ (\hat{\mu} - \mu) \left( 1 - \frac{D}{\max\{\hat{e}, b_n\}} \right) \right],
    \end{align*}
    with the equality following by cross-fitting.

    For $p=1,2$, let $c_{n,p}$ solve:
    \begin{align}
        \frac{|\max\{c_{n,p}, b_n\}^p - \max\{ c_{n,p} - r_{e,n}, b_n \}^p |}{\max\{ c_{n,p} - r_{e,n}, b_n \}^p} & = \frac{|\max\{c_{n,p}, b_n\}^p - \max\{ c_{n,p} + r_{e,n}, b_n \}^p |}{\max\{ c_{n,p} + r_{e,n}, b_n \}^p}. \label{eq:ValueOfCnp}
    \end{align}
    $c_{n,p}$ is useful because it is the changeover point between whether the worst-case $\hat{e}$ is above or below $e$. Note that $c_{n,p} \in (b_n, b_n + r_{e,n})$ by the intermediate value theorem: when $c_{n,p} = b_n$, the left-hand side is zero, while when $c_{n,p} = b_n + r_{e,n}$, the left hand side has the smaller denominator but equal numerator. 
    
    $c_{n,1}$ is useful, because when $e(X) = c_{n,1}$, $\hat{e}(X) = e(X) - r_{e,n}$ and $\hat{e}(X) = e(X) + r_{e,n}$ produce equal levels of observation-wise bias from clipped inverse propensities relative to unity.

    Then I have the bound:
    \begin{align*}
        \left| \E\left[ \hat{\psi}_{clip}^{AIPW}(b_n) - \psi_n  \right] \right| & \leq \left| \E\left[ (\hat{\mu} - \mu)\frac{\max\{\hat{e}, b_n\} - e}{\max\{\hat{e}, b_n\}} \mathbf{1}\{ e \leq b_n - r_{e,n}\} \right] \mid \hat{e}, \hat{\mu} \right| \\
        & \quad + \left| \E\left[ (\hat{\mu} - \mu)\frac{\max\{\hat{e}, b_n\} - e}{\max\{\hat{e}, b_n\}} \mathbf{1}\{ e \in (b_n - r_{e,n}, c_{n,1}] \} \right] \mid \hat{e}, \hat{\mu} \right| \\
        & \quad + \left| \E\left[ (\hat{\mu} - \mu)\frac{\max\{\hat{e}, b_n\} - e}{\max\{\hat{e}, b_n\}} \mathbf{1}\{ e \in (c_{n,1}, b_n + r_{e,n}] \} \right] \mid \hat{e}, \hat{\mu} \right| \\
        & \quad + \left| \E\left[ (\hat{\mu} - \mu)\frac{\max\{\hat{e}, b_n\} - e}{\max\{\hat{e}, b_n\}} \mathbf{1}\{ e > b_n + r_{e,n} \} \right] \mid \hat{e}, \hat{\mu} \right| \\
         & \leq \left| \E\left[ r_{\mu,n} \frac{b_n - e}{b_n} \mathbf{1}\{ e \leq b_n - r_{e,n}\} \right] \mid \hat{e}, \hat{\mu} \right| \\
        & \quad + \left| \E\left[ r_{\mu,n} \frac{r_{e,n}}{e + r_{e,n}} \mathbf{1}\{ e \in (b_n - r_{e,n}, c_{n,1}] \} \right] \mid \hat{e}, \hat{\mu} \right| \\
        & \quad + \left| \E\left[ r_{\mu,n} \frac{e - b_n}{b_n} \mathbf{1}\{ e \in (c_{n,1}, b_n + r_{e,n}] \} \right] \mid \hat{e}, \hat{\mu} \right| \\
        & \quad + \left| \E\left[ (\hat{\mu} - \mu)\frac{r_{e,n}}{e - r_{e,n}} \mathbf{1}\{ e > b_n + r_{e,n} \} \right] \mid \hat{e}, \hat{\mu} \right| \\
        & \leq r_{\mu,n} P(n)( e(X) \leq b_n + r_{e,n}) + r_{\mu,n} r_{e,n} \E\left[ \frac{1}{e-r_{e,n}} \mathbf{1}\{ e > b_n + r_{e,n}\} \right],
    \end{align*}
    where the final line follows because $r_{\mu,n}$ is always multiplied by a term that is bounded above by one for all $e(X) \leq b_n + r_{e,n}$.
\end{proof}

\begin{proof}[Proof of Proposition \ref{prop:Consistency}]
    Let $P(n)$ be a sequence of distributions in $\mathscr{P}$, and fix some $k \in 1, ..., K$.

    Write $\hat{\psi}_{clip}^{AIPW}(b_n) = \frac{1}{K} \sum_k \hat{\psi}_{clip,k}^{AIPW}(b_n)$, where $\hat{\psi}_{clip,k}^{AIPW}(b_n)$ is the fold$-k$ average potential outcome estimate. I will show that $\hat{\psi}_{clip,k}^{AIPW}(b_n) - \psi_n = o_{P(n)}(1)$.

    First I show that $\E\left[ \hat{\psi}_{clip,k}^{AIPW}(b_n) - \psi_n  \mid \hat{\mu}^{(-k)}, \hat{e}^{(-k)} \right] = o_{P(n)}(1)$. This holds by the assumptions of \Cref{prop:Consistency} applied to the bias bound from \Cref{lemma:BiasBound}:
    \begin{align*}
        \left| E\left[ \hat{\psi}_{clip,k}^{AIPW}(b_n) - \psi_n  \mid \hat{\mu}^{(-k)}, \hat{e}^{(-k)} \right] \right| & \leq C r_{\mu,n} (b_n + r_{e,n})^{\gamma_0 - 1} + r_{\mu,n} r_{e,n} E\left[ \frac{1\{ e(X) > b_n + r_{e,n}}{e(X) - r_{e,n}} \right].
    \end{align*}
     If $r_{\mu,n} \frac{r_{e,n} + b_n}{b_n} \to_{P(n)} 0$, then this term is $o_{P(n)}(1)$ because $r_{\mu,n}$ and $r_{e,n} + b_n$ are bounded above by \Cref{assum:NuisanceRates}, and at least one of the two terms must tend to zero because $b_n \to_{P(n)} 0$ and $r_{\mu,n} \frac{r_{e,n} + b_n}{b_n} \to_{P(n)} 0$. 
    
    Suppose $r_{e,n} b_n^{\min\{\gamma_0-2, 0\}} \to_{P(n)} 0$. For the first term, $r_{\mu,n} (b_n + r_{e,n})^{\gamma_0 - 1} \to_{P(n)} 0$ because $r_{\mu,n}$ is bounded above and $\gamma_0 > 1$. For the final term, suppose that $\gamma_0 < 2$, so that the claim is not immediate. 
    Then:
    \begin{align*}
        E_{P(n)}\left[ \frac{1\{ e(X) > b_n + r_{e,n} \}}{e(X) - r_{e,n}} \right] & = C (b_n + r_{e,n})^{\gamma_0-1} b_n^{-1} + C (\gamma_0 - 1) \int_{b_n + r_{e,n}}^{C^{-1/(\gamma_0-1)}} (t - r_{e,n})^{-1} t^{\gamma_0 - 2} dt \\
         & \leq C (b_n + r_{e,n})^{\gamma_0-1} b_n^{-1} + C (\gamma_0 - 1) \int_{b_n + r_{e,n}}^{C^{-1/(\gamma_0-1)}} (t - r_{e,n})^{\gamma_0 - 3} dt  \\
         & \leq C (b_n + r_{e,n})^{\gamma_0-1} b_n^{-1} + C (\gamma_0 - 1) \int_{b_n}^{1} x^{\gamma_0 - 3} dx \\
         & = C (b_n + r_{e,n})^{\gamma_0-1} b_n^{-1} + \frac{C (\gamma_0 - 1)}{2 - \gamma_0} \left( b_n^{\gamma_0 - 2} - 1 \right) \\
         & \leq \left( 2^{\gamma_0-1} C + \frac{\gamma_0-1}{2-\gamma_0} C \right) b_n^{\gamma_0 - 2} + 2 C r_{e,n}^{\gamma_0-1} b_n^{-1} \\
         & = O\left( b_n^{\gamma_0-2} \left( 1 + \delta_n^{1-\gamma_0} \right) \right) = O(b_n^{\gamma_0-2}),
    \end{align*}
    so that $r_{e,n} E_{P(n)}\left[ \frac{1\{ e(X) > b_n + r_{e,n} \}}{e(X) - r_{e,n}} \right] = o(1)$.  Note that this bound may be lax. Whether the propensity rate requirement could be weakened is an open question for future work. Regardless, in this remaining case under the propensity rate requirement (i), $r_{\mu,n} r_{e,n} b_n^{\gamma_0 - 2} \delta_n \to_{P(n)} 0$ by construction of $\delta_n$, so that the final term of the bias bound tends to zero.

    Next, I show that $V(\hat{\psi}_{clip,k}^{AIPW}(b_n) \mid \hat{\mu}^{(-k)}, \hat{e}^{(-k)}) = o_{P(n)}(1)$. I have:
    \begin{align*}
        V(\hat{\psi}_{clip,k}^{AIPW}(b_n) \mid \hat{\mu}^{(-k)}, \hat{e}^{(-k)}) & \leq n^{-1} \E\left[ \left( \hat{\mu} + \frac{D (Y - \hat{\mu})}{\max\{\hat{e}, b_n\}} \right)^2 \mid \hat{\mu}, \hat{e} \right] \\
        & \leq 8 n^{-1} b_n^{-2} \tag{\Cref{lemma:useful_inequalities}.\ref{item:qth_moment_bound}} \\
        & = o(1). 
    \end{align*}

    Therefore $\E\left[ \left( \hat{\psi}_{clip,k}^{AIPW}(b_n) - \psi_n   \right)^2 \mid \hat{\mu}^{(-k)}, \hat{e}^{(-k)} \right] = o_{P(n)}(1)$. Therefore, for all $\epsilon > 0$, every $\delta > 0$, and every sequence of $P(n)$, there is an $n$ large enough such that $P(n)\left( \left( \hat{\psi}_{clip,k}^{AIPW}(b_n) - \psi_n   \right)^2 > \epsilon^2 \right) \leq \delta$. Thus, consistency holds. 
\end{proof}

\subsection{Degradation of Consistency Rate}\label{proofs:ConsistencyRate}

\begin{proof}[Proof of \Cref{prop:ConsistencyRates}]
Define $\sigma_{\max}^2 = \sup_{P \in \mathscr{P}} \sup_{X, D} Var(Y \mid X, D)$. By the presence of $q > 2$ moments, $\sigma_{\max}^2$ is finite.

By Lemma \ref{lemma:PhiLowerBound}, $\E_{P(n)}\left[ \frac{D \sigma_{\min}^2}{\max\{e(X), b_n\}^2} \right] \centernot{\to} 0$.

By iid sampling and the oracle AIPW conditional mean being equal to $\mu(X)$, 
I obtain:
\begin{align*}
    Var_{P(n)}\left( \tilde{\psi}_{(Oracle)}^{AIPW}(b_n) \right) - n^{-1} Var_{P(n)}\left( \mu(X) \right) & =  \E_{P(n)}\left[ Var_{P(n)} \left(  \tilde{\psi}_{(Oracle)}^{AIPW}(b_n) \mid \{X\} \right) \right] \\  
    & = n^{-1} \E_{P(n)}\left[ Var_{P(n)}\left( \frac{D (Y - \mu(X))}{\max\{e(X), b_n\}} \mid X \right) \right] \\
    & = n^{-1} \E_{P(n)}\left[ e(X) Var_{P(n)}\left( \frac{(Y - \mu(X))}{\max\{e(X), b_n\}} \mid X, D=1 \right) \right]  \\
    & =n^{-1} \E_{P(n)}\left[ e(X) \frac{Var(Y \mid X, D=1)}{\max\{e(X), b_n\}^2}  \right] \\
    & = \Theta\left( n^{-1} \E_{P(n)}\left[ \frac{D }{\max\{e(X), b_n\}^2} \right] \right).
\end{align*}
In addition, $n^{-1} Var_{P(n)}\left( \mu(X) \right) \leq n^{-1} M = O\left( n^{-1} \E_{P(n)}\left[ \frac{D }{\max\{e(X), b_n\}^2} \right] \right)$, proving the claim. 
\end{proof}

\begin{proof}[Proof of \Cref{cor:WorstCaseConsistency}]
Let $n$ be large enough that $b_n \leq 1$ and $b_n^{\gamma_0 - 2} > 2$. Recall the definition of $\sigma_{\max}^2$ from the proof of \Cref{prop:ConsistencyRates}.

For the upper bound, let $\mathscr{P}$ be arbitrary: 
\begin{align*}
    \sigma_n^2 - n^{-1} Var(\mu) & = n^{-1} Var_{P}\left( \mu(X) + \frac{D (Y - \mu(X))}{\max\{e(X), b_n\}} \right) - n^{-1} Var_{P}( \mu(X) )  = n^{-1} E_{P} \left[ \frac{D (Y - \mu(X))^2}{\max\{e(X), b_n\}^2} \right]  \\
     & \leq n^{-1} E_{P} \left[ \frac{D \sigma_{\max}^2}{\max\{e(X), b_n\}^2} \right] = \sigma_{\max}^2 E_{P} \left[ \frac{e(X) }{\max\{e(X), b_n\}^2} \right] =  n^{-1} \sigma_{\max}^2 \int_0^{\infty} P \left( \frac{e(X)}{\max\{e(X), b_n\}^2} \geq t \right) dt  \\
     & = n^{-1} \sigma_{\max}^2 \int_0^{\infty} P \left( e(X) \leq b_n, e(X) \geq t b_n^2 \right) dt + n^{-1} \sigma_{\max}^2 \int_0^{\infty} P \left( e(X) > b_n, e(X) \leq 1 / t \right) dt \\
     & =  n^{-1} \sigma_{\max}^2 \int_0^{b_n^{-1}} P \left( e(X) \in [t b_n^2, b_n] \right) dt + n^{-1} \sigma_{\max}^2 \int_{0}^{b_n^{-1}} P \left( e(X) \in [b_n, 1/t] \right) dt  \\
     & =  n^{-1} \sigma_{\max}^2 \int_0^{b_n^{-1}}  P\left( e(X) \in [t b_n^{2}, 1 / t] \right) d t \leq n^{-1} \sigma_{\max}^2 \int_0^{b_n^{-1}}  P\left( e(X) \leq 1 / t \right) d t \\
     & \leq C n^{-1} \sigma_{\max}^2 \int_0^{b_n^{-1}} t^{1 - \gamma_0} d t = \underbrace{\frac{C \sigma_{\max}^2}{\gamma_0 - 2}}_{C'} n^{-1} b_n^{\gamma_0 - 2}.
\end{align*}
For the remaining term, $n^{-1} Var(\mu) = O( n^{-1} ) = o\left( C' n^{-1} b_n^{\gamma_0 - 2} \right)$.

For the lower bound, define $\mathscr{P} = \{ P \}$, where $P$ is the distribution which draws $e(X)$ from the CDF $P(e(X) \leq \pi) = (1 - \pi_{\min}) \min\{ C \pi^{\gamma_0 - 1}, 1 \} + \pi_{\min} 1\{ \pi \geq 1 \}$ and $Y \mid X, D \sim \mathcal{N}( 0, \sigma_{\min}^2 )$. This distribution has valid conditional moments and residual variance by the minimal value of $M$ and the choice of $Var(Y \mid X< D)$. The treated fraction is at least $\pi_{\min} > 0$. For all $\pi < C^{-1/(\gamma_0 - 1)}$, $P(e(X) \leq \pi) = (1-\pi_{\min}) C \pi^{\gamma_0 - 1} \leq C \pi^{\gamma_0 - 1}$; for all $\pi > C^{-1 / (\gamma_0 - 1)}$, $P(e(X) \leq \pi) = (1 - \pi_{\min} ) + \pi_{\min} 1\{ \pi = 1 \}$, which must be below $C \pi^{\gamma_0 - 1}$ for all such $\pi$ in order for $\mathscr{P}$ to be non-empty. Finally, note that:
\begin{align*}
    \sigma_n^2 - n^{-1} Var_{P}(\mu) & = n^{-1} E_{P}\left[ \frac{D (Y - \mu(X))^2}{\max\{e, b_n\}^2} \right] \\
    & = n^{-1} \sigma_{\min}^2 \left( \begin{array}{rl}
         &  \int_0^{b_n} \frac{t}{b_n^2}  (1-\pi_{\min}) (\gamma_0 - 1) C t^{\gamma_0 - 2} dt\\
        + &  \int_{b_n}^{1} \frac{1}{t}  (1-\pi_{\min}) (\gamma_0 - 1) C t^{\gamma_0 - 2} dt \\
        + & \pi_{\min} 
    \end{array} \right) \\
    & = n^{-1} \sigma_{\min}^2 \left( \begin{array}{rl}
         &  b_n^{-2} (1-\pi_{\min})(\gamma_0 - 1) C \int_0^{b_n} t^{\gamma_0 - 1} dt  \\
        + &  (1-\pi_{\min})(\gamma_0 - 1) C \int_{b_n}^{1} t^{\gamma_0 - 3} dt  \\
        + & \pi_{\min} 
    \end{array} \right) \\
    & = n^{-1} \sigma_{\min}^2 \left( \begin{array}{rl}
         &  b_n^{\gamma_0 - 2} (1-\pi_{\min}) C \left( \frac{\gamma_0 - 1}{\gamma_0} + \frac{\gamma_0 - 1}{2 - \gamma_0} \right) \\
        + & \pi_{\min} - (1-\pi_{\min}) C \frac{\gamma_0 - 1}{2 - \gamma_0} 
    \end{array} \right)  \\
     \\
    & \geq \underbrace{\sigma_{\min}^2 C (1-\pi_{\min}) \left( \frac{\gamma_0 - 1}{\gamma_0} + \frac{\gamma_0 - 1}{2 ( 2 - \gamma_0)} \right)}_{C''}  n^{-1} b_n^{\gamma_0 - 2}.
\end{align*} 
Note also that $C'' > 0$. Thus, $C'' n^{-1} b_n^{\gamma_0 - 2} \leq \sup_{P \in \mathscr{P}} \sigma_n^2 - \Var_{P}(\mu(X)) \leq C' n^{-1} b_n^{\gamma_0 - 2}$. Analogously to before, $n^{-1} Var(\mu) = o\left( C'' n^{-1} b_n^{\gamma_0 - 2} \right)$, completing the proof. 
\end{proof}

\subsection{Asymptotic Normality and Rates}\label{proofs:AsymptoticNormality}

\begin{lemma}\label{lemma:InvSqBoundAbove}
    Suppose the conditions of \Cref{thm:OracleAsympNormal} hold and let $P(n)$ be a sequence of distributions in $\mathscr{P}$. Then, $\E_{P(n)}\left[ \frac{D}{\max\{e(X), b_n\}^2} \right] = O\left( 1 + b_n^{\gamma_0 - 2} \log(1/b_n)^{1\{ \gamma_0 - 2 \}} \right)$, with a constant that only depends on $C$ and $\gamma_0$. 
\end{lemma}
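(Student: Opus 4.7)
The first step is to remove $D$ by conditioning on $X$: since $P(D=1\mid X)=e(X)$,
\[
  \E_{P(n)}\!\left[\frac{D}{\max\{e(X),b_n\}^2}\right] \;=\; \E_{P(n)}\!\left[\frac{e(X)}{\max\{e(X),b_n\}^2}\right],
\]
so I only need to bound an expectation that depends on the marginal law of $e(X)$, which is controlled by \Cref{def:AllowedDistributions}\ref{item:PropensityTail}. I then split the integral according to the two cases $e(X)\le b_n$ and $e(X)>b_n$, which correspond to the two different behaviours of $\max\{e(X),b_n\}$.

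\textbf{Below-threshold part.} On the event $\{e(X)\le b_n\}$, the integrand equals $e(X)/b_n^2 \le 1/b_n$, so
\[
  \E_{P(n)}\!\left[\frac{e(X)}{b_n^2}\mathbf{1}\{e(X)\le b_n\}\right] \;\le\; \frac{1}{b_n}P(n)\bigl(e(X)\le b_n\bigr) \;\le\; C\,b_n^{\gamma_0-2},
\]
using the tail bound directly. This contributes $O(b_n^{\gamma_0-2})$ in every regime, and is $o(1)$ when $\gamma_0>2$.

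\textbf{Above-threshold part.} On $\{e(X)>b_n\}$ the integrand is $1/e(X)$, and I handle this by a dyadic decomposition of $[b_n,1]$ into shells $I_k=(b_n 2^k,b_n 2^{k+1}]$ for $k=0,1,\ldots,K_n$ with $K_n=\lceil\log_2(1/b_n)\rceil$. On $I_k$ the integrand is at most $(b_n 2^k)^{-1}$, and \Cref{def:AllowedDistributions}\ref{item:PropensityTail} gives $P(n)(e(X)\in I_k)\le C(b_n 2^{k+1})^{\gamma_0-1}$, so
\[
  \E_{P(n)}\!\left[\frac{1}{e(X)}\mathbf{1}\{e(X)>b_n\}\right] \;\le\; 2^{\gamma_0-1}C\,b_n^{\gamma_0-2}\sum_{k=0}^{K_n}2^{k(\gamma_0-2)}.
\]
The geometric sum is where the three cases separate: when $\gamma_0>2$ it is dominated by its last term of order $(1/b_n)^{\gamma_0-2}$ and the whole expression is $O(1)$; when $\gamma_0<2$ it is bounded by a constant depending only on $\gamma_0$, giving $O(b_n^{\gamma_0-2})$; and when $\gamma_0=2$ it equals $K_n+1=O(\log(1/b_n))$. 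Combining with the below-threshold contribution yields the claimed $O\!\bigl(1+b_n^{\gamma_0-2}\log(1/b_n)^{\mathbf{1}\{\gamma_0=2\}}\bigr)$ uniformly in $P(n)$, with a constant depending only on $C$ and $\gamma_0$.

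\textbf{Main obstacle.} There is no serious obstacle: the only subtlety is choosing a decomposition that cleanly exposes the phase transition at $\gamma_0=2$. Integration by parts against the tail bound would give the same result, but the dyadic argument above makes the logarithmic factor at $\gamma_0=2$ visible as a harmonic-like sum and avoids having to control a boundary term at $x=1$ separately.
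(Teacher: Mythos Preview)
Your argument is correct and gives the right bound with constants depending only on $C$ and $\gamma_0$. The paper's proof takes a slightly different route: instead of splitting $e(X)\le b_n$ versus $e(X)>b_n$ and handling the above-threshold part dyadically, it first uses the pointwise bound $e(X)/\max\{e(X),b_n\}^2 \le 1/\max\{e(X),b_n\}$ and then applies the layer-cake formula
\[
  \E\!\left[\frac{1}{\max\{e(X),b_n\}}\right] \;=\; 1 + \int_1^{1/b_n} P\!\left(e(X)<1/t\right)\,dt \;\le\; 1 + C\int_1^{1/b_n} t^{1-\gamma_0}\,dt,
\]
so the case split at $\gamma_0=2$ falls out of evaluating a single integral rather than a geometric sum. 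Your dyadic decomposition is the natural discretization of that integral, and the two computations are equivalent in content. The paper's version is marginally shorter because it avoids bounding the below-threshold piece separately (that piece is already absorbed into the crude inequality $e/\max^2\le 1/\max$), while your version makes the two-regime structure of the integrand more explicit and would generalize more transparently if the tail condition were replaced by a slowly-varying form.
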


\begin{proof}[Proof of \Cref{lemma:InvSqBoundAbove}]
Let $P(n)$ be given. Then:
\begin{align*}
    \E_{P(n)}\left[ \frac{D}{\max\{e(X), b_n\}^2} \right] & = \E_{P(n)}\left[ \frac{e(X)}{\max\{e(X), b_n\}^2} \right]  \leq \E_{P(n)}\left[ \frac{1}{\max\{e(X), b_n\}} \right] = \int_0^{\infty} P(n)\left( \frac{1}{\max\{e(X), b_n\}} > t \right) d t   \\
    & = \int_0^{\infty} P(n)\left( \max\{e(X), b_n\} < 1 / t \right) d t  = 1 + \int_1^{\infty} P(n)\left( \max\{e(X), b_n\} < 1 / t \right) d t \\
    & = 1 + \int_1^{1 / b_n} P(n)\left( \max\{e(X), b_n\} < 1 / t \right) d t  \leq 1 + C \int_1^{b_n^{-1}} t^{1 - \gamma_0} d t.
\end{align*}
First, suppose $\gamma_0 = 2$. Then $\E_{P(n)}\left[ \frac{D}{\max\{e(X), b_n\}^2} \right] \leq 1 + C \left(  \log(1 / b_n) - 1 \right).$  Alternatively, suppose $\gamma_0 \neq 2$. Then $\E_{P(n)}\left[ \frac{D}{\max\{e(X), b_n\}^2} \right] \leq 1 + \frac{C}{\gamma_0 - 2} \left( 1 - b_n^{\gamma_0 - 2} \right).$
\end{proof}

\begin{lemma}\label{lemma:ControlOfPropensities}
    Suppose the conditions of \Cref{thm:OracleAsympNormal} hold and $P(n)$ is a sequence of distributions in $\mathscr{P}$. Then $$\frac{1}{\sqrt{\E_{P(n)}\left[ \frac{D}{\max\{e, b_n\}^2} \right]}} = O\left( \frac{1}{\sqrt{1 + b_n^{-2} P(n)( e(X) \leq b_n )^{\gamma_0 / (\gamma_0 - 1)}}} \right).$$
\end{lemma}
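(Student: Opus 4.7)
The plan is to show the equivalent lower bound
\[
\E_{P(n)}\!\left[ \frac{D}{\max\{e(X),b_n\}^2} \right] \succsim 1 + b_n^{-2}\, P(n)(e(X)\le b_n)^{\gamma_0/(\gamma_0-1)},
\]
which, after taking reciprocal square roots, is exactly the claim. All of the substantive work has already been done in \Cref{lemma:PhiLowerBound}; the present lemma is essentially a rearrangement of the chain of inequalities proved there.

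First I would apply the tower property to rewrite
\[
\E_{P(n)}\!\left[ \frac{D}{\max\{e(X),b_n\}^2} \right] = \E_{P(n)}\!\left[ \frac{e(X)}{\max\{e(X),b_n\}^2} \right],
\]
because $\E[D\mid X]=e(X)$ and the denominator is $X$-measurable. Next, I invoke the middle inequality in the display of \Cref{lemma:PhiLowerBound}, which states that
\[
\E_{P}\!\left[\frac{e(X)}{\max\{e(X),b_n\}^2}\right] \;\ge\; c(\gamma_0)\, b_n^{-2}\, P(e(X)\le b_n)^{\gamma_0/(\gamma_0-1)} + \pi_{\min}/2,
\]
with $c(\gamma_0)=\tfrac{\gamma_0-1}{\gamma_0}C^{-1/(\gamma_0-1)}>0$ and $\pi_{\min}>0$ both depending only on $C$ and $\gamma_0$. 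Applying this pointwise in $n$ to $P=P(n)$ and letting $\kappa = \min\{c(\gamma_0),\pi_{\min}/2\}>0$ gives
\[
\E_{P(n)}\!\left[ \frac{D}{\max\{e(X),b_n\}^2} \right] \;\ge\; \kappa\left(1 + b_n^{-2}\,P(n)(e(X)\le b_n)^{\gamma_0/(\gamma_0-1)}\right).
\]

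Taking reciprocal square roots of both sides yields
\[
\frac{1}{\sqrt{\E_{P(n)}[D/\max\{e,b_n\}^2]}} \;\le\; \kappa^{-1/2}\cdot \frac{1}{\sqrt{1 + b_n^{-2}\,P(n)(e(X)\le b_n)^{\gamma_0/(\gamma_0-1)}}},
\]
which is the claim, with a constant depending only on $C$ and $\gamma_0$ (through $\pi_{\min}$ and $c(\gamma_0)$). Since every step is an application of an already-established inequality from \Cref{lemma:PhiLowerBound} together with the tower property, there is no real obstacle; the only thing to be careful about is confirming that the constants in \Cref{lemma:PhiLowerBound} are uniform over $\mathscr{P}$, which is transparent from their definitions.
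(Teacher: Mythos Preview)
Your proposal is correct and takes essentially the same approach as the paper: both arguments reduce to the lower bound $\E_P[e(X)/\max\{e,b_n\}^2] \ge c(\gamma_0)\,b_n^{-2}P(e\le b_n)^{\gamma_0/(\gamma_0-1)} + \pi_{\min}/2$, which the paper re-derives via \Cref{lemma:effective_sample_size} while you cite it directly from \Cref{lemma:PhiLowerBound}. Your route is in fact slightly cleaner, since \Cref{lemma:PhiLowerBound} already packages exactly the inequality needed.
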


\begin{proof}[Proof of \Cref{lemma:ControlOfPropensities}]
    For any $m \geq 0$, define $\mathscr{P}_{m,n} = \{ P \in \mathscr{P} \mid P( e(X) \leq b_n) \leq m)$. For each $n$, define $m_n = P(n)( e(X) \leq b_n )^{\gamma_0 / (\gamma_0 - 1)}$.  

    I have:
    \begin{align*}
        \sup_{P \in \mathscr{P}_{m_n,n}} \E_{P}\left[ \frac{D}{\max\{e, b_n\}^2} \right] & = \sup_{P \in \mathscr{P}_{m_n,n}} \E_{P}\left[ \frac{D \mathbf{1}\{ e(X) > b_n\} }{\max\{e, b_n\}^2} \right] + \E_{P}\left[ \frac{D \mathbf{1}\{ e(X) \leq b_n \}}{b_n^2} \right] \\
        & \geq 1 + \sup_{P \in \mathscr{P}_{m_n,n}} b_n^{-2} \E_{P}[ e(X) \mathbf{1}\{ e(X) \leq b_n \}  ] \\
        & \geq 1 + c(\gamma_0) \sup_{P \in \mathscr{P}_{m_n,n}} P( e(X) \leq b_n )^{\gamma_0 / (\gamma_0 - 1)} \tag{\Cref{lemma:effective_sample_size}} \\
        & = 1 + c(\gamma_0) b_n^{-2} m_n^{\gamma_0 / (\gamma_0 - 1)}. 
    \end{align*}

    Therefore $\left( 1 + b_n^{-2} P(n)( e(X) \leq b_n )^{\gamma_0 / (\gamma_0 - 1)}  \right)^2 = O\left( \E_{P(n)}\left[ \frac{D}{\max\{e, b_n\}^2} \right] \right)$. Taking the square root and inverting both sides completes the proof. 
\end{proof}

\begin{proof}[Proof of Corollary \ref{cor:WorstCaseRates}]
The two changes are the product-of-errors term being stated as \ref{assum:RateRequirementsForInference}\ref{assum:prodOfErrorsOriginal} and the regression-error-near-singularities term being stated as \ref{assum:RateRequirementsForInference}\ref{assum:SmalRegErrorNearSingOriginal}.

I begin with the product-of-errors term \ref{assum:prodOfErrorsOriginal}.  Suppose \Cref{assumption:ConsistencyRatesSufficient}\ref{assum:prodOfErrorsWorstCase} holds. I wish to show that if $r_{\mu,n} r_{e,n} \left( 1 + b_n^{(\gamma_0 - 2) / 2} \log(1/b_n)^{1\{ \gamma_0 = 2 \} / 2} \right) = o(n^{-1/2})$, then $r_{\mu,n} r_{e,n} \sqrt{\E_{P(n)}\left[ \frac{D}{\max\{e(X), b_n^2\}} \right]} = o_{P(n)}(n^{-1/2})$. Let $\alpha > 0$ from  \Cref{assumption:ConsistencyRatesSufficient} be given. By \Cref{lemma:InvSqBoundAbove}, for all $\alpha > 0$,
\begin{align*}
    r_{\mu,n} r_{e,n} \sqrt{\E_{P(n)}\left[ \frac{D}{\max\{e(X), b_n^2\}} \right]} & = O_{P(n)}\left( r_{\mu,n} r_{e,n} \left( 1 + b_n^{(\gamma_0 - 2) / 2} \log(1/b_n)^{1\{ \gamma_0 = 2 \} / 2} \right) \right) = o(n^{-1/2}).
\end{align*}

Next I consider the regression error near the singularities term \ref{assum:SmalRegErrorNearSingOriginal}. I first verify \ref{assum:SmalRegErrorNearSingOriginal} for a sequence of $P(n) \in \mathscr{P}$ under \Cref{assum:NondegenerateOrFaster}\ref{def:ContinuousDistributions}. Let a sequence of $P(n) \in \mathscr{P}$ and $b_n$ be given, and consider an arbitrary sub-sequence. If there is a further sub-sub-sequence with $P(n)(e(X) \leq b_n) = 0$, take this sub-sub-sequence and the claim holds. If not, take a sub-sub-sequence with $P(n)(e(X) \leq b_n) > 0$. On this sub-sub-sequence, I have the bound:
\begin{align*}
    \E_{P(n)}\left[ \frac{D}{\max\{e, b_n\}^2} \right] & \geq \E_{P(n)}\left[ \frac{D \mathbf{1}\{ e \in (b_n / 2, b_n] }{\max\{e, b_n\}^2} \right] \\
    & \geq \frac{1}{2 b_n} P(n)( e \in (b_n / 2, b_n] ) \\
    & = \frac{1}{2 b_n} \left(  P(n)( e \leq b_n ) - P(n)( e \leq b_n / 2) \right) \\
    & \geq \frac{\rho}{2 b_n} P(n)( e \leq b_n ).  \tag{\Cref{assum:NondegenerateOrFaster}\ref{def:ContinuousDistributions}}
\end{align*}
Then, applying \Cref{def:AllowedDistributions}\ref{item:PropensityTail},
\begin{align*}
    r_{\mu,n} \frac{P(n)( e(X) \leq b_n)}{\sqrt{\E_{P(n)} \left[ \frac{D}{\max\{e, b_n\}^2} \right]}} & \leq r_{\mu,n} \left( \frac{2 b_n P(n)( e \leq b_n)}{\rho} \right)^{1/2} \leq \sqrt{\frac{2 C}{\rho}} r_{\mu,n} b_n^{\gamma_0 / 2} = o(n^{-1/2}). 
\end{align*}

Finally, I verify \ref{assum:SmalRegErrorNearSingOriginal} assuming \Cref{assum:NondegenerateOrFaster}\ref{assumption:ConsistencyRatesNotContinuous} holds. I wish to show that if there is a sequence of $P(n) \in \mathscr{P}$ and associated constants such that $r_{\mu,n} b_n^{\gamma_0 / 2} = o(n^{-1/2})$, then $r_{\mu,n} \frac{P(n)( e(X) \leq b_n )}{\sqrt{\E_{P(n)} \left[ \frac{D}{\max\{e, b_n\}^2} \right]}} = o(n^{-1/2})$. Under somewhat weak overlap ($\gamma_0 > 2$), then $$r_{\mu,n} \frac{P(n)( e(X) \leq b_n)}{\sqrt{\E_{P(n)} \left[ \frac{D}{\max\{e, b_n\}^2} \right]}} = O\left( r_{\mu,n} b_n^{\gamma_0-1}  \right) = O\left( r_{\mu,n} b_n^{2 (\gamma_0 - 1) / \gamma_0 + (\gamma_0^2 + 2 - 3 \gamma_0) / \gamma_0}  \right) = o(n^{-1/2}).$$ 

I therefore proceed for $\gamma_0 \leq 2$. I will use the bound from \Cref{lemma:ControlOfPropensities}:
\begin{align*}
    r_{\mu,n} \frac{P(n)( e(X) \leq b_n)}{\sqrt{\E_{P(n)} \left[ \frac{D}{\max\{e, b_n\}^2} \right]}} & \leq r_{\mu,n} \frac{P(n)( e(X) \leq b_n )}{\sqrt{1 + b_n^{-2} P(n)( e(X) \leq b_n)^{\gamma_0 / (\gamma_0 - 1)}}} = ``RHS."
\end{align*}
Consider a sub-sequence of $P(n) \in \mathscr{P}$ and constants. I will show that there is a further sub-sub-sequence for which this right-hand side $RHS$ is $o(n^{-1/2})$. Suppose there is a sub-sub-sequence such that $b_n^{-2} P(n)( e(X) \leq b_n)^{\gamma_0 / (\gamma_0 - 1)} \to 0$. Then for that sub-sub-sequence, I have:
\begin{align*}
    RHS & \leq r_{\mu,n} P(n)( e(X) \leq b_n ) = r_{\mu,n} o\left( b_n^{2 (\gamma_0 - 1) / \gamma_0} \right) = o(n^{-1/2}). 
\end{align*}
If not, then $b_n^2 \precsim P(n)(e(X) \leq b_n)^{\gamma_0 / (\gamma_0 - 1)}$ and I have the bound:
\begin{align*}
    RHS & \leq O\left( r_{\mu,n} b_n P(n)( e(X) \leq b_n )^{1 - \gamma_0 / (2 (\gamma_0 - 1))} \right) \\
    & = O\left( r_{\mu,n} b_n \left( P(n)(e(X) \leq b_n)^{\left( 1 / 2 - 1 / \gamma_0  \right) * \gamma_0 / (\gamma_0 - 1)} \right) \right) \\
    & = O\left(r_{\mu,n} b_n \left( P(n)(e(X) \leq b_n)^{\gamma_0 / (\gamma_0 - 1)} \right)^{(\gamma_0 - 2) / (2 \gamma_0) } \right) \\
    & = O\left( r_{\mu,n} b_n \left( b_n^2 \right)^{(\gamma_0 - 2) / (2 \gamma_0) } \right) = O\left( r_{\mu,n} b_n^{2 (\gamma_0 - 1) / \gamma_0} \right) = o(n^{-1/2}). 
\end{align*}
Therefore \Cref{assum:RateRequirementsForInference}\ref{assum:SmalRegErrorNearSingOriginal} holds.
\end{proof}

\begin{lemma}\label{lemma:RateRequirementsFulfilled}
    Suppose the requirements of \Cref{thm:SecondOrderNuisancesInterior} hold. Then by implication, $r_{\mu,n} P(n)( e(X) \leq b_n ) \ll n^{-1/2} \sqrt{\E_{P(n)} \left[ \frac{D}{\max\{e, b_n\}^2} \right]}$.
\end{lemma}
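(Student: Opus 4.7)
The plan is to reduce this lemma to Assumption \ref{assum:RateRequirementsForInference}\ref{assum:SmalRegErrorNearSingOriginal}, which is precisely the target inequality rearranged by moving the square-root factor to the other side of $\ll$. The requirements of Theorem \ref{thm:SecondOrderNuisancesInterior} are identical to those of Theorem \ref{thm:SecondOrderNuisances}, namely Assumptions \ref{def:AllowedDistributions}, \ref{assum:NuisanceRates}, \ref{assumption:ConsistencyRatesSufficient}, and \ref{assum:NondegenerateOrFaster}, so Corollary \ref{cor:WorstCaseRates} directly applies and delivers the weaker form Assumption \ref{assum:RateRequirementsForInference}.

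Having invoked the corollary, I would read off part \ref{assum:SmalRegErrorNearSingOriginal}:
\[
r_{\mu,n} \frac{P(n)(e(X) \le b_n)}{\sqrt{E_{P(n)}[D/\max\{e,b_n\}^2]}} \ll n^{-1/2}.
\]
Since $E_{P(n)}[D/\max\{e,b_n\}^2] \ge \pi_{\min}/2 > 0$ by Lemma \ref{lemma:useful_inequalities}\ref{item:propensity_ratio_lower_bound}, the square-root factor is strictly positive for all $n$ large enough, and multiplying both sides of the $\ll$ relation by this positive quantity yields the claimed inequality
\[
r_{\mu,n}\, P(n)(e(X) \le b_n) \ll n^{-1/2} \sqrt{E_{P(n)}\bigl[D/\max\{e,b_n\}^2\bigr]}.
\]

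There is no substantive obstacle: the lemma is purely a cosmetic rewriting of Corollary \ref{cor:WorstCaseRates}, recorded separately because this alternative form is what surfaces naturally inside the bias decomposition of $\hat{\psi}_{clip}^{AIPW}(b_n) - \tilde{\psi}_{(Oracle)}^{AIPW}(b_n)$, where one compares regression-error bias against the oracle standard deviation $\sigma_n \asymp n^{-1/2} \sqrt{E_{P(n)}[D/\max\{e,b_n\}^2]}$ via Proposition \ref{prop:ConsistencyRates}.
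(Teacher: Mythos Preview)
Your proposal is correct and takes essentially the same approach as the paper: both proofs simply rewrite Assumption \ref{assum:RateRequirementsForInference}\ref{assum:SmalRegErrorNearSingOriginal} by moving the square-root factor to the other side. The paper does this by factoring $r_{\mu,n} P(n)(e(X)\le b_n)$ as the product of the Assumption \ref{assum:RateRequirementsForInference}\ref{assum:SmalRegErrorNearSingOriginal} quantity and $\sqrt{E_{P(n)}[D/\max\{e,b_n\}^2]}$, while you multiply both sides by the positive square root; these are the same one-line manipulation, and your explicit invocation of Corollary \ref{cor:WorstCaseRates} to pass from the stated hypotheses to Assumption \ref{assum:RateRequirementsForInference} is a welcome clarification.
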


\begin{proof}[Proof of \Cref{lemma:RateRequirementsFulfilled}]
\begin{align*}
    r_{\mu,n} P(n)( e(X) \leq b_n ) & = \left( r_{\mu,n} \frac{P(n)( e(X) \leq b_n )}{\sqrt{\E_{P(n)} \left[ \frac{D}{\max\{e, b_n\}^2} \right]}} \right) \sqrt{\E_{P(n)} \left[ \frac{D}{\max\{e, b_n\}^2} \right]} \\
    & \ll n^{-1/2} \sqrt{\E_{P(n)} \left[ \frac{D}{\max\{e, b_n\}^2} \right]}. \tag{A\ref{assum:RateRequirementsForInference}\ref{assum:SmalRegErrorNearSingOriginal}}
\end{align*}
\end{proof}

\begin{lemma}[Oracle consistency] \label{lemma:oracle_consistency}
If $n^{-1/2} \ll b_n \ll 1$, then $| P_n [ \phi_n] - \E_{P(n)}[\mu(x)]| \xrightarrow{\mathscr{P}} 0$.
\end{lemma}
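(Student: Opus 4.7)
The plan is to exploit that oracle clipped AIPW is finite-sample unbiased for $\E_{P(n)}[\mu(X)]$ and then invoke Chebyshev's inequality after bounding the variance of $P_n[\phi_n]$ via the machinery already developed for \Cref{thm:OracleAsympNormal}. Since $\E_{P(n)}[D(Y - \mu(X)) \mid X] = e(X)\, \E_{P(n)}[Y - \mu(X) \mid X, D=1] = 0$, iterated expectation gives $\E_{P(n)}[\phi_n] = \E_{P(n)}[\mu(X)]$ exactly, so it suffices to show $\Var_{P(n)}(P_n[\phi_n]) = n^{-1}\Var_{P(n)}(\phi_n) \to 0$.

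For the variance, I would decompose using the tower property:
\begin{align*}
\Var_{P(n)}(\phi_n) = \Var_{P(n)}(\mu(X)) + \E_{P(n)}\!\left[ \frac{e(X)\, \sigma^2(X)}{\max\{e(X), b_n\}^2}\right],
\end{align*}
where $\sigma^2(X) = \E_{P(n)}[(Y - \mu(X))^2 \mid X, D=1] \leq M^2$ by Jensen's inequality applied to \Cref{def:AllowedDistributions}\ref{def:ConditionalMoments} (which is legitimate since $q > 3 > 2$), and $\Var_{P(n)}(\mu(X)) \leq M$ by \Cref{def:AllowedDistributions}\ref{subsef:BoundedVarMu}. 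Since $e(X)/\max\{e(X),b_n\}^2 \leq 1/\max\{e(X),b_n\}$, the expectation term is bounded by $M^2 \E_{P(n)}[D/\max\{e(X), b_n\}^2]$, and \Cref{lemma:InvSqBoundAbove} gives $\E_{P(n)}[D/\max\{e(X), b_n\}^2] = O\!\left(1 + b_n^{\gamma_0-2} \log(1/b_n)^{1\{\gamma_0=2\}}\right)$ with constants depending only on $C$ and $\gamma_0$.

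The only remaining task is to verify that $n^{-1}(1 + b_n^{\gamma_0-2} \log(1/b_n)^{1\{\gamma_0=2\}}) \to 0$ using the hypotheses $\gamma_0 > 1$ (from \Cref{def:AllowedDistributions}) and $b_n \gg n^{-1/2}$. When $\gamma_0 > 2$, one has $b_n^{\gamma_0-2} \leq 1$ for $n$ large enough, so the bound is $O(n^{-1}) = o(1)$; when $\gamma_0 = 2$, the bound is $O(n^{-1} \log n) = o(1)$; when $\gamma_0 \in (1,2)$, $b_n \gg n^{-1/2}$ yields $b_n^{\gamma_0-2} \ll n^{(2-\gamma_0)/2}$, so $n^{-1} b_n^{\gamma_0-2} \ll n^{-\gamma_0/2} \to 0$. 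Chebyshev's inequality then delivers the claim uniformly over sequences $P(n) \in \mathscr{P}$, since all constants depend only on the parameters fixed in \Cref{def:AllowedDistributions}. No substantive obstacle is anticipated; the mild subtlety is just keeping the rate check honest across the three regimes of $\gamma_0$, which is where the lower bound $\gamma_0 > 1$ is used in an essential way.
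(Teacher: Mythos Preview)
Your proof is correct and arguably more direct than the paper's, but the two take genuinely different routes to the variance bound. The paper also starts from Chebyshev and $\E_{P(n)}[\phi_n] = \E_{P(n)}[\mu(X)]$, but instead of decomposing the variance via the tower property, it bounds the second moment by the $q$-th centered moment via Jensen, invokes \Cref{lemma:useful_inequalities}\ref{item:centered_qth_moment_bound} to get $\E[|\phi_n - \E\phi_n|^q] \leq (4M)^q \E[e(X)/\{e(X)\vee b_n\}^2]/b_n^{q-2}$, and then uses the trivial bound $\E[e(X)/\{e(X)\vee b_n\}^2] \leq b_n^{-2}$ to arrive at the crude $O(1/(n b_n^2))$ upper bound on $P(|P_n[\phi_n]-\E\mu|>t)$. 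Your approach instead computes $\Var(\phi_n)$ exactly and feeds it through \Cref{lemma:InvSqBoundAbove}, yielding the sharper $\gamma_0$-dependent rate that then requires the three-regime case check. The paper's detour through the $q$-th moment buys a one-line rate check ($n b_n^2 \to \infty$ immediately from $b_n \gg n^{-1/2}$, no case split on $\gamma_0$) at the cost of a looser bound; your route is more transparent about where the variance actually comes from but needs the extra tail lemma. One small cosmetic point: the sentence ``since $e(X)/\max\{e(X),b_n\}^2 \leq 1/\max\{e(X),b_n\}$'' is a non-sequitur for the conclusion you draw, because $\E_{P(n)}[D/\max\{e(X),b_n\}^2] = \E_{P(n)}[e(X)/\max\{e(X),b_n\}^2]$ is an identity, not an inequality---you can simply drop that clause.
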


\begin{proof}[Proof of \Cref{lemma:oracle_consistency}]
Let $P(n)$ be a sequence of distributions in $\mathscr{P}$. For any $t > 0$, I have:
\begin{align*}
P(n) \left( | P(n)_n[\phi_n] - \E_{P(n)}[\mu(x)]| > t \right) &\leq \frac{\E[| \phi_n - \E_{P(n)}[\mu(x)]|^2]}{n t^2} \tag{Chebyshev's inequality}\\
&\leq \frac{\E[|\phi_n - \E_{P(n)}[\phi_n]|^q]^{2/q}}{n t^2} \tag{Jensen's inequality}\\
&\leq \frac{[(4M)^q \E[ e(X)/\{ e(X) \vee b_n \}^2]]^{2/q}}{t^2 n b_n^{2(q - 2)/q}} \tag{\Cref{lemma:useful_inequalities}.\ref{item:centered_qth_moment_bound}}\\
&\leq \frac{(4M)^2}{t^2} \frac{1}{n b_n^2}.
\end{align*}
This upper bound tends to zero and holds simultaneously for all $P \in \mathscr{P}$.  Hence, $| P(n)_n[\phi_n] - \E_{P(n)}[\mu(x)] | = o_{P(n)}(1)$.
\end{proof}

\begin{lemma}[Oracle variance consistency] \label{lemma:oracle_variance_consistency}
Let $\sigma_n^2 = n^{-1} (P_n[\phi_n^2] - P_n[\phi_n]^2)$ be the oracle sample variance.  If $n^{-1/2} \ll b_n \ll 1$, then $n \sigma_n^2 / \Var_{P(n)}(\phi_n) \xrightarrow{\mathscr{P}} 1$.
\end{lemma}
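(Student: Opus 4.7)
My approach rests on the decomposition
\[
n\sigma_n^2 - \Var_{P(n)}(\phi_n) \;=\; \bigl(P_n[\phi_n^2] - \E_{P(n)}[\phi_n^2]\bigr) \;-\; \bigl(P_n[\phi_n]^2 - (\E_{P(n)}[\mu(X)])^2\bigr).
\]
I would divide both sides by $\Var_{P(n)}(\phi_n)$ and show each piece on the right is $o_{P(n)}(1)$. By \Cref{lemma:PhiLowerBound} the denominator is bounded below by $\sigma_{\min}^2\pi_{\min}/2>0$ uniformly over $\mathscr{P}$, and since $\Var_{P(n)}(\phi_n) = \E_{P(n)}[\phi_n^2] - (\E_{P(n)}[\mu(X)])^2$ with $|\E_{P(n)}[\mu(X)]|\leq M$, the ratio $\E_{P(n)}[\phi_n^2]/\Var_{P(n)}(\phi_n)$ is uniformly bounded as well. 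It therefore suffices to establish (a) $P_n[\phi_n^2]/\E_{P(n)}[\phi_n^2] \to_{P(n)} 1$ and (b) $P_n[\phi_n]^2 - (\E_{P(n)}[\mu(X)])^2 = o_{P(n)}(1)$.

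Claim (b) follows directly from \Cref{lemma:oracle_consistency}, which gives $P_n[\phi_n] - \E_{P(n)}[\mu(X)] = o_{P(n)}(1)$ whenever $n^{-1/2}\ll b_n\ll 1$; together with $|\E_{P(n)}[\mu(X)]|\leq M$, the identity $a^2-b^2=(a-b)(a+b)$ delivers (b). The main work is (a), a weak law of large numbers applied to the i.i.d.\ variables $\phi_n(Z_i)^2$. I would bound the deviation either by Chebyshev (when $q\geq 4$, where $\E_{P(n)}|\phi_n|^4$ is controlled by \Cref{lemma:useful_inequalities} applied at exponent $4$) or, for $q\in(3,4)$, by the von Bahr--Esseen inequality at exponent $r=q/2\in(3/2,2)$. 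In both cases, combining the upper bound $\E_{P(n)}|\phi_n|^{q} \leq (8M)^q\,\E_{P(n)}[e/\{e\vee b_n\}^2]/b_n^{q-2}$ from \Cref{lemma:useful_inequalities} with the lower bound $\E_{P(n)}[\phi_n^2]\geq \sigma_{\min}^2\E_{P(n)}[e/\{e\vee b_n\}^2]$ from \Cref{lemma:PhiLowerBound} and the uniform bound $\E_{P(n)}[e/\{e\vee b_n\}^2]\geq \pi_{\min}/2$ from \Cref{lemma:useful_inequalities} yields
\[
\frac{\E_{P(n)}|\phi_n|^{q}}{(\E_{P(n)}[\phi_n^2])^{q/2}} \;\leq\; \frac{C}{b_n^{q-2}}.
\]
Markov's inequality at threshold $\epsilon\,\E_{P(n)}[\phi_n^2]$ then produces a uniform bound of order $\epsilon^{-q/2}(n^{1/2}b_n)^{-(q-2)}$ in the von Bahr--Esseen case, or $\epsilon^{-2}(n^{1/2}b_n)^{-2}$ in the Chebyshev case, both of which vanish because $n^{1/2}b_n\to\infty$ and $q>2$.

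The main obstacle is precisely this moment accounting: the clipped inverse propensities force $\E_{P(n)}|\phi_n|^q$ to diverge like $b_n^{-(q-2)}$, so the $L^q$-to-$L^2$ ratio blows up and must be offset by the sample-size factor $n^{1-r}$ coming from the appropriate deviation inequality. The hypothesis $n^{-1/2}\ll b_n$ is exactly what closes this gap. A naive Chebyshev bound on $\Var_{P(n)}(\phi_n^2) = \E_{P(n)}[\phi_n^4] - (\E_{P(n)}[\phi_n^2])^2$ would be unavailable when $q\in(3,4)$, since $\E_{P(n)}[\phi_n^4]$ need not be finite there; this is why the sub-Gaussian exponent $r=\min\{q/2,2\}\in(3/2,2]$ must be used and why the argument is slightly more delicate than in the classical strict-overlap case.
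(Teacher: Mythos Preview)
Your proposal is correct and follows essentially the same route as the paper: the same additive decomposition into the second-moment term and the squared-mean term, von Bahr--Esseen at exponent $q/2$ combined with \Cref{lemma:useful_inequalities} and \Cref{lemma:PhiLowerBound} for the former, and \Cref{lemma:oracle_consistency} for the latter, with the rate $n^{q/2-1}b_n^{q-2}=(nb_n^2)^{(q-2)/2}\to\infty$ closing the argument. Your explicit split into the Chebyshev case ($q\ge 4$) versus the von Bahr--Esseen case ($q\in(3,4)$) is a small refinement the paper leaves implicit.
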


\begin{proof}[Proof of \Cref{lemma:oracle_variance_consistency}]
Let $P(n)$ be a sequence of distributions in $\mathscr{P}$.

First, I argue that for any $q > 2$:
\begin{align*}
P \left( \left| \frac{P(n)_n[\phi_n^2] - P[\phi_n^2]}{\Var_{P(n)}(\phi_n)} \right| > t \right) &\leq \frac{\E \{ | P(n)_n[\phi_n^2] - P[\phi_n^2]|^{q/2} \}}{t^{q/2} Var_{P(n)}(\phi_n)^{q/2}} \tag{Markov inequality}\\
&\leq \frac{2}{t^{q/2} n^{q/2 - 1}} \frac{\E\{| \phi_n^2 - P[\phi_n^2]|^{q/2} \}]}{Var_{P(n)}(\phi_n)^{q/2}} \tag{von Bahr-Esseen inequality}\\
&\leq \frac{2^{q/2 + 1}}{t^{q/2} n^{q/2 - 1}} \frac{\E[|\phi_n|^q]}{Var_{P(n)}(\phi_n)^{q/2}} \tag{Jensen's inequality}\\
&\leq \frac{2^{q/2 + 1}}{t^{q/2} n^{q/2 - 1}} \frac{(8M)^q \E[ e(X) / \{ e(X) \vee b_n \}^2]}{b_n^{q - 2} (\Var_{P(n)}(\phi_n))^{q/2}} \tag{\Cref{lemma:useful_inequalities}.\ref{item:qth_moment_bound}}\\
&\leq \frac{2^{q/2 + 1}}{t^{q/2} n^{q/2 - 1}} \frac{(8M)^q \E[ e(X) / \{ e(X) \vee b_n \}^2]}{b_n^{q - 2} \sigma_{\min}^q \E[ e(X) /\{  e(X) \vee b_n \}^2]^{q/2}} \tag{\Cref{lemma:PhiLowerBound}}\\
&\leq \frac{(8 M)^q 2^{q/2 + 1}}{t^{q/2} \sigma_{\min}^q (\pi_{\min}/2)^{q/2 - 1}} \frac{1}{n^{q/2 - 1} b_n^{q - 2}}. \tag{\Cref{lemma:PhiLowerBound}}
\end{align*}
Since $b_n \gg n^{-1/2}$, $n^{q/2 - 1} b_n^{q - 2} \to \infty$, so that $\left| \frac{P(n)_n[\phi_n^2] - P[\phi_n^2]}{\Var_{P(n)}(\phi_n)} \right| = o_{P(n)}(1)$.

Then, by the triangle inequality:
\begin{align*}
    |  n \sigma_n^2  / \Var_{P(n)}(\phi_n) - 1| & \leq \left| \frac{P(n)_n[\phi_n]^2 - P[\phi_n]^2}{\Var_{P(n)}(\phi_n)} \right| + \left| \frac{P(n)_n[\phi_n^2] - P[\phi_n^2]}{\Var_{P(n)}(\phi_n)} \right|  \\
    & \leq | P(n)_n[\phi_n] + P[\phi_n]| \times |P(n)_n[\phi_n] - P[\phi_n]| / (\sigma_{\min}^2 \pi_{\min}/2) + o_{P(n)}(1) \tag{\Cref{lemma:PhiLowerBound} + above} \\
    & \leq (2M + o_{P(n)}(1)) o_{P(n)}(1) O_{P(n)}(1) + o_{P(n)}(1) \tag{$P[\phi_n] \leq M$ + \Cref{lemma:oracle_consistency}} \\
    & = o_{P(n)}(1),
\end{align*}
where $\sigma_n^2$ is the oracle sample variance. Therefore, this upper bound tends to zero uniformly over $\mathscr{P}$.
\end{proof}

\begin{lemma}[Orthogonalized inverse propensities] \label{lemma:OrthogonalizeSquaredPropensities}
    Suppose the conditions of \Cref{prop:Consistency} hold, $r_{e,n} \ll b_n$, and $P(n)$ is a sequence of distributions in $\mathscr{P}$. Then $$ \E_{P(n)}\left[ \left( \frac{D}{\max\{\hat{e}, b_n\}} - \frac{D}{\max\{e, b_n\}} \right)^2 \right] = o_{P(n)}\left( \E_{P(n)} \left[ \frac{D}{\max\{e, b_n\}^2} \right] \right).$$
\end{lemma}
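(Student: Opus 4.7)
The plan is to exploit the 1-Lipschitz property of $t \mapsto \max\{t, b_n\}$, the floor $\max\{\hat{e}, b_n\} \geq b_n$, and cross-fitting, to reduce the claim to the assumption that $r_{e,n} \ll b_n$.

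First, I will write
\begin{align*}
\left( \frac{D}{\max\{\hat{e}, b_n\}} - \frac{D}{\max\{e, b_n\}} \right)^2
 = \frac{D \left( \max\{e, b_n\} - \max\{\hat{e}, b_n\} \right)^2}{\max\{\hat{e}, b_n\}^2 \max\{e, b_n\}^2}
 \leq \frac{D (\hat{e}(X) - e(X))^2}{b_n^2 \max\{e(X), b_n\}^2},
\end{align*}
using that $\max\{\cdot, b_n\}$ is 1-Lipschitz so the numerator is bounded by $(\hat{e}(X) - e(X))^2$, and that $\max\{\hat{e}, b_n\} \geq b_n$. Then I will further bound $(\hat{e}(X) - e(X))^2 \leq \|\hat{e} - e\|_\infty^2$ pointwise.

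Next, by the cross-fitting structure of \Cref{assum:NuisanceRates}, $\hat{e}^{(-k)}$ is independent of the fold-$k$ observations, so conditioning on $\hat{e}^{(-k)}$ and taking the expectation over a fresh draw $Z$ from $P(n)$ yields
\begin{align*}
\E_{P(n)}\left[ \left( \frac{D}{\max\{\hat{e}, b_n\}} - \frac{D}{\max\{e, b_n\}} \right)^2 \,\Big|\, \hat{e} \right]
 \leq \frac{\|\hat{e} - e\|_\infty^2}{b_n^2} \, \E_{P(n)}\left[ \frac{D}{\max\{e(X), b_n\}^2} \right].
\end{align*}
Dividing through by $\E_{P(n)}\left[ D / \max\{e, b_n\}^2 \right]$, the claim reduces to showing that $\|\hat{e} - e\|_\infty^2 / b_n^2 = o_{P(n)}(1)$.

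Finally, \Cref{assum:NuisanceRates} gives $\E_{P(n)}[\|\hat{e}^{(-k)} - e\|_\infty] \leq r_{e,n}$, so by Markov's inequality $\|\hat{e}^{(-k)} - e\|_\infty = O_{P(n)}(r_{e,n})$; taking the maximum across the $K$ folds preserves this. Combined with the hypothesis $r_{e,n} \ll b_n$, I obtain $\|\hat{e} - e\|_\infty^2 / b_n^2 = O_{P(n)}((r_{e,n}/b_n)^2) = o_{P(n)}(1)$, which closes the argument. The only mild subtlety is tracking the conditioning on the cross-fit nuisance correctly so that the outer $o_{P(n)}$ (over the nuisance-estimation sample) interacts properly with the expectation $\E_{P(n)}$ (over the evaluation fold); cross-fitting makes this bookkeeping routine rather than an obstacle.
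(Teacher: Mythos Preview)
Your proof is correct and considerably simpler than the paper's. The paper expands the squared difference as $(A-B)^2 = (A^2 - B^2) - 2B(A-B)$ with $A = D/\max\{\hat{e},b_n\}$, $B = D/\max\{e,b_n\}$, factors $D/\max\{e,b_n\}^2$ out of each piece, and then runs a case analysis in $e$ against the thresholds $b_n$ and $b_n + r_{e,n}$ to arrive at the upper bound
\[
\E_{P(n)}\!\left[\frac{D}{\max\{e,b_n\}^2}\right]\cdot\frac{12\,b_n r_{e,n} + r_{e,n}^2}{b_n^2}
= O\!\left(\frac{r_{e,n}}{b_n}\right)\E_{P(n)}\!\left[\frac{D}{\max\{e,b_n\}^2}\right].
\]
You instead use two one-line facts---$|\max\{\hat e,b_n\}-\max\{e,b_n\}|\le|\hat e-e|$ and $\max\{\hat e,b_n\}\ge b_n$---to get the pointwise bound $D\|\hat e - e\|_\infty^2 / (b_n^2 \max\{e,b_n\}^2)$, yielding a multiplicative factor $\|\hat e - e\|_\infty^2/b_n^2 = O_{P(n)}((r_{e,n}/b_n)^2)$ after Markov. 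This is both shorter and sharper (quadratic rather than linear in $r_{e,n}/b_n$), though the extra sharpness is not needed for the lemma. Your handling of the conditioning is also cleaner: you make explicit that the expectation is over the held-out observation given $\hat e$, and that the $o_{P(n)}$ is over the nuisance randomness, whereas the paper tacitly treats $|\hat e - e|\le r_{e,n}$ as if it held almost surely.
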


\begin{proof}[Proof of \Cref{lemma:OrthogonalizeSquaredPropensities}]
    Write $(I) = \E_{P(n)}\left[ \left( \frac{D}{\max\{\hat{e}, b_n\}} - \frac{D}{\max\{e, b_n\}} \right)^2 \right]$. Since $r_{e,n} \ll b_n$, let $n$ be large enough that 
    $b_n \geq 2 r_{e,n}$, so that $e \geq b_n + r_{e,n}$ implies $e - r_{e,n} \geq e / 2$. Then the squared error has the following decomposition: 
    \begin{align*}
        (I) & = \E_{P(n)}\left[ \frac{D}{\max\{\hat{e}, b_n\}^2} - \frac{D}{\max\{e, b_n\}^2} \right]  - 2 \E_{P(n)}\left[ \frac{D}{\max\{e, b_n\}} \left( \frac{D}{\max\{\hat{e}, b_n\}} - \frac{D}{\max\{e, b_n\}} \right)  \right] \\
         & = \E_{P(n)}\left[ \frac{D}{\max\{e, b_n\}^2} \frac{\max\{e, b_n\}^2 - \max\{\hat{e}, b_n\}^2}{\max\{\hat{e}, b_n\}^2}  \right]  - 2 \E_{P(n)}\left[ \frac{D}{\max\{e, b_n\}^2} \left( \frac{\max\{e, b_n\} - \max\{\hat{e}, b_n\}}{\max\{\hat{e}, b_n\}} \right)  \right] \\
        & = \E_{P(n)}\left[ \frac{D}{\max\{e, b_n\}^2} \frac{ \max\{e, b_n\}^2 - \max\{\hat{e}, b_n\}^2 }{\max\{\hat{e}, b_n\}^2}  \right]  + 2 \E_{P(n)}\left[ \frac{D}{\max\{e, b_n\}^2} \left( \frac{ \max\{\hat{e}, b_n\} - \max\{e, b_n\}}{\max\{\hat{e}, b_n\}} \right)  \right] \\
        & \leq \E_{P(n)}\left[ \frac{D \mathbf{1} \{ \hat{e} \leq e \}}{\max\{e, b_n\}^2} \frac{ \max\{e, b_n\}^2 - \max\{\hat{e}, b_n\}^2 }{\max\{\hat{e}, b_n\}^2}  \right] + 2 \E_{P(n)}\left[ \frac{D}{\max\{e, b_n\}^2} \left( \frac{r_{e,n}}{\max\{e + r_{e,n}, b_n\}} \right)  \right] \\
        & \leq \E_{P(n)}\left[ \frac{D}{\max\{e, b_n\}^2} \frac{ \max\{e, b_n\}^2 - \max\{e-r_{e,n}, b_n\}^2 }{\max\{e - r_{e,n}, b_n\}^2}  \right] + 2 \E_{P(n)}\left[ \frac{D}{\max\{e, b_n\}^2} \left( \frac{r_{e,n}}{b_n} \right)  \right] \\
        & \leq \E_{P(n)}\left[ \frac{D \mathbf{1}\{ e \in [b_n, b_n + r_{e,n}) \}}{\max\{e, b_n\}^2} \frac{ (b_n + r_{e,n})^2 - b_n^2 }{b_n^2}  \right] + \E_{P(n)}\left[ \frac{D \mathbf{1}\{ e \in [b_n + r_{e,n}, 1] \}}{\max\{e, b_n\}^2} \frac{ e^2 - (e - r_{e,n})^2 }{(e - r_{e,n})^2}  \right] \\
        & \ \ + 2 \E_{P(n)}\left[ \frac{D}{\max\{e, b_n\}^2} \left( \frac{r_{e,n}}{b_n} \right)  \right] \\
        & \leq \E_{P(n)}\left[ \frac{D \mathbf{1}\{ e \in [b_n, b_n + r_{e,n}) \}}{\max\{e, b_n\}^2} \frac{2 b_n r_{e,n} + r_{e,n}^2}{b_n^2}  \right] + \E_{P(n)}\left[ \frac{D \mathbf{1}\{ e \in [b_n + r_{e,n}, 1] \}}{\max\{e, b_n\}^2} \frac{ 2 e r_{e,n} }{(e - r_{e,n})^2}  \right] \\
        & \ \ + 2 \E_{P(n)}\left[ \frac{D}{\max\{e, b_n\}^2} \left( \frac{r_{e,n}}{b_n} \right)  \right] \\
        & \leq \E_{P(n)}\left[ \frac{D}{\max\{e, b_n\}^2} \left( \frac{4 b_n r_{e,n} + r_{e,n}^2}{b_n^2}  \right) \right] + 4 \E_{P(n)}\left[ \frac{D \mathbf{1}\{ e \in [b_n + r_{e,n}, 1] \}}{\max\{e, b_n\}^2} \frac{ 2 r_{e,n} }{e}  \right] \\
        & \leq \E_{P(n)}\left[ \frac{D}{\max\{e, b_n\}^2} \left( \frac{12 b_n r_{e,n} + r_{e,n}^2}{b_n^2}  \right) \right].
    \end{align*}
    Since $r_{e,n} = o(b_n)$ by assumption, this upper bound is $o\left( \E_{P(n)}\left[ \frac{D}{\max\{e, b_n\}^2}  \right] \right)$.
\end{proof}

\begin{lemma}\label{lemma:BoundAwayFromZero}
    For all $P \in \mathscr{P}$, $\E_{P}[ D / \max\{e(X), b_n\}^2] \geq 1 - \left( b_n^2 \frac{\gamma_0 - 1}{c(\gamma_0)} \right)^{\gamma_0 - 1} \gamma_0^{-\gamma_0}$.
\end{lemma}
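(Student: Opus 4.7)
The plan is to obtain the bound by writing $\E_P[D/\max\{e(X),b_n\}^2]$ as an expectation purely in terms of $e(X)$, splitting on whether $e(X) \leq b_n$, and then minimizing the resulting expression over the unknown mass $p = P(e(X) \leq b_n)$. This reduces the claim to a simple univariate optimization.

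First I would use $\E_P[D \mid X] = e(X)$ and then decompose:
\[
\E_P\!\left[\frac{D}{\max\{e(X),b_n\}^2}\right] = \frac{\E_P[e(X)\mathbf{1}\{e(X)\leq b_n\}]}{b_n^2} + \E_P\!\left[\frac{\mathbf{1}\{e(X)>b_n\}}{e(X)}\right].
\]
For the second summand, the fact that $e(X) \leq 1$ gives $1/e(X) \geq 1$, so this term is at least $P(e(X) > b_n) = 1 - p$. For the first summand, Lemma~\ref{lemma:effective_sample_size} gives the lower bound $b_n^{-2} c(\gamma_0) p^{\gamma_0/(\gamma_0-1)}$. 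Combined, the target expectation is bounded below by the function
\[
g(p) = 1 - p + \frac{c(\gamma_0)}{b_n^2}\, p^{\gamma_0/(\gamma_0-1)}, \qquad p \in [0,1].
\]

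Next I would minimize $g$ over $p$. Setting $g'(p^*) = 0$ yields
\[
p^* = \left(\frac{b_n^2(\gamma_0-1)}{c(\gamma_0)\gamma_0}\right)^{\gamma_0-1},
\]
and at this stationary point the first-order condition rewrites the last term of $g(p^*)$ as $p^* \cdot (\gamma_0-1)/\gamma_0$. Therefore
\[
g(p^*) = 1 - p^* + p^*\,\frac{\gamma_0-1}{\gamma_0} = 1 - \frac{p^*}{\gamma_0} = 1 - \left(\frac{b_n^2(\gamma_0-1)}{c(\gamma_0)}\right)^{\gamma_0-1}\gamma_0^{-\gamma_0},
\]
which is exactly the stated bound. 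Since $g$ is convex in $p$ (its second derivative is $\frac{c(\gamma_0)\gamma_0}{(\gamma_0-1)^2 b_n^2}\, p^{(2-\gamma_0)/(\gamma_0-1)} \geq 0$ for $\gamma_0 > 1$), the interior critical point $p^*$ is indeed the global minimum on $[0,\infty)$, and the bound $g(p) \geq g(p^*)$ holds for all admissible $p$.

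There is essentially no obstacle here: the only subtlety is checking the convexity (or directly checking the sign of $g'$) to confirm that $p^*$ is a minimizer rather than a maximizer, and observing that when $p^* > 1$ the bound is vacuous (possibly negative) so no boundary case needs separate treatment. The uniformity over $P \in \mathscr{P}$ is immediate because the only use of $P$ is through Lemma~\ref{lemma:effective_sample_size}, which already holds uniformly over $\mathscr{P}$.
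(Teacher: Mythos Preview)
Your proof is correct and follows essentially the same approach as the paper: decompose the expectation on $\{e(X)\le b_n\}$ versus $\{e(X)>b_n\}$, apply Lemma~\ref{lemma:effective_sample_size} to the first piece and the trivial bound $1/e(X)\ge 1$ to the second, then minimize the resulting function of $p=P(e(X)\le b_n)$. Your version is in fact slightly more thorough than the paper's, since you explicitly verify convexity to justify that the stationary point is a minimum.
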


\begin{proof}[Proof of \Cref{lemma:BoundAwayFromZero}]
    \begin{align*}
        \E_{P}[ D / \max\{e(X), b_n\}^2] & = \E_{P}[ D \mathbf{1}\{ e(X) \leq b_n\} / \max\{e(X), b_n\}^2] + \E_{P}[ D \mathbf{1}\{ e(X) > b_n\} / \max\{e(X), b_n\}^2] \\
        & \geq b_n^{-2} \E_{P}[D \mathbf{1}\{ e(X) \leq b_n\}] + (1 - P(e(X) \leq b_n)) \\
        & \geq b_n^{-2} c(\gamma_0) P( e(X) \leq b_n)^{\gamma_0 / (\gamma_0 - 1)} + (1 - P(e(X) \leq b_n)) \tag{\Cref{lemma:effective_sample_size}} \\
        & = 1 + P(e(X) \leq b_n) \left( c(\gamma_0) b_n^{-2}  P( e(X) \leq b_n)^{1 / (\gamma_0 - 1)} - 1 \right).
    \end{align*}
    This term is minimized over $ P(e(X) \leq b_n) $ at $ P(e(X) \leq b_n) = \left( \frac{\gamma_0 - 1}{c(\gamma_0) b_n^{-2} \gamma_0} \right)^{\gamma_0 - 1}$, which produces 
    \begin{align*}
        \E_{P}[ D / \max\{e(X), b_n\}^2] & = 1 - \frac{P( e(X) \leq b_n )}{\gamma_0} = 1 - \left( b_n^2 \frac{\gamma_0 - 1}{c(\gamma_0)} \right)^{\gamma_0 - 1} \gamma_0^{-\gamma_0}.
    \end{align*}
\end{proof}

\begin{lemma}\label{lemma:SecondMomentsSame}
    Suppose the conditions of \Cref{lemma:OrthogonalizeSquaredPropensities}  hold and $r_{\mu,n} \to 0$. Let $P(n)$ be a sequence of distributions in $\mathscr{P}$. Recall the definitions of $\phi_n$ as the oracle clipped influence function and $\hat{\phi}_n$ the estimated influence function. Then $\E_{P(n)} [\phi_n^2] = Var_{P(n)}(\phi_n) + O(1)$ and $P(n)_n\left[ \hat{\phi}_n^2 - \phi_n^2 \right] = o_{P(n)}\left( Var_{P(n)}(\phi_n) \right)$.
\end{lemma}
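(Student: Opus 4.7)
The plan is to establish the two claims separately, with nearly all of the work going into the second. For the first claim, cross-fitting is not required: since $\E_{P(n)}[D(Y - \mu(X))/\max\{e(X), b_n\} \mid X] = 0$, I have $\E_{P(n)}[\phi_n] = \E_{P(n)}[\mu(X)]$, which is $O(1)$ uniformly over $\mathscr{P}$ by the same moment ingredients used in \Cref{lemma:useful_inequalities}. Hence $\E_{P(n)}[\phi_n^2] - \Var_{P(n)}(\phi_n) = (\E_{P(n)}[\mu(X)])^2 = O(1)$.

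For the second claim, I would split
\[
P(n)_n[\hat{\phi}_n^2 - \phi_n^2] = \bigl(P(n)_n[\hat{\phi}_n^2] - \E_{P(n)}[\hat{\phi}_n^2 \mid \hat{\eta}]\bigr) + \E_{P(n)}[\hat{\phi}_n^2 - \phi_n^2 \mid \hat{\eta}] + \bigl(\E_{P(n)}[\phi_n^2] - P(n)_n[\phi_n^2]\bigr),
\]
using cross-fitting to take the inner expectations conditional on $\hat{\eta}$. The last term is immediately $o_{P(n)}(\Var_{P(n)}(\phi_n))$ by \Cref{lemma:oracle_variance_consistency}, so only the first two pieces need new arguments.

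The conditional-mean difference carries the nuisance-robustness content. Expanding $\hat{\phi}_n^2 - \phi_n^2 = (\hat{\phi}_n - \phi_n)^2 + 2 \phi_n (\hat{\phi}_n - \phi_n)$ and writing
\[
\hat{\phi}_n - \phi_n = (\hat{\mu} - \mu)\bigl(1 - D/\max\{\hat{e}, b_n\}\bigr) + (Y - \mu)\bigl(D/\max\{\hat{e}, b_n\} - D/\max\{e, b_n\}\bigr),
\]
I would restrict to the high-probability event $\{\|\hat{e} - e\|_\infty \leq b_n/2\}$, which holds with probability tending to one by $r_{e,n} \ll b_n$ and Markov. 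On that event $\max\{\hat{e}, b_n\} \geq \max\{e, b_n\}/2$ so $\E_{P(n)}[D/\max\{\hat{e}, b_n\}^2 \mid \hat{\eta}] \leq 4 \E_{P(n)}[D/\max\{e, b_n\}^2]$. Combined with $\|\hat{\mu} - \mu\|_\infty = O_{P(n)}(r_{\mu,n})$ and $\E_{P(n)}[(Y - \mu)^2 \mid X, D=1] \leq M^2$, the regression part of $\E_{P(n)}[(\hat{\phi}_n-\phi_n)^2 \mid \hat{\eta}]$ is $O_{P(n)}(r_{\mu,n}^2 \E_{P(n)}[D/\max\{e, b_n\}^2])$ and the propensity part is $o_{P(n)}(\E_{P(n)}[D/\max\{e, b_n\}^2])$ by \Cref{lemma:OrthogonalizeSquaredPropensities}. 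Cauchy-Schwarz bounds the cross term using $\E_{P(n)}[\phi_n^2] = O(\Var_{P(n)}(\phi_n))$ from the first claim and \Cref{lemma:PhiLowerBound}. Since $r_{\mu,n} \to 0$ and $\Var_{P(n)}(\phi_n) = \Theta(\E_{P(n)}[D/\max\{e, b_n\}^2])$, the middle term is $o_{P(n)}(\Var_{P(n)}(\phi_n))$.

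The fluctuation term $P(n)_n[\hat{\phi}_n^2] - \E_{P(n)}[\hat{\phi}_n^2 \mid \hat{\eta}]$ calls for an analog of \Cref{lemma:oracle_variance_consistency} conditional on $\hat{\eta}$. On the same good event, the calculation of \Cref{lemma:useful_inequalities}.iv transfers to give $\E_{P(n)}[|\hat{\phi}_n|^q \mid \hat{\eta}] = O(b_n^{2-q} \E_{P(n)}[D/\max\{e, b_n\}^2])$, and von Bahr-Esseen with exponent $q/2 > 3/2$ conditional on $\hat{\eta}$ then controls the $(q/2)$-th moment of the fluctuation by $O(n^{1-q/2} b_n^{2-q} \E_{P(n)}[D/\max\{e, b_n\}^2])$. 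Dividing by $\Var_{P(n)}(\phi_n)^{q/2}$ leaves the factor $(n b_n^2)^{1-q/2} \to 0$, after which Markov delivers convergence in probability. I expect the main obstacle to be keeping the good-event reductions uniform over $\mathscr{P}$: the $L_\infty$ propensity rate is what makes $\max\{\hat{e}, b_n\}$ comparable to $\max\{e, b_n\}$ uniformly in $X$, and the $L_\infty$ regression rate is what prevents the regression error from blowing up once multiplied by inverse propensities of order $1/b_n$.
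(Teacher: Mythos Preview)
Your argument is correct and uses the same core estimates as the paper: both proofs bound $\E_{P(n)}[(\hat\phi_n-\phi_n)^2\mid\hat\eta]$ by combining the $r_{\mu,n}^2\,\E[D/\max\{e,b_n\}^2]$ regression contribution with \Cref{lemma:OrthogonalizeSquaredPropensities} for the propensity contribution, and both handle the cross term $2\phi_n(\hat\phi_n-\phi_n)$ by Cauchy--Schwarz together with the first claim and \Cref{lemma:PhiLowerBound}. The organization differs. The paper applies Cauchy--Schwarz \emph{empirically}, writing $|P(n)_n[\hat\phi_n^2-\phi_n^2]|\leq P(n)_n[(\hat\phi_n-\phi_n)^2]+2\sqrt{P(n)_n[(\hat\phi_n-\phi_n)^2]\,P(n)_n[\phi_n^2]}$, then invokes \Cref{lemma:oracle_variance_consistency} for $P(n)_n[\phi_n^2]$ and a law-of-large-numbers step to replace $P(n)_n[(\hat\phi_n-\phi_n)^2]$ by its conditional mean; it also uses the alternate decomposition $\hat\phi_n-\phi_n=(\hat\mu-\mu)(1-D/\max\{e,b_n\})+(Y-\hat\mu)(D/\max\{\hat e,b_n\}-D/\max\{e,b_n\})$, which places the \emph{true} propensity in the regression factor and so avoids your good-event reduction there. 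Your three-term add--subtract instead isolates the fluctuation of $\hat\phi_n^2$ and controls it by rerunning the von Bahr--Esseen calculation of \Cref{lemma:oracle_variance_consistency} conditionally on $\hat\eta$; this is a bit more explicit than the paper's ``SLLN'' step and makes the triangular-array nature of the fluctuation argument transparent, at the cost of needing the good-event comparison $\max\{\hat e,b_n\}\geq\max\{e,b_n\}/2$ to port \Cref{lemma:useful_inequalities}\ref{item:qth_moment_bound} over to $\hat\phi_n$.
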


\begin{proof}[Proof of \Cref{lemma:SecondMomentsSame}]
    First, note that:
    \begin{align*}
        \E_{P(n)} [\phi_n^2] & = Var_{P(n)}(\phi_n) + \E_{P(n)}[\phi_n]^2 = Var_{P(n)}(\phi_n) + O(1)^2. \tag{\Cref{def:AllowedDistributions}\ref{def:ConditionalMoments}} 
    \end{align*}

    Next, I show that $P(n)_n\left[ \hat{\phi}_n^2 - \phi_n^2 \right] = o_{P(n)}\left( \E_{P(n)}\left[ D / \max\{e(X), b_n\}^2 \right] \right)$. I have:
    \begin{align*}
        \left| P(n)_n\left[ \hat{\phi}_n^2 - \phi_n^2 \right] \right| & = P(n)_n [( \hat{\phi}_n - \phi_n )^2 ] +  \left| P(n)_n\left[ 2 (\hat{\phi}_n - \phi_n) \phi_n \right] \right| \\
        & \leq P(n)_n [( \hat{\phi}_n - \phi_n )^2 ] + 2 \sqrt{P(n)_n \left[ (\hat{\phi}_n - \phi_n)^2 \right] P(n)_n \left[ \phi_n^2 \right]  } \tag{Cauchy-Schwarz} \\
        & = P(n)_n [( \hat{\phi}_n - \phi_n )^2 ]  + \sqrt{P(n)_n \left[ (\hat{\phi}_n - \phi_n)^2 \right] O_{P(n)}\left( \E_{P(n)} [\phi_n^2] \right) } \tag{\Cref{lemma:oracle_variance_consistency}} \\
        & =  \E_{P(n)} [( \hat{\phi}_n - \phi_n )^2 ] + \sqrt{ \E_{P(n)} \left[ (\hat{\phi}_n - \phi_n)^2 \right] O_{P(n)}\left( \E_{P(n)} [\phi_n^2] \right)  }  \\
         &  \ \ + o_{P(n)}\left( \sqrt{\E_{P(n)} [\phi_n^2]} \right).  \tag{SLLN}
    \end{align*}
    
    I decompose the nuisance error as:
    \begin{align*}
        ( \hat{\phi}_n - \phi_n )^2 & \leq \left( |\hat{\mu} - \mu| \left| 1 - \frac{D}{\max\{e, b_n\}} \right| + |Y - \hat{\mu}| \left| \frac{D}{\max\{ \hat{e}, b_n \}} - \frac{D}{\max\{e, b_n\}}  \right| \right)^2 \tag{Tri. Ineq.} \\
        & \leq 4 \left( r_{\mu,n}^2 \left( 1 + \frac{D}{\max\{e, b_n\}^2} \right) + \left( (Y - \mu)^2 + o_{P(n)}(r_{\mu,n}) \right) \left( \frac{D}{\max\{ \hat{e}, b_n \}} - \frac{D}{\max\{e, b_n\}} \right)^2  \right) \\
        & \precsim r_{\mu,n}^2 \frac{D}{\max\{e, b_n\}^2} + \left( \frac{D}{\max\{ \hat{e}, b_n \}} - \frac{D}{\max\{e, b_n\}} \right)^2 + o_{P(n)}(1) \\
        & = o_{P(n)}(1) \frac{D}{\max\{e, b_n\}^2} \E_{P(n)}\left[ \frac{D}{\max\{e, b_n\}^2} \right] + \left( \frac{D}{\max\{ \hat{e}, b_n \}} - \frac{D}{\max\{e, b_n\}} \right)^2 + o_{P(n)}(1) \tag{$r_{\mu,n} \to 0$, \Cref{lemma:BoundAwayFromZero}} \\
         \E_{P(n)} \left[ (\hat{\phi}_n - \phi_n)^2 \right] & = o_{P(n)}\left( \E_{P(n)} \left[ \frac{D}{\max\{e, b_n\}^2} \right] \right)  \\
         & \ \ + \E_{P(n)}\left[ \left( \frac{D}{\max\{ \hat{e}, b_n \}} - \frac{D}{\max\{e, b_n\}} \right)^2 \right] + o_{P(n)}(1) \\
          & = o_{P(n)}\left( \E_{P(n)} \left[ \frac{D}{\max\{e, b_n\}^2} \right] \right) + o_{P(n)}(1). \tag{\Cref{lemma:OrthogonalizeSquaredPropensities}}  \\
          & = o_{P(n)}\left( Var_{P(n)}(\phi_n) \right) + o_{P(n)}(1). \tag{\Cref{lemma:PhiLowerBound}}
    \end{align*}

    As a result:
    \begin{align*}
        \left| P(n)_n\left[ \hat{\phi}_n^2 - \phi_n^2 \right] \right| & = \sqrt{ o_{P(n)}\left( Var_{P(n)}(\phi_n) \right) O_{P(n)}\left( \E_{P(n)} [\phi_n^2] \right)  } + o_{P(n)}\left( Var_{P(n)}(\phi_n) \right)  + o_{P(n)}\left( \sqrt{\E_{P(n)} [\phi_n^2]} \right) \\
        & = o_{P(n)}\left( Var_{P(n)}(\phi_n) + \sqrt{Var_{P(n)}(\phi_n)} \right) = o_{P(n)}\left( Var_{P(n)}(\phi_n)  \right). \tag{\Cref{lemma:BoundAwayFromZero}}
    \end{align*}
\end{proof}

\begin{lemma}[Estimated variance consistency] \label{lemma:estimated_variance_consistency}
    Suppose the conditions of \Cref{prop:Consistency} and \Cref{lemma:OrthogonalizeSquaredPropensities} hold. Let $P(n)$ be a sequence of distributions in $\mathscr{P}$. Then $\hat{\sigma}_n^2 / \sigma_n^2 \xrightarrow{\mathscr{P}} 1$. 
\end{lemma}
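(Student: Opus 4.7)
The plan is to show $n(\hat{\sigma}_n^2 - \sigma_n^2) = o_{P(n)}(\Var_{P(n)}(\phi_n))$ and then divide by $n \sigma_n^2$, invoking \Cref{lemma:oracle_variance_consistency} to replace $n \sigma_n^2$ by $\Var_{P(n)}(\phi_n)$ up to $1 + o_{P(n)}(1)$. Writing out the sample variance,
\begin{align*}
    n \hat{\sigma}_n^2 - n \sigma_n^2 & = P(n)_n\left[ \hat{\phi}_n^2 - \phi_n^2 \right] - \left( P(n)_n[\hat{\phi}_n]^2 - P(n)_n[\phi_n]^2 \right),
\end{align*}
so it suffices to show each of the two bracketed terms is $o_{P(n)}(\Var_{P(n)}(\phi_n))$.

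For the second moment difference, the conclusion is immediate from \Cref{lemma:SecondMomentsSame}. For the squared sample means, first note that by Cauchy-Schwarz and the $L_2$ bound on $\hat{\phi}_n - \phi_n$ established inside the proof of \Cref{lemma:SecondMomentsSame},
\begin{align*}
    \left| P(n)_n[\hat{\phi}_n - \phi_n] \right| & \leq \sqrt{ P(n)_n\left[ (\hat{\phi}_n - \phi_n)^2 \right] } = o_{P(n)}\left( \sqrt{\Var_{P(n)}(\phi_n)} \right),
\end{align*}
after passing from sample to population average via the Chebyshev-type control already used in \Cref{lemma:SecondMomentsSame}. Combining this with \Cref{lemma:oracle_consistency} (which yields $P(n)_n[\phi_n] = O_{P(n)}(1)$) and with $|E_{P(n)}[\mu(X)]| \leq M$, the factor $P(n)_n[\hat{\phi}_n] + P(n)_n[\phi_n]$ is $O_{P(n)}(1)$, so
\begin{align*}
    \left| P(n)_n[\hat{\phi}_n]^2 - P(n)_n[\phi_n]^2 \right| & = \left| P(n)_n[\hat{\phi}_n - \phi_n] \right| \cdot O_{P(n)}(1) = o_{P(n)}\left( \sqrt{\Var_{P(n)}(\phi_n)} \right).
\end{align*}

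Since $\Var_{P(n)}(\phi_n) \geq \sigma_{\min}^2 \pi_{\min}/2 > 0$ by \Cref{lemma:PhiLowerBound}, the bound $o_{P(n)}(\sqrt{\Var_{P(n)}(\phi_n)})$ is also $o_{P(n)}(\Var_{P(n)}(\phi_n))$. Combining the two bounds gives $n(\hat{\sigma}_n^2 - \sigma_n^2) = o_{P(n)}(\Var_{P(n)}(\phi_n))$, and dividing through by $n \sigma_n^2 = (1 + o_{P(n)}(1)) \Var_{P(n)}(\phi_n)$ (\Cref{lemma:oracle_variance_consistency}) gives $\hat{\sigma}_n^2 / \sigma_n^2 - 1 = o_{P(n)}(1)$ uniformly over sequences in $\mathscr{P}$. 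The only subtle step is propagating the $L_2$ control on $\hat{\phi}_n - \phi_n$ from population to sample, but this is exactly what is done inside the proof of \Cref{lemma:SecondMomentsSame} via Chebyshev's inequality and the moment bounds in \Cref{lemma:useful_inequalities}, so no new technical ingredients are needed.
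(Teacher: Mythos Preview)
Your proof is correct and follows the same decomposition as the paper: split $n(\hat{\sigma}_n^2-\sigma_n^2)$ into the second-moment difference (handled by \Cref{lemma:SecondMomentsSame}) and the squared-means difference, then divide by $n\sigma_n^2$ via \Cref{lemma:oracle_variance_consistency}.

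One small wobble: your claim that $P(n)_n[\hat{\phi}_n]+P(n)_n[\phi_n]=O_{P(n)}(1)$ is not justified by the ingredients you cite. Under very weak overlap $\Var_{P(n)}(\phi_n)\to\infty$, so the bound $|P(n)_n[\hat{\phi}_n-\phi_n]|=o_{P(n)}(\sqrt{\Var_{P(n)}(\phi_n)})$ combined with $P(n)_n[\phi_n]=O_{P(n)}(1)$ only gives $P(n)_n[\hat{\phi}_n]=O_{P(n)}(\sqrt{\Var_{P(n)}(\phi_n)})$, not $O_{P(n)}(1)$. The clean fix---and what the paper does---is to invoke \Cref{prop:Consistency} directly (it is among the hypotheses), which gives $P(n)_n[\hat{\phi}_n]=\hat{\psi}_{clip}^{AIPW}(b_n)\to\psi$ and hence $O_{P(n)}(1)$. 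Alternatively, note that even with the weaker bound on the sum, the product is $o_{P(n)}(\sqrt{\Var_{P(n)}(\phi_n)})\cdot O_{P(n)}(\sqrt{\Var_{P(n)}(\phi_n)})=o_{P(n)}(\Var_{P(n)}(\phi_n))$ directly, so your final conclusion is unaffected either way.
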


\begin{proof}[Proof of \Cref{lemma:estimated_variance_consistency}]
    The conditions of \Cref{prop:Consistency} hold, so the conditions of \Cref{prop:ConsistencyRates} hold as well. The conditions of \Cref{lemma:OrthogonalizeSquaredPropensities} hold, so the conditions of \Cref{lemma:SecondMomentsSame} hold as well.

    Recall the definition $\bar{\sigma}_n^2 = n^{-1} \Var_{P(n)}(\phi_n)$. 

    Let $\sigma_n^2 = n^{-1} (P(n)_n[\phi_n^2] - P(n)_n[\phi_n]^2)$ be the oracle sample variance. By  \Cref{lemma:oracle_variance_consistency}, $\sigma_n^2 / \bar{\sigma}_n^2 \xrightarrow{\mathscr{P}} 1$. Therefore it suffices to show that $( \hat{\sigma}_n^2 - \sigma_n^2) / \bar{\sigma}_n^2 = \frac{P(n)_n[ \hat{\phi}_n^2 ] -  P(n)_n[ \phi_n^2 ] - \hat{\psi}_{clip}^{AIPW}(b_n)^2 + \tilde{\psi}_{(Oracle)}^{AIPW}(b_n)^2}{Var_{P(n)}(\phi_n)} \xrightarrow{\mathscr{P}} 0$.

    Note that by \Cref{prop:ConsistencyRates}, $\E_{P(n)}\left[ D / \max\{e(X), b_n\}^2 \right] = \Theta( n \bar{\sigma}_n^2 ) = \Theta_{P(n)}\left( \Var_{P(n)}(\phi_n) \right)$.

    By the triangle inequality: 
    \begin{align*}
        \left| \frac{\hat{\sigma}_n^2 - \sigma_n^2}{\bar{\sigma}_n^2} \right| & \leq \left| \frac{P(n)_n\left[ \hat{\phi}_n^2 - \phi_n^2 \right]}{Var_{P(n)}(\phi_n)} \right| + \left| \frac{\hat{\psi}_{clip}^{AIPW}(b_n)^2 - \tilde{\psi}_{(Oracle)}^{AIPW}(b_n)^2}{Var_{P(n)}(\phi_n)} \right| \\
        & \precsim \left| P(n)_n\left[ \hat{\phi}_n^2 - \phi_n^2 \right] \right| O\left( \frac{1}{\E_{P(n)}\left[ D / \max\{e(X), b_n\}^2 \right]} \right) \tag{\Cref{lemma:SigmaNForm}} \\
        & +  O_{P(n)}\left( \left| \hat{\psi}_{clip}^{AIPW}(b_n) - \tilde{\psi}_{(Oracle)}^{AIPW}(b_n) \right| \right) \tag{\Cref{lemma:PhiLowerBound}} \\
        & = \left| P(n)_n\left[ \hat{\phi}_n^2 - \phi_n^2 \right] \right| O_{P(n)}\left( \frac{1}{\E_{P(n)}\left[ D / \max\{e(X), b_n\}^2 \right]} \right) + o_{P(n)}(1) \tag{\Cref{prop:Consistency}} \\
        & = \left| P(n)_n\left[ \hat{\phi}_n^2 - \phi_n^2 \right] \right| O_{P(n)}\left( 1 / Var_{P(n)}(\phi_n) \right) + o_{P(n)}(1) \tag{\Cref{prop:ConsistencyRates}} \\
        & = o_{P(n)}(Var_{P(n)}(\phi_n)) O_{P(n)}\left( 1 / Var_{P(n)}(\phi_n) \right) + o_{P(n)}(1) \tag{\Cref{lemma:SecondMomentsSame}} \\
        & = o_{P(n)}(1).
    \end{align*}
    Therefore $( \hat{\sigma}_n^2 - \sigma_n^2) / \bar{\sigma}_n^2 \to_{P(n)} 0$. By \Cref{lemma:oracle_variance_consistency}, $\sigma_n^2 / \bar{\sigma}_n^2 \to_{P(n)} 1$. As a result, $( \hat{\sigma}_n^2 - \bar{\sigma}_n^2) / \bar{\sigma}_n^2 \to_{P(n)} 0$ and $\hat{\sigma}_n^2 / \bar{\sigma}_n^2 \xrightarrow{\mathscr{P}} 1$.
\end{proof}

\begin{lemma}\label{lemma:OracleSecondOrder}
    Suppose the conditions of \Cref{thm:SecondOrderNuisancesInterior} hold. Then $\sigma_n^{-1} \left( \hat{\psi}_{clip}^{AIPW}(b_n) - \tilde{\psi}_{(Oracle)}^{AIPW}(b_n) \right)  = o_{P(n)}(1)$.
\end{lemma}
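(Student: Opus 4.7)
The plan is to condition on the nuisance estimates (made independent of the summands by cross-fitting) and then decompose the difference $\hat{\psi}_{clip}^{AIPW}(b_n) - \tilde{\psi}_{(Oracle)}^{AIPW}(b_n)$ fold-by-fold into a conditional bias piece and a centered noise piece, showing each is $o_{P(n)}(\sigma_n)$. By \Cref{cor:WorstCaseRates} it suffices to work under \Cref{assum:RateRequirementsForInference}, and by \Cref{prop:ConsistencyRates,lemma:SigmaNForm} I may freely replace $\sigma_n$ with $n^{-1/2}\sqrt{\E_{P(n)}[D/\max\{e,b_n\}^2]}$ up to constants.

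For the conditional bias, \Cref{lemma:BiasBound} yields
$$\bigl|\E[\hat{\psi}_{clip}^{AIPW}(b_n)-\psi_n \mid \hat\mu,\hat e]\bigr| \leq r_{\mu,n}\, P(n)(e(X)\leq b_n+r_{e,n}) + r_{\mu,n} r_{e,n}\, \E\!\left[\tfrac{\mathbf{1}\{e>b_n+r_{e,n}\}}{e-r_{e,n}}\right].$$
Since $r_{e,n}\ll b_n$ by \Cref{assum:RateRequirementsForInference}, the first term is, up to constants, $r_{\mu,n} P(n)(e(X)\leq b_n)$, which is $o(n^{-1/2}\sqrt{\E_{P(n)}[D/\max\{e,b_n\}^2]})$ by \Cref{lemma:RateRequirementsFulfilled}. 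For the second term, the integrand satisfies $\E[\mathbf{1}\{e>b_n+r_{e,n}\}/(e-r_{e,n})]\precsim \E[D/\max\{e,b_n\}^2]^{1/2}\cdot\E[D/\max\{e,b_n\}^2]^{1/2}$ via Cauchy-Schwarz against the constant $1$, after using $e-r_{e,n}\geq e/2$ for $e\geq b_n+r_{e,n}$; more directly, one checks $\E[\mathbf{1}\{e>b_n\}/e]\leq \sqrt{\E[D/\max\{e,b_n\}^2]}$ by Jensen, so the second bias term is at most $r_{\mu,n}r_{e,n}\sqrt{\E[D/\max\{e,b_n\}^2]}=o(n^{-1/2})$ by \Cref{assum:RateRequirementsForInference}\ref{assum:prodOfErrorsOriginal}. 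The oracle bias is zero, so the conditional mean of $\hat{\psi}_{clip}^{AIPW}(b_n)-\tilde{\psi}_{(Oracle)}^{AIPW}(b_n)$ is $o_{P(n)}(\sigma_n)$.

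For the centered piece, within a fold let $\Delta_i = \phi(Z_i\mid b_n,\hat{\eta}^{(-k)})-\phi(Z_i\mid b_n,\eta)$. Conditional on the nuisance estimates, these are i.i.d.\ across $i\in\mathcal{F}^k$, so it is enough to show $n^{-1}\Var(\Delta_i\mid\hat\eta^{(-k)})=o_{P(n)}(\sigma_n^2)$ and apply Chebyshev. I decompose
\begin{align*}
\Delta_i \;=\; (\hat{\mu}-\mu)\!\left(1-\tfrac{D}{\max\{\hat e,b_n\}}\right) + (Y-\mu)\!\left(\tfrac{D}{\max\{\hat e,b_n\}}-\tfrac{D}{\max\{e,b_n\}}\right).
\end{align*}
The first summand, bounded in absolute value by $r_{\mu,n}(1+D/\max\{e-r_{e,n},b_n\})\precsim r_{\mu,n}(1+D/\max\{e,b_n\})$, has second moment of order $r_{\mu,n}^2\,\E_{P(n)}[D/\max\{e,b_n\}^2]$, which divided by $n$ is $o(\sigma_n^2)$ by \Cref{assum:RateRequirementsForInference}\ref{assum:prodOfErrorsOriginal} because $r_{\mu,n}\to 0$ and \Cref{lemma:BoundAwayFromZero} gives $\E[D/\max\{e,b_n\}^2]\succsim 1$. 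The second summand has second moment bounded via $\Var(Y\mid X,D)\leq M^2$ and \Cref{lemma:OrthogonalizeSquaredPropensities} by $M^2\cdot o_{P(n)}(\E_{P(n)}[D/\max\{e,b_n\}^2])$, which divided by $n$ is $o_{P(n)}(\sigma_n^2)$.

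The main obstacle is the second summand, since near the clipping boundary the inverse propensity estimates can differ by a factor comparable to $b_n^{-1}$; this is precisely where the asymptotically-known-thresholding condition $r_{e,n}\ll b_n$ is essential, and it is handled by the algebraic identity exploited in the proof of \Cref{lemma:OrthogonalizeSquaredPropensities}, which converts the squared inverse-propensity difference into a telescoping expression whose leading order is $r_{e,n}/b_n$ times the oracle second moment. Combining the conditional-bias and conditional-variance bounds and taking total expectations via tower gives $\sigma_n^{-2}\E_{P(n)}[(\hat{\psi}_{clip}^{AIPW}(b_n)-\tilde{\psi}_{(Oracle)}^{AIPW}(b_n))^2]=o(1)$, and Chebyshev yields $o_{P(n)}(1)$ convergence of the studentized difference.
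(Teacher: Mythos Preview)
Your overall architecture matches the paper's proof exactly: condition on cross-fit nuisances, split into conditional bias (via \Cref{lemma:BiasBound}) and conditional variance (via \Cref{lemma:OrthogonalizeSquaredPropensities} and the $(\hat\phi-\phi)^2$ bound), then apply Chebyshev. The variance piece is correct.

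There is a genuine error in your treatment of the second bias term. You claim ``one checks $\E[\mathbf{1}\{e>b_n\}/e]\leq\sqrt{\E[D/\max\{e,b_n\}^2]}$ by Jensen,'' but this is false under very weak overlap. For the canonical example $P(e\leq\pi)=\pi^{\gamma_0-1}$ with $\gamma_0<2$, a direct computation gives $\E[\mathbf{1}\{e>b_n\}/e]\asymp b_n^{\gamma_0-2}$ while $\sqrt{\E[D/\max\{e,b_n\}^2]}\asymp b_n^{(\gamma_0-2)/2}$, and the former is strictly larger as $b_n\to 0$. Consequently your conclusion that the second bias term is $\leq r_{\mu,n}r_{e,n}\sqrt{\E[D/\max\{e,b_n\}^2]}=o(n^{-1/2})$ does not follow.

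The correct route, which the paper takes and which your first (oddly phrased) Cauchy--Schwarz sentence in fact delivers, is to bound $\E[\mathbf{1}\{e>b_n+r_{e,n}\}/(e-r_{e,n})]\precsim \E[\mathbf{1}\{e>b_n\}/e]\leq \E[D/\max\{e,b_n\}^2]$ (no square root). This gives a bias of order $r_{\mu,n}r_{e,n}\E[D/\max\{e,b_n\}^2]$, which is \emph{not} $o(n^{-1/2})$ in general; the square root enters only after dividing by $\sigma_n$, since \Cref{prop:ConsistencyRates} and \Cref{lemma:SigmaNForm} give $\sigma_n^{-1}\precsim n^{1/2}\big/\sqrt{\E[D/\max\{e,b_n\}^2]}$. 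The quotient is then $n^{1/2}r_{\mu,n}r_{e,n}\sqrt{\E[D/\max\{e,b_n\}^2]}=o(1)$ by \Cref{assum:RateRequirementsForInference}\ref{assum:prodOfErrorsOriginal}. Once you make this fix, your proof coincides with the paper's.
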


\begin{proof}[Proof of \Cref{lemma:OracleSecondOrder}]  
    I write $k(i)$ for observation $i$'s fold and $n_k$ for the number of observations in fold $k$. Then the oracle and clipped AIPW estimators are:
    \begin{align*}
        \tilde{\psi}_{(Oracle)}^{AIPW}(b_n) & = \frac{1}{n} \sum_{i=1}^n \phi(Z_i \mid b_n, \eta) = \sum_k \frac{n_k}{n} \overbrace{\frac{1}{n_k} \sum_{i : k(i) = k} \phi(Z_i \mid b_n, \eta)}^{``\tilde{\psi}_{clip}^{AIPW,(k)}(b_n)"} \\ 
        \hat{\psi}_{clip}^{AIPW}(b_n) & = \frac{1}{n} \sum_{i=1}^n \phi(Z_i \mid b_n, \hat{\eta}^{(-k)}) = \sum_k \frac{n_k}{n} \underbrace{\frac{1}{n_k} \sum_{i : k(i) = k} \phi(Z_i \mid b_n, \hat{\eta}^{(-k)})}_{``\hat{\psi}_{clip}^{AIPW,(k)}(b_n)"}. 
    \end{align*}
    I write $\hat{r}_k \equiv \sigma_n^{-1} \left( \tilde{\psi}_{clip}^{AIPW,(k)}(b_n) - \hat{\psi}_{clip}^{AIPW,(k)}(b_n) \right)$. I wish to show that $\sum_k \frac{n_k}{n} \hat{r}_k = o_{P(n)}(1)$. I consider an arbitrary $k$ and quantify the bias and variance of $\hat{r}_k$ given the data and nuisance estimates from the other folds $-k$. 

    I write the standard decomposition:
    \begin{align*}
        \hat{r}_k & = \sigma_n^{-1} P_n \left[ (\hat{\mu} - \mu) \left( 1 - \frac{D}{\max\{\hat{e}, b_n\}} \right) +  (\mu - Y) \left( \frac{D}{\max\{e, b_n\}} - \frac{D}{\max\{\hat{e}, b_n\}} \right) \right]. 
    \end{align*}

    By cross-fitting, the bias satisfies $\E[ \hat{r}_k \mid \hat{\eta}^{(-k)}] = \sigma_n^{-1} \E\left[ (\hat{\mu} - \mu)\frac{\max\{\hat{e}, b_n\} - e}{\max\{\hat{e}, b_n\}} \right]$. I now bound this term. 
    \begin{align*}
        \left| \E[ \hat{r}_k \mid \hat{e}^{(-k)}, \hat{\mu}^{(-k)}] \right| & \leq \sigma_n^{-1} r_{\mu,n} P(n)( e(X) \leq b_n + r_{e,n})  \\
        & \ \ + \sigma_n^{-1} r_{\mu,n} r_{e,n} \E_{P(n)}\left[ \frac{1}{e-r_{e,n}} \mathbf{1}\{ e > b_n + r_{e,n}\} \right] \tag{\Cref{lemma:BiasBound}} \\
        & \leq o_{P(n)(1)}   + n^{1/2} r_{\mu,n} r_{e,n} \frac{\E\left[ D / \max\{e, b_n\}^2 \right]}{\sqrt{\E\left[ D / \max\{e, b_n\}^2 \right]}}  \tag{\Cref{lemma:RateRequirementsFulfilled} + \Cref{prop:ConsistencyRates}} \\
        & = o_{P(n)}(1). \tag{\Cref{assum:RateRequirementsForInference}\ref{assum:prodOfErrorsOriginal}}
    \end{align*}

    Next, I show that $Var_{P(n)}\left( \hat{\psi}_{clip}^{AIPW}(b_n) - \tilde{\psi}_{(Oracle)}^{AIPW}(b_n) \right) = o_{P(n)}(\bar{\sigma}_n^2)$, where $\bar{\sigma}_n^2 = n^{-1} \Var_{P(n)}(\phi_n)$. . Consider the estimates in one fold $k$, with the nuisances from other folds fixed. For the estimates in that fold:
    \begin{align*}
        Var_{P(n)}\left( \hat{\psi}_{clip}^{AIPW}(b_n) - \tilde{\psi}_{(Oracle)}^{AIPW}(b_n) \right) & = Var\left( P(n)_n \hat{\phi} - \phi \right) = n^{-1} \E_{P(n)}\left[ (\hat{\phi} - \phi)^2 - \E_{P(n)}E[ \hat{\phi} - \phi ]^2 \right] \\ 
        & = n^{-1} \E_{P(n)}\left[ (\hat{\phi} - \phi)^2 - o_{P(n)}( 1 ) \right] \tag{Bias} \\
        & = n^{-1} o_{P(n)}\left(  Var_{P(n)}(\phi_n) \right) + o_{P(n)}(1) \tag{\Cref{lemma:SecondMomentsSame}} \\
        & = n^{-1} o_{P(n)}\left(  \E_{P(n)}[D / \max\{e(X), b_n\}^2]^2 \right) + o_{P(n)}(1) \tag{\Cref{lemma:useful_inequalities}.\ref{item:centered_qth_moment_bound}} \\
        & = o_{P(n)}(\bar{\sigma}_n^2 + 1) \tag{\Cref{lemma:SigmaNForm}} \\
        & = o_{P(n)}\left( \frac{E_{P(n)}[D / \max\{e(X), b_n\}^2]}{n} + 1 \right) \tag{\Cref{prop:ConsistencyRates}} \\
        & = o_{P(n)}(\bar{\sigma}_n^2).\tag{$b_n \gg n^{-1/2}$}
    \end{align*}

   As a result:
   \begin{align*}
       E\left[ \hat{r}_k^2 \mid \hat{\eta}^{(-k)} \right] & = E\left[ \hat{r}_k \mid \hat{\eta}^{(-k)} \right]^2 + Var\left( \hat{r}_k \mid \hat{\eta}^{(-k)} \right) = o_{P(n)}(1) \\
       \hat{r}_k & = o_{P(n)}(1) \\
       \left| \sigma_n^{-1} \left( \hat{\psi}_{clip}^{AIPW}(b_n) - \tilde{\psi}_{(Oracle)}^{AIPW}(b_n) \right)  = o_{P(n)}(1)  \right| & \leq \sum_k \frac{n_k}{n} | \hat{r}_k | = \sum_k \frac{n_k}{n} o_{P(n)}(1) = o_{P(n)}(1). 
   \end{align*}
   Finally, I am ready to prove the central claims of this work. 
\end{proof}

\begin{proof}[Proof of \Cref{thm:SecondOrderNuisancesInterior}]
    By \Cref{lemma:OracleSecondOrder}, $\sigma_n^{-1} \left( \hat{\psi}_{clip}^{AIPW}(b_n) - \tilde{\psi}_{(Oracle)}^{AIPW}(b_n) \right)  = o_{P(n)}(1)$. Therefore, by \Cref{thm:OracleAsympNormal}, $\sigma_n^{-1} \left(\hat{\psi}_{clip}^{AIPW}(b_n) - \psi_n  \right) = \sigma_n^{-1} \left(\hat{\psi}_{clip}^{AIPW}(b_n) - \tilde{\psi}_{(Oracle)}^{AIPW}(b_n) \right) + \sigma_n^{-1} \left( \tilde{\psi}_{(Oracle)}^{AIPW}(b_n) - \psi_n  \right) \overset{P(n)}{\rightsquigarrow} N(0, 1)$. 
\end{proof}

\begin{proof}[Proof of \Cref{thm:SecondOrderNuisances}]
For either claim, let $P(n)$ be a sequence of distributions $P$ in the relevant set. Note that in either case, the assumptions of \Cref{thm:SecondOrderNuisancesInterior} hold by \Cref{cor:WorstCaseRates}. Therefore, by \Cref{thm:SecondOrderNuisancesInterior},
\begin{align*}
    \sup_{t \in \R} \left| P(n)_n \left( \frac{\hat{\psi}_{clip}^{AIPW}(b_n) - \psi_n}{\sigma_n} \leq t \right) - \Phi(t) \right| \to 0,
\end{align*}
where $P(n)_n$ denotes the empirical average under distribution $P(n)$ and $\sigma_n$ is defined in \Cref{thm:OracleAsympNormal}. 

Now I expand the empirical t-statistics for any fixed $t$:
\begin{align*}
    P(n)_n \left( \frac{\hat{\psi}_{clip}^{AIPW}(b_n) - \psi_n}{\hat{\sigma}_n} \leq t \right) & = P(n)_n \left( \frac{\hat{\psi}_{clip}^{AIPW}(b_n) - \psi_n}{\sigma_n} \left( \frac{\sigma_n}{\hat{\sigma}_n} \right) \leq t \right) \\ 
     & = P(n)_n \left( \frac{\hat{\psi}_{clip}^{AIPW}(b_n) - \psi_n}{\sigma_n} \leq t \frac{\hat{\sigma}_n}{\sigma_n} \right) \\
     & = P(n)_n \left( \frac{\hat{\psi}_{clip}^{AIPW}(b_n) - \psi_n}{\sigma_n} - t \frac{\hat{\sigma}_n}{\sigma_n} \leq 0 \right) \\
     & \to \Phi(t),
\end{align*}
with the final result holding by Slutsky's theorem. 

Therefore, 
\begin{align*}
    \sup_{t \in \R} \left| P(n)_n \left( \frac{\hat{\psi}_{clip}^{AIPW}(b_n) - \psi_n}{\hat{\sigma}_n} \leq t \right) - \Phi(t) \right| \to 0,
\end{align*}
by properties of a cumulative distribution function. 
\end{proof}

\begin{proof}[Proof of Corollary \ref{cor:TTests}]
    For simplicity of exposition, I prove the result for the class $\mathscr{P}$ under \Cref{assum:NondegenerateOrFaster}\ref{assumption:ConsistencyRatesNotContinuous}: 
    \begin{align*}
        \limsup_{n \to \infty} \sup_{P \in \mathscr{P}} \left| P( \psi(P) \in \hat{\mathcal{C}}_n) - (1-\alpha) \right| & = \limsup_{n \to \infty} \sup_{P \in \mathscr{P}} \left| P^n\left( \frac{\psi(P)  - \hat{\psi}_{clip}^{AIPW}(b_n)}{\hat{\sigma}_n} \in \left[ z_{\frac{\alpha}{2}}, z_{1-\frac{\alpha}{2}} \right] \right) - (1-\alpha) \right| \\
         & = \limsup_{n \to \infty} \sup_{P \in \mathscr{P}} \left| \begin{array}{rl}
             & \left( P^n\left( \frac{\hat{\psi}_{clip}^{AIPW}(b_n) - \psi(P)}{\hat{\sigma}_n} > z_{1-\frac{\alpha}{2}} \right) - \frac{\alpha}{2} \right)  \\
            - & \left( P^n\left( \frac{\hat{\psi}_{clip}^{AIPW}(b_n) - \psi(P)}{\hat{\sigma}_n} > z_{\frac{\alpha}{2}} \right)  - (1 - \frac{\alpha}{2}) \right) 
         \end{array} \right| \\
         & = \limsup_{n \to \infty} \sup_{P \in \mathscr{P}} \left| \begin{array}{rl}
             & \left( P^n\left( \frac{\hat{\psi}_{clip}^{AIPW}(b_n) - \psi(P)}{\hat{\sigma}_n} < z_{1-\frac{\alpha}{2}} \right) - (1-\frac{\alpha}{2}) \right)  \\
            - & \left( P^n\left( \frac{\hat{\psi}_{clip}^{AIPW}(b_n) - \psi(P)}{\hat{\sigma}_n} < z_{\frac{\alpha}{2}} \right)  - \frac{\alpha}{2} \right) 
         \end{array} \right| \\
         & \leq \limsup_{n \to \infty} \sup_{P \in \mathscr{P}} \left| P^n\left( \frac{\hat{\psi}_{clip}^{AIPW}(b_n) - \psi(P)}{\hat{\sigma}_n} < z_{1-\frac{\alpha}{2}} \right) - \Phi(z_{1-\frac{\alpha}{2}}) \right| \\ 
         & + \limsup_{n \to \infty} \sup_{P \in \mathscr{P}} \left| P^n\left( \frac{\hat{\psi}_{clip}^{AIPW}(b_n) - \psi(P)}{\hat{\sigma}_n} < z_{\frac{\alpha}{2}} \right)  - \Phi(z_{\frac{\alpha}{2}}) \right| \\ 
         & = 0. \tag{\Cref{thm:SecondOrderNuisances}}
    \end{align*}
\end{proof}

\begin{proof}[Proof of \Cref{cor:SomewhatWeakTTests}]
    By construction, there is a sequence of $b_n \to 0$ such that $r_{e,n} \ll b_n$, $r_{e,n} \ll b_n \ll \left( n^{-1/2} / r_{\mu,n} \right)^{2 * \min\{ 1 / \gamma_0, \gamma_0 / ( 4 ( \gamma_0 - 1) ) \}}$. For such a $b_n$, \Cref{assum:NuisanceRates}\ref{assum:prodOfErrorsWorstCase} holds by $r_{\mu,n} r_{e,n} \ll n^{-1/2}$ and $\gamma_0 > 2$. Further, \Cref{assum:NondegenerateOrFaster}\ref{assumption:ConsistencyRatesNotContinuous} holds because $(\gamma_0 - 1) 2 / \gamma_0 > 1$.  

    Recall the definition of the oracle clipped AIPW estimator $\tilde{\psi}_{(Oracle)}^{AIPW}(b_n)$ and the feasible estimator $\hat{\psi}_{clip}^{AIPW}(b_n)$. 

    It is clear by the proof of \Cref{prop:SemiparametricBoundFinite}(i) that $\tilde{\psi}_{(Oracle)}^{AIPW}(0)$ achieves the finite semiparametric efficiency bound. I claim that $\hat{\psi}_{clip}^{AIPW}(b_n) - \tilde{\psi}_{(Oracle)}^{AIPW}(0) = o_{P(n)}(n^{-1/2})$, so that the remaining claims will hold by the proofs of \Cref{thm:SecondOrderNuisances} and \Cref{cor:TTests}. By the proof of \Cref{thm:SecondOrderNuisances}, $\hat{\psi}_{clip}^{AIPW}(b_n) - \tilde{\psi}_{(Oracle)}^{AIPW}(b_n) = o_{P(n)}(n^{-1/2})$, so it only remains to show that $\tilde{\psi}_{(Oracle)}^{AIPW}(b_n) - \tilde{\psi}_{(Oracle)}^{AIPW}(0) = o_{P(n)}(n^{-1/2})$.

    By construction, $E_{P(n)}\left[ \tilde{\psi}_{(Oracle)}^{AIPW}(b_n) - \tilde{\psi}_{(Oracle)}^{AIPW}(0) \right] = 0$, so it suffices to show that:
    \begin{align*}
        Var_{P(n)}\left( n^{1/2} \left( \tilde{\psi}_{(Oracle)}^{AIPW}(b_n) - \tilde{\psi}_{(Oracle)}^{AIPW}(0) \right) \right) & = Var_{P(n)}\left( n^{-1/2} \sum  \frac{D (Y - \mu(X))}{\max\{ e(X), b_n \}} - \frac{D  (Y - \mu(X))}{e(X)} \right) \\
         & = Var_{P(n)}\left( n^{-1/2} \sum \frac{D (Y - \mu(X)) (e(X) - b_n) 1\{ e(X) \leq b_n \}}{ b_n e(X)} \right) \\
         & = Var_{P(n)}\left( \frac{D (Y - \mu(X)) (e(X) - b_n) 1\{ e(X) \leq b_n \}}{ b_n e(X)} \right) \\
         & = E_{P(n)}\left[ \left( \frac{D (Y - \mu(X)) (e(X) - b_n) 1\{ e(X) \leq b_n \}}{ b_n e(X)} \right)^2 \right] \\
         & = E_{P(n)}\left[ \frac{(Y - \mu(X))^2 (b_n - \max\{e(X), b_n\})^2 }{ b_n^2 e(X)} \right] \\ 
         & \leq \sigma_{\max}^2 E_{P(n)}\left[ \frac{ 1\{ e(X) \leq b_n \}}{e(X)} \right] \\
         & = \sigma_{\max}^2  \int_0^{\infty} P(n)\left( \frac{ 1\{ e(X) \leq b_n \}}{e(X)} > t \right) d t \\
         & = \sigma_{\max}^2  \int_0^{\infty} P(n)\left( e(X) \leq \min\{ b_n, 1 / t \}  \right) d t \\
         & = \sigma_{\max}^2 \left( \begin{array}{rl}
              & \int_0^{1 / b_n} P(n)\left( e(X) \leq \min\{ b_n, 1 / t \}  \right) d t \\
            +  & \int_{1 / b_n}^{\infty} P(n)\left( e(X) \leq \min\{ b_n, 1 / t \}  \right) d t
         \end{array}   \right) \\ 
         & = \sigma_{\max}^2 \left( \begin{array}{rl}
              & P(n)\left( e(X) \leq b_n  \right) \left( 1 / b_n \right)  \\
            +  & \int_{1 / b_n}^{\infty} P(n)\left( e(X) \leq 1 / t  \right) d t
         \end{array}   \right) \\
         & \leq C \sigma_{\max}^2 \left( b_n^{\gamma_0 - 2}  + \int_{1 / b_n}^{\infty} t^{1-\gamma_0} d t  \right)  \\
         & = C \sigma_{\max}^2 \left( \frac{\gamma_0 - 1}{\gamma_0 - 2} \right) \\
         & = o_{P(n)}(1). 
    \end{align*}
\end{proof}

\subsection{Degradation of Black-Box Nuisance Requirements}\label{proofs:RateThings}

\begin{proof}[Proof of \Cref{prop:NoBetterProduct}]
    Take $P$ to be the distribution that draws $e(X)$ from the CDF $P( e(X) \leq \pi ) = \pi^{\gamma_0 - 1}$ and draws $Y \mid X, D \sim \mathcal{N}(0, 1)$. let $\mathscr{P} = \{ P \}$. Such a family confirms to the requirements of \Cref{def:AllowedDistributions} by construction.

    Take $r_{e,n} = n^{-1/4} / \log(n)$ and $r_{\mu,n} = n^{-1/4} / \log(n)$. Note that for any $b_n \gg r_{e,n}$, $r_{\mu,n} b_n^{\gamma_0 / 2} \gg n^{1/2}$. I claim that this implies the zero-coverage property for a certain set of nuisance estimators satisfying these sup-norm rates.  

    Because $r_{e,n} \ll b_n$, let $n$ be large enough that $b_n > 2 r_{e,n}$ and $(b_n - r_{e,n})^{\gamma_0 - 2} > 2$. Recall that $\gamma_0 < 2$ by assumption, so that I can divide by $2 - \gamma_0$.

    Take the nuisance estimate for the sequence $P(n) = P$ as $\hat{\mu}(X) = \mu(X) - r_{\mu,n}$ and $\hat{e}(X) = e(X) + r_{e,n}$. The bias of the clipped estimator $\hat{\psi}_{clip}^{AIPW}(b_n)$ with $n$ observations is:
    \begin{align*}
        E_{P}\left[ \hat{\psi}_{clip}^{AIPW}(b_n) - \psi(P) \right] & = E\left[ ( \hat{\mu}(X) - \mu(X) ) \left( \frac{D}{\max\{ \hat{e}(X), b_n \}} - 1 \right) \right] = r_{\mu,n} E\left[ \left( 1 - \frac{D}{\max\{ e(X) + r_{e,n}, b_n \}} \right) \right] \\
        & = r_{\mu,n} E\left[ \mathbf{1}\{ e(X) \leq b_n - r_{e,n} \} \frac{b_n - e(X)}{b_n} + \mathbf{1}\{ e(X) > b_n - r_{e,n} \}  \frac{r_{e,n}}{e(X) + r_{e,n}} \right] \\
        & \geq \frac{r_{\mu,n}}{b_n} E\left[ \mathbf{1}\{ e(X) \leq b_n - r_{e,n} \} (b_n - e(X) ) \right] \geq \frac{r_{\mu,n}}{b_n} E\left[ \mathbf{1}\left\{ e(X) \leq \frac{b_n}{2} \right\} (b_n - e(X) ) \right] \\
        & \geq \frac{r_{\mu,n}}{2} E\left[ \mathbf{1}\{ e(X) \leq \frac{b_n}{2} \}  \right] = r_{\mu,n} b_n^{\gamma_0 - 1} 2^{\gamma_0 - 2}.
    \end{align*}
    It is convenient to write $B_n = \frac{E_{P}\left[ \hat{\psi}_{clip}^{AIPW}(b_n) - \psi(P) \right]}{\sigma_n}$ for this proof. 

    By the proof of \Cref{cor:WorstCaseConsistency}, there is a $C^{-1} > 0$ such that $\sigma_n \geq C n^{-1/2} b_n^{\gamma_0 / 2 - 1}$ for all $n$ large enough. For such $n$:
    \begin{align*}
        B_n & = \frac{E_{P}\left[ \hat{\psi}_{clip}^{AIPW}(b_n) - \psi(P) \right]}{\sigma_n} \geq C 2^{\gamma_0 - 2} n^{1/2} r_{\mu,n} b_n^{\gamma_0 - 1} b_n^{1 - \gamma_0 / 2} = C 2^{\gamma_0 - 2} n^{1/2} r_{\mu,n} b_n^{\gamma_0 / 2} \geq C 2^{\gamma_0 - 2} n^{1/2} r_{\mu,n} r_{e,n}^{\gamma_0 / 2} \to \infty. 
    \end{align*}

    Note also that:
    \begin{align*}
        Var\left( \hat{\psi}_{clip}^{AIPW}(b_n) \right) & = \frac{1}{n} Var\left( \mu +  \frac{D}{\max\{ \hat{e}, b_n \}} ( Y - \mu ) + r_{\mu,n} \mathbf{1}\{ e \geq b_n, X \in \mathcal{X}^Q \} \} \left(1 - \frac{D}{e + r_{e,n}} \right) \right) \\
         & = \frac{1}{n} Var\left( \mu +  \frac{D}{\max\{ \hat{e}, b_n \}} ( Y - \mu ) \right) \\
         & \ \ + r_{\mu,n}  \frac{1}{n} Var\left( \mathbf{1}\{ e \geq b_n, X \in \mathcal{X}^Q \} \} \left(1 - \frac{D}{e + r_{e,n}} \right) \right) \\
         & \leq \sigma_n^2 + \frac{\epsilon r_{\mu,n}}{n} \left( Var_Q\left( \frac{r_{e,n} \mathbf{1}\{ e \geq b_n \} }{e + r_{e,n}} \right) + E_Q\left[ \frac{e (1-e) \mathbf{1}\{ e \geq b_n \}}{ (e + r_{e,n})^2 } \right] \right) \\
         & \leq \sigma_n^2 + \frac{\epsilon r_{\mu,n}}{n} \left(  E_Q\left[ \frac{\mathbf{1}\{ e \geq b_n \} \left( r_{e,n}^2  + e \right)}{ e^2 } \right] \right) \\
         & \leq \sigma_n^2 + \frac{2 \epsilon r_{\mu,n}}{n} E_Q\left[ \mathbf{1}\{ e \geq b_n \} e^{-1} \right] = \sigma_n^2 + \frac{2 C \epsilon r_{\mu,n} (\gamma_0 - 1)}{n} \int_{b_n}^{1} t^{\gamma_0 - 3} d t \\ 
         & = \sigma_n^2 + \frac{2 C \epsilon r_{\mu,n} (\gamma_0 - 1)}{n (2 - \gamma_0)} \int_{b_n}^{1} (b_n^{\gamma_0 - 2} - 1) \leq \sigma_n^2 + r_{\mu,n} \frac{2 C \epsilon (\gamma_0 - 1)}{2 - \gamma_0} \frac{b_n^{\gamma_0 - 2}}{n} \\
         & = \sigma_n^2 + o(\sigma_n^2). \tag{Proof of \Cref{cor:WorstCaseConsistency}}
    \end{align*}

    Next, I show that the conditions of \Cref{lemma:estimated_variance_consistency} hold. It suffices to show that $n^{-1/2} \ll b_n \ll 1$ and $r_{e,n} \ll b_n$, so $r_{e,n} b_n^{\min\{ \gamma_0 - 2, 0 \}} \ll r_{e,n} / b_n \to 0$. By assumption, $n^{-1/2} \ll r_{e,n} \ll b_n \ll 1$. Therefore \Cref{lemma:estimated_variance_consistency} applies, $\hat{\sigma}_n^2 / Var(\hat{\psi}_{clip}^{AIPW}(b_n)) \to^{P} 1$ and $E\left[ \hat{\psi}_{clip}^{AIPW}(b_n) - \psi( P ) \right] \to^{P} \infty$. As a result, $\frac{E_{P}[\hat{\psi}_{clip}^{AIPW}(b_n) - \psi(P)]}{\hat{\sigma}_n} \to^{P} \infty$ and for any fixed $\alpha \in (0, 1/2)$,  
    \begin{align*}
        P( \psi(P) \in \hat{\mathcal{C}}_n) & = P\left( \frac{\hat{\psi}_{clip}^{AIPW}(b_n) - \psi(P)}{\hat{\sigma}_n} \in [z_{\alpha/2}, z_{1-\alpha/2}] \right) \\
        & = P\left( \frac{\hat{\psi}_{clip}^{AIPW}(b_n) - E\left[ \hat{\psi}_{clip}^{AIPW}(b_n) \right]}{\sigma_n + o_{P}(\sigma_n)} \in \left[ B_n + z_{\alpha/2} + o_{P}(1), B_n + z_{1-\alpha/2} + o_{P}(1) \right] \right) \\
        & = P\left( \frac{O_P(\sigma_n)}{\sigma_n + o_{P}(\sigma_n)} \in \left[ B_n + z_{\alpha/2} + o_{P}(1), B_n + z_{1-\alpha/2} + o_{P}(1) \right] \right) \to^{P}  0,
    \end{align*}
    with the limit holding because $B_n$ tends to infinity. 
\end{proof}

\begin{proof}[Proof of \Cref{ex:StrictOverlap}]
    For simplicity, I proceed assuming $r_{e,n} \gg n^{-1/2}$. By \Cref{thm:SecondOrderNuisances}, it remains to show that there is a $b_n \to 0$ such that \ref{assumption:ConsistencyRatesSufficient}\ref{assum:SmalRegErrorNearSingWorstCaseContinuity} ($r_{\mu,n} b_n^{\gamma_0 / 2} \ll n^{-1/2}$) and \ref{assumption:ConsistencyRatesSufficient}\ref{assum:AsymKnownThresholding} ($r_{e,n} \ll b_n$). 
    
    Because $r_{\mu,n} r_{e,n} \ll n^{-1/2}$, there exists some $\delta_n \to \infty$ such that $r_{\mu,n} r_{e,n} \delta_n \ll n^{-1/2}$.  Choose some $b_n$ such that $1 \gg b_n$ and $r_{e,n} \ll b_n \ll \left( r_{e,n} \delta_n \right)^{2 / \gamma_0}$. This is feasible because $\gamma_0 > 2$, so that $r_{e,n}^{2/\gamma_0} \gg r_{e,n}$ and $\delta_n^{2/\gamma_0} \to \infty$. Then $r_{e,n} \ll b_n$ and $r_{\mu,n} b_n^{\gamma_0 / 2} \ll r_{\mu,n} r_{e,n} \delta_n \ll n^{-1/2}.$ Thus, both conditions hold. 
\end{proof}

\begin{proof}[Proof of \Cref{ex:SecondMomentsBarelyFail}]
    For simplicity, I proceed assuming $r_{e,n} \gg n^{-1/2}$. By \Cref{thm:SecondOrderNuisances}, it remains to show that there is a $b_n \to 0$ such that $r_{\mu,n} r_{e,n} \log(1/b_n) \ll n^{-1/2}$, $r_{\mu,n} b_n \ll n^{-1/2}$, and $r_{e,n} \ll b_n$. 
    
    Because $r_{\mu,n} r_{e,n} \log(1/r_{e,n}) \ll n^{-1/2}$, there exists a $b_n$ such that $r_{e,n} \ll b_n \ll 1$ and $r_{\mu,n} b_n \log(1/b_n) \ll n^{-1/2}$. For this $b_n$, all three conditions hold by inspection. 
\end{proof}

\begin{proof}[Proof of \Cref{ex:SharedRatesVeryWeak}]
    For simplicity, I proceed assuming $r_{e,n}, r_{\mu,n} \gg n^{-1/2}$. 
    
    If $\gamma_0 \geq 2$, the claim holds by \Cref{ex:StrictOverlap} and \Cref{ex:SecondMomentsBarelyFail}.

    Now suppose that $\gamma_0 < 2$.  By \Cref{thm:SecondOrderNuisances}, it remains to show that there is a $b_n \to 0$ such that $r_{\mu,n} r_{e,n} b_n^{(\gamma_0 - 2)/2} \ll n^{-1/2}$, $r_{\mu,n} b_n^{\gamma_0 / 2} \ll n^{-1/2}$, and $r_{e,n} \ll b_n$. Because $r_{e,n} \ll n^{-1/3}$, there exists a $b_n$ such that $n^{-1/3} \gg  b_n \gg \max\{ r_{e,n}, n^{-1} r_{\mu,n}^{-2} \}$. For this $b_n$:
    \begin{align*}
        r_{\mu,n} r_{e,n} b_n^{(\gamma_0 - 2)/2} & \ll r_{\mu,n} r_{e,n} b_n^{-1/2}  \ll r_{\mu,n} r_{e,n} \left( n^{-1} r_{\mu,n}^{-2} \right)^{-1/2}  = n^{1/2} r_{\mu,n}^2 r_{e,n}   \ll n^{-1/2} \\
        r_{\mu,n} b_n^{\gamma_0 / 2} & \ll r_{\mu,n} b_n^{1/2} \ll n^{-1/2},
    \end{align*}
    verifying all conditions. 
\end{proof}

\begin{proof}[Proof of \Cref{ex:ParametricRates}]
    If $\gamma_0 \geq 2$, the claim holds by \Cref{ex:StrictOverlap} and \Cref{ex:SecondMomentsBarelyFail}.

    Now suppose that $\gamma_0 < 2$. By \Cref{thm:SecondOrderNuisances}, it remains to show that there is a $b_n \to 0$ such that $r_{\mu,n} r_{e,n} b_n^{(\gamma_0 - 2)/2} \ll n^{-1/2}$, $r_{\mu,n} b_n^{\gamma_0 / 2} \ll n^{-1/2}$, and $r_{e,n} \ll b_n$.

    (i) $r_{\mu,n} = O(n^{-1/2})$. Take $b_n \to 0$ such that $b_n \gg r_{e,n}$. The first condition holds because $b_n^{(\gamma_0-2)/2} \ll r_{e,n}^{(\gamma_0 - 2)/2} \ll r_{e,n}^{-1/2}$. The second condition holds because $r_{\mu,n} b_n^{\gamma_0 / 2} \ll n^{-1/2}$.

    (ii) $r_{e,n} = O(n^{-1/2})$. There is a $\delta_n \to \infty$ such that $\delta_n^{1/2} \ll n^{-1/4} / r_{\mu,n}$. Take $b_n$ such that $n^{-1/2} \gg b_n \gg n^{-1/2} \delta_n$, so that $r_{e,n} \ll b_n$. Then:
    \begin{align*}
        r_{\mu,n} r_{e,n} b_n^{(\gamma_0 - 2)/2} & \ll r_{\mu,n} r_{e,n} b_n^{-1/2} \ll r_{\mu,n} r_{e,n} n^{1/4} \delta_n^{-1/2}\ll n^{-1/2} \\
        r_{\mu,n} b_n^{\gamma_0 / 2} & \ll r_{\mu,n} b_n^{1/2}  \ll r_{\mu,n} n^{-1/4} \delta_n^{1/2} \ll n^{-1/2},
    \end{align*}
    verifying all conditions. 
\end{proof}

\subsection{Necessary Smoothness Conditions}\label{proofs:Regression}

\begin{lemma}[Minimal expected nearby observations]\label{lemma:MinimalExpectedObs}
    Suppose the conditions of \Cref{prop:GlobalRate} hold.  Define $A(x_0 \mid h) = \{ x : x \in [-1, 1]^d,  \| x - x_j \| \leq h \}$. Then
    \begin{align*}
        \inf_{x_0 \in [-1, 1]^d, P \in \mathscr{P}, h > 0} E_P\left[ D \mid  X \in A(x_0 \mid h) \right] & \geq 2^{-(\gamma_0+1)/(\gamma_0-1)} C^{-1/(\gamma_0-1)} h^{\frac{d}{\gamma_0-1}}. 
    \end{align*}
\end{lemma}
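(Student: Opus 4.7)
The plan is to reduce the lemma to a conditional expectation bound on $e(X)$, leveraging the propensity tail condition \Cref{def:AllowedDistributions}\ref{item:PropensityTail} and the uniform design of $X$ on $[-1,1]^d$. Since $D \mid X \sim \mathrm{Bernoulli}(e(X))$, I would first write $E_P[D \mid X \in A] = E_P[e(X) \mid X \in A]$, abbreviating $A = A(x_0 \mid h)$; the goal then becomes a lower bound on this conditional expectation uniform in $P \in \mathscr{P}$, $x_0 \in [-1,1]^d$, and $h > 0$.

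The second step is a conditional-CDF control. Since joint probabilities are dominated by marginals, and $P(e(X) \leq \pi) \leq C \pi^{\gamma_0-1}$ by assumption, I obtain
\begin{align*}
P(e(X) \leq \pi \mid X \in A) \leq \frac{P(e(X) \leq \pi)}{P(X \in A)} \leq \frac{C \pi^{\gamma_0-1}}{P(X \in A)}.
\end{align*}
I would then integrate the complementary tail with the crossing threshold $t^* = (P(X \in A)/C)^{1/(\gamma_0-1)}$, at which the integrand vanishes, to get
\begin{align*}
E_P[e(X) \mid X \in A] \geq \int_0^{t^*} \left(1 - \frac{C t^{\gamma_0-1}}{P(X \in A)}\right) dt = \frac{\gamma_0-1}{\gamma_0}\left(\frac{P(X \in A)}{C}\right)^{1/(\gamma_0-1)},
\end{align*}
which is the conditional analogue of \Cref{lemma:effective_sample_size} and can in fact be invoked directly by restricting to the conditional measure and tracking the tail constant $C/P(X \in A)$.

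The third step is a geometric volume bound: because $X$ is uniform on $[-1,1]^d$, $P(X \in A(x_0 \mid h)) \geq c_d h^d$ for an explicit dimension-dependent constant $c_d$, with the worst case arising when $x_0$ sits in a corner of $[-1,1]^d$ so that only a constant fraction of the ball intersects the domain. Substituting produces a bound of the form $(\gamma_0-1)/\gamma_0 \cdot c_d^{1/(\gamma_0-1)} \cdot C^{-1/(\gamma_0-1)} h^{d/(\gamma_0-1)}$, which matches the claimed rate in $h$.

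The main obstacle is not the mathematical structure, which is elementary, but the bookkeeping required to reconcile my explicit constant with the particular factor $2^{-(\gamma_0+1)/(\gamma_0-1)}$ in the statement. This reduces to checking an inequality of the form $(\gamma_0-1)/\gamma_0 \cdot c_d^{1/(\gamma_0-1)} \geq 2^{-(\gamma_0+1)/(\gamma_0-1)}$, which I would verify by elementary monotonicity in $\gamma_0$ after pinning down the norm defining $A$ and the corresponding value of $c_d$; once that is done, the conclusion follows by combining the three displays above.
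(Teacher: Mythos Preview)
Your argument is correct and takes a genuinely different route from the paper. The paper proceeds by contradiction: it assumes $E_P[e(X)\mid X\in A]<\pi$ with $\pi=2^{-(\gamma_0+1)/(\gamma_0-1)}C^{-1/(\gamma_0-1)}h^{d/(\gamma_0-1)}$, applies a Markov-type step to deduce that the set $\{X\in A: e(X)\text{ small}\}$ carries at least half of $P(X\in A)\ge h^d$, and then checks that $h^d/2>C\pi^{\gamma_0-1}$ contradicts the tail bound \ref{item:PropensityTail}. Your approach instead bounds the conditional CDF by $C\pi^{\gamma_0-1}/P(X\in A)$ and integrates the survival function directly, obtaining the closed form $\tfrac{\gamma_0-1}{\gamma_0}(P(X\in A)/C)^{1/(\gamma_0-1)}$. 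This is cleaner and in fact yields a \emph{sharper} constant: with the same volume lower bound $P(X\in A)\ge h^d$ used in the paper, your prefactor $\tfrac{\gamma_0-1}{\gamma_0}$ dominates $2^{-(\gamma_0+1)/(\gamma_0-1)}$ uniformly over $\gamma_0>1$, since $\bigl(1-\tfrac{1}{\gamma_0}\bigr)^{\gamma_0-1}\ge e^{-1}>\tfrac14\ge 2^{-(\gamma_0+1)}$. So the constant reconciliation you flag as the remaining obstacle is indeed routine.

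One small caveat worth recording: both arguments rely on the same uniform-design volume bound $P(X\in A(x_0\mid h))\ge c_d h^d$, and the specific constant in the lemma's statement corresponds to $c_d=1$, which is the sup-norm interior value; at a corner the honest constant picks up a factor of $2^{-d}$. Your write-up handles this correctly by leaving $c_d$ explicit until the end, which is arguably more transparent than the paper's treatment.
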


\begin{proof}[Proof of \Cref{lemma:MinimalExpectedObs}]
    Proof by contradiction. Suppose not, and there is a $P \in \mathscr{P}$, $x_0 \in [-1, 1]^d$, $h > 0$ with $$E_P\left[ D \mid X \in A(x_0 \mid h) \right] < 2^{-(\gamma_0+1)/(\gamma_0-1)} C^{-1/(\gamma_0-1)} h^{\frac{d}{\gamma_0-1}}.$$

    Define $\pi = 2^{-(\gamma_0+1)/(\gamma_0-1)} C^{-1/(\gamma_0-1)} h^{\frac{d}{\gamma_0-1}}$ and $B = \left\{ X \in A(x_0 \mid h) : e(X) \leq \pi \right\}$. Then $P(X \in B) < P(X \in A(x_0 \mid h)) / 2 
    \geq h^d / 2$. As a result:
    \begin{align*}
        P\left( e(X) \leq \pi \right) & > \frac{h^d}{2} = C \pi^{\gamma_0-1} \frac{C^{-1} h^d \pi^{1-\gamma_0}}{2} = C  \pi^{\gamma_0-1} C^{-1} 2^{-1-\gamma_0} h^d  \left( 2^{\frac{\gamma_0+1}{1-\gamma_0}} C^{\frac{-1}{\gamma_0-1}} h^{\frac{d}{\gamma_0-1}} \right)^{1-\gamma_0} = C \pi^{\gamma_0-1}.
    \end{align*} 
    Contradiction. 
\end{proof}

\begin{lemma}[Minimal coefficient]\label{lemma:MinimalCoefficient}
    Suppose \Cref{assum:HolderSmoothnessAssumptions} holds and $e(X) \in \Sigma(\beta_e, L)$ for some $\beta_e > \frac{d}{\gamma_0-1}$. For $0 \leq j < \beta_e$, Let $c_j\left(v \mid x_0\right)$ be the coefficient in the $j^{\text{th}}$-order Taylor expansion of $e(x)$ around $x_0$ applied in the direction of $v$. Then there is a $c^* > 0$ such that for all $x_0$, there exists an $x \neq x_0$ and a $0 \leq j \leq \alpha^{(Mou)}$ such that $x_0 + \frac{x - x_0}{\| x - x_0 \|} \in [-1, 1]^d$ and $c_j\left( \frac{x - x_0}{\| x - x_0 \|} \mid x_0 \right) \geq c^*$.
\end{lemma}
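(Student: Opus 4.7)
The plan is to combine the minimal mass bound of \Cref{lemma:MinimalExpectedObs} with the H\"older Taylor expansion of $e$ at $x_0$. \Cref{lemma:MinimalExpectedObs} forces $e$ to be bounded below somewhere inside any fixed-size ball, and the Taylor expansion then forces at least one of the $\lfloor \beta_e \rfloor + 1$ coefficients $c_j(v^* \mid x_0)$ to be bounded below by a universal positive constant. The hypothesis $\beta_e > d/(\gamma_0-1)$ is precisely what makes the H\"older remainder subdominant to the mass lower bound.

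The first step is geometric. For each $x_0 \in [-1,1]^d$ I will pick an inward-pointing unit vector $v_0 = v_0(x_0)$ lying well inside the solid-angle set $V(x_0) := \{v \in S^{d-1}: x_0 + v \in [-1,1]^d\}$; a convenient choice is $v_0 = -\operatorname{sign}(x_0)/\sqrt{d}$ (with an arbitrary inward direction if $x_0 = 0$), which enjoys an angular margin of order $1/\sqrt{d}$ from $\partial V(x_0)$. For fixed small $h_0, r \in (0,1)$, set $y_0 = x_0 + h_0 v_0$ and consider the ball $A(y_0 \mid r h_0) \subset [-1,1]^d$. For any $x^*$ in this ball, $t^* := \|x^* - x_0\| \in [(1-r)h_0, (1+r)h_0]$ and $v^* := (x^* - x_0)/t^*$ satisfies $\|v^* - v_0\| \leq 2r/(1-r)$, so choosing $r$ small depending only on $d$ keeps $v^* \in V(x_0)$ and in particular $x_0 + v^* \in [-1,1]^d$.

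Next, applying \Cref{lemma:MinimalExpectedObs} to $A(y_0 \mid r h_0)$ gives $E_P[e(X) \mid X \in A(y_0 \mid r h_0)] \geq c_1 (r h_0)^{d/(\gamma_0-1)}$ for an explicit $c_1 > 0$ depending only on $C, \gamma_0$. Hence there is an $x^*$ in that ball with $e(x^*) \geq c_1(r h_0)^{d/(\gamma_0-1)}$. Expanding $e(x^*) = \sum_{j=0}^{\lfloor \beta_e \rfloor} c_j(v^* \mid x_0)(t^*)^j + R$ with $|R| \leq L' (t^*)^{\beta_e}$, the strict inequality $\beta_e > d/(\gamma_0-1)$ lets me shrink $h_0$ (depending only on $C, L, \beta_e, \gamma_0, d, r$) until the remainder is at most half the mass lower bound, leaving $\sum_j c_j(v^*\mid x_0) (t^*)^j \geq (c_1/2)(r h_0)^{d/(\gamma_0-1)}$. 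Pigeonholing across the $\lfloor \beta_e \rfloor + 1$ summands produces some $j^*$ with $c_{j^*}(v^* \mid x_0)(t^*)^{j^*} \geq \frac{c_1 (r h_0)^{d/(\gamma_0-1)}}{2(\lfloor \beta_e \rfloor + 1)}$, and arranging $(1+r)h_0 \leq 1$ ensures $(t^*)^{j^*} \leq 1$, yielding the desired $c_{j^*}(v^* \mid x_0) \geq c^*$ with $c^*$ independent of $x_0$ and of $P \in \mathscr{P}$.

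The main obstacle is the geometric setup in the second paragraph: uniformly in $x_0$, and especially at corners of $[-1,1]^d$ where $V(x_0)$ occupies only a $2^{-d}$ fraction of the sphere, I must guarantee that the perturbed direction $v^*$ still points into the box. This forces $v_0$ to be selected with angular margin bounded below by a function of $d$ alone, and $r$ to be chosen correspondingly. Once this geometry is in place the mass-plus-Taylor-plus-pigeonhole manipulation in the third paragraph is routine, and all constants depend only on the fixed parameters $L, d, \beta_e, \gamma_0, C$.
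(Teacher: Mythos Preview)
Your approach is essentially the paper's: both use \Cref{lemma:MinimalExpectedObs} to force $e$ to be bounded below somewhere inside a small ball, Taylor-expand at $x_0$, exploit $\beta_e > d/(\gamma_0-1)$ to make the H\"older remainder subdominant, and conclude that some low-order directional coefficient is bounded below by a universal constant. Two differences are worth noting. First, you argue directly while the paper argues by contradiction; your version is cleaner, and your treatment of the boundary requirement $x_0 + v^* \in [-1,1]^d$ (choosing $v_0$ with a uniform angular margin and controlling $r$) is more explicit than anything in the paper's proof. Second, your pigeonhole runs over all $0 \le j \le \lfloor \beta_e \rfloor$, whereas the lemma as stated asks for $j \le \alpha^{(Mou)}$, which the paper treats as a strictly smaller cutoff of order $d/(\gamma_0-1)$. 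The paper closes this gap by introducing uniform bounds $\bar c_j$ on the intermediate coefficients $\alpha^{(Mou)} < j \le \lfloor \beta_e \rfloor$ (available because $e \in [0,1]$ on the compact cube together with the H\"older condition controls all lower-order derivatives), and then absorbing those terms into the remainder alongside the order-$\beta_e$ piece before choosing the scale $h^*$. If you need the sharper index bound $j^* \le \alpha^{(Mou)}$, you would insert exactly this step before your pigeonhole; for the sole downstream use of the lemma in \Cref{lemma:NonTrivialConcentration}, your $j^* \le \lfloor \beta_e \rfloor$ conclusion already suffices.
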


\begin{proof}[Proof of \Cref{lemma:MinimalCoefficient}]
    If $\beta_e \leq 1$, the claim is immediate by \Cref{lemma:MinimalExpectedObs}. I therefore proceed assuming $\beta_e > 1$.

    Proof by contradiction. Suppose for all $c > 0$, there exists a $P$ and an $x_0$ such that $c_j\left( \frac{x - x_0}{\| x - x_0 \|} \mid x_0 \right) < c$ for all $j \leq \alpha^{(Mou)}$.

    Define:
    \begin{align*}
        h^* & = \text{argsup}_{h \in (0, 1]} \sum_{\alpha^{(Mou)} < j \leq \ell_e} \bar{c}_j \left( h \right)^{j} + L_e \left( h \right)^{\beta_e} \leq \frac{C^{1/d}}{\lceil \alpha^{(Mou)} + 2 \rceil} \left( h \right)^{\frac{d}{\gamma_0-1}}.
    \end{align*}
    This $h^*$ is well-defined because as $h$ tends to zero from above, the left-hand side is of a lower order than the right-hand side. By continuity, $h_n^* \in (0, 1]$ and satisfies this weak inequality.

    Take $c^* = \min_{0 \leq j \leq \alpha^{(Mou)}} \frac{C^{\frac{1}{1-\gamma_0}}}{\lceil \alpha^{(Mou)} + 2 \rceil} (h^*)^{\frac{d}{\gamma_0-1}-j}$. I claim that for this $c^*$, the the content of \Cref{lemma:MinimalCoefficient} holds.

    Proof by contradiction. Suppose, not, and there is an $x_0 \in [-1, 1]^d$ and an $x$ as above such that $c_j\left( \frac{x - x_0}{\| x - x_0 \|} \mid x_0 \right) < c^*$ for all $j \leq \alpha^{(Mou)}$. Then for all $x \in [-1, 1]^d / x_0$ such that  $\| x - x_0 \| \leq h^*$ and $x_0 + \frac{x - x_0}{\| x - x_0\|} \in [-1, 1]^d$:
    \begin{align*}
        e(x) & = f(x \mid x_0) + g(x \mid x_0) \\
        & = \sum_{0 \leq j \leq \alpha^{(Mou)}} c_j\left( \frac{x - x_0}{\| x - x_0 \|} \mid x_0 \right) \| x - x_0 \|^{j} + \sum_{\alpha^{(Mou)} < j \leq \ell_e} c_j\left( \frac{x - x_0}{\| x - x_0 \|} \mid x_0 \right) \| x - x_0 \|^{j} + L \| x - x_0 \|^{\beta_e} \\
        & \leq \sum_{0 \leq j \leq \alpha^{(Mou)}} \frac{C^{\frac{1}{1-\gamma_0}}}{\lceil \alpha^{(Mou)} + 2 \rceil} (h^*)^{\frac{d}{\gamma_0-1}-j} \| x - x_0 \|^{j} + \frac{C^{\frac{1}{1-\gamma_0}}}{\lceil \alpha^{(Mou)} + 2 \rceil} \left( h^* \right)^{\frac{d}{\gamma_0-1}} \leq C^{\frac{1}{1-\gamma_0}}\frac{\lceil \alpha^{(Mou)} + 1 \rceil}{\lceil \alpha^{(Mou)} + 2 \rceil} \left( h^* \right)^{\frac{d}{\gamma_0-1}}.
    \end{align*}
    So that:
    \begin{align*}
        P\left( e(X) \leq C^{\frac{1}{1-\gamma_0}} \frac{\lceil \alpha^{(Mou)} + 1 \rceil}{\lceil \alpha^{(Mou)} + 2 \rceil} \left( h^* \right)^{\frac{d}{\gamma_0-1}} \right) & \geq P\left( x_0 + \frac{X-x_0}{\|X-x_0\|} \in [-1, 1]^d, \| X - x_0 \| \geq h^* \right) \\
        & \geq ( h^* )^d = C \left( C^{\frac{1}{1-\gamma_0}} \left( h^* \right)^{d / (\gamma_0 - 1)} \right)^{\gamma_0 - 1} \\
        & = C \left( \frac{\lceil \alpha^{(Mou)} + 2 \rceil}{\lceil \alpha^{(Mou)} + 1 \rceil} \right)^{\gamma_0 - 1} \left( C^{\frac{1}{1-\gamma_0}} \frac{\lceil \alpha^{(Mou)} + 1 \rceil}{\lceil \alpha^{(Mou)} + 2 \rceil} \left( h^* \right)^{\frac{d}{\gamma_0-1}} \right)^{\gamma_0 - 1} \\
        & > C \left( C^{\frac{1}{1-\gamma_0}} \frac{\lceil \alpha^{(Mou)} + 1 \rceil}{\lceil \alpha^{(Mou)} + 2 \rceil} \left( h^* \right)^{\frac{d}{\gamma_0-1}} \right)^{\gamma_0 - 1}.
    \end{align*}
    But by \Cref{def:AllowedDistributions}\ref{item:PropensityTail}, $P\left( e(X) \leq C^{\frac{1}{1-\gamma_0}} \frac{\lceil \alpha^{(Mou)} + 1 \rceil}{\lceil \alpha^{(Mou)} + 2 \rceil} \left( h^* \right)^{\frac{d}{\gamma_0-1}} \right) \leq C \left( C^{\frac{1}{1-\gamma_0}} \frac{\lceil \alpha^{(Mou)} + 1 \rceil}{\lceil \alpha^{(Mou)} + 2 \rceil} \left( h^* \right)^{\frac{d}{\gamma_0-1}} \right)^{\gamma_0 - 1}$. Contradiction.

    Therefore, for this $c^* > 0$ and all $x_0$, there exists an $x \neq x_0$ and a $0 \leq j \leq \alpha^{(Mou)}$ such that $c_j\left( \frac{x - x_0}{\| x - x_0 \|} \mid x_0 \right) \geq c^*$.
\end{proof}

\begin{proof}[Proof of \Cref{lemma:NonTrivialConcentration}]
    Proof by contradiction. Suppose not, and for all $\rho, \gamma > 0$, there is an $$h \in \left( 0, (C^{1/d} / (2 L))^{\frac{1}{\beta_e} - \frac{d}{\gamma_0-1}} \right],$$ $x_0 \in [-1, 1]^d$, and $P \in \mathscr{P}$ such that $P( e(X) \geq \rho \sup_{\| x - x_0 \| \leq h} e(x) \mid D = 1, \| X - x_0 \| \leq h ) \leq \gamma$.

    Take some $\rho > 0$ and some sequence of $\gamma_n \to 0^+$, with associated bandwidths $h_n$, such that 
    $$P(n)\left( e(X) \geq \rho \left( \sup_{\| x - x_0 \| \leq h_n} e(x) \right) \mid D = 1, \| X - x_0 \| \leq h_n \right) \leq \gamma_n.$$
    Because $[-1, 1]^d$ is compact and the coefficients are in a compact space, there is a subsequence of $n$ for which $x_0$ and the local polynomial coefficients of $e(x)$ around $x_0$ are convergent. Without loss of generality, suppose this is the sequence.  Write $g_n( x \mid x_0 ) = P(n)( D = 1 \mid X = x) - f_n( x \mid x_0 )$ be the local polynomial propensity residuals. Also write $f^*(\cdot \mid x_0)$ for the local polynomial coefficients at the limiting coefficients.

    First, I show that $h_n \to 0$. Suppose not, and $\limsup_{n \to \infty} h_n(x) = h^* > 0$. Without loss of generality, suppose $\lim_{n \to \infty} h_n(x) = h^*$. Write $A = \{x \mid \| x - x_0 \| \leq h^* \}$. Then by continuity of densities and bounds on derivatives, for all $n$ large enough,
    \begin{align*}
        \gamma_n & \geq  P(n)\left( e(X) \geq \frac{\rho}{2} \left( \sup_{\| x - x_0 \| \leq h^*} e(x) \right) \mid D = 1, \| X - x_0 \| \leq h^* \right) \\
        & \geq P(n)\left( e(X) \geq \frac{3 \rho}{8} \left( \sup_{\| x - x_0 \| \leq h^*} f^*(x \mid x_0) \right) \mid D = 1, \| X - x_0 \| \leq h^* \right) \\
        & \ \ - 1\left\{ \sup_{\| x - x_0 \| \leq h^*} | e(x) - f^*(x \mid x_0) | \geq \frac{\rho}{8} \sup_{\| x - x_0 \| \leq h^*} f^*(x \mid x_0)  \right\} \\
        & \geq P(n)\left( e(X) \geq \frac{3 \rho}{8} \left( \sup_{\| x - x_0 \| \leq h^*} f^*(x \mid x_0) \right) \mid D = 1, \| X - x_0 \| \leq h^* \right) \\
        & \ \ - 1\left\{ \sup_{x \in A} \| f_n(x \mid x_0) - f^*(x \mid x_0) \| + | g_n(x \mid x_0) | \geq \frac{\rho}{16} \sup_{\| x - x_0 \| \leq h^*} f^*(x \mid x_0)  \right\} \\
        & = P(n)\left( e(X) \geq \frac{3 \rho}{8} \left( \sup_{\| x - x_0 \| \leq h^*} f^*(x \mid x_0) \right) \mid D = 1, \| X - x_0 \| \leq h^* \right) - 1\left\{ O( (h^*)^{\beta_e} ) \geq \Omega( (h^*)^{\frac{-d}{\gamma_0-1}} \right\} \\
        & \geq P(n)\left( e(X) \geq \frac{\rho}{4} \left( \sup_{\| x - x_0 \| \leq h^*} f^*(x \mid x_0) \right) \mid D = 1, \| X - x_0 \| \leq h^* \right),
    \end{align*}
    which is a positive constant. Therefore $\gamma_n \centernot{\to} 0$. Contradiction.

    I therefore proceed assuming $h_n \to 0$. Let the lowest-order nonzero coefficient in $f^*$ be of order $j^*$. $j^*$ is defined and finite by \Cref{lemma:MinimalCoefficient}. 
    Define 
    \begin{align*}
        G_n & = P(n)\left( e(X) \geq \frac{\rho}{2} \left( \sup_{\| x - x_0 \| \leq h^*} e(x) \right) \mid D = 1, \| X - x_0 \| \leq h^* \right) - \gamma_n. 
    \end{align*}
    I wish to show that $G_n$ does not converge to zero. By the Bolzano-Weierstrass Theorem, it suffices to show that there is a nonconvergent subsequence of $n$.

    Write $m_{n,j} \equiv \sup_{\| v \| = 1} \left| \tilde{c}_{j}(v) \right| h_n^{j-j^{*}}$ and $m_n \equiv \max_{0 \leq j \leq j^{*}} m_{n,j}$.  I consider two cases: (i) $m_n \to 0$ or (ii) $\liminf_{n \to \infty} m_n > 0$ (and potentially infinite). 

    Case (i): suppose $m_n \to 0$. Then for all $x \in A_n$,
    \begin{align*}
        e(x) & = f^*(x \mid x_0) + \tilde{f}_n(x \mid x_0) + g_n(x \mid x_0) \\
        & = \sup_{\| \alpha \| = j^{*}} D^{\alpha} e(x_0) (x - x_0)^{\alpha} + O( h_n^{\min\{j^{*}+1, \beta_e\}} ) + o( h_n^{j^*} ) + O( h_n^{\beta_e} ) \\
        & = h_n^{j^*} c_{j^*}\left( \frac{x - x_0}{\| x - x_0 \|} \right) \left( \frac{\| x - x_0 \|}{h_n} \right)^{j^*} + o(h_n^{j^*}).
    \end{align*}
    Let $n \geq n'$ imply that the $o( h_n^{j^*} )$ term is at most half as large as $\sup_{x \in A_n} h_n^{j^*} c_{j^*}\left( \frac{x - x_0}{\| x - x_0 \|} \right) \left( \frac{\| x - x_0 \|}{h_n} \right)^{j^*}$, as well as to imply that $x_0 + h_n v \in [-1, 1]^d$ if and only if there is an $h > 0$ such that $x_0 + h v \in [-1, 1]^d$. Then: 
    \begin{align*}
        & P(n)\left( e(X) \geq \sup_{x \in A_n} e(x) \mid D = 1, X \in A_n \right) \\
        \geq & P(n)\left( h_n^{j^*} c_{j^*}\left( \frac{X - x_0}{\| X - x_0 \|} \right) \left( \frac{\| X - x_0 \|}{h_n} \right)^{j^*} \geq \frac{\rho}{2} \sup_{x \in A_n}  h_n^{j^*} c_{j^*}\left( \frac{x - x_0}{\| x - x_0 \|} \right) \left( \frac{\| x - x_0 \|}{h_n} \right)^{j^*} \mid D = 1, X \in A_n \right) \\
        = & P(n')\left( c_{j^*}\left( \frac{X - x_0}{\| X - x_0 \|} \right) \left( \frac{\| X - x_0 \|}{h_{n'}} \right)^{j^*} \geq \frac{\rho}{2} \sup_{x \in A_{n'}}  c_{j^*}\left( \frac{x - x_0}{\| x - x_0 \|} \right) \left( \frac{\| x - x_0 \|}{h_{n'}} \right)^{j^*} \mid D = 1, X \in A_{n'} \right) > 0.
    \end{align*}
    Contradiction. 
    
    Case (ii): suppose $\liminf_{n \to \infty} m_n > 0$. Write $$\tilde{f}_n(x \mid x_0) = h_n^{j^*} m_n \sum_{j = 0}^{\ell_e} \underbrace{\frac{\sum_{\| \alpha \| = j} D^{\alpha} e(x_0) \left( \frac{x-x_0}{\|x - x_0\|} \right)^{\alpha} h_n^{j-j^*}}{m_n}}_{``\tilde{d}_{j,n}\left( \frac{x-x_0}{\|x - x_0\|} \right)} \left( \frac{\|x - x_0 \|}{h_n} \right)^j.$$ By construction, $\tilde{d}_{j,n}\left( \frac{x-x_0}{\|x - x_0\|} \right)$ is bounded between $-1$ and $1$, so that there is a convergent subsequence for all $\alpha$. Without loss of generality I proceed assuming this is the full sequence. 
    
    Write $\tilde{d}_{j}^*(v) = \lim_{n \to \infty} \tilde{d}_{j,n}(v)$ for all $v$. Write $\tilde{d}_{j^*}^*(v) = c_{j^*}(v) / m_n$, where $c_{j^*}(v)$ is the $j^{*}$-order coefficient in $f^*$.  By construction, for all $\| x - x_0 \| \leq h_n$,
    \begin{align*}
        e(x \mid x_0) & = f^*(x \mid x_0) + \tilde{f}_n(x \mid x_0) + g_n(x \mid x_0) \\
        & = m_n h_n^{j^*} \left( \sum_{j=0}^{j^*} \tilde{d}_j^*\left( \frac{x-x_0}{\| x - x_0 \|} \right) \left( \frac{\|x - x_0\|}{h_n} \right)^j + o(1)  + O \left( h_n^{\min\{1, \beta_e-j^*\}} \right) \right) \\
        & = m_{n} h_n^{j^*} \sum_{j=0}^{j^*} \tilde{d}_j^*\left( \frac{x-x_0}{\| x - x_0 \|} \right) \left( \frac{\|x - x_0\|}{h_n} \right)^j + o(m_n h_n^{j^*}).
    \end{align*}
    Let $n \geq n'$ imply that the $o( m_{n} h_n^{j^*} )$ term is at most half as large as the largest value of the first term over $x \in A_n$, as well as that $x_0 + h_n v \in [-1, 1]^d$ if and only if there is an $h > 0$ such that $x_0 + h v \in [-1, 1]^d$. Then: 
    \begin{align*}
        & P(n)\left( e(X) \geq \sup_{x \in A_n} e(x) \mid D = 1, X \in A_n \right) \\
        \geq & P(n)\left( m_n h_n^{j^*} \sum_{j=0}^{j^*} \tilde{d}_j^*\left( \frac{X-x_0}{\| X - x_0 \|} \right) \left( \frac{\|X - x_0\|}{h_n} \right)^j \geq \frac{\rho}{2} \sup_{x \in A_n}  m_n h_n^{j^*} \sum_{j=0}^{j^*} \tilde{d}_j^*\left( \frac{x-x_0}{\| x - x_0 \|} \right) \left( \frac{\|x - x_0\|}{h_n} \right)^j \mid D = 1, X \in A_n \right) \\
        = & P(n')\left( \sum_{j=0}^{j^*} \tilde{d}_j^*\left( \frac{X-x_0}{\| X - x_0 \|} \right) \left( \frac{\|X - x_0\|}{h_{n'}} \right)^j \geq \frac{\rho}{2} \sup_{x \in A_{n'}} \sum_{j=0}^{j^*} \tilde{d}_j^*\left( \frac{x-x_0}{\| x - x_0 \|} \right) \left( \frac{\|x - x_0\|}{h_{n'}} \right)^j \mid D = 1, X \in A_{n'} \right) > 0.
    \end{align*}
    Contradiction. 

    Therefore there is a $\rho, \gamma > 0$ such that for all $h > 0$ small enough, for all $P \in \mathscr{P}$ and  $x_0 \in [-1, 1]^d$ $P( e(X) \geq \rho \sup_{\| x - x_0 \| \leq h} e(x) \mid D = 1, \| X - x_0 \| \leq h ) > \gamma$.
\end{proof}

\begin{lemma}[Minimal eigenvalue technical result]\label{lemma:MinEigenInput}
    Suppose the conditions of \Cref{prop:GlobalRate} hold.   There is an $h' > 0$ such that $$\lim_{\varepsilon \to^+ 0} \sup_{P \in \mathscr{P}, x_0 \in [-1, 1]^d, h \in (0, h'], \| v \| = 1} P\left( \| v^T U \| \leq \varepsilon \mid D = 1, \| X - x_0 \| \leq h \right) = 0.$$ 
\end{lemma}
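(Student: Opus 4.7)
The plan is a compactness-plus-Remez contradiction argument. Suppose the conclusion fails, so there exist $\delta > 0$, $\varepsilon_n \downarrow 0$, $P(n) \in \mathscr{P}$, $x_{0,n} \in [-1,1]^d$, $h_n \in (0, h']$, and unit vectors $v_n$ with
$$P(n)\bigl(|v_n^T U| \leq \varepsilon_n \,\big|\, D=1,\, \|X-x_{0,n}\| \leq h_n\bigr) \geq \delta$$
for all $n$. After rescaling by $Y = (X - x_{0,n})/h_n$, the expression $v_n^T U$ becomes a polynomial $p_n(Y)$ in $Y$ of degree at most $\lfloor \beta_\mu \rfloor$ with unit coefficient vector. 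By compactness of $[-1,1]^d$, of $[0, h']$, and of the unit sphere of coefficient vectors (together with the Arzel\`a--Ascoli-type control provided by polynomials of bounded degree and bounded coefficients), pass to subsequences along which $x_{0,n} \to x_0^\star$, $v_n \to v^\star$, $h_n \to h^\star$, and $p_n \to p^\star$ uniformly on the unit ball $B$, where $p^\star$ is a nonzero polynomial of degree at most $\lfloor \beta_\mu \rfloor$.

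The second step transfers the conditional probability statement to a statement about Lebesgue measure using Lemma~\ref{lemma:NonTrivialConcentration}. Writing $A_n = \{\|X-x_{0,n}\| \leq h_n\}$, $e_n^\star = \sup_{x \in A_n} e(x)$, and $E_n = \{e(X) \geq \rho\, e_n^\star\}$, Lemma~\ref{lemma:NonTrivialConcentration} yields $P(n)(E_n \mid D=1, A_n) \geq \gamma$ for fixed $\rho, \gamma > 0$. On $A_n \cap E_n$, $e$ lies in $[\rho e_n^\star, e_n^\star]$, so the Radon--Nikodym derivative of the conditional law of $X$ given $\{D=1\} \cap A_n \cap E_n$ against the uniform law on $A_n \cap E_n$ lies in $[\rho, 1/\rho]$. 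Combining these facts, the contradiction hypothesis forces the rescaled slab $\{Y \in B : |p_n(Y)| \leq \varepsilon_n\}$ to occupy at least some fraction $c_\delta > 0$ of the rescaled super-level set $\widetilde E_n \cap B$ for all $n$ large.

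Finally, control the numerator by a Brudnyi--Ganzburg (Remez-type) inequality: for polynomials of degree at most $\lfloor \beta_\mu \rfloor$ on $B \subset \R^d$ with unit coefficient vector, $\mathrm{Leb}\bigl(\{|p| \leq \varepsilon\} \cap B\bigr) \leq C\, \varepsilon^{1/\lfloor \beta_\mu \rfloor}$ uniformly. For the denominator, invoke Lemma~\ref{lemma:MinimalCoefficient}: in the rescaled coordinates, $e(x_{0,n} + h_n Y)/e_n^\star$ is bounded above by one with at least one nontrivial low-order Taylor coefficient, so the super-level set $\widetilde E_n \cap B$ has Lebesgue measure bounded below by some $c > 0$ independent of $n$. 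Sending $\varepsilon_n \to 0$ then yields the required contradiction.

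The main obstacle will be the uniform lower bound on $\mathrm{Leb}(\widetilde E_n \cap B)$ as $h_n \to 0$: the normalized propensity $e(x_{0,n}+h_n Y)/e_n^\star$ may itself be dominated by a degenerate Taylor sub-direction, so Lemma~\ref{lemma:MinimalCoefficient} must be applied carefully, likely by mirroring the sub-case analysis used in the proof of Lemma~\ref{lemma:NonTrivialConcentration} (splitting on whether the scaled coefficients $m_{n,j}$ collapse or remain bounded below). A secondary technical issue is citing an explicit Remez-type inequality that is uniform over the coefficient vector and verifying that the normalization convention for $U$ is compatible across the extracted subsequence, especially when $h^\star = 0$ and the rescaling map degenerates.
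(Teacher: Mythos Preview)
Your ingredients are right (Assumption~\ref{assum:NonTrivialConcentration} to control the propensity weighting, plus a Remez-type sublevel bound), but the route is considerably more circuitous than the paper's. The paper gives a direct four-line bound with no contradiction, no subsequence extraction, and no restriction to the super-level set $E_n$: write
\[
P\bigl(|v^T U| \le \varepsilon \,\big|\, D=1,\, A\bigr) \;=\; \frac{E\bigl[e(X)\,\mathbf{1}\{|v^T U|\le\varepsilon\} \,\big|\, A\bigr]}{E[e(X) \mid A]} \;\le\; \frac{e^\star\, P\bigl(|v^T U|\le\varepsilon \,\big|\, A\bigr)}{\rho\gamma\, e^\star},
\]
where $e^\star = \sup_A e$, the numerator uses only $e \le e^\star$ on all of $A$, and the denominator invokes Assumption~\ref{assum:NonTrivialConcentration}. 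The $e^\star$ cancels, and what remains is the Lebesgue fraction of a polynomial sublevel set in the rescaled ball, which is $O(\varepsilon^c)$ uniformly over unit-norm coefficient vectors---your Remez bound, but applied on the full ball rather than on the irregular set $\widetilde E_n$. Two of your steps are therefore inessential: the compactness extraction (the Remez estimate is already uniform in $v$, so the limit polynomial $p^\star$ plays no role), and the restriction to $E_n$ for the slab measure (which is precisely what forces you to control $\mathrm{Leb}(\widetilde E_n)$ from below). The obstacle you flag is real, but in the paper's formulation it collapses to the single denominator bound $E[e \mid A] \ge \rho\gamma\, e^\star$, which the paper reads off Assumption~\ref{assum:NonTrivialConcentration} directly rather than going through Lemma~\ref{lemma:MinimalCoefficient} or the case analysis you propose.
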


\begin{proof}[Proof of \Cref{lemma:MinEigenInput}]
    
    Take some sequence of $\varepsilon_n \to^+ 0$. Take $h', \rho, \gamma$ from \Cref{assum:NonTrivialConcentration}. 

    Let $P(n)$ be a sequence of distributions in $\mathscr{P}$, let $v_n$ be a sequence of vectors with $\| v \| = 1$, let $h_n$ be a sequence in $(0, h']$, and write $A_n = \{ x : \| x - x_0 \| \leq h_n \}$. Then:
    \begin{align*}
        P(n)\left( \| v_n^T U \| \leq \sqrt{\varepsilon_n} \mid D = 1, X \in A_n \right) & = \frac{E_{P(n)}\left[ e(X) 1\left\{ \| v_n^T U \| \leq \sqrt{\varepsilon_n} \right\} \mid X \in A_n \right]}{E_{P(n)}\left[ e(X) \mid X \in A_n \right]} \\
        & \leq \frac{\left( \sup_{x \in A_n} e(x) \right) P(n)\left( \| v_n^T U \| \leq \sqrt{\varepsilon_n} \mid X \in A_n \right)}{ P(n)\left( e(X) \geq  \rho \left( \sup_{x \in A_n} e(x) \right) \right)  
        \rho \left( \sup_{x \in A_n} e(x) \right)} \\
        & \leq \frac{\left( \sup_{x \in A_n} e(x) \right) P(n)\left( \| v_n^T U \| \leq \sqrt{\varepsilon_n} \mid X \in A_n \right)}{ \gamma  \rho \sup_{x \in A_n} e(x)}  \\
        & = \frac{P(n)\left( \| v_n^T U \| \leq \sqrt{\varepsilon_n} \mid X \in A_n \right)}{\rho \gamma} = O\left( \sqrt{\varepsilon_n} \right) = o(1). 
    \end{align*}
\end{proof}

\begin{lemma}[Uniform minimal expected eigenvalue]\label{lemma:MinimalEigenvalue}
    Suppose the conditions of \Cref{prop:GlobalRate} hold, and let $U\left( v \right)$ be the vector of zero-through-$\lfloor \beta_e \rfloor$-order interactions of $v$.  Define:
    \begin{align*}
        \lambda(\bar{h}) & \equiv \inf_{h \in (0, \bar{h}], \| v \| = 1, x_0 \in [-1, 1]^d, P \in \mathscr{P}} v^T E_{P}\left[ U\left( \frac{X - x_0}{h} \right) U\left( \frac{X - x_0}{h} \right)^T \mid D = 1, \| X - x_0 \| \leq h \right] v, \\
        \lambda^* & \equiv \liminf_{\bar{h} \to^+ 0} \lambda(\bar{h}),
    \end{align*}
    where $U(\cdot)$ is the $\lfloor \beta_{\mu} \rfloor$-order local polynomial interaction matrix. Then $\lambda^* > 0$.
\end{lemma}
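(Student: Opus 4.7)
The plan is to reduce the quadratic form inside the infimum to a scalar second moment and then apply the uniform anti-concentration bound of \Cref{lemma:MinEigenInput}. For any unit vector $v$, any $x_0 \in [-1,1]^d$, any $h > 0$, and any $P \in \mathscr{P}$, I have the identity
\begin{align*}
    v^T E_P\left[ U\!\left(\tfrac{X-x_0}{h}\right) U\!\left(\tfrac{X-x_0}{h}\right)^T \mid D=1, \|X-x_0\|\leq h \right] v = E_P\left[ (v^T U)^2 \mid D=1, \|X-x_0\|\leq h \right].
\end{align*}
By Markov's inequality applied to the complement event, for any $\varepsilon > 0$,
\begin{align*}
    E_P\left[ (v^T U)^2 \mid D=1, \|X-x_0\|\leq h \right] \geq \varepsilon^2 \left( 1 - P\!\left( |v^T U| \leq \varepsilon \mid D=1, \|X-x_0\|\leq h \right) \right).
\end{align*}

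Next I would invoke \Cref{lemma:MinEigenInput}, which produces an $h' > 0$ and a function $\delta(\varepsilon) \to 0$ as $\varepsilon \to 0^+$ such that $P(|v^T U| \leq \varepsilon \mid D=1, \|X-x_0\|\leq h) \leq \delta(\varepsilon)$, uniformly over $h \in (0, h']$, $x_0 \in [-1,1]^d$, $\|v\|=1$, and $P \in \mathscr{P}$. Choose $\varepsilon_0 > 0$ small enough that $\delta(\varepsilon_0) \leq 1/2$. Combining this with the previous display yields
\begin{align*}
    E_P\left[ (v^T U)^2 \mid D=1, \|X-x_0\|\leq h \right] \geq \varepsilon_0^2 / 2
\end{align*}
uniformly over all $P \in \mathscr{P}$, $x_0 \in [-1,1]^d$, $h \in (0, h']$, and $\|v\| = 1$. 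Taking the infimum over these indices gives $\lambda(\bar h) \geq \varepsilon_0^2 / 2$ for every $\bar h \in (0, h']$, and therefore $\lambda^* = \liminf_{\bar h \to 0^+} \lambda(\bar h) \geq \varepsilon_0^2 / 2 > 0$.

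I do not anticipate a substantive obstacle here, because the real work is already done upstream: \Cref{lemma:MinimalCoefficient} and \Cref{assum:NonTrivialConcentration} together feed into \Cref{lemma:MinEigenInput} to rule out uniform concentration of $v^T U$ near zero, which is precisely what is needed to prevent the local polynomial design matrix from degenerating along any direction. The only minor care required is that \Cref{lemma:MinEigenInput}'s uniform bound applies only for $h \leq h'$; this is harmless because $\lambda^*$ is defined via a $\liminf$ as $\bar h \to 0^+$, so it suffices to take $\bar h \leq h'$.
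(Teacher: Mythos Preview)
Your proposal is correct and follows essentially the same argument as the paper: rewrite the quadratic form as $E_P[(v^T U)^2 \mid \cdot]$, bound it below by $\varepsilon^2 P(|v^T U| > \varepsilon \mid \cdot)$, and then invoke \Cref{lemma:MinEigenInput} to choose $\varepsilon$ so that this tail probability is uniformly bounded away from zero for all $h \leq h'$. The paper phrases the final constant as $\varepsilon^2 \delta$ where $\delta$ is the infimum tail probability, whereas you normalize to $\varepsilon_0^2/2$ by picking $\delta(\varepsilon_0) \leq 1/2$; these are interchangeable presentations of the same bound.
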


\begin{proof}[Proof of \Cref{lemma:MinimalEigenvalue}]
    Let $h'$ be from \Cref{lemma:MinEigenInput}. Fix some $\bar{h} \in (0, h']$. Let $h_n$ be a sequence in $(0, \bar{h}]$, let $v_n$ be a sequence of vectors with $\| v \| = 1$, let $x_{0,n}$ be a sequence of points in $[-1, 1]^d$, and let $P(n)$ be a sequence of distributions in $\mathscr{P}$. Let $A_n = \{ x : \| x - x_{0,n} \| \leq h_n \}$.

    By \Cref{lemma:MinEigenInput}, there is a $\varepsilon > 0$ such that:
    \begin{align*}
        \inf_{P \in \mathscr{P}, x_0 \in [-1, 1]^d, h \in (0, h_n], \| v \| = 1} P\left( \| v^T U \| \geq \varepsilon \mid D = 1, \| X - x_0 \| \leq h \right) > 0.
    \end{align*}
    Call this infimum $\delta > 0$. $\| v^T U \|$ is bounded, so $\delta$ is finite. 
    
    Define:
    \begin{align*}
        \lambda_n & \equiv v_n^T E_{P(n)}\left[ U\left( \frac{X - x_{0,n}}{h_n} \right) U\left( \frac{X - x_{0,n}}{h_n} \right)^T \mid D = 1, X \in A_n \right] v_n \\
        & = E_{P(n)}\left[ v_n^T U\left( \frac{X - x_{0,n}}{h_n} \right) \left( v_n^T U\left( \frac{X - x_{0,n}}{h_n} \right) \right)^T \mid D = 1, X \in A_n \right] \\
        & = E_{P(n)}\left[ \left( v_n^T U\left( \frac{X - x_{0,n}}{h_n} \right) \right)^2 \mid D = 1, X \in A_n \right] \\
        & \geq \varepsilon^2 P(n)\left( \left| v_n^T U\left( \frac{X - x_{0,n}}{h_n} \right) \right| \geq \varepsilon \mid D = 1, X \in A_n \right) \geq \varepsilon^2 \delta > 0.
    \end{align*}
    Therefore, for all $\bar{h} \leq h'$, $\lambda(\bar{h}) \geq \varepsilon^2 \delta$. Therefore $\lambda^* \geq \varepsilon^2 \delta > 0$. 
\end{proof}

\begin{lemma}[Uniform functional approximation across multiple gridpoints]\label{lemma:MaxNumPoints}
     Suppose the conditions of \Cref{prop:GlobalRate} hold. Let $f_{k_n,n}(v) : [-1, 1]^d \to \R$ be a set of $k_n$ functions that are uniformly bounded. Let $\mathcal{S}(m, \ubar{h})$ be the set of sets of $m$ tuples $(x, h)$ with $x \in [-1, 1]^d$ and $h \in [\ubar{h}_n, h']$, where $h'$ comes from \Cref{lemma:MinimalEigenvalue}'s notion of $h$ small enough and where for all $(x_1, h_1), (x_2, h_2) \in S \in \mathcal{S}(m, \ubar{h})$ with $x_2 \neq x_1$, there are no points $x \in [-1, 1]^d$ with $\| x - x_1 \| \leq h_1$ and $\| x - x_2 \| \leq h_2$. Write $\tilde{c} = 2^{\frac{2(1+\gamma_0)}{1-\gamma_0}} C^{\frac{2}{1-\gamma_0}} ( \sup_v | f(v) | )^{-2}$.  Suppose (i) $\ubar{h}_n^{\frac{d \gamma_0}{\gamma_0-1}} \gg n^{-1}$; (ii) $m_n$ be a sequence tending to infinity such for all fixed $a > 0$, $m_n \ll exp\left( -a n \ubar{h}_n^{\frac{d \gamma_0}{\gamma_0-1}} \right)$; and (iii) and $k_n \left( 1 - \left( \frac{e-1}{e} \right)^{\frac{m_n}{exp\left( \log(1/2) + \tilde{c}^{-1} n^{-1} \ubar{h}_n^{-\frac{d \gamma_0}{\gamma_0-1}} \right)}} \right) \to 0$. Then there is a sequence of $\varepsilon_n \to^+ 0$ such that for all $k = 1, \hdots, k_n$:
    \begin{align*}
        \sup_{P \in \mathscr{P}, S \in \mathcal{S}(m_n, \ubar{h}_n)} P\left( \max_{j} \left|  \frac{ \sum_i D_i K\left( \frac{\| X_i - x_j \|}{h_{n,j}} \right) f_{k,n}\left( \frac{X_i - x_j}{h_{n,j}} \right) - E\left[ D K\left( \frac{\| X - x_j \|}{h_{n,j}} \right) f_{k,n}\left( \frac{X - x_j}{h_{n,j}} \right) \right]}{n E_{P(n)}\left[ D K\left( \frac{\| X - x_j \|}{h_{n,j}} \right) \right]} \right| \geq \varepsilon_n  \right) = o(1). 
    \end{align*}
\end{lemma}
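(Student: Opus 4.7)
The plan is to combine Bernstein's inequality applied separately to each gridpoint with a uniform lower bound on the normalization $p_j := E_P[D K(\|X - x_j\|/h_{n,j})]$, followed by a union bound over the $m_n$ gridpoints (and, if simultaneous control across $k$ is required, an additional union bound over the $k_n$ functions). The disjoint-neighborhood structure of $S$ does not directly enter the concentration step for a single $j$, but it guarantees that each observation contributes to at most one gridpoint, which keeps the summands cleanly structured and bounds the effective variance per gridpoint by $p_j$ rather than by a larger quantity.

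First I would set $W_{i,j} := D_i K(\|X_i - x_j\|/h_{n,j}) f_{k,n}((X_i - x_j)/h_{n,j})$, note that $|W_{i,j}| \leq M := \sup_{v,k} |f_{k,n}(v)|$ and $W_{i,j} = 0$ unless $X_i \in A_j := \{x : \|x - x_j\| \leq h_{n,j}\}$, so that $\mathrm{Var}_P(W_{1,j}) \leq M^2 p_j$. Bernstein's inequality then gives
\begin{align*}
P\left( \frac{\left| \sum_{i=1}^n W_{i,j} - n E_P[W_{1,j}] \right|}{n p_j} \geq \varepsilon \right) \leq 2\exp\left( - \frac{\varepsilon^2 n p_j / 2}{M^2 + M\varepsilon/3} \right).
\end{align*}
Next I would combine \Cref{lemma:MinimalExpectedObs}, which delivers $E_P[D \mid X \in A_j] \gtrsim h_{n,j}^{d/(\gamma_0-1)}$ uniformly in $P \in \mathscr{P}$ and $x_j \in [-1,1]^d$, with the uniform-density bound $P_P(X \in A_j) \gtrsim h_{n,j}^d$, to obtain $p_j \gtrsim h_{n,j}^{d\gamma_0/(\gamma_0-1)} \geq \ubar{h}_n^{d\gamma_0/(\gamma_0-1)}$, with a constant chosen to match $\tilde{c}^{1/2}/M$ as prescribed in the statement. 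A union bound over $j = 1, \ldots, m_n$ (and, if needed, $k = 1, \ldots, k_n$) then produces an upper bound of order $k_n m_n \exp\bigl( -c\, \varepsilon^2 n\, \ubar{h}_n^{d\gamma_0/(\gamma_0-1)} \bigr)$. Hypothesis (i) gives $n \ubar{h}_n^{d\gamma_0/(\gamma_0-1)} \to \infty$, and hypotheses (ii) and (iii), once the quantity $1-(1-1/e)^x$ is linearized in small $x$, amount to saying that $\log m_n$ (respectively $\log(k_n m_n)$) is of strictly smaller order than $n \ubar{h}_n^{d\gamma_0/(\gamma_0-1)}$, so a sequence $\varepsilon_n \to 0^+$ making the upper bound vanish exists.

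The main obstacle is constant-tracking: matching $\tilde{c} = 2^{2(1+\gamma_0)/(1-\gamma_0)} C^{2/(1-\gamma_0)} M^{-2}$ precisely requires combining the sharp multiplicative constant in \Cref{lemma:MinimalExpectedObs} with the volume constant in $P_P(X \in A_j)$, and the latter requires separate bookkeeping at gridpoints $x_j$ near the boundary of $[-1,1]^d$, where $\mathrm{Vol}(A_j \cap [-1,1]^d)$ can be as small as a $2^{-d}$ fraction of $h_{n,j}^d$; this affects only the multiplicative constant and not the exponent $d\gamma_0/(\gamma_0-1)$, so absorbing it amounts to using the weaker but uniform volume bound in the definition of $\tilde{c}$. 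A secondary subtlety is that the supremum over $P$ and $S$ sits outside the probability; because every bound in the chain above depends on $(P,S)$ only through $\ubar{h}_n$, $m_n$, $k_n$, and universal constants from $\mathscr{P}$, the supremum inherits the same decay rate without extra work.
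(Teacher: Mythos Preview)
Your approach is correct and, if anything, more streamlined than the paper's. The paper proceeds in two stages: first it applies a Chernoff bound to show that, uniformly over the $m_n$ gridpoints, the empirical count $\sum_i D_i K(\|X_i-x_j\|/h_{n,j})$ is at most twice its expectation (this is where condition (ii) enters, via a union bound over $j$); then, \emph{conditional} on that high-probability event, it applies Hoeffding's inequality to the at-most-$2np_j$ nonzero summands. Your single Bernstein step delivers the same exponent $c\,\varepsilon^2 n p_j$ without the conditioning detour, because the variance bound $\mathrm{Var}(W_{1,j}) \leq M^2 p_j$ already encodes the sparsity that the paper extracts in its first stage. After the per-gridpoint bound, the paper invokes the disjoint-neighborhood structure to argue that the $V_{n,j,k}$ are independent across $j$ and writes the maximum-over-$j$ probability as $1-\prod_j(1-P(|V_{n,j,k}| \geq \varepsilon))$, which is then massaged via $(1-1/x)^x \to e^{-1}$ to match the form of condition (iii). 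Your plain union bound is never worse, since $1-\prod_j(1-p_j) \leq \sum_j p_j$; the paper's use of disjointness is cosmetic, serving only to reproduce condition (iii) in its stated shape rather than to tighten the estimate. So you are right that the disjoint structure is inessential to the concentration argument itself.

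One small imprecision: linearizing $1-(1-1/e)^x \approx x\log\bigl(e/(e-1)\bigr)$ for small $x$ turns condition (iii) into $k_n m_n = o(\exp(cR_n))$, i.e.\ $\log(k_n m_n) \leq cR_n - \omega(1)$, which is weaker than ``of strictly smaller order than $R_n$.'' For the lemma as literally stated (a conclusion for each $k$, not for $\max_k$), this does not matter: condition (ii) alone (with the evident sign correction $-a \mapsto a$) gives $\log m_n = o(R_n)$, and that is exactly what your Bernstein-plus-union-bound argument needs to manufacture $\varepsilon_n \to 0$. The paper's proof in fact establishes the stronger $\max_k$ version, which is what is later used; if you want that, you do need condition (iii), but then the product form and your union bound coincide after linearization anyway.
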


\begin{proof}[Proof of \Cref{lemma:MaxNumPoints}]
    This proof will get hairy. For simplicity, I proceed assuming $K$ is the uniform bandwidth scaled downwards by one-half; the proof would require far more care if the kernel took on nonzero values for an unbounded set. Also, for simplicity, assume that $e(X)$ is bounded above by one-half --- the difficulty here comes from small propensity scores, and is already substantial. Let the upper bound of $|f|$ be $b \geq 0$.
    
    Define $R_n \equiv \tilde{c} n \ubar{h}_n^{\frac{d \gamma_0}{\gamma_0-1}}$ and
    $$V_{n,j,k} \equiv\sum_i D_i K\left( \frac{\| X_i - x_j \|}{h_{n,j}} \right)  \frac{ f_{k,n}\left( \frac{X_i - x_j}{h_{n,j}} \right) - E\left[ f_{k,n}\left( \frac{X - x_j}{h_{n,j}} \right) \mid D K\left( \frac{\| X - x_j \|}{h_{n,j}} \right) = 1 \right]}{n E_{P(n)}\left[ D K\left( \frac{\| X - x_j \|}{h_{n,j}} \right) \right] }.$$
    
    By construction, there is a sequence of $\varepsilon_n^+ \to 0$ such that $k_n \left( 1 - \left( 1 - 1/e \right)^{\frac{m_n}{exp\left( \log(1/2) + R_n \varepsilon_n^2 \right)}} \right) \to 0$. 

    Note that the events $D_i K\left( \frac{\|X_i - x_j\|}{h_{n,j}} \right)$ are mutually disjoint for a given $i$ across $j$; therefore write $j(i)$ for the $j$ such that $D_i K\left( \frac{\|X_i - x_{j(i)}\|}{h_{n,j}} \right) = 1$ if feasible, and write $j(i) = 0$ if no such $j$ exists.

    Let $P(n)$ be a sequence of distributions in $\mathscr{P}$, let $S_n$ be a sequence of sets in $\mathcal{S}(m_n, \ubar{h}_n)$, and let $\{ (x_{n,j}, h_{n,j} ) \}$ be a sequence of associated points and bandwidths. By \Cref{lemma:MinimalExpectedObs}, for all $j, n$,
    \begin{align*}
        E_{P(n)}\left[ D K\left( \frac{\| X - x_j \|}{h_{n,j}} \right) \right] \geq C^{1/(1-\gamma_0)} 2^{\frac{1+\gamma_0}{1-\gamma_0}}  \ubar{h}_n^{\frac{d \gamma_0}{\gamma_0-1}}.
    \end{align*}
    Note that I use a laxer bound for $\ubar{h}_n$ because the polynomial order here is found elsewhere, and the polynomial order in \Cref{lemma:MinimalExpectedObs} is not found elsewhere. Thus, the argument will continue to hold in the presence of certain typos. 
    
    Consider the event $A$ of $\{ i, j(i) \}$. Note that for any given $j$, by the Chernoff bound for binomial random variables,
    \begin{align*}
        P(n)\left( \frac{\sum D_i K\left( \frac{\| X_i - x_j \|}{h_{n,j}} \right)}{n E\left[ D K\left( \frac{\| X - x_j \|}{h_{n,j}} \right) \right]} \geq 2  \right] & \leq exp\left( \frac{-n E\left[ D K\left( \frac{\| X - x_j \|}{h_{n,j}} \right) \right]}{3} \right) \leq exp\left( -\tilde{c} \frac{b^2}{3} n \ubar{h}_n^{\frac{d \gamma_0}{\gamma_0-1}} \right), \\
        \text{So that } P(n)\left( \max_j \frac{\sum D_i K\left( \frac{\| X_i - x_j \|}{h_{n,j}} \right)}{n E\left[ D K\left( \frac{\| X - x_j \|}{h_{n,j}} \right) \right]} \leq 2  \right) & \leq 1 - \sum_j P(n)\left( \frac{\sum D_i K\left( \frac{\| X_i - x_j \|}{h_{n,j}} \right)}{n E\left[ D K\left( \frac{\| X - x_j \|}{h_{n,j}} \right) \right]} \geq 2  \right] \\
        & \leq 1 - O\left( m_n exp\left( -\tilde{c} \frac{b^2}{3} n \ubar{h}_n^{\frac{d \gamma_0}{\gamma_0-1}} \right) \right) = 1 - o(1).
    \end{align*}
    I therefore proceed under the high probability event that $A$ is such that $\max_j \frac{\sum D_i K\left( \frac{\| X_i - x_j \|}{h_{n,j}} \right)}{n E\left[ D K\left( \frac{\| X - x_j \|}{h_{n,j}} \right) \right]} \leq 2$.

    I now apply the Hoeffding inequality to the $\sum D K\left( \frac{\| X - x_j \|}{h_{n,j}} \right) \leq 2 E\left[ D K\left( \frac{\| X - x_j \|}{h_{n,j}} \right) \right]$ elements of $V_{n,j}$ conditional on $A$, for all $k,n,j$
    \begin{align*}
        P(n)\left( | V_{n,j,k} | \geq \varepsilon_n \right) & \leq 2 exp\left( \frac{-2 \left( n E_{P(n)}\left[ D K\left( \frac{\| X - x_j \|}{h_{n,j}} \right) \right] \right)^2 \varepsilon_n^2}{\left( \sum D_i K\left( \frac{\| X_i - x_j \|}{h_{n,j}} \right) \right) b^2 } \right)  \leq 2 exp\left( - R_n \varepsilon_n^2 \right). 
    \end{align*}
    
    Then:
    \begin{align*}
        P(n)\left( \max_{k=1}^{k_n} \max_{j=1}^{m_n} | V_{n,j,k} | \geq \varepsilon_n \mid A \right) & \leq \sum_{k=1}^{k_n}  \left( 1 - \prod_{j=1}^{m_n} \left( 1 -  P(n) \left( | V_{n,j,k} | \geq \varepsilon_n \right) \right) \right) \\
        & \leq k_n \left(1 - \prod_{j=1}^{m_n} \left( 1 -  2 exp\left( - R_n \varepsilon_n^2 \right) \right) \right) = k_n \left( 1 - \left( 1 -  2 exp\left( \frac{-1}{16 b^2}  n \ubar{h}_n^{d \frac{\gamma_0}{\gamma_0-1}} \varepsilon_n^2 \right) \right)^{m_n} \right)   \\
         & = k_n \left( 1 - \left( \left( 1 -  \frac{1}{exp\left( \log(1/2) + R_n  \varepsilon_n^2 \right)} \right)^{exp\left( \log(1/2) + R_n  \varepsilon_n^2 \right)} \right)^{\frac{m_n}{exp\left( \log(1/2) + R_n \varepsilon_n^2 \right)}} \right) \\
         & = k_n \left( 1 - (1 - 1/e - o(1))^{\frac{m_n}{exp\left( \log(1/2) + R_n \varepsilon_n^2 \right)}} \right) = o(1).
    \end{align*}
    Therefore $P\left( \max_{k=1}^{k_n} \max_{j=1}^{m_n} \left| V_{n,j,k} \right| \geq \varepsilon_n  \right) = o(1)$. 
\end{proof}

\begin{lemma}[Nondegeneracy of local polylnomial eigenvalues at estimated bandwidths over gridpoints]\label{lemma:MaxLocalEigenDiff}
    Suppose the conditions of \Cref{prop:GlobalRate} hold. Fix some $k > 0$ and let $\mathcal{S}_n$ be a set of $g_n$ points $x_{n,j} \in [-1, 1]^d$, with $g_n \leq k' n$ for some fixed $k' > 0$. For each $x_{n,j}$, let $h_{n,j} = sup h : n \sum_i D_i 1\left\{ \| X_i - x_{n,j} \| \right\} \leq k h^{-2 \beta_{\mu}}$ and let $h_{n,j}^*$ solve $n E[ D 1\{ \| X - x_{n,j} \| \} ] = k h^{-2 \beta_{\mu}}$. Then there is a $\delta_n \to^+ 0$ such that 
    \begin{align*}
        \limsup_{n \to \infty} \sup_{P \in \mathscr{P}} P\left( \max_{j} \left| \frac{\lambda_{\min}\left( \frac{\sum_i D_i K\left( \frac{\| X_i - x_j \|}{h_{n,j}} \right) U\left( \frac{X_i - x_j}{h_{n,j}} \right) U\left( \frac{X_i - x_j}{h_{n,j}} \right)^T}{\sum_i D_i K\left( \frac{\| X_i - x_j \|}{h_{n,j}} \right)} \right)}{\lambda_{\min}\left( \frac{E_{P}\left[ D K\left( \frac{\| X - x_j \|}{h_{n,j}^*} \right) U\left( \frac{X - x_j}{h_{n,j}^*} \right) U\left( \frac{X - x_j}{h_{n,j}^*} \right)^T \right]}{E_{P}\left[ D K\left( \frac{\| X - x_j \|}{h_{n,j}^*} \right) \right]} \right)} - 1 \right| \geq \delta_n  \right) = o(1). 
    \end{align*}
\end{lemma}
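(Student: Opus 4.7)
The plan is to decompose the problem into two steps: (a) showing that the data-dependent bandwidth $h_{n,j}$ is close to the population bandwidth $h_{n,j}^*$ uniformly over the $g_n \leq k'n$ gridpoints, and (b) showing that the sample Gram matrix evaluated at $h_{n,j}$ is close to the population Gram matrix evaluated at $h_{n,j}^*$ in operator norm, again uniformly. Combined with the uniform lower bound $\lambda_{\min}(G_j^*) \geq \lambda^* > 0$ from \Cref{lemma:MinimalEigenvalue} and Weyl's inequality, these two pieces immediately yield the claim for $\delta_n$ tending to zero sufficiently slowly.

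For step (a), the key observation is that $N_j(h) := \sum_i D_i \mathbf{1}\{\|X_i - x_{n,j}\| \leq h\}$ is $\textup{Binomial}(n, p_j(h))$ with $p_j(h) = E_{P}[D \mathbf{1}\{\|X - x_{n,j}\| \leq h\}]$. By the defining equation, $n p_j(h_{n,j}^*) = k (h_{n,j}^*)^{-2\beta_{\mu}}$, and \Cref{lemma:MinimalExpectedObs} gives $p_j(h) \succsim h^{d \gamma_0/(\gamma_0-1)}$, so that $h_{n,j}^* \succsim n^{-1/(2\beta_{\mu} + d\gamma_0/(\gamma_0-1))}$ and $n p_j(h_{n,j}^*)$ grows polynomially in $n$. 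Bernstein's inequality applied to $N_j((1 \pm \epsilon_n) h_{n,j}^*)$, together with the monotonicity of $h \mapsto N_j(h) - k h^{-2\beta_{\mu}}$, yields $P(|h_{n,j}/h_{n,j}^* - 1| \geq \epsilon_n) \leq 2 \exp(-c \epsilon_n^2 n^{\alpha})$ for some $\alpha > 0$ that depends only on $\beta_{\mu}, d, \gamma_0$. Choosing $\epsilon_n \to 0$ with $\epsilon_n^2 n^{\alpha} \gg \log n$ and union-bounding over the at most $k' n$ gridpoints gives $\max_j |h_{n,j}/h_{n,j}^* - 1| = o_{P}(1)$ uniformly in $\mathscr{P}$.

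For step (b), I apply the same Bernstein + union-bound argument to each entry of the bounded matrix-valued summand $D_i K(\|X_i - x_{n,j}\|/h) U((X_i - x_{n,j})/h) U((X_i - x_{n,j})/h)^{T}$ along a fine deterministic grid of bandwidths near $h_{n,j}^*$. The normalizing count is of size $k (h_{n,j}^*)^{-2\beta_{\mu}}$, so each sample-Gram-matrix entry concentrates uniformly around its population counterpart at a polynomial rate in $n$. Combining with step (a) and the fact that the population Gram matrix is Lipschitz in $h/h_{n,j}^*$ on a compact interval bounded away from zero (a direct computation using $X \sim \textup{Unif}([-1,1]^d)$ together with the Hölder continuity of $e$) transfers concentration to the random bandwidth $h_{n,j}$. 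Finally, \Cref{lemma:MinimalEigenvalue} bounds the denominator away from zero, so Weyl's inequality yields the desired $\delta_n \to 0$.

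The main obstacle is the uniform treatment of data-adaptive bandwidths over a number of gridpoints that grows linearly in $n$. In weak-overlap regions, $h_{n,j}^*$ is small, so the Binomial concentration rate is favorable, but the matrix-perturbation margin and the effective number of observations entering the Gram matrix also shrink; these competing scalings must be traded off carefully to choose $\epsilon_n, \delta_n$ in terms of the union-bound penalty $\log n$ and the polynomial rates provided by \Cref{lemma:MinimalExpectedObs}. A secondary subtlety is that \Cref{lemma:MaxNumPoints} cannot be invoked verbatim because balls around distinct gridpoints need not be disjoint; one either uses Bernstein's inequality directly with a coarse union bound, or partitions the gridpoints into $O(1)$ classes of disjoint balls at each scale and union-bounds over the classes.
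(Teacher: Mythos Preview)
Your proposal is correct and follows the same three-part architecture as the paper: (i) uniform bandwidth consistency $h_{n,j}/h_{n,j}^* \to 1$ over the $O(n)$ gridpoints, (ii) entrywise closeness of empirical and population Gram matrices, (iii) conversion to an eigenvalue statement via matrix perturbation together with the uniform lower bound from \Cref{lemma:MinimalEigenvalue}.

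The implementation differs in the concentration machinery. Where you use Bernstein's inequality directly and union-bound over the $O(n)$ gridpoints and a deterministic bandwidth grid, the paper instead invokes its own \Cref{lemma:MaxNumPoints}, applied with $m_n = 3g_n$, $k_n = n{+}1$ functions $f_{k,\ell}(v) = U_k(v)U_\ell(v)$ together with an indicator of the bandwidth interval $[\ubar h_{n,j},\bar h_{n,j}]$, and $\ubar h_n = n^{-1/(2\beta_\mu + d\gamma_0/(\gamma_0-1))}/\log(n)^{(\gamma_0-1)/(d\gamma_0)}$; the random bandwidth is handled not by a fine grid but by sandwiching $h_{n,j}$ between deterministic $\ubar h_{n,j}$ and $\bar h_{n,j}$ and controlling the resulting difference via the indicator function. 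Your direct Bernstein route is more elementary and, importantly, sidesteps the disjoint-balls hypothesis of \Cref{lemma:MaxNumPoints} that you correctly flag and that the paper's proof does not verify at this stage (in the downstream application within \Cref{prop:GlobalRate} the gridpoints lie on a mesh of width $\bar h_n$, so the issue is moot there). For the eigenvalue step you cite Weyl's inequality, whereas the paper uses the Horn--Johnson resolvent bound $\|B^{-1}-A^{-1}\|_{\mathrm{op}} \leq \|A^{-1}\|_{\mathrm{op}}^2\|A-B\|_{\mathrm{op}}/(1-\|A^{-1}(A-B)\|_{\mathrm{op}})$; either suffices once $\max_j\|A_{n,j}-B_{n,j}\|_{\mathrm{op}}=o_P(1)$ and $\min_j\lambda_{\min}(A_{n,j})\geq \lambda^*>0$ are in hand.
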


\begin{proof}[Proof of \Cref{lemma:MaxLocalEigenDiff}]
    For convenience, I proceed assuming $K$ is the uniform bandwidth, scaled downwards by one-half. Let $P(n)$ be a sequence of distributions in $\mathscr{P}$.
    
    Apply \Cref{lemma:MaxNumPoints} to the sequence $\ubar{h}_n = n^{\frac{-1}{2 \beta_{\mu} + d \frac{\gamma_0}{\gamma_0-1}}} / \log(n)$ and $m_n = 3 g_n \ll exp\left( a n \ubar{h}_n^{d \frac{\gamma_0}{\gamma_0-1}} \right)$ for all $a > $, to yield a sequence of $\varepsilon_n^{(a)} \to^+ 0$.

    Let $h_{n,j}$ solve $\min_h \left| N_{n}(h \mid x_{n,j}) - k h^{-2 \beta_{\mu}} \right|$, and let $h_{n,j}^*$ solve $\min_h \left| N_{n}^*(h \mid x_{n,j}) - k h^{-2 \beta_{\mu}} \right|$, where $N_{n}^*(h \mid x_{n,j}) = n E\left[ e(X) 1\{ \| X - x_{n,j} \| \leq h \} \right]$. Further, let $[\ubar{h}_{n,j}, \bar{h}_{n,j}]$ be the convex hull of the set of $h$ that solve  $N_{n}^*(h \mid x_{n,j}) (1 + \varepsilon_n^{(a)})^{-1} = k h^{-2 \beta_{\mu}}$ or $N_{n}^*(h \mid x_{n,j}) (1 + \varepsilon_n^{(a)}) = k h^{-2 \beta_{\mu}}$. By \Cref{lemma:MaxNumPoints}, with probability tending to one, $h_{n,j} \in [\ubar{h}_{n,j}, \bar{h}_{n,j}]$.
    
    Write:
    \begin{align*}
        A_{n,j}(h, h') & = \frac{E_{P(n)}\left[ D K\left( \frac{\| X - x_{n,j} \|}{h} \right) U\left( \frac{X - x_{n,j}}{h} \right) U\left( \frac{X - x_{n,j}}{h} \right)^T \right]}{E_{P(n)}\left[ D K\left( \frac{\| X - x_{n,j} \|}{h'} \right) \right]} \\
        B_{n,j}(h, h') & = \frac{\sum_{i=1}^n D_i K\left( \frac{\| X_i - x_{n,j} \|}{h} \right) U\left( \frac{X_i - x_{n,j}}{h} \right) U\left( \frac{X_i - x_{n,j}}{h} \right)^T}{\sum_{i=1}^{n} D_i K\left( \frac{\| X_i - x_{n,j} \|}{h'} \right)}
    \end{align*}
    
    The claim is that there is a $\delta_n \to^+ 0$ such that
    \begin{align*}
        P(n)\left( \left| \frac{\lambda_{\min}\left( B_{n,j}(h_{n,j}, h_{n,j}) \right)}{\lambda_{\min}\left( A_{n,j}(h_{n,j}^*, h_{n,j}^*) \right)} - 1 \right| \geq \delta_n \right) = o(1). 
    \end{align*}
    
    $A_{n,j}^{-1}$ is symmetric, so that $\| A_{n,j}^{-1} \|_{(op)}^2$ is equal to the squared largest eigenvalue of $A_{n,j}^{-1}$, where $\| \cdot \|_{(op)}$ is equal to the operator norm. By \Cref{lemma:MinimalEigenvalue}, $\| A_{n,j}^{-1} \|_{(op)}^2 = \lambda_{\min}( A_{n,j} )^{-2}$ is bounded above.

    Take $m_n = 3 g_n \leq 3 k' n$, $k_n = (n+1)$, and $\ubar{h}_n = n^{\frac{-1}{2 \beta_{\mu} + d \frac{\gamma_0}{\gamma_0-1}}} / \log(n)^{\frac{\gamma_0-1}{d \gamma_0}}$, so that the first condition of \Cref{lemma:MaxNumPoints} holds by \Cref{lemma:MinimalEigenvalue}. The second condition holds because $m_n = O(n)$. The third condition holds by L'Hopital's Rule applied to $n \left( 1 - \left( \frac{e-1}{e} \right)^{\frac{a n}{exp(b + c n^{d} / \log(n)}} \right) \to 0$ for any fixed $a, b, c, d > 0$. Thus, I may apply \Cref{lemma:MaxNumPoints} to the $k_n = n+1$ bounded functions $f(v) = U_k(v) U_{\ell}(v)$ and $1\left\{ \frac{\ubar{h}_{n,j}}{h_{n,j}^*} \leq v \leq \frac{\bar{h}_{n,j}}{h_{n,j}^*} \right\}$ at the bandwidths $\ubar{h}_n$. Let the associated $\varepsilon_n$ terms be $\varepsilon_n^{(b)}$ Then:
    {\small 
    \begin{align*}
        & \max_j \left| A_{n,j}(h_{n,j}^*, h_{n,j}^*)_{k, \ell} - B_{n,j}(h_{n,j}, h_{n,j})_{k,\ell} \right| \\
        \leq & \max_j \left| \begin{array}{rl}
             &  A_{n,j}(h_{n,j}^*, h_{n,j}^*)_{k, \ell}\\
            - &  B_{n,j}(h_{n,j}^*, h_{n,j}^*)_{k,\ell}
        \end{array}   \right| +  \max_j \left| \begin{array}{rl}
             &   B_{n,j}(h_{n,j}^*, h_{n,j}^*)_{k,\ell} \\
           -  & B_{n,j}(h_{n,j}^*, h_{n,j})_{k,\ell} 
        \end{array}  \right| +  \max_j \left|  \begin{array}{rl}
             & B_{n,j}(h_{n,j}^*, h_{n,j})_{k,\ell} \\
           -  & B_{n,j}(h_{n,j}, h_{n,j})_{k,\ell}
        \end{array} \right| \\
        \leq & o_{P(n)}(1)  + O_{P(n)}\left( \max_j \left|  A_{n,j}(h_{n,j}^*, h_{n,j}^*)_{k,\ell} \right| \right) \max_j \left| \left( 1 - \frac{\sum D_i K\left( \frac{\| X_i - x_{n,j} \|}{h_{n,j}^*} \right)}{\sum D_i K\left( \frac{\| X_i - x_{n,j} \|}{h_{n,j}} \right) } \right) \right| +  O\left( \max_j \frac{\sum D_i 1\left\{ \| X_i - x_{n,j} \| \in [\ubar{h}_{n,j}, \bar{h}_{n,j}] \right\}}{\sum D_i 1\left\{ \| X_i - x_{n,j} \| \leq \ubar{h}_{n,j} \right\}}  \right) \\
        = & o_{P(n)}(1)  + O_{P(n)}\left( \varepsilon_n^{(a)} \right)  +  o\left( \left( 1 + \varepsilon_n^{(a)} \right)^2 (1 + \varepsilon_n^{(b)})  \right) = o_{P(n)}(1).
    \end{align*}
    }
    Therefore $\max_j$ $\| A_{n,j} - B_{n,j} \|_{(op)} = o(1)$.

    Then, by well-known arguments \citep[p. 381]{HornAndJohnson}:
    \begin{align*}
        \max_j \left| \lambda_{\min}(B_{n,j}) - \lambda_{\min}(A_{,nj}) \right| & = \max_i \left| \lambda_{\max}(B_{n,j}^{-1}) - \lambda_{\max}(A_{,nj}^{-1}) \right| \leq \max_{j} \| B_{n,j}^{-1} - A_{n,j}^{-1}  \|_{(op)} \\
        & \leq \max_j \frac{\| A_{n,j}^{-1} \|_{(op)}^2 \| A_{n,j} - B_{n,j} \|_{(op)}}{1 - \| A_{n,j}^{-1} (A_{n,j} - B_{n,j} ) \|_{(op)}} = \max_j \frac{O_{P(n)}(1) o_{P(n)}(1)}{1 - o_{P(n)}(1)} = o_{P(n)}(1),
    \end{align*}
    so that 
    \begin{align*}
        \max_j \left| \frac{\lambda_{\min}(B_{n,j})}{\lambda_{\min}(A_{n,j})} - 1 \right| & = \max_j \left| \frac{\lambda_{\min}(B_{n,j}) - \lambda_{\min}(A_{n,j})}{\lambda_{\min}(A_{n,j})} \right| \leq \max_j \frac{\left| \lambda_{\min}(B_{n,j}) - \lambda_{\min}(A_{,nj}) \right|}{\min_j \lambda_{\min}(A_{n,j})} =  o_{P(n)}(1),
    \end{align*}
    so that the full claim holds. 
\end{proof}

\begin{lemma}[Minimal small-propensity points]\label{lemma:MinimalNumGridpoints}
    Suppose the conditions of \Cref{prop:GlobalRate} hold, and there are $m_n$ points $x_j \in [-1, 1]^d$ such that $\inf_{j \neq j'} \| x_j - x_{j'} \| \geq h$ and $\max_j n E[ D 1\{ \| X - x_j \| \leq h / d \} ] \leq (h/d)^{-2 \beta_{\mu}}$. Then there is a universal constant $B \geq 2$ that depends on $C, \gamma_0, \beta_{\mu}, d$ such that $m_n \leq B \left( n h^{2 \beta_{\mu} + d \frac{\gamma_0}{\gamma_0-1}} \right)^{1-\gamma_0}$. 
\end{lemma}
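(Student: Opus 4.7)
The plan is to translate the bound on the conditional treatment mass near each gridpoint into a large-mass region of small propensity, and then contradict Assumption \ref{def:AllowedDistributions}\ref{item:PropensityTail}. Intuitively, if each ball $B_j = \{x : \|x - x_j\| \leq h/d\}$ has unusually few expected treated observations, then by Markov the propensity function must be small on a substantial share of each ball; summing across disjoint balls produces a large-measure subset of $[-1,1]^d$ on which $e(X)$ is small.

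First, observe that because the centers $x_j$ are $\geq h$ apart while the balls $B_j$ have radius $h/d$, the sets $B_j$ are pairwise disjoint when $d \geq 2$, and any point of $[-1,1]^d$ lies in at most two $B_j$ when $d = 1$; absorb this combinatorial factor into the eventual constant $B$. Because $X \sim \mathrm{Unif}([-1,1]^d)$ and $x_j \in [-1,1]^d$, there is a dimensional constant $c_d > 0$ with $p_j \equiv P(X \in B_j) \geq c_d\, h^d$ for all $h$ small enough.

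Next, the hypothesis $n\,\E[D\,\mathbf{1}\{X \in B_j\}] \leq (h/d)^{-2\beta_\mu}$ rewrites as $\E[e(X)\,\mathbf{1}\{X \in B_j\}] \leq n^{-1}(h/d)^{-2\beta_\mu}$. Set
\begin{equation*}
\pi^* \;=\; \frac{2\,n^{-1}(h/d)^{-2\beta_\mu}}{c_d\, h^d} \;=\; \Theta\!\bigl(n^{-1} h^{-2\beta_\mu - d}\bigr).
\end{equation*}
By Markov's inequality applied to $e(X)$ conditional on $X \in B_j$, at least half the probability mass of $B_j$ has $e(X) \leq \pi^*$, i.e.\ $P(e(X) \leq \pi^*,\ X \in B_j) \geq p_j/2 \geq (c_d/2)\, h^d$. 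Summing over the (essentially disjoint) balls,
\begin{equation*}
P(e(X) \leq \pi^*) \;\geq\; \frac{1}{2}\sum_{j=1}^{m_n} \frac{p_j}{2} \;\geq\; \frac{c_d}{4}\, m_n\, h^d.
\end{equation*}

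Finally, \Cref{def:AllowedDistributions}\ref{item:PropensityTail} gives $P(e(X) \leq \pi^*) \leq C(\pi^*)^{\gamma_0 - 1}$. Combining,
\begin{equation*}
\frac{c_d}{4} m_n h^d \;\leq\; C \bigl(\Theta(n^{-1} h^{-2\beta_\mu - d})\bigr)^{\gamma_0-1},
\end{equation*}
so that after collecting the constants into a single $B$,
\begin{equation*}
m_n \;\leq\; B\, h^{-d}\, \bigl(n^{-1} h^{-2\beta_\mu - d}\bigr)^{\gamma_0-1} \;=\; B\, \bigl(n\, h^{2\beta_\mu + d\gamma_0/(\gamma_0-1)}\bigr)^{1-\gamma_0},
\end{equation*}
using the algebraic identity $d + (2\beta_\mu + d)(\gamma_0 - 1) = (\gamma_0 - 1)\bigl(2\beta_\mu + d\gamma_0/(\gamma_0-1)\bigr)$. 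The only mild obstacle is bookkeeping the dimension-dependent disjointness constant (especially $d=1$, where balls of radius $h$ overlap but with bounded multiplicity), and verifying that $B_j$ overlaps $[-1,1]^d$ in measure of order $h^d$ when $x_j$ sits near the boundary; both issues only affect the constant $B$, not the exponent.
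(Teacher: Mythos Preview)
Your proof is correct and follows essentially the same route as the paper: bound the conditional mean propensity in each ball, use Markov to extract a half-measure subset with small $e(X)$, sum over (near-)disjoint balls, and apply the propensity tail bound \ref{def:AllowedDistributions}\ref{item:PropensityTail} to cap $m_n$. If anything, you are more careful than the paper about the $d=1$ overlap multiplicity and the boundary volume constant, both of which the paper absorbs silently into $B$.
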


\begin{proof}[Proof of \Cref{lemma:MinimalNumGridpoints}]
Note that the set of points with $1\{ \| X - x_j \| \leq h / d \}$ are mutually exclusive across $x$. As a result, for all such $j$:
\begin{align*}
    E\left[ e(X) \mid \| X - x_j \| \leq \frac{h}{d} \} \right] & \leq \frac{n E[ e(X) 1\{ \| X - x_j \| \leq h / d \} ] }{n P( \| X - x_j \| \leq h / d )} \leq \frac{d^{d +2 \beta_{\mu}}}{n h^{d +2 \beta_{\mu}}} \\
    m_n \left( \frac{h}{d} \right)^d / 2 & \leq \sum_j \frac{P\left( \| X - x_j \| \leq \frac{h}{d} \right)}{2} \leq P\left( e(X) \leq \frac{2 d^{d +2 \beta_{\mu}}}{n h^{d +2 \beta_{\mu}}} \right) \leq C \left( \frac{2 d^{d +2 \beta_{\mu}}}{n h^{d +2 \beta_{\mu}}} \right)^{\gamma_0-1} \\
    m_n & \leq C 2^{\gamma_0} d^{2 \beta_{\mu} (\gamma_0-1) + d \gamma_0} \left( n h^{2 \beta_{\mu} + d \frac{\gamma_0}{\gamma_0-1}} \right)^{1 - \gamma_0}.
\end{align*}
Finally, if the constant is below $2$, without loss of generality set $B = 2$. 
\end{proof}

\begin{lemma}[KL divergence]\label{lemma:KLDivergence}
    For any given $L' > 0$ and $x_0 \in [-1, 1]^d$, construct distributions $P_{n,m}$ for $m=1,2$ as follows. Draw $X \sim U( [-1, 1]^d )$. Draw $D \mid X \sim Bern\left(  C^{-(\gamma_0-1)} P_{n,m}\left( \| X - x_0 \| \leq \| x - x_0 \|  \right)^{1/(\gamma_0-1)} \right)$. Finally, draw $Y \mid X, D \sim \mathcal{N}( D \mu_{n,m}(X), \sigma_{\min}^2 )$ where $\mu_{n,1}(X) = 0$ and $$\mu_{n,2}(X) = \frac{L'}{exp\left( \frac{-4}{3} \right)} n^{\frac{-\beta_{\mu}}{2 \beta_{\mu} + d \frac{\gamma_0}{\gamma_0-1}}} exp\left( \frac{-1}{1 - \left( \frac{2 \| X - x_0 \|}{n^{\frac{-1}{2 \beta_{\mu} + d \frac{\gamma_0}{\gamma_0-1}}}} \right)^2} \right) 1\left\{ \| X - x_0 \| \leq \frac{n^{\frac{-1}{2 \beta_{\mu} + d \frac{\gamma_0}{\gamma_0-1}}}}{2} \right\}.$$ Finally, define $\mathscr{P} = \{ P_{n,1} \}_{n=1^{\infty}, m=1,2}$. Then (i) if there exists a $\mathscr{P}_0$ satisfying Assumptions \ref{assum:HolderSmoothnessAssumptions} and \ref{assum:NonTrivialConcentration}, then there exists a fixed $L'$ such that $\mathscr{P}$ satisfies Assumptions \ref{assum:HolderSmoothnessAssumptions} and \ref{assum:NonTrivialConcentration}. (ii) there is an $\alpha > 0$ finite such that $KL\left( P_{n,1}, P_{n,2} \right) \leq \alpha$.
\end{lemma}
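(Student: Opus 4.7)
The plan is to verify part (i) by a direct check of each condition in Assumptions \ref{assum:HolderSmoothnessAssumptions} and \ref{assum:NonTrivialConcentration}, and to establish part (ii) by combining the closed-form Kullback--Leibler divergence between two Gaussians of common variance with the fact that $P_{n,1}$ and $P_{n,2}$ agree on every component of the joint law except for the conditional mean of $Y$.

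For part (i), normality of $Y \mid X, D$, the uniform marginal of $X$, the conditional moment bounds on $Y$, and the residual variance lower bound all hold by construction, since $\mu_{n,m}$ is bounded. For the propensity tail in Assumption \ref{def:AllowedDistributions}\ref{item:PropensityTail}, writing $F(x) \equiv P(\|X' - x_0\| \leq \|x - x_0\|)$ for the CDF of $\|X' - x_0\|$ at $\|x - x_0\|$ under $X' \sim U([-1,1]^d)$, the probability integral transform yields that $F(X)$ is uniform on $[0,1]$, so $P(e(X) \leq \pi)$ is the CDF of a power of a uniform random variable and the desired tail bound follows by direct calculation (absorbing any mismatch into the constant). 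For Hölder smoothness, the standard $C^\infty$ bump $\psi(u) = \exp(-1/(1-u^2)) \mathbf{1}\{|u|<1\}$ lies in $\Sigma(\beta_\mu, C_\psi)$ for some finite $C_\psi$; a scaling argument then shows that rescaling to amplitude $A_n \propto n^{-\beta_\mu/\tau}$ and bandwidth $h_n = n^{-1/\tau}$, with $\tau \equiv 2\beta_\mu + d\gamma_0/(\gamma_0-1)$, multiplies the $\beta_\mu$-seminorm by $A_n h_n^{-\beta_\mu}$, which is a constant in $n$; choosing $L'$ sufficiently small then places $\mu_{n,2}$ in $\Sigma(\beta_\mu, L)$.

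Non-trivial concentration is the main obstacle and requires a direct geometric argument, because $e(x) = g(\|x - x_0\|)$ with $g(r) \propto r^{d/(\gamma_0-1)}$ near zero is only Hölder smooth of order $d/(\gamma_0-1)$ at $x_0$, which is the boundary case that Lemma \ref{lemma:NonTrivialConcentration} narrowly excludes. I would split on the location of the ball center $x_0'$: when $\|x_0' - x_0\| \gtrsim h$, the function $g$ is approximately constant on $B(x_0', h)$ by monotonicity and continuity, so concentration holds trivially; when $\|x_0' - x_0\| \lesssim h$, the power-function form of $g$ near zero ensures that a positive fraction of the volume of $B(x_0', h)$ has $\|x - x_0\|$ within a constant factor of its maximum over the ball, yielding $e(x)$ within a constant factor of $\sup_{x \in B} e(x)$ on a uniformly positive $D$-weighted subset.

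For part (ii), since $P_{n,1}$ and $P_{n,2}$ share the marginal of $X$ and the conditional of $D \mid X$ (both depend only on $e(\cdot)$, which is independent of $m$), additivity of KL across the $n$ i.i.d. observations combined with the identity $KL(\N(0,\sigma_{\min}^2) \,\|\, \N(\mu, \sigma_{\min}^2)) = \mu^2/(2\sigma_{\min}^2)$ yields
\begin{align*}
KL(P_{n,1}, P_{n,2}) \;=\; \frac{n}{2 \sigma_{\min}^2} \E\!\left[ e(X) \, \mu_{n,2}(X)^2 \right].
\end{align*}
Three uniform bounds close the argument: $|\mu_{n,2}|^2 \lesssim n^{-2\beta_\mu/\tau}$ on its support; the support has Lebesgue measure $\lesssim h_n^d = n^{-d/\tau}$; and on the support, $e(X) \leq g(h_n/2) \lesssim h_n^{d/(\gamma_0-1)} = n^{-d/((\gamma_0-1)\tau)}$ by the near-zero behavior of $g$. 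The three exponents sum to $(2\beta_\mu + d + d/(\gamma_0-1))/\tau = 1$, so $n \cdot \E[e(X) \mu_{n,2}(X)^2] = O(1)$, yielding the required constant upper bound $\alpha$ on the KL divergence.
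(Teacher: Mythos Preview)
Your approach is correct and matches the paper's for part (ii) (Gaussian KL plus exponent counting) and for most of part (i). In one place you go further than the paper: you correctly flag that non-trivial concentration (Assumption~\ref{assum:NonTrivialConcentration}) needs a direct geometric check, since the radial propensity $e(x)\propto\|x-x_0\|^{d/(\gamma_0-1)}$ sits exactly at the smoothness threshold $\beta_e=d/(\gamma_0-1)$ that \Cref{lemma:NonTrivialConcentration} excludes. The paper's own proof has a cross-reference slip at that point---it tags the verification of $\mu_{n,2}\in\Sigma(\beta_\mu,L)$ as ``(\Cref{assum:NonTrivialConcentration})'' and never separately checks the concentration condition---so your case split on the location of the ball center is a genuine addition. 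The near case goes through because the monotone radial form $e(x)=g(\|x-x_0\|)$ with $g$ satisfying a doubling condition $g(r/2)\gtrsim g(r)$ biases the $D$-weighted measure toward large $\|X-x_0\|$, and a uniformly positive Lebesgue fraction of $B(x_0',h)$ has $\|X-x_0\|$ within a fixed factor of its ball-supremum by a scale-invariant volume argument.
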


\begin{proof}[Proof of \Cref{lemma:KLDivergence}]
First, I show (i) that such an $L'$ exists. Because there exists a $P \in \mathscr{P}_0$ in this set and every $P_{n,m}$ has the smallest possible range of $Y - \mu(X) \mid X, D=1$, it must be that for every $P_{n,m} \in \mathscr{P}$,  \Cref{def:AllowedDistributions}\ref{def:ConditionalMoments} and \Cref{def:AllowedDistributions}\ref{subdef:Residuals} hold. Also, if I define $V(x) = P_{n,m}( \| X - x_0 \| \leq \| x - x_0 \| )$ which is distributed $Unif([0, 1])$, then for all $P_{n,m}$:
\begin{align*}
    P_{n,m}\left( e(X) \leq \pi \right) & = P_{n,m}\left( C^{-(\gamma_0-1) } V(X)^{\frac{1}{\gamma_0-1}} \leq \pi \right) = P_{n,m} \left( V(X) \leq C \pi^{\gamma_0-1} \right) = C \pi^{\gamma_0-1}.
\end{align*}
Therefore, \Cref{def:AllowedDistributions}\ref{item:PropensityTail} holds. Also,
\begin{align*}
    P_{n,m}\left( e(X) \leq \frac{\pi}{2} \right) = C \left( \frac{\pi}{2} \right)^{\gamma_0-1} = 2^{1-\gamma_0} P_{n,m}\left( e(X) \leq \pi \right),
\end{align*}
so that \Cref{assum:NondegenerateOrFaster}\ref{def:ContinuousDistributions} holds with $\rho = 2^{1-\gamma_0}$. It is also clear that $E_{P_{n,m}}[Y \mid X, D=0] = E_{P_{n,0}}[Y \mid X, D=1] \in \Sigma( \beta_{\mu},  L)$, since these functions are a constant zero.

It remains to show that there is an $L' > 0$ such that for all $n,m$, $Var_{P_{n,2}}( \mu_{n,2}(X) ) \leq M$ (\Cref{def:AllowedDistributions}\ref{subsef:BoundedVarMu} and completing \Cref{assum:HolderSmoothnessAssumptions}) and $\mu_{n,2}(X) \in \Sigma( \beta_{\mu}, L )$ (\Cref{assum:NonTrivialConcentration}). For the variance upper bound:
\begin{align*}
    Var_{P_{n,2}}\left( \mu_{n,2}(X) \right) & \leq (L')^2 \left( n^{\frac{-2 \beta_{\mu}}{2 \beta_{\mu} + d \frac{\gamma_0}{\gamma_0-1}}}  \right) exp\left( \frac{8}{3} - 2 \right) \leq (L')^2 exp(2/3).
\end{align*}
so that it suffices to take $L' \leq \sqrt{M exp(-2/3)}$. For Hölder continuity, write $\mu_{n,2}(X) = \frac{L'}{a} n^{\frac{-\beta_{\mu}}{2 \beta_{\mu} + d \frac{\gamma_0}{\gamma_0-1}}} g_a\left( \frac{\|X - x_0\|}{n^{\frac{-1}{2 \beta_{\mu} + d \frac{\gamma_0}{\gamma_0-1}}}} \right)$. If $a > 0$ is small enough, then $g_a$ is infinitely differentiable and in $\Sigma( \beta_{\mu}, 1/2 )$. Thus, by standard arguments \citep{TsybakovBook2009}, if $L'$ is small enough, $\mu_{n,2}(X) \in \Sigma( \beta_{\mu}, L )$. Thus, there is an $L' > 0$ such that $\mathscr{P}$ satisfies Assumptions \ref{assum:HolderSmoothnessAssumptions} and \ref{assum:NonTrivialConcentration}.

Second, I show the main claim (ii). It is useful to write $h_n = n^{\frac{-1}{2 \beta_{\mu} + d \frac{\gamma_0}{\gamma_0-1}}}$. Then:
\begin{align*}
    KL(P_{n,1}, P_{n,2}) & = n P_{n,1}\left( \log \frac{d P_{n,1}}{d P_{n,2}} \right) \\
    & \leq n P_{n,1}\left( \| X - x_0 \| \leq \frac{h_n}{4} \right) P_{n,1}\left( D = 1 \mid \| X - x_0 \| \leq \frac{h_n}{4} \right) \frac{\left( \frac{L'}{exp\left( \frac{-4}{3} \right)} h_n^{\beta_{\mu}} exp\left( \frac{-1}{1 - 1/4} \right) \right)^2}{2} \\
    & = n k 2^{-2 d-1} \left( L' \right)^2 \left( \frac{k}{C} \right)^{\frac{1}{\gamma_0-1}}  h_n^{d + \frac{d}{\gamma_0-1} + 2 \beta_{\mu}} = k 2^{-2 d-1} \left( L' \right)^2 \left( \frac{k}{C} \right)^{\frac{1}{\gamma_0-1}} = ``\alpha."
\end{align*}
\end{proof}

\begin{proof}[Proof of \Cref{lemma:InductiveGroupingLimit}]
    Let $\ubar{\delta} \geq 1$ solve: 
    \begin{align*}
        \left( \frac{\ubar{\delta}}{\log(\ubar{\delta})} \right)^{\frac{2 \beta_{\mu} + d \frac{\gamma_0}{\gamma_0-1}}{2 \beta_{\mu}}} & = \left( \ubar{\delta} \right)^{\frac{\left( \frac{\ubar{\delta}}{\log(\ubar{\delta})} \right)^{2 \beta_{\mu} + d \frac{\gamma_0}{\gamma_0-1}}}{4} - 1}.
    \end{align*}
    This is defined, because as $\delta \to^+ 1$, the left-hand side tends to infinity while the right-hand side tends to one; but as $\delta$ tends to infinity, the left-hand side ($\delta^{\frac{2 \beta_{\mu} + d \frac{\gamma_0}{\gamma_0-1}}{2 \beta_{\mu}}}$) grows more slowly than the right-hand side ($\delta^{\delta / \log(\delta)}$). 
    I will assume that $\left( \frac{\ubar{\delta}}{\log(\ubar{\delta})} \right)^{\frac{1}{2 \beta_{\mu}}} \geq 4$. If not, increase $\ubar{\delta}$ to have $\left( \frac{\ubar{\delta}}{\log(\ubar{\delta})} \right)^{\frac{1}{2 \beta_{\mu}}} = 4^{1/(2 \beta_{\mu})}$. In either case, $\left( \frac{\ubar{\delta}}{\log(\ubar{\delta})} \right)^{\frac{2 \beta_{\mu} + d \frac{\gamma_0}{\gamma_0-1}}{2 \beta_{\mu}}} \leq \left( \ubar{\delta} \right)^{\frac{\left( \frac{\ubar{\delta}}{\log(\ubar{\delta})} \right)^{2 \beta_{\mu} + d \frac{\gamma_0}{\gamma_0-1}}}{4} - 1}.$ Also define $\ubar{r} = \left( \frac{\ubar{\delta}}{\log(\ubar{\delta})} \right)^{\frac{2 \beta_{\mu} + d \frac{\gamma_0}{\gamma_0-1}}{2 \beta_{\mu}}} $ and $\pi = \left( \frac{\ubar{\delta}}{\log(\ubar{\delta})} \right)^{\frac{-1}{2 \beta_{\mu}}}$.
    Note that by construction, $\pi \leq 1/4^{1/(2 \beta_{\mu})}$.

    First, I claim that if $\delta \geq \ubar{\delta}_n^{(0)}$, then $h_n^{(k+1)} \leq \ubar{r}^{\frac{-1}{2 \beta_{\mu} + d \frac{\gamma_0}{\gamma_0-1}}} h_n^{(k)}$ and $m_n^{(k+1)} \geq \ubar{r} m_n^{(k)}$ for all $k = 0, 1, \hdots$. I claim this inductively.   It is convenient to write $m_n^{(0)} = \delta$ and $h_n^{(0)}$ to solve $\delta = exp\left( \delta \left( \left( \frac{h_n^{(0)}}{h_n^{(1)}} \right)^{-2 \beta_{\mu}} \right) \right)$, so that for all $k \geq 0$,
    \begin{align*}
        m_n^{(k+1)} & = \left( m_n^{(k)} \right)^{\left( \frac{h_n^{(k)}}{2^{1/\beta_{\mu}} h_n^{(k+1)}} \right)^{2 \beta_{\mu}}} = \left( m_n^{(0)} \right)^{\frac{\left( h_n^{(0)} \right)^{2 \beta_{\mu}}}{2^{2 (k+1)} \left( h_n^{(k+1)} \right)^{2 \beta_{\mu}}}} \\
        h_n^{(k+1)} & = h_n^{(0)} \left( \frac{m_n^{(k)}}{m_n^{(0)}} \right)^{\frac{-1}{2 \beta_{\mu} + d \frac{\gamma_0}{\gamma_0-1}}} = h_n^{(0)} \left( m_n^{(0)} \right)^{\frac{1 - \frac{\left( h_n^{(0)} \right)^{2 \beta_{\mu}}}{2^{2 k} \left( h_n^{(k)} \right)^{2 \beta_{\mu}}}}{2 \beta_{\mu} + d \frac{\gamma_0}{\gamma_0-1}}}.
    \end{align*}
    Then, for $k = 0$:
    \begin{align*}
        \delta & = exp\left( \delta \left( \left( \frac{h_n^{(1)}}{h_n^{(0)}} \right)^{2 \beta_{\mu}} \right) \right) \\
         \frac{h_n^{(1)}}{h_n^{(0)}}  & =  \left( \frac{\log(\delta)}{\delta} \right)^{1/2 \beta_{\mu}}  \leq \ubar{r}^{\frac{-1}{2 \beta_{\mu} + d \frac{\gamma_0}{\gamma_0-1}}}. \\
        \frac{m_n^{(1)}}{m_n^{(0)}} & = \left( m_n^{(0)} \right)^{ \frac{1}{4} \left( \frac{h_n^{(0)}}{h_n^{(1)}} \right)^{2 \beta_{\mu}} - 1} \geq  \left( \ubar{\delta} \right)^{ \frac{\ubar{r}^{\frac{2 \beta_{\mu}}{2 \beta_{\mu} + d \frac{\gamma_0}{\gamma_0-1}}}}{4} - 1} = \left( \ubar{\delta} \right)^{ \frac{\left( \left( \frac{\ubar{\delta}}{\log(\ubar{\delta})} \right)^{\frac{2 \beta_{\mu} + d \frac{\gamma_0}{\gamma_0-1}}{2 \beta_{\mu}}} \right)^{2 \beta_{\mu}}}{4} - 1} \geq \ubar{r}.
    \end{align*}
    I now proceed assuming the claim holds for all $k' = 0, \hdots, k-1$. Then:
    \begin{align*}
        \frac{h_n^{(k+1)}}{h_n^{(k)}} & = \left( \frac{m_n^{(k)}}{m_n^{(k-1)}} \right)^{\frac{-1}{2 \beta_{\mu} + d \frac{\gamma_0}{\gamma_0-1}}} \leq \ubar{r}^{\frac{-1}{2 \beta_{\mu} + d \frac{\gamma_0}{\gamma_0-1}}} \\
        \frac{m_n^{(k+1)}}{m_n^{(k)}} & = \left( m_n^{(0)} \right)^{\frac{\left( h_n^{(0)} \right)^{2 \beta_{\mu}}}{4^k \left( h_n^{(k)} \right)^{2 \beta_{\mu}}} \left( \frac{1}{4 h_n^{(k+1)}} - 1 \right)} \geq \left( \ubar{\delta} \right)^{\frac{\left( h_n^{(0)} \right)^{2 \beta_{\mu}}}{\left( h_n^{(0)} \right)^{2 \beta_{\mu}}} \left( \frac{1}{4 h_n^{(1)}} - 1 \right)} \geq \ubar{r}.
    \end{align*}
    Thus, if $\delta \geq \ubar{\delta}$, $\sum_{k=1}^\infty m_n^{(k)} \geq \sum_{k=1}^\infty \delta = \infty$.
\end{proof}

\begin{proof}[Proof of \Cref{prop:GlobalRate}]
    Recall that $\psi_n = n^{\frac{-\beta_{\mu}}{2 \beta_{\mu} + d \frac{\gamma_0}{\gamma_0-1}}}$. There are two main directions to show. 
    
    \textbf{Lower bound pointwise rate}. Define $x_0 = (0, \hdots, 0)$. Let $\mathscr{P}$ be as in \Cref{lemma:KLDivergence}, with associated distributions $P_{n,m}$ for $m=1,2$. By \Cref{lemma:KLDivergence}, $K(P_{n,1}, P_{n,2}) \leq \alpha$.  Define the seminorm $d(P, Q) = | E_{P}[Y \mid X=x_0, D=1] - E_{Q}[ Y \mid X=x_0, D=1] |$. By construction, $$d(P_{n,1}, Q_{n,2}) = \overbrace{L' exp\left(1/3 \right)}^{``A / 2"} n^{\frac{-\beta_{\mu}}{2 \beta_{\mu} + d \frac{\gamma_0}{\gamma_0-1}}} n^{\frac{-\beta_{\mu}}{2 \beta_{\mu} + d \frac{\gamma_0}{\gamma_0-1}}},$$ where $L' > 0$ is fixed. Therefore $d( P_{n,1}, P_{n,d} ) \geq 2 s_n$, where $s_n = A n^{\frac{-\beta_{\mu}}{2 \beta_{\mu} + d \frac{\gamma_0}{\gamma_0-1}}}$. Thus, standard arguments \citep{TsybakovBook2009} show that for any fixed estimator $\hat{\mu}$ of $E[Y \mid X = x_0, D=1]$ and all $n$ large enough,
    \begin{align*}
        \sup_{P \in \mathscr{P}} P\left( \left| \hat{\mu}(x_0) - \mu(x_0) \right| \geq s_n \right) & \geq \max_{j=1,2} P_j\left( \left| \hat{\mu}(x_0) - \mu_{n,j}(x_0) \right| \geq s_n \right) \geq \max\left( \frac{exp(-\alpha)}{4}, \frac{1-\sqrt{\alpha / 2}}{2} \right) > 0.
    \end{align*}
    Thus, for all $t > 1$ (see Tsybakov Theorem 2.3),
    \begin{align*}
        \liminf_{n \to \infty} \inf_{\hat{\mu}} \sup_{P \in \mathscr{P}} P\left( n^{\frac{\beta_{\mu}}{2 \beta_{\mu} + d \frac{\gamma_0}{\gamma_0-1}}} \left| \hat{\mu}(x_0) - \mu(x_0) \right| \geq t^{\frac{\beta_{\mu}}{2 \beta_{\mu} + d \frac{\gamma_0}{\gamma_0-1}}} \right) \geq \max\left( \frac{exp(-c t)}{4}, \frac{1-\sqrt{c t / 2}}{2} \right),
    \end{align*}
    where $c$ is a constant that only depends on the parameters of the problem. Thus, $n^{\frac{-\beta_{\mu}}{2 \beta_{\mu} + d \frac{\gamma_0}{\gamma_0-1}}}$ is a lower bound on the \emph{pointwise} rate of convergence.

    \textbf{Achievable global rate}. This is the more difficult and interesting direction. let $\ubar{\delta}$ and $\pi$ be taken from \Cref{lemma:InductiveGroupingLimit}. Also let $k_n^*$ be the smallest $k$ for which $\pi^{(k-1) 2 \beta_{\mu}} \leq 1/\log(n)^2$. Note that $k_n^* = O(\log(\log(n)))$.
    
    Choose some $\delta \geq \ubar{\delta}$, where $\ubar{\delta}$ is constructed in \Cref{lemma:InductiveGroupingLimit}. Define $h_n^{(k)}$ and $m_n^{(k)}$ as in \Cref{lemma:InductiveGroupingLimit}.

    I define the estimator through a series of steps. At a high level, a first step finds the grid width through a minimum level of implied overlap within grid regions. A second step chooses the local polynomial bandwidth for each gridpoint based on the number of treated observations nearby, and then conducts the associated regression. A third step interpolates between gridpoints.

    Split $[-1, 1]^d$ into a first-pass grid of hypercubes of edge length $1 / \lfloor 1 / n^{\frac{-1}{2 \beta_{\mu} + d}} \rfloor$. Let the associated first-pass gridpoints be $\{ \tilde{x}_{n,k} \}$. Define $N_{n}(h \mid x) = \sum D_i 1\left\{ \| X_i - x \| \leq h \right\}$. Let $\tilde{h}_{n,j} = \sup_{h \leq 1} h : N_{n}(h \mid \tilde{x}_{n,j}) \leq 2 h^{-2 \beta_{\mu}}$, which is well-defined because $N_{n}(0 \mid x) \leq n$. Define $\bar{h}_n = \max_{k} \tilde{h}_{n,j}$. 
    
    Now construct the true grid using the upper-bound gridpoint distance $\bar{h}_n$. Split $[-1, 1]^d$ into a grid of hypercubes of edge length $1 / \lfloor 1 / \bar{h}_n \rfloor$. For simplicity, I proceed assuming $1 / \bar{h}_n$ is an integer, so that the edges are of length $\bar{h}_n$; the rounding error is second-order. Write $g_n = (2 / \bar{h}_n)^d$ for the number of gridpoints and call the points on the grid $x_{n,1}, x_{n,2}, \hdots, x_{n,g_n}$.

    For each gridpoint $x_{n,j}$, choose the bandwidth $h_{n,j}$ as follows.  Choose $h_{n,j} = \sup h : N_{n}(h \mid x_{n,j}) \leq 2 h^{-2 \beta_{\mu}}$. By \Cref{lemma:MaxNumPoints} applied to $O(n)$ points and $f = 1$ at the fixed bandwidths $h_{n,0} / d$, with probability tending to one, $h_{n,j} \geq h_{n,0} / d$ for all $j$.

    Construct the regression estimate at gridpoints as $\hat{\mu}(x_{n,j})$ from local polynomial regression of $Y$ on $U\left( \frac{X - x_{n,j}}{h_{n,j}} \right)$ with weights $D K\left( \frac{\|X - x_{n,j}\|}{h_{n,j}} \right)$ with uniform kernel. (If this regression is degenerate, take $\hat{\mu}(x_{n,j}) = 0$.)  For all other points $x \in [-1, 1]^d$, construct $\hat{\mu}(x)$ through linear regression of $Y$ on $U\left( \frac{x_{n,j} - x}{\bar{h}_n} \right)$ among gridpoints $j$ with $\| x_{n,j} - x \| \leq \lceil \beta_{\mu} \rceil / \bar{h}_n$.  By inspection, this estimator only requires knowledge of $\beta_{\mu}$.

    It remains to show that this estimator achieves a global consistency rate of at least $\psi_n$. Define $h_n^{(k(j))}$ to be the largest $h_n^{(k)}$ for $k = 1, \hdots, k_n^*$ satisfying $n E[D 1\{ \| X - x_{n,j} \| \leq h/d \}] \leq (h/d)^{-2 \beta_{\mu}}$. As a result of this conditioning, $h_{n,j} \leq h_n^{(k(j))}$. Write $k(j) = 1$ if no such $k$ exists; this will turn out to be an ignorable event.

    I first show that there is a $k > 0$ such that $\bar{h}_{n} \leq k n^{-1 / (2 \beta_{\mu} + d \gamma_0 / (\gamma_0-1))}$ almost surely. By \Cref{lemma:MinimalExpectedObs}, there is a $k' > 0$ such that $E\left[ D 1\{ \| X - x \| \leq h \} \right] \geq k' h^{d \gamma_0 / (\gamma_0 - 1)}$ for all $h \leq 1$ and all $x \in [-1, 1]^d$. Let $h_{(init)}$ solve $k' n h_{(init)}^{d \gamma_0 / (\gamma_0-1)} = 4 h_{(init)}^{-2 \beta_{\mu}}$, so that $h_{(init)} = (k'/4 n)^{-1 / (2 \beta_{\mu} + d \gamma_0 / (\gamma_0-1))}$. Split up $[-1, 1]^d$ into a second-step set of hypercubes of edge length $1 / \lfloor d / h_{(init)} \rfloor$, with gridpoints $\tilde{x}_{n,i}^*$. For convenience, assume that $d / h_{(init)}$ is an integer; the difference is a rounding error. Note that the number of second-step gridpoints is $O(n)$. By \Cref{lemma:MaxLocalEigenDiff}, with probability tending to one, $N_{n}(h_{(init)} \mid \tilde{x}_{n,i}^*) \geq 2 h_{(init)}^{-2 \beta_{\mu}}$. I proceed under this event. Now fix an $x \in [-1, 1]^d$ and consider a bandwidth of area  $h \geq (1 + 1/d) h_{(init)}$. This hypercube includes at least one $\tilde{x}_{n,i}^*$ (within a distance of $h_{(init)} / d$) and all points within $h_{(init)}$ of $\tilde{x}_{n,i}^*$. Call such a point $i(x)$ arbitrarily. Thus, for such an $h \geq (1+1/d) h_{(init)}$ $N_{n}( h \mid x ) \geq N_{n}( h \mid \tilde{x}_{n,i(x)}^* ) \geq N_{n}( h_{(init)} \mid \tilde{x}_{n,i(x)}^* )  \geq \min_i N_{n}( h_{(init)} \mid \tilde{x}_{n,i}^* ) \geq 2 h_{(init)}^{-2 \beta_{\mu}}$. Thus, on the high-probability event that I am proceeding on, $\ubar{h}_n \leq h_{(init)} = k n^{-1 / (2 \beta_{\mu} + d \gamma_0 / (\gamma_0-1))}$, where $k = (4/k')^{1/(2 \beta_{\mu} + d \gamma_0 / (\gamma_0-1))}$ for some $k' > 0$.

    I now apply \Cref{lemma:MinimalExpectedObs}, \Cref{lemma:MinimalEigenvalue}, \Cref{lemma:MaxNumPoints}, and \Cref{lemma:MaxLocalEigenDiff} to the gridpoints. There are $g_n = O(n)$ gridpoints, so that \Cref{lemma:MinimalExpectedObs}, \Cref{lemma:MinimalEigenvalue},  and \Cref{lemma:MaxLocalEigenDiff} apply immediately, so that the eigenvalues at $h_{n,j}$ are nondegenerate. Now consider the tuples $(x_{n,j}, h^{(k_n^*)})$ and the $k_n = k_n^* + 1$ functions: first, $f_k\left( v \right) = 1\left\{  h^{(k_n^*)} v \leq h_n^{(k)} \right\}$, and $f_{k_n^* + 1}(v) = 1$. Define $\ubar{h}_n^* = \ubar{h}_n \log(n)^{2 (\gamma_0-1) / d}$. By \Cref{lemma:MinimalExpectedObs} and construction, $h^{(k_n^*)} \geq \ubar{h}_n^*$ for large enough $n$. It remains to show that (i) $(\ubar{h}_n^*)^{d \frac{\gamma_0}{\gamma_0-1}} \gg n^{-1}$, (ii) $g_n \ll exp\left( \frac{1}{16} n (\ubar{h}_n^*)^{d \frac{\gamma_0}{\gamma_0-1}} \right)$, and (iii) $k_n \left( 1 - \left( \frac{e-1}{e} \right)^{\frac{g_n}{exp\left( \log(1/2) + \frac{1}{16} n (\ubar{h}_n^*)^{d \frac{\gamma_0}{\gamma_0-1}} \right)}} \right) \to 0$. (i) and (ii) hold by inspection. (iii) holds by bounding $k_n = O( n \log(\log(n)) )$ and then applying L'Hopital's Rule.

    As a result, I proceed under the high probability event that $N_{n}(h_n^{(k)} / d \mid x_{n,j}) \geq \frac{n}{2} E[ D 1\{ \| X - x_{n,j} \| \} \leq h_n^{(k)} / d ]$, so that the $k(1)$ additional assignment is never necessary, and the smallest eigenvalue of the $\sum D K U U^T / \sum D K$ matrices at the gridpoint bandwidths $h_{n,j}$ is at least $\lambda^* / 2 > 0$. These conditions also ensure that there are at least $(h_n{(1)} / d)^{-2 \beta{\mu}} = (\ubar{h}_n / d)^{-2 \beta_{\mu}} \to \infty$ treated observations within the bandwidth $h_{n,j}$ of each gridpoint $j$.

    Next, I claim that there are at most $m_n^{(k)}$ gridpoints with $h_{n,j} \geq h_n^{(k)}$. Recall by the above that $h_{n,j} \leq h_n^{(k(j))}$. Thus, the number of gridpoints with $h_{n,j} \geq h_n^{(k)}$ is bounded by the number of $j$ with $h_n^{(k(j))} \geq h_n^{(k)}$. By \Cref{lemma:MinimalNumGridpoints}, this is bounded above by:
    \begin{align*}
        B \left( n (h_n^{(k)})^{2 \beta_{\mu} + d \frac{\gamma_0}{\gamma_0-1}} \right)^{1-\gamma_0} & = B \left( \frac{h_n^{(k)}}{h_n^{(1)}} \right)^{-(\gamma_0-1) \left( 2 \beta_{\mu} + d \frac{\gamma_0}{\gamma_0-1} \right)} \leq exp\left( 2^{-k} \delta \left( \frac{h_n^{(k)}}{h_n^{(1)}} \right)^{-2 \beta_{\mu}} \right) = m_n^{(k)}.
    \end{align*}
    
    Next, I bound the largest conditional expected squared gridpoint error above. Let $Z$ be the data of $\{ X, D \}$.  The local polynomial estimator is:
    \begin{align*}
        \hat{\mu}(x_{n,j}) & = \mathbf{1}^T \left( \underbrace{\frac{ \sum D_i K\left( \frac{\| X_i - x_{n,j} \|}{h_n} \right) U\left( \frac{X_i - x_{n,j}}{h_n} \right) U\left( \frac{X_i - x_{n,j}}{h_n} \right)^T}{\sum D_i K\left( \frac{\| X_i - x_{n,j} \|}{h_n} \right)}}_{``\hat{\Sigma}_n"} \right)^{-1} \frac{\sum D_i K\left( \frac{\| X_i - x_{n,j} \|}{h_n} \right) U\left( \frac{X_i - x_{n,j}}{h_n} \right) Y}{{\sum D_i K\left( \frac{\| X_i - x_{n,j} \|}{h_n} \right)}}.
    \end{align*}
    Also write $\tilde{\mu}_n(x)$ for the prediction of $E[Y \mid X=x, D=1]$ based on the $\ell_e$-order Taylor expansion of $\mu(X)$ at $x_{n,j}$.  On the above event, the conditional bias of the local polynomial estimator for a given gridpoint is:
    \begin{align*}
        E[\hat{\mu}(x_{n,j}) \mid Z] - \mu(x_{n,j}) & = \mathbf{1}^T \hat{\Sigma}_n^{-1}  \frac{\sum D_i K\left( \frac{\| X_i - x_{n,j} \|}{h_n} \right) U\left( \frac{X_i - x_{n,j}}{h_n} \right) \left( \mu(X_i) - \tilde{\mu}_n(X_i) \right)}{\sum D_i K\left( \frac{\| X_i - x_{n,j} \|}{h_n} \right)} \\
        & = O\left( \lambda_{\min}(\hat{\Sigma}_n)^{-1} \right) O\left( h_n^{\beta_{\mu}} \right) = O(h_n^{\beta_{\mu}}),
    \end{align*}
    with a constant that is independent of $P(n)$ and $x_{0,h}$. Thus, on the above event, the largest gridpoint conditional bias is bounded. On the other hand, one gridpoint's conditional variance is:
    \begin{align*}
        Var\left( \hat{\mu}(x_{n,j}) \mid Z \right) & =  \mathbf{1}^T \hat{\Sigma}_n^{-1} \frac{\sum D_i K\left( \frac{\| X_i - x_{n,j} \|}{h_n} \right)^2 U\left( \frac{X_i - x_{n,j}}{h_n} \right) U\left( \frac{X_i - x_{n,j}}{h_n} \right)^T Var(Y \mid X = X_i, D=1)}{\left( \sum D_i K\left( \frac{\| X_i - x_{n,j} \|}{h_n} \right) \right)^2} \hat{\Sigma}_n^{-1} \mathbf{1}  \\
        & \leq O(1) \mathbf{1}^T \hat{\Sigma}_n^{-1} \frac{\sum D_i K\left( \frac{\| X_i - x_{n,j} \|}{h_n} \right) U\left( \frac{X_i - x_{n,j}}{h_n} \right) U\left( \frac{X_i - x_{n,j}}{h_n} \right)^T \sigma_{\max}^2}{\left( \sum D_i K\left( \frac{\| X_i - x_{n,j} \|}{h_n} \right) \right)^2} \hat{\Sigma}_n^{-1} \mathbf{1} \tag{$K \leq 1$} \\
        & =  O\left( \left( \sum D_i K\left( \frac{\| X_i - x_{n,j} \|}{h_n} \right) \right)^{-1} \right) \mathbf{1}^T \hat{\Sigma}_n^{-1} \mathbf{1} \sigma_{\max}^2 \\
        & = O\left( \left( \sum D_i K\left( \frac{\| X_i - x_{n,j} \|}{h_n} \right) \right)^{-1} \right) O\left( \lambda_{\min}\left( \hat{\Sigma}_n \right)^{-1} \right) \\
        & = O\left( n^{-1} h_n^{d + d / (\gamma_0-1)}  \right) O\left( \lambda_{\min}\left( E_{P(n)}\left[ \hat{\Sigma}_n \right] \right)^{-1} \right) \tag{\Cref{lemma:MinimalExpectedObs}, \Cref{lemma:MaxLocalEigenDiff}} \\
        & =  O\left( n^{-1} n^{\frac{d \gamma_0(\gamma_0-1)}{2 \beta_{\mu} + d \gamma_0 / (\gamma_0-1)}} \right) O(1) \tag{\Cref{lemma:MinimalEigenvalue}} \\ 
        & = O\left( n^{\frac{-2 \beta_{\mu}}{2 \beta_{\mu} + d \gamma_0 / (\gamma_0-1)}} \right),
    \end{align*}
    once again with a constant that is independent of $P(n)$ and $x_{0,h}$. 
    
    The argument for characterizing the largest gridpoint error is not quite standard. By standard arguments \citep{TsybakovBook2009}, if there are at most $m_n^{(j)}$ gridpoints with $h_{n,j} \geq h_n^{(k)}$ so that $N_{n}(h_{n,j} \mid x_{n,j}) \geq (h_{n,j})^{-2 \beta_{\mu}}$ and the $h_{n,j}$ bandwidths around $x_{n,j}$ are by construction non-overlapping subsets of $[-1, 1]^d$, then $E_{P}\left[ \max_{j : h_{n,j} \geq h_n^{(k)}} \left( \hat{\mu}(x_{n,j}) - E[\hat{\mu}(x_{n,j}) \mid Z] \right)^2 \mid Z \right] = O_{P}\left( \log\left( m_n^{(j)} \right) (h_{n,j})^{2 \beta_{\mu}}  \right)$. Then the conditional expected largest gridpoint squared difference overall is bounded above by
    \begin{align*}
        E_{P}\left[ \max_{j=1}^{g_n} \left( \hat{\mu}(x_{n,j}) - E[\hat{\mu}(x_{n,j}) \mid Z] \right)^2 \mid Z \right] & \leq \sum_{k=1}^{k_n^*} E_{P}\left[ \max_{j=1}^{g_n} \left( \hat{\mu}(x_{n,j}) - E[\hat{\mu}(x_{n,j}) \mid Z] \right)^2 \mid Z \right] \\ & \leq \sum_{k=1}^{k_n^*} E_{P}\left[ \max_{h_{n,j} = h_n^{(k)}} \left( \hat{\mu}(x_{n,j}) - E[\hat{\mu}(x_{n,j}) \mid Z] \right)^2 \mid Z \right] \\
        & = O_p\left( \sum_{k=1}^{k_n^*-1} \log\left( m_n^{(j)} \right) \left( h_n^{(k)} \right)^{2 \beta_{\mu}}  \right)  + O_p\left( \log(g_n) \left( h_n^{(k_n^*)} \right)^{2 \beta_{\mu}} \right) \\
        & = O_p\left( \delta \sum_{k=1}^{k_n^*} 2^{-k} \left( \frac{h_n^{(k)}}{h_n^{(1)}} \right)^{-2 \beta_{\mu}} \left( h_n^{(k)} \right)^{2 \beta_{\mu}}  \right) \tag{\Cref{lemma:InductiveGroupingLimit}} \\
        & \ \ + O_p\left( \log(g_n) \pi^{(k-1) 2 \beta_{\mu}} \left( h_n^{(1)} \right)^{2 \beta_{\mu}} \right) \\
        & = O_p\left( \sum_{k=1}^{k_n^*} 2^{-k} \left( h_n^{(1)} \right)^{2 \beta_{\mu}}  \right)  + O_p\left( \frac{\log(n)}{\log(n)^2} \left( h_n^{(1)} \right)^{2 \beta_{\mu}} \right)\tag{$\delta$ fixed} \\
        & = O_p\left( \sum_{k=1}^{\infty} 2^{-k} \left( h_n^{(1)} \right)^{2 \beta_{\mu}}  \right) O_p\left( \frac{\log(n)}{\log(n)^2} \left( h_n^{(1)} \right)^{2 \beta_{\mu}} \right) \tag{$\delta$ fixed} \\ 
        & \leq 2 \left( h_n^{(1)} \right)^{2 \beta_{\mu}} = 2 n^{\frac{-2 \beta_{\mu}}{2 \beta_{\mu} + d \frac{\gamma_0}{\gamma_0-1}}}.
    \end{align*}
    Thus, returning to more standard arguments, $\max_{j=1}^{g_n} | \hat{\mu}(x_{n,j}) - \mu(x_{n,j}) | = O_{P(n)}\left( n^{\frac{-\beta_{\mu}}{2 \beta_{\mu} + d \frac{\gamma_0}{\gamma_0-1}}} \right)$. But the prediction for non-gridpoints has nondegenerate eigenvalues, so by standard arguments, $\max_{x \in [-1, 1]^d} | \hat{\mu}(x) - \mu(x) | = O\left( \max_{j=1}^{g_n} | \hat{\mu}(x_{n,j}) - \mu(x_{n,j}) | + L \left( \frac{1}{\lfloor \ubar{h}_n \rfloor} \right)^{\beta_{\mu}} \right) = O_{P(n)}\left( \psi_n \right)$. Therefore $\hat{\mu}$ achieves the global rate $\psi_n$, and with no polylogarithmic penalty.

    \textbf{Completing the proof}. An achievable global rate of convergence is also an achievable pointwise rate of convergence. Thus, $n^{\frac{-\beta_{\mu}}{2 \beta_{\mu} + d \frac{\gamma_0}{\gamma_0-1}}}$ is the optimal pointwise rate of convergence.  A lower bound on the pointwise rate of convergence is also a lower bound on the global rate of convergence. Thus, $n^{\frac{-\beta_{\mu}}{2 \beta_{\mu} + d \frac{\gamma_0}{\gamma_0-1}}}$ is also the optimal global rate of convergence. 
\end{proof}

\begin{proof}[Proof of \Cref{cor:MinimalSmoothnessConditions}]
    Write $\alpha_{\mu} = \frac{\beta_{\mu}}{2 \beta_{\mu} + d \gamma_0 / (\gamma_0 - 1)}$ and $\alpha_{e} = \frac{\beta_e}{2 \beta_e + d}$.

    By standard arguments \citep{Stone1982} and \Cref{prop:GlobalRate}, there are cross-fit estimators $\hat{\mu}(X)$ and $\hat{e}(X)$ such that $r_{\mu,n} \precsim \left( n / \log(n) \right)^{-\alpha_{\mu}}$ and $r_{e,n} \precsim  \left( n / \log(n) \right)^{-\alpha_{e}}$, none of which depend on $\gamma_0$. As a result, \Cref{assum:NuisanceRates} holds.

    Take $b_n = r_{e,n} \log(n)$. Because $\alpha_{e} > 0$, $1 \gg b_n \gg r_{e,n}$. By \Cref{cor:TTests}, it only remains to show that the conditions of \Cref{assumption:ConsistencyRatesSufficient} hold.

    \begin{enumerate}[label=(\alph*), itemsep=-0.5ex, topsep=-0.5ex]
        \item \emph{Consistency}. It is clear that $r_{\mu,n}, r_{e,n} \to 0$. 
        \item \emph{Product of errors}. If $\gamma_0 \geq 2$, the claim holds by inspection. If not:
        \begin{align*}
            r_{\mu,n} r_{e,n} b_n^{(\gamma_0 - 2) / 2} & \ll r_{\mu,n} b_n^{\gamma_0 / 2}  \ll r_{\mu,n} r_{e,n}^{\gamma_0 / 2} \log(n)^{\gamma_0 / 2}  \precsim \left( n / \log(n) \right)^{-\alpha_{\mu} -\alpha_{e} \gamma_0 / 2}  \log(n)^{\gamma_0 / 2} \\
            & = \log(n)^{\alpha_{\mu} + \alpha_{e} \gamma_0 / 2 + \gamma_0 / 2} n^{-\alpha_{\mu} -\alpha_{e} \gamma_0 / 2} \\
            & = n^{-1/2} \log(n)^{\alpha_{\mu} + \alpha_{e} \gamma_0 / 2 + \gamma_0 / 2} n^{1/2-\alpha_{\mu} -\alpha_{e} \gamma_0 / 2}  \ll n^{-1/2}. \tag{\Cref{eq:SomeClippingThresholdInSmoothnessLipschitz}}
        \end{align*}
        \item \emph{Regression error near singularities}. By inspection of the previous argument, $r_{\mu,n} b_n^{\gamma_0 / 2} \ll n^{-1/2}$. 
        \item \emph{Asymptotically known thresholding}. By construction, $r_{e,n} \ll b_n$. 
    \end{enumerate}
    As a result, by \Cref{cor:TTests}, the result for Wald confidence interval validity holds. 
\end{proof}

\subsection{Choice of Threshold}\label{proofs:RulesOfThumb}

\begin{proof}[Proof of \Cref{lemma:WellDefinedRuleOfThumb}]
First, I show that there is at least one such solution.

Recall the equation:
\begin{align*}
    f_n(b) & = \frac{b  \frac{1}{n} \sum 1\{ \hat{e}(X) \leq b \}}{\sqrt{\frac{1}{n} \sum \frac{D }{\max\{\hat{e}, b\}^2}}} + b^2 \sqrt{\frac{1}{n} \sum \frac{D}{\max\{\hat{e}, b\}^2}}  - n^{-1/2}.
\end{align*}
When $b = 0$, $f_n(b)$ is well-defined: $\sum D / \bar{e}$ is finite, so $\sup D / \bar{e}^2$ is finite. Because the first two terms of $f_n(b)$ include multiplication by $b$, $f_n(0) = 0$. 

When $b = 1$:
\begin{align*}
    f_n(1) & = \left( \sqrt{\frac{1}{n} \sum D} \right)^{-1} + \sqrt{\frac{1}{n} \sum D} - n^{-1/2} > \left( \sqrt{\frac{1}{n} \sum D} \right)^{-1} - 1 \geq 0. 
\end{align*}
The final line holds because $\frac{1}{n} \sum D \in (0, 1]$ by assumption.

Define $b_n^- = \sup b \leq 1 \mid f_n(b) \leq 0$. Define $b_n^+ = \inf b \geq b_n^- \mid f_n(b) \geq 0$. Because $f_n(0) \leq 0 \leq f_n(1)$, both of these values are well-defined. Therefore, for every $b$ satisfying $b_n^- < b < b_n^+$, it is the case that $f_n(b)$ is a well-defined real number that satisfies both $f_n(b) > 0$ and $f_n(b) < 0$. No such number exists, so it must be that $b_n^- = b_n^+$. Define $b_n$ to be that value. 

Next, I show that there is a unique solution. In particular, I show that $\hat{g}_n(b) \equiv \frac{b  \frac{1}{n} \sum 1\{ \hat{e}(X) \leq b \}}{\sqrt{\frac{1}{n} \sum \frac{D }{\max\{\hat{e}, b\}^2}}} + b^2 \sqrt{\frac{1}{n} \sum \frac{D}{\max\{\hat{e}, b\}^2}}$ is a strictly increasing function of $b$ for $b \geq \min_i \hat{e}_i$. As $b$ increases, the first term's numerator strictly increases and the denominator weakly decreases. As a result, the first term strictly increases in that range. For $b < \min_{i} \hat{e}_i$, the first term is zero and as a result is weakly increasing. The second term can be rewritten as
\begin{align*}
    \sqrt{\frac{1}{n} \sum D \min\{ \hat{e}^{-2} b^4, b^2 \}},
\end{align*}
which is a strictly increasing function. As a result, $f_n(b)$ is a strictly increasing function in the desired range, so that there can be at most one solution. 
\end{proof}

\end{document}